\documentclass[letterpaper,11pt]{article}

\newif\ifSODAsubmission
\SODAsubmissionfalse

\usepackage{hyperref}

\usepackage{commands-tam}
\usepackage{hyperref}
\usepackage{subcaption}% \captionsetup{compatibility=false}
\usepackage{mathrsfs}
\usepackage{wrapfig}
\usepackage{enumitem}

\usepackage{amsmath}
\usepackage{amssymb}
\usepackage{amsthm}
\usepackage{thmtools, thm-restate}

\usepackage{algorithm}
\usepackage{algpseudocode}

\declaretheorem[]{lemma}
\declaretheorem[]{corollary}

\usepackage{multirow}
\usepackage[capitalize, noabbrev]{cleveref}
\usepackage[dvipsnames]{xcolor}
\usepackage{tikz}

\usepackage{cancel}
\usepackage{wasysym}
\usepackage{readarray}

\usepackage{listings}
\usepackage{multicol}

\usepackage{commands-tam}
\usepackage[utf8]{inputenc}
\usepackage{subcaption}% \captionsetup{compatibility=false}
\usepackage{blkarray}

\usepackage{wrapfig}
\usepackage{booktabs, multirow} % for borders and merged ranges
\usepackage{soul}% for underlines
\usepackage{xcolor,colortbl} % for cell colors
\usepackage{changepage,threeparttable} % for wide tables
\usepackage{cleveref}

\crefname{@theorem}{theorem}{theorems}
\Crefname{@theorem}{Theorem}{Theorems}

\usepackage{tikz}
\usetikzlibrary{shapes.geometric, arrows, decorations.pathmorphing, calc, decorations.pathreplacing, math, matrix, decorations.markings}
\usetikzlibrary{intersections, fit, shapes.symbols, automata, positioning, arrows.meta}
\usetikzlibrary{tikzmark}
\tikzstyle{startstop} = [rectangle, rounded corners, 
minimum width=3cm, 
minimum height=1cm,
text centered, 
draw=black, 
fill=red!30]

\tikzstyle{io} = [trapezium, 
trapezium stretches=true, % A later addition
trapezium left angle=70, 
trapezium right angle=110, 
minimum width=3cm, 
minimum height=1cm, text centered, 
draw=black, fill=blue!30]

\tikzstyle{process} = [rectangle, 
minimum width=3cm, 
minimum height=1cm, 
text centered, 
text width=3cm, 
draw=black, 
fill=orange!30]

\tikzstyle{decision} = [diamond, 
minimum width=3cm, 
minimum height=1cm, 
text centered, 
draw=black, 
fill=green!30]
\tikzstyle{arrow} = [thick,->,>=stealth]
\newcommand{\shapeasm}[1]{\mathcal{S}_{\Box}[\mathcal{#1}]}

\newcommand{\calS}{\ensuremath{\mathcal{S}}}
\newcommand{\calU}{\ensuremath{\mathcal{U}}}
\newcommand{\Z}{\mathbb{Z}}
\usepackage{stmaryrd}
\newcommand{\intinterval}[1]{\llbracket #1 \rrbracket}

\newtheorem{definition}{Definition}[section]

\ifSODAsubmission
    \newcommand{\todo}[1]{}
\else
    \usepackage[textsize=scriptsize]{todonotes}
    
\fi

\usepackage[utf8]{inputenc}
\usepackage{subcaption}% \captionsetup{compatibility=false}

\usepackage[letterpaper,left=1in, right=1in, top=1in, bottom=1in]{geometry}

\usepackage{lineno} % Load the line numbering package
\usepackage[backend=biber,maxbibnames=99]{biblatex}
\begin{document}

\title{The Rhythm of the Tiles: Narrow and Low Temperature Computation and Simulation in Synchronous Tile Self-Assembly}

\author{
 Florent Becker
  \thanks{Université d'Orléans, INSA CVL, LIFO, UR 4022, Orléans, France,
  \protect\url{florent.becker@univ-orleans.fr}
    Research supported in part by CNRS project CalcADN
  }
\and
 Phillip Drake%
  \thanks{Electrical Engineering and Computer Science Department, University of Arkansas,
    \protect\url{padrake@uark.edu}
    Supported in part by National Science Foundation Grant FET-2329908.}
\and
 Benjamin Durbin%
  \thanks{Electrical Engineering and Computer Science Department, University of Arkansas,
    \protect\url{bd041@uark.edu}
    Supported in part by National Science Foundation Grant FET-2329908.}
\and
 Nicolas Ollinger%
  \thanks{Université d'Orléans, INSA CVL, LIFO, UR 4022, Orléans, France,
    \protect\url{nicolas.ollinger@univ-orleans.fr}}
\and
 Brent Orlina%
  \thanks{Electrical Engineering and Computer Science Department, University of Arkansas,
    \protect\url{borlina@uark.edu}
    Supported in part by National Science Foundation Grant FET-2329908.}
\and
  Matthew J. Patitz%
    \thanks{Electrical Engineering and Computer Science Department, University of Arkansas,
      \protect\url{patitz@uark.edu}
      Supported in part by National Science Foundation Grant FET-2329908.}
}

\date{}

\maketitle

\begin{abstract}
The majority of research into mathematical models of self-assembling systems has involved asynchronous models such as the abstract Tile Assembly Model (aTAM) in which tiles attach one at a time to a growing assembly in a nondeterministic and asynchronous manner. Such systems have great computational power because they can be designed so that the growth of assemblies follows pre-designed algorithms, allowing them to simulate arbitrary Turing machines as well as to be guided to form complex shapes. However, these powers have been shown to be limited to systems which allow so-called ``cooperative’’ attachments, in which the attachment of one tile can only occur if it forms bonds with at least two other tiles. In non-cooperative systems, in which tiles attach to only a single other tile, the computational powers are reduced to essentially those of regular languages. Recently, however, the Synchronous Tile Assembly Model (syncTAM) was introduced in which, at every step of assembly, tiles simultaneously bind into every location in which tiles can bind. It was shown that even non-cooperative systems are computationally universal in the syncTAM, and also that the aTAM and syncTAM have mutually exclusive powers, in that each has systems that cannot be mimicked by the other.

In this paper, we prove that the synchronicity provided by the syncTAM can be leveraged to perform two additional feats impossible in the asynchronous aTAM. The aTAM has long been known to be equivalent to finite automata in the strictly one-dimensional case. For our first set of results, we prove that this still holds when one dimension is restricted to a finite, fixed amount of space. Yet, surprisingly, in the synchronous setting, a small, fixed, width is enough to simulate a Turing machine. This holds even in the non-cooperative setting.

For our second set of main results we prove that there exists a single set of syncTAM tiles that can be used to simulate any syncTAM system simply by using those tiles to create a scaled version of the seed structure of the original system and then letting the tiles from that set self-assemble from the scaled seed structure. While it has been shown (FOCS 2012) that such a tile set also exists in the aTAM for simulating all aTAM systems, that ``intrinsically universal’’ (IU) tile set requires cooperativity (and is impossible without cooperativity as shown in SODA 2014). We prove there exists an IU tile set in the syncTAM that works without cooperativity. Furthermore, there exists a tile set that is IU for the class of directed systems (i.e. systems which each have a single terminal assembly) while remaining directed during simulation. This was shown to be impossible in the aTAM (FOCS 2016).

\end{abstract}

\pagenumbering{arabic}

\ifSODAsubmission

\else
    \tableofcontents
\fi

\section{Introduction and Outline}

The study of algorithmic self-assembly began with the introduction of the abstract Tile Assembly Model (aTAM) in \cite{Winf98}. In algorithmic self-assembly, sets of tiles are designed so that, as they attach to grow assemblies via matching glues on their sides, the rules of binding and careful design of the tiles enforce that pre-defined algorithms are inherently followed. Such systems are capable of Turing universal computation \cite{Winf98} and the generation of complex shapes and patterns \cite{SolWin07,RotWin00,jCCSA,IUSA,FractalsSelfAssembleSODA,jSSADST}. In an attempt to more fully explore and understand the dynamics of self-assembling systems, many generalizations of the aTAM that allow for deviations from its tile properties and dynamics have been developed \cite{AGKS05g,Polygons,Polyominoes,jSignals,jDuples,SingleNegative2,syncTAM,DDDIU,GeoTiles,jRTAM,BlockTAM}. This has generated a rich landscape of models and results showing their powers and limitations, and has often prompted comparisons across models via notions of simulation. %\todo{PD: Slightly changed the wording of the last sentence}

A recent example of a derivative model of the aTAM is the Synchronous Tile Assembly Model (syncTAM) \cite{syncTAM,LSyncTAM}. In the aTAM, at each step of assembly a single tile attaches, irrespective of the number of locations into which a valid tile binding can occur. The particular location is chosen nondeterministically from the set of available locations, and therefore the aTAM is fully asynchronous and nondeterministic in tile attachments. In contrast, in the syncTAM, at every assembly step \emph{every} location which can receive a tile does so, thus making tile attachments synchronous. While the syncTAM was shown to have different powers than the aTAM, the results of \cite{syncTAM} and \cite{LSyncTAM} entail that simulation is not the right tool for comparing the two models.

The present work proposes a new approach to assess the power of the syncTAM: addressing the question whether the ``sync'' in syncTAM can be used for computing purposes: can this synchronicity be substituted for the resources needed for algorithmic work in the aTAM? For this, let us turn to the self-assembly literature in search of the different shapes computation takes in the aTAM setting, as well as the conditions it needs. In his seminal work~\cite{Winf98}, Winfree used the simulation of Cellular Automata to argue that self-assembling systems indeed compute. Follow-up efforts such as~\cite{RotWin00} have used simulations of Turing machines within the self-assembly process in order to enable its algorithmic control. Both of those constructions feature two characteristics which are shared by later occurrences of computation in the aTAM: they use cooperative attachments of tiles (both constructions are at temperature 2, in which tile attachments can require them to form bonds with two other tiles), and the computation takes place in rectangular regions. These two characteristics have been deemed necessary for computation in the aTAM, leading to the investigation of two conjectures:
\begin{enumerate}
    \item the temperature 1 conjecture, stating that cooperation is necessary for any computation, and
    \item the strict fractal conjecture, stating that discrete fractal shapes cannot be obtained in the aTAM, for lack of room within them to compute.
\end{enumerate}

Their investigation, from the formalization of their statements to their eventual resolutions, has led to a refinement of the understanding of computation within self-assembly. One essential step in this process is the transplantation of the concept of \emph{intrinsic universality} from cellular automata~\cite{ollinger2003intrinsic,AlbertCulik87,Goles-etal-2011,Ollinger-CSP08} to the aTAM~\cite{USA,IUSA,DirectedNotIU}. An intrinsically universal model is one where computation can be reified so that there is a universal system within that model can read the description of any system and simulate it. In the aTAM, this feat is possible at temperature 2~\cite{IUSA}, but not at temperature 1~\cite{temp1notIU}, showing that cooperation cannot be done entirely without. In other words, the cooperative aTAM is a well-structured model of a computation, while the non-cooperative aTAM much less so. Subsequent progress in the non-cooperative aTAM~\cite{directedTemp1Decidable,Temp1PathsSTOC,temp1notIU,PierreTemp1LongPaths} have successfully gone back to whether these system can escape periodic behaviors at all, showing the limitations of computation in the asynchronous and non-cooperative setting.

As for the necessity for room in which to compute and the associated fractal conjecture, despite some early results showing that the aTAM in one dimension is but a finite state machine in disguise \cite{Staged1D_NACO} and that the discrete Sierpinski triangle indeed cannot be self-assembled~\cite{jSSADST}, the resolution of the fractal conjecture is recent~\cite{FractalsSelfAssembleSODA}. This may have been due to the conjecture actually not holding! Indeed, by leveraging the possibility of self-simulation in an intrinsically universal model such as the cooperative aTAM, it is possible to obtain some, and actually most discrete fractal shapes. Why not the discrete Sierpinski triangle? The characterization of fractal shapes which can be strictly self-assembled in~\cite{FractalsSelfAssembleSODA} reveals a crucial parameter in terms of space: the connected tree-width of the assembled shape~\cite{DBLP:journals/combinatorica/DiestelM18}. When this parameter is bounded for unbounded assemblies, some periodicity cannot be avoided, thwarting attempts to run arbitrary computations. When it is not, the assembly encircles arbitrarily large squares of the grid.

Equipped with a new tool, synchronous dynamics, it is natural to ask whether it can be used instead of either cooperation, infinite connected treewidth or both. Our first main result shows that the need for room in which to compute can be minimized in a synchronous setting: the syncTAM can perform universal computation even when restricted to a 19-tile wide band. As a consequence of this bound on the literal width of the productions, the connected tree-width is bounded, as well as more stringent measures of width such as path-width.

The second main result of our investigation into the power of synchronicity provides an intrinsically universal system at temperature 1 for the syncTAM: not only is the syncTAM able to decode the description of a system and then simulate it, it is able to do so without even using cooperation. Since it is known that no such non-cooperative universal simulator exists in the aTAM \cite{temp1notIU}, this is a distinction between the two models. Furthermore, it is also known that there is no universal simulator in the aTAM which can always remain directed (i.e. it can create only a single terminal assembly) while simulating a directed system \cite{DirectedNotIU}. However, we show another distinction between models, and power of synchronicity, by proving that such a universal simulator does exist in the syncTAM, and it also works at temperature 1. As an additional corollary, up to a constant rescaling, anything the aTAM can do in a \emph{directed} manner, the syncTAM can do at temperature 1. Morally, \emph{a global clock can supersede cooperation}.

From a more general perspective, mathematically speaking the syncTAM is interesting to study since it’s on the furthest end of the synchronization spectrum (opposite the aTAM), allowing rigorous theoretical study of the power of fully-synchronized self-assembly. It also allows for studying the comparison between tile assembly and Cellular Automata. There are three main distinctions between the temperature-1 aTAM and Cellular Automata: (1) irreversibility: add this to Cellular Automata to get Freezing Cellular Automata, (2) synchronization, and (3) decision of a location from one glue versus an arbitrary neighborhood and local transition function. Our results show that the meaningful distinction is Synchronization, as even alone, it allows non-cooperative tile assembly to achieve computational and intrinsic universality.

The layout of this paper is as follows. In \cref{sec:prelims} we provide definitions of the models used and several additional components used throughout the paper. In \cref{sec:tour} we provide a high-level overview of the full set of results. Then, in \cref{simulation-def} we provide rigorous definitions of simulation and intrinsic universality. In \cref{sec:bounded} we provide the full details and proofs of the results related to bounded-width computation. In \cref{sec:adj-proof} - \cref{sec:univ-temp1} we present the series of results (and their proofs) that result in the universal simulator for the syncTAM at temperature 1. Finally, \cref{sec:directed} contains the proof of the directed universal simulator.

% Keywords: Synchronous self-assembly, intrinsic universality, abstract Tile Assembly Model

% Question: is it possible to simulate any 1-change, Von Neumann FCA in the syncTAM?

\section{Preliminary Definitions and Models}
\label{sec:prelims}

\subsection{The abstract Tile-Assembly Model}

%\subsubsection{Definition}

We work within the abstract Tile-Assembly Model\cite{Winf98} in 2 dimensions, abbreviated \emph{aTAM}. These definitions are borrowed from \cite{DDDIU} and we note that \cite{RotWin00} and \cite{jSSADST} are good introductions to the model for unfamiliar readers. 

Let $\mathbb{N}$ be the set of nonnegative integers,  for $l,u \in \Z$, let $\intinterval{l, u} = \{ k \in \Z \mid l \leq k < u\}$. For $n \in \mathbb{N}$, let $[n] = \intinterval{0,n} = \{0, 1, ..., n-2, n-1\}$.

Fix $\Sigma$ to be some alphabet with $\Sigma^*$ its finite strings. A \emph{glue} $g\in\Sigma^*\times\mathbb{N}$ consists of a finite string \emph{label} and non-negative integer \emph{strength}. There is a single glue of strength $0$, referred to as the \emph{null} glue. A \emph{tile type} is a tuple $t\in(\Sigma^*\times\mathbb{N})^{4}$, thought of as a unit square with a glue on each side. A \emph{tile set} is a finite set of tile types. We always assume a finite set of tile types, but allow an infinite number of copies of each tile type to occupy locations in the $\mathbb{Z}^2$ lattice, each called a \emph{tile}. A \textit{glue set} $G_T$ is the set of all glues associated with a tile set $T$. Given a direction $d \in \{N, E, S, W\} = \calD$ and a tile set $T$, $G_{T, d}$ is the set of all glues on the edges of tiles in direction $D$ and $g_{t, d}$ is the glue $g$ on tile $t$ in direction $D$; $\operatorname{str}(g)$ is the strength of $g$. 

Given a tile set $T$, a \emph{configuration} is an arrangement (possibly empty) of tiles in the lattice $\mathbb{Z}^2$, i.e.\ a partial function $\alpha:\mathbb{Z}^2\dashrightarrow T$. Two adjacent tiles in a configuration \emph{interact}, or are \emph{bound} or \emph{attached}, if the glues on their abutting sides are equal (in both label and strength) and have positive strength. Each configuration $\alpha$ induces a \emph{binding graph} $B_\alpha$ whose vertices are those points occupied by tiles, with an edge of weight $s$ between two vertices if the corresponding tiles interact with strength $s$. An \emph{assembly} is a configuration whose domain (as a graph) is connected and non-empty. The \emph{shape} of $X \subset \mathbb{Z}^2$ is $\{ X + (x, y) | x, y \in \Z \}$, i.e. the set of all its translations. The shape of an assembly $\alpha$ is the shape $S_\alpha$ of its domain: two assemblies have the same shape if they have the same domain modulo translation. An assembly $\alpha$ contains a shape $S$ if there is $s \in S$ such that $s \subset \dom \alpha$.  For some $\tau\in\mathbb{Z}^+$, an assembly $\alpha$ is \emph{$\tau$-stable} if every cut of $B_\alpha$ has weight at least $\tau$, i.e.\ a $\tau$-stable assembly cannot be split into two pieces without separating bound tiles whose shared glues have cumulative strength $\tau$. Given two assemblies $\alpha,\beta$, we say $\alpha$ is a \emph{subassembly} of $\beta$ (denoted $\alpha \sqsubseteq \beta$) if $\dom \alpha \subseteq \dom \beta$ and for all $p\in \dom \alpha$, $\alpha(p)=\beta(p)$ (i.e., they have tiles of the same types in all locations of $\dom \alpha$). 

A \emph{tile-assembly system} (TAS) is a triple $\calT=(T,\sigma,\tau)$, where $T$ is a tile set, $\sigma$ is a finite $\tau$-stable assembly called the \emph{seed assembly}, and $\tau\in\mathbb{Z}^+$ is called the \emph{binding threshold} (a.k.a. \emph{temperature}).
%If the seed $\sigma$ consists of a single tile, i.e. $|\sigma| = 1$, we say $\calT$ is \emph{singly-seeded}.
Given a TAS $\calT=(T,\sigma,\tau)$ and two $\tau$-stable assemblies $\alpha$ and $\beta$, we say that $\alpha$ \emph{$\calT$-produces} $\beta$ \emph{in one step} (written $\alpha \to^{\calT}_1 \beta$) if $\alpha \sqsubseteq \beta$ and $|S_\beta \setminus S_\alpha| = 1$.
That is, $\alpha \to^{\calT}_1 \beta$ if $\beta$ differs from $\alpha$ by the addition of a single tile.
The \emph{$\calT$-frontier} is the set $\partial^{\calT}\alpha = \bigcup_{\alpha \to^{\calT}_1 \beta} S_\beta \setminus S_\alpha$ of locations in which a tile could $\tau$-stably attach to $\alpha$.
When $\calT$ is clear from context we simply refer to these as the \emph{frontier} locations.

We use $\mathcal{A}^T$ to denote the set of all assemblies of tiles in tile set $T$. Given a TAS $\calT=(T, \sigma, \tau)$, a sequence of $k\in\mathbb{Z}^+ \cup \{\infty\}$ assemblies $\alpha_0, \alpha_1, \ldots$ over $\mathcal{A}^T$ is called a \emph{$\calT$-assembly sequence} if, for all $1\le i < k$, $\alpha_{i-1} \to^{\calT}_1 \alpha_i$. The \emph{result} $\lim \alpha$ of an assembly sequence $\alpha$ is the unique limiting assembly of the sequence. For finite assembly sequences, this is the final assembly; whereas for infinite assembly sequences, this is the assembly consisting of all tiles from any assembly in the sequence. We say that \emph{$\alpha$ $\calT$-produces $\beta$} (denoted $\alpha\to^{\calT} \beta$) if there is a $\calT$-assembly sequence starting with $\alpha$ whose result is $\beta$. We say $\alpha$ is \emph{$\calT$-producible} if $\sigma\to^{\calT}\alpha$ and write $\prodasm{\calT}$ to denote the set of $\calT$-producible assemblies. We say $\alpha$ is \emph{$\calT$-terminal} if $\alpha$ is $\tau$-stable and there exists no assembly that is $\calT$-producible from $\alpha$. We denote the set of $\calT$-producible and $\calT$-terminal assemblies by $\termasm{\calT}$. If $|\termasm{\calT}| = 1$, i.e., there is exactly one terminal assembly, we say that $\calT$ is \emph{directed}. When $\calT$ is clear from context, we may omit $\calT$ from notation. 
For any side of a location $d \in \{N,E,S,W\}$, let $\texttt{OPP}(d)$ be a function which returns the side opposite to $d$ (i.e. $\texttt{OPP}(N) = S$). Let $l \in \mathbb{Z}^2$ be a location in $\calT$ that is initially empty (i.e. not in the seed $\sigma$). Let $\texttt{IN}(l,i) \subseteq \{N,E,S,W\}$ be the set of directions such that for each $d \in \texttt{IN}(l,i)$, at step $i$ of $\vec{\alpha}$ a tile is placed in the location adjacent to $l$, in direction $d$, which has a non-null glue on side $\texttt{OPP}(d)$. That is, the set of directions $\texttt{IN}(l,i)$ represents the sides of $l$ adjacent to neighboring locations that received tiles in step $i$ and have glues facing $l$.

Given TAS $\calT$, we define $\shapeasm{T} = \{S_\alpha \mid \alpha \in \termasm{T}\}$ as the set of all shapes of terminal assemblies in $\calT$. Clearly, if $\calT$ is directed $|\shapeasm{T}| = 1$.% However, even if $\calT$ is not directed it may be the case that $|\shapeasm{T}| = 1$ (see \cite{BryChiDotKarSekTOC} for an example). In such a case we call the system \emph{shape directed}.

\subsection{The Synchronous Tile Assembly Model}

The Synchronous Tile Assembly Model (syncTAM) is the variant of the aTAM where at every step of assembly, rather than a single location of the frontier being non-deterministically selected to receive a tile, \emph{all} frontier locations simultaneously receive a tile each. As the frontier of a system can grow arbitrarily large, the number of tiles added per assembly step may also grow arbitrarily large. As in the aTAM, if any single frontier location allows $\tau$-strength binding of tiles of multiple types, one of those types is non-deterministically chosen for each such location.

We formally define the syncTAM in the same way as the aTAM with the exception of the definition of a single step in the assembly sequence. Given a syncTAS $\calS = (T, \sigma, \tau)$ and two $\tau$-stable assemblies $\alpha$ and $\beta$, we say that $\alpha$ \emph{$\calS$-produces $\beta$ in one step}, written $\alpha \to^{\calS}_1 \beta$, if $\alpha \sqsubseteq \beta$ and $\dom(\beta) \setminus \dom(\alpha) = \partial^{\calS}(\alpha)$. Note that like in the aTAM, we define $\partial^{\calS}(\alpha)$ to be the set of locations where a tile can $\tau$-stably attach to $\alpha$.  The syncTAM inherits all other definitions and notation from the aTAM.

To distinguish between a tile assembly system (a.k.a. TAS) in the aTAM versus one in the syncTAM, we will use the term \emph{syncTAS} to refer to a TAS in the syncTAM. The class $\mathfrak{T}$ will be the set of all syncTASs.

\subsection{Additional definitions}

This section includes some additional helper functions, terminology related to simulation, etc.

Let $D = \{N,E,S,W\}$ be the set of directions. The function $\texttt{OPP}: D \rightarrow D$, given direction $d \in D$, returns the opposite direction (e.g. $\texttt{OPP}(N) = S$). We say that two sides are \emph{adjacent} if their directions are one of the pairs in the set $\{\{N,E\},\{E,S\},\{S,W\},\{W,N\}\}$, and \emph{across-the-gap} if their directions are one of the pairs in the set $\{\{d, \texttt{OPP}(d)\} | d \in D\} = \{\{N,S\},\{E,W\}\}$.

Given a tile type $t$ and direction $d \in D$, the function $G_d(t)$ returns the glue on side $d$ of $t$ as an ordered pair $(l,s)$ where $l$ is the glue's label and $s$ is the glue's strength. If $t$ does not have a glue on side $d$, $G_d(t)$ instead returns \emph{null}.

Given $S$, a connected set of points in $\mathbb{Z}^2$, we define a version of $S$ \emph{scaled by factor $c$}, and denote it by $c \cdot S$, as $c \cdot S = \{(x,y) \in \mathbb{Z}^2 \mid (\lfloor \frac{x}{c} \rfloor, \lfloor \frac{y}{c} \rfloor) \in S\}$. Intuitively, $c \cdot S$ is a version of $S$ expanded by replacing each point of $S$ by a $c \times c$ square of points. Likewise, for a set $A$ of shapes, $c \cdot A$ is defined as $\{ c\cdot S \mid S \in A \}$.

%We define a \emph{path} $p$ as an ordered list of distinct locations in $\mathbb{Z}^2$, and refer to the $i$th location in $p$ as $p[i]$, so that each location $p[i]$, for $0 \le i < |p|-1$, is adjacent to $p[i+1]$.

%\begin{definition}[Dependence path]\label{def:dep-path}
%Given an aTAM system $\calT$, a producible assembly $\alpha \in \prodasm{\calT}$, assembly sequence $\vec{\alpha}$ which produces $\alpha$, and locations $l_1,l_2 \in \dom{\alpha}$, we say that there is a \emph{dependence path from $l_1$ to $l_2$} if and only if there exists a path $p$ from $1_1$ to $l_2$ in $\dom{\alpha}$ such that, in $\vec{\alpha}$, for each location $p[i]$, for $0 < i < |p|-1$, (1) the tile at location $p[i]$ was placed before the tile in $p[i+1]$, and (2) the tile at location $p[i+1]$ was required to form a bond with the tile at location $p[i]$ in order to attach.
%\end{definition}

%Intuitively, a dependence path from $l_1$ to $l_2$ is a path formed by sequential tile attachments that leads from $l_1$ to $l_2$. Note that, by definition, a dependence path is directional and therefore a dependence path from $l_1$ to $l_2$ cannot be the same path as a dependence path from $l_2$ to $l_1$.

In the spirit of locally consistent systems \cite{USA}, some so-called \emph{locally adjacent-only} systems exhibit remarkably simple behavior, making them an invaluable intermediary step in the simulation of arbitrary systems. Similarly for \emph{staggered} syncTAS.

%\begin{definition}
%\label{def:adj-only}
%    A system $\calS$ in either the aTAM or the syncTAM is \emph{adjacent-only} if, for each producible assembly $\alpha$ of $\calS$, for each frontier location in $\alpha$, any tile type which can $\tau$-stably bind in that location can do so by either:
%    \begin{itemize}
%        \item binding to exactly one neighbor in $\alpha$
%        \item binding to exactly two neighbors in $\alpha$ on adjacent sides.
%    \end{itemize}
%\end{definition}

%\begin{definition}
%\label{def:locally-adj-only}
%A system $\calS$ in either the aTAM or the syncTAM is \emph{locally adjacent-only} if $\calS$ is %adjacent-only and, for each producible assembly $\alpha$ of $\calS$, for each frontier location in $%\alpha$, if a tile type can $\tau$-stably bind to two neighbors in $\alpha$ on adjacent sides then %there exists no other set of two adjacent sides onto which a tile type can $\tau$-stably bind.
%\end{definition}\todo{FlB: should we fold adjacent-only and locally adjacent-only into the same definition? I don’t think we ever use ``merely adjacent-only''}

\begin{definition}[Locally adjacent-only]
    \label{def:locally-adj-only}
    A system $\calS$ in either the aTAM or the syncTAM, with tile set $T$, is \emph{locally adjacent-only} iff, for each producible assembly $\alpha$ of $\calS$, for each frontier location $l \in \partial^\calS(\alpha)$, if a tile of type $t \in T$ can $\tau$-stably bind to two neighbors at location $l$, then these neighbors are on adjacent sides of $l$ and also, there exists no different set of two sides adjacent to $l$ onto which a tile of any type $t' \in T$, with $t \ne t'$, can $\tau$-stably bind.
\end{definition}%\todo{FlB: I made the executive decision to fold the two definition into one, as we barely used ``merely adjacent-only'' at all. MJP: Ok, I was a bit confused by what was left but am trying to make things consistent. Since you left ``locally adjacent-only'' as the name, and ``adjacent-only'' has been used before, let's just use locally adjacent-only throughout.}

\begin{definition}[Staggered]
\label{def:staggered}
    A system $\calS$ in the syncTAM is \emph{staggered} if for every assembly sequence $\vec{\alpha}$ of $\calS$, in every step of $\vec{\alpha}$, for each empty position $z \in \mathbb{Z}^2$, at most one position neighboring $z$ receives a tile which contains a non-null glue abutting $z$. 
\end{definition}

\section{Overview of the Results}\label{sec:tour}

In this section, we provide high-level overviews of our main results, with full details and proofs deferred to later sections.

%%%%%%%%%%%%%%%%%%%%%%%%%%%%%%%%%%%%%%%%%%%%%%%%%%%%%%%%%%
%%%%%%%%%%%%%%%%%%%%%%%%%%%%%%%%%%%%%%%%%%%%%%%%%%%%%%%%%%
\subsection{Bounded-Width Computation in the syncTAM}

One of the earliest tools for exploring the limits of the aTAM, and arguably one of the most fruitful is the Window Movie Lemma~\cite{TempOneNotIU}, which bounds the amount of information that can be shared across a narrow cut of the assembly. When the productions of a given aTAS $\calS$ all fit in a $k$-wide band, this restriction applies everywhere along that band. \Cref{thm:aTAM-buchi} uses this limited information passing to prove that the set of productions of $\calS$ admits  A result of this kind cannot hold in a synchronous setting. The Window Movie bounds the amount of information which can cross a narrow cut of an assembly. In contrast, the syncTAM exhibits undecidable behavior even for assemblies with no attachments outside $\{-9, \dots, 9\} \times \mathbb{Z}$.

\begin{restatable}[$k$-Band Boundedness]{definition}{kBandProblem}
    A TAS $\calS$ in the aTAM or the syncTAM is \emph{$k$-band bounded} if each producible assembly $\alpha$ of $\calS$ satisfies $\operatorname{dom} \alpha \subset (\{-k, \dots, k\} \times \mathbb{Z})$.
\end{restatable}

\begin{restatable}{lemma}{aTAMkbounded}\label{lem:aTAM-k-bounded}
    For all $k$, it is decidable given a TAS whether it is $k$-Band Bounded. Actually, that problem is in Co-NEXPTIME.
\end{restatable}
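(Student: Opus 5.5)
We will decide the complement: a TAS $\calS=(T,\sigma,\tau)$ is \emph{not} $k$-band bounded iff some producible assembly has a tile outside the band $B_k=\{-k,\dots,k\}\times\mathbb{Z}$. Producible assemblies are closed under prefixes of assembly sequences, so this happens iff some producible assembly $\alpha$ with $\dom\alpha\subset B_k$ has a frontier location at $x=\pm(k+1)$. It therefore suffices to bound the size of a minimal such witness and then guess and check it, which gives a Co-NEXPTIME procedure. We may assume $\sigma\subset B_k$, since otherwise the answer is trivially negative.

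\textbf{aTAM case.} Take a shortest assembly sequence from $\sigma$ to an assembly $\alpha$ inside $B_k$ that has a frontier location outside $B_k$. Cut $B_k$ by horizontal windows, each a row boundary of $2k+1$ edges. For each window, record its \emph{window movie}: the ordered sequence of glues that appear across it. A movie has length at most $2k+1$, and each entry is one of at most $|G_T|^{O(1)}$ glue/position choices, so there are at most $N=(|G_T|(2k+1))^{O(k)}$ possible movies. By the Window Movie Lemma, if two windows far from the seed and from the escape point carry the same movie, the part of the assembly between them can be excised and the rest translated, and the result is still producible. Hence in a minimal witness the rows strictly above the seed region (and likewise below it) have distinct movies, so the witness spans at most about $2N+|\sigma|$ rows. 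Its size is then $(2k+1)\cdot(2N+|\sigma|)$, which is exponential in the input size for fixed $k$.

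\textbf{syncTAM case.} Here Window Movie pumping fails, which is the main obstacle; the paper's later constructions in fact compute inside a bounded band. I would therefore read the statement as concerning aTAM TASs, and for the syncTAM I would only claim semi-decidability of the complement. If the lemma is meant for both models, the natural first attempt is a row-by-row transfer argument, but synchronous timing correlations break it, so I expect this to be impossible in general.

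\textbf{Verification.} The nondeterministic checker guesses an exponential-length assembly sequence whose assemblies stay inside $B_k$ (a sequence of at most $(2k+1)(2N+|\sigma|)$ single-tile additions). It verifies each attachment for $\tau$-stability in polynomial time per step, and finally checks for a $\tau$-stable attachment at column $\pm(k+1)$. This puts the complement in NEXPTIME, so $k$-band boundedness is in Co-NEXPTIME.

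The delicate step is the Window Movie excision when the escape point lies between the two repeated windows. It is handled by choosing the repeated pair on the same side of both the seed and the escape row, and by applying the pigeonhole argument to windows ordered by distance from the seed.
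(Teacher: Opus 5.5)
Your proof is correct and follows essentially the paper's argument. Excision between horizontal windows with equal movies, via the Window Movie Lemma and pigeonhole, bounds the height of a minimal escaping witness, which is then guessed and checked; restricting to repeated pairs on the same side of both the seed and the escape row also bounds the rows beyond the escape row and on the far side of the seed, a point the paper's write-up leaves implicit. There are two harmless slips: that restriction gives three groups of pairwise-distinct windows, so about $3N+|\sigma|$ rows rather than $2N+|\sigma|$, and for fixed $k$ your $N=(|G_T|(2k+1))^{O(k)}$ is polynomial rather than exponential in the input size, which only strengthens the bound (exponential size arises only if $k$ is given in unary as part of the input).
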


While proving \cref{lem:aTAM-k-bounded} is a relatively simple application of the Window Movie Lemma to be found in \cref{sec:bounded}, such a result cannot hold in the syncTAM, which is shown in the next section.

The comparision between the aTAM and the syncTAM in the $k$-band-bounded regime is especially striking. The productions and dynamics of the aTAM are described by a Finite Automaton, using a Büchi acceptance condition for infinite productions. The details, including some extra definitions are in \ref{sec:buchi}.

\begin{restatable}{theorem}{aTAMbuchi}\label{thm:aTAM-buchi}
    Let $\calT$ be a $k$-band-bounded aTAS on tile-set $T$, and $\mathcal{L}_\prec(T)$ be the set of representations of decorated (final) productions of $\calT$. Then $\mathcal{L}_\prec(\calT)$ can be recognized by a deterministic finite automaton with a Büchi acceptance condition.
\end{restatable}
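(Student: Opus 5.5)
We will encode a decorated production of $\calT$ as a word, infinite or finite, over a finite alphabet of \emph{slices}. A slice is row $y$ of the band $\{-k,\dots,k\}\times\{y\}$. For each of its $2k+1$ positions it records the tile type (or emptiness) and a decoration, and the word is read in increasing $y$. The decoration $\prec$ encodes the relative attachment order, namely which neighbor each tile bound to when it attached. We assume here that the definitions of \cref{sec:buchi} make the decoration a finite local label of this kind. Since productions are two-way infinite in principle, we first normalize. The seed $\sigma$ is finite, so it occupies rows in a bounded interval $[y_0,y_1]$. We split the production into the finite part through row $y_1$, an upward ray, and a downward ray. The downward ray is reversed, and the two rays are interleaved or read as a pair of tracks. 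Each then becomes a one-way word, and a product of automata handles both. The plan is to build a deterministic automaton whose state, after reading rows up to $y$, summarizes everything about the assembly below row $y$ that can influence what happens above it. Büchi acceptance will encode ``no further tile can attach'', i.e.\ finality.

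The key step is a window-movie style summary across the horizontal cut between rows $y$ and $y+1$. This cut has at most $2k+1$ glue positions. In the aTAM the order of attachments is irrelevant up to the final assembly, so time need not be tracked exactly. What matters across the cut is the following. For each of the $2k+1$ edges, we record whether a bond is formed across it and with which glue. We also record the finite information of which sites are reachable (connected to the seed) from which side through the decoration. Because attachment is asynchronous and tiles never detach, the validity of the part above the cut depends only on this crossing data. Here validity means every tile has $\tau$ support via earlier tiles, the support relation is acyclic, and the part is connected to the seed. This is the content of the Window Movie Lemma once time is quotiented away. So the automaton's state is the last slice together with a bounded amount of connectivity and dependency bookkeeping. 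The dependency bookkeeping is a partial order on the $2k+1$ cut positions, indicating which must be supported from below versus from above. This gives finitely many states. Each transition checks local consistency: glues match, each new tile has $\tau$-strength support consistent with $\prec$, and the partial orders compose acyclically.

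Determinism is then essentially free, since the decoration makes the transition from (state, slice) unique. Otherwise we subset-construct on the finite part. Büchi determinization would be problematic, but we avoid needing it as follows. The acceptance condition must express two things. First, every tile is eventually connected to the seed through a well-founded dependency chain. Second, the assembly is terminal, meaning no frontier site admits a $\tau$-stable attachment. Terminality is local: it can be checked slice by slice, using the previous and next slices, and is thus a safety condition folded into transitions. Well-foundedness is the delicate part. In an infinite word, a tile could be ``supported'' by an infinite chain coming from above, never grounded in the seed. We handle this with a Büchi condition that visits accepting states infinitely often. A state is accepting exactly when every pending obligation that is still open has been discharged. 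An open obligation is a tile whose support is promised from above. Since obligations live on the finite cut, the usual breakpoint technique, as in Miyano--Hayashi, keeps the construction deterministic.

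The main obstacle is this last step. We must show that the finite cut summary really captures well-founded support and connectivity. Infinitely many rows could otherwise create infinite descending dependency chains that each local check misses. We must also show that the breakpoint Büchi condition accepts exactly the genuine productions. For this I would first prove a pumping-free characterization: a decorated band configuration is a production iff its decoration is acyclic and every tile has a finite dependency path to the seed. Then I would argue that such paths cross each cut at most $2k+1$ times in a tracked way, so that the breakpoint condition is sound and complete.
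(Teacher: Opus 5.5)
Your overall plan matches the paper's: local checks (every edge oriented with the seed as the minima, $\tau$-strength toward predecessors, terminality), a cut summary of the $\prec^*$-relations between boundary positions to exclude cycles, and a Büchi component for well-foundedness. The gap is in what that last component enforces. You plan to show that a decorated configuration is a production iff its decoration is acyclic and every tile has \emph{a} finite dependency path to the seed, and you discharge an obligation once a tile is grounded. That condition is existential and too weak. The correct condition (\cref{lem:local-prec-charac}) is that $\prec^*$ has no infinite descending chain. By K\"onig's lemma this means every tile has finitely many $\prec^*$-ancestors.

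Here is a counterexample. At $\tau=1$, take $d_i=(0,i)$ and $c_i=(1,i)$ for $i\ge 0$, with seed $\{d_0\}$ and all abutting glues matching. Orient $d_i\prec d_{i+1}$, $d_i\prec c_i$, and $c_{i+1}\prec c_i$. Every edge is oriented, the minima are exactly the seed, and each tile has enough strength toward its predecessors. There is no cycle, and every $c_i$ is grounded through $d_i$. But $c_0\succ c_1\succ c_2\succ\cdots$ would require infinitely many attachments before $c_0$. So this decoration is the local predecessor relation of no assembly sequence, yet an automaton built on your characterization accepts it.

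To repair this, make every predecessor edge that leaves the already-read region an obligation, not only those needed for $\tau$-support. An obligation is discharged only when every backward chain starting from it has terminated inside the read region. A breakpoint (accepting visit) occurs when every obligation alive at the previous breakpoint has been discharged. This stays finite and deterministic: there are at most $2(2k+1)$ crossing edges, and your cut summary supplies the internal $\prec^*$-connectivity needed to follow chains that leave and re-enter the read region. This is essentially the paper's automaton for the language $F$, which orders the hanging edges by age and accepts whenever the oldest one is discharged.

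Also read the two rays synchronously as one word of paired slices, as the paper does by folding columns $x$ and $-x$ into one letter. Do not use a product of independent automata: cycles and descending chains can pass between the rays through the finite middle region, and separate automata would not see them.
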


Given such a notable weakness in the computational power of the aTAM when assemblies are confined in one dimension, the following result reveals significant power derived from the synchronicity of the syncTAM. It shows that even when one dimension is bounded by a constant, resulting essentially in one-dimensional, linear space, syncTAM systems are capable of Turing universal computation. We prove this by construction, showing that an arbitrary two-counter machine can be simulated using assemblies of fixed width (at temperature 2 and also at temperature 1). A two-counter machine has two \emph{counters}, or \emph{registers}, each of which is capable of storing an arbitrary natural number. The key technique of the construction is to encode the current value in each register as the amount of space between the tops of two growing subassemblies. These subassemblies grow in zig-zag patterns, with the ability of the ``program'' portion of the assembly to cause one of them to speed up or slow down relative to the other (by changing the lengths of the zig-zags). A two-counter machine only needs to be able to increment or decrement the value in a register, or to tell when it is zero. The first two operations are performed by the slowing or speeding mentioned, and testing for zero is done by growth that can detect when the subassemblies for a counter are at the same height. This is all easily done within a fixed width that is bounded by the sums of the maximum zig and zig lengths. A simple conversion of two-counter machine instructions into sets of tile types, along with the generation of a seed with well-defined distances between the portions that set the initial values of the registers, allows a syncTAM system to simulate a two-counter machine. Since two-counter machines can simulate Turing machines, syncTAM systems bounded in width are thus Turing universal.

\begin{restatable}{theorem}{syncTAMkbounded}\label{thm:syncTAM-k-bounded}
\label{thm:Turing_band}
  It is undecidable, given a SyncTAS whether it is $9$-band bounded.  
\end{restatable}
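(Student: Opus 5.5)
We reduce from the halting problem for two-counter (Minsky) machines, which is undecidable. Given a two-counter machine $M$ with initial counter values (say both $0$), we effectively build a syncTAS $\calS_M$ (temperature 2, and then a temperature 1 variant) with two properties. If $M$ never halts, every producible assembly of $\calS_M$ lies in $\{-9,\dots,9\}\times\mathbb{Z}$. If $M$ halts, some producible assembly places a tile outside this band. We get the second property by adding, for each halting state, a ``flare'' of tiles that grows horizontally past column $\pm 9$. Since $M\mapsto\calS_M$ is computable, deciding $9$-band boundedness would decide halting.

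The construction follows the overview. The assembly grows upward forever inside the band. Each counter $c\in\{1,2\}$ is represented by two vertically growing zig-zag subassemblies, a ``reference'' column and a ``value'' column, occupying disjoint narrow sub-bands. The value of $c$ is the vertical offset between their growth fronts. Because the syncTAM is synchronous, a zig-zag of horizontal width $w$ gains one row every $2w$ steps or so. Let a column temporarily switch between a zig-zag of length $a$ and one of length $b$. This shifts its height relative to a column running at constant pace by a fixed amount per ``program tick.'' So increment and decrement become speeding up or slowing down the value column relative to its reference for one tick. The program (finite-state control) is a third zig-zag strand carrying the current state of $M$ in its tile types. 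It is clocked against the reference columns and sends signals sideways through glues to select the zig-zag lengths. For zero-testing, the front of the value column detects whether it is level with the reference front. A probe tile attaches at a location that is reachable only when both fronts sit at the same height, and otherwise the probe times out. We have to bound the sum of the maximum zig lengths and the separating strands to fit the total width in 19 columns, namely $\{-9,\dots,9\}$.

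First, I will fix the encoding and clock period and show by induction on simulated steps that after the $n$-th tick the offsets equal the counter values of $M$ after $n$ steps. The control strand then holds $M$'s state. Second, I will verify that the system is essentially deterministic, so that no unintended producible assembly leaves the band. Nondeterminism in the syncTAM could produce stray assemblies, so each frontier location should admit a unique tile type. Third, I will add the halting flare and check the width count. Fourth, for temperature 1, I will replace cooperative zero-detection by synchronous ``race'' collisions: two non-cooperatively growing paths arrive at a location, and whichever arrives first blocks the other, since tiles never detach. Timing thus encodes the comparison.

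The main obstacle is the zero test and the timing bookkeeping. Because counter values are unbounded, the offset between fronts can grow arbitrarily. The signal from the value column to the control strand must therefore be carried through paths whose length does not depend on the offset. Otherwise the clock period would vary with the counter value and desynchronize the other counter. I would first try making all cross-communication happen only at the fronts. The control strand waits at a fixed phase, and the zero test is implemented as a blocking collision between the value front and a probe launched from the reference front exactly when they coincide. I would then check case by case that a non-zero offset never lets the probe succeed, and that a blocked or unblocked probe yields one of two fixed-length responses.
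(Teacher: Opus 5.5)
Your proposal follows essentially the same route as the paper. It reduces from the halting problem for two-counter machines and encodes each counter as the lag of a counter strand behind a constant-pace strand; the paper phrases this as the delay with which the counter gadget reaches the fixed cells $z_l(i)$, $z_r(i)$ at altitude $6i$ relative to the transition gadget. The counter is changed by running slower or faster zig-zags (28 or 12 ticks per step versus 20), zero-tested by a race for a single cell, and a halting flare crosses columns $\pm 9$.

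The only inessential difference is your separate per-counter reference columns. The paper lets the transition (control) strand itself act as the common clock for both counters, which is what lets the three strands fit in columns $-9$ to $-4$, $-4$ to $4$, and $4$ to $9$.
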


We prove \cref{thm:syncTAM-k-bounded} by providing a construction that allows an arbitrary two-counter machine to be simulated by a syncTAS without placing any tiles with $x$-coordinates that are $<-9$ or $>9$. However, systems can be designed to place tiles outside of those bounds if and only if a given two-counter machine halts. Since two-counter machines are Turing universal, their halting problem is undecidable and thus the syncTAM-$9$-Band Boundedness problem is undecidable. Full details can be found in \cref{sec:bounded}.

The construction is made of a number of gadgets, each with its function and invariants. All productions fit in the  $\{-9, \dots, 9\} \times \mathbb{Z}$ band, except a few tiles which are only added if $M$ halts. Tiles in the columns between $-4$ and $4$ are from the ``transition gadgets'', columns $-9$ to $-4$ are from the ``left counter gadgets'', and columns $4$ to $9$ are from the ``right counter gadgets''. The trick is that while altitude in the assembly of $\calS_M$ represents increasing time in the run of $M$ on input $(0, 0)$, the three gadgets around altitude $6i$ which represent the $i$-th transition of the run are not assembled at the same time. In other words, \emph{the value of the counters are represented by the delay between either counter gadget and the transition gadget}. Their nullity can be observed as it is the only case where the counter gadgets arrive at $z_l(i)$ or $z_r(i)$ before the transition gadget. They can be increased or decreased by having the counter gadgets rise faster or slower than the rate of the transition gadget.

\cref{fig:band-example-sim} is an illustration of this construction. An interactive version is available as a JavaScript simulator for two-counter machines, along with functionality that can convert an arbitrary two-counter machine into a syncTAM system that uses bounded-width and can be run in an embedded syncTAM simulator. They are available (along with examples)  online here: \url{http://self-assembly.net/software/two_counter_machines/two_counter_machines-latest/} \cite{TwoCounterSimulator}.

As a consequence of the \cref{lem:adj-only} and \cref{lem:temp1simulation-adj}, that problem is undecidable even at temperature 1, but we also provide an explicit construction.

\begin{restatable}{corollary}{syncTAMToneBounded}\label{cor:syncTAMone-k-bounded}
    There is a $k$ such that given a temperature 1 syncTAS, it is undecidable whether it is $k$-band bounded.
\end{restatable}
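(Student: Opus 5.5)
The plan is to reduce from \cref{thm:syncTAM-k-bounded} by two transformations. The first turns the temperature 2 system into a locally adjacent-only (and staggered) one via \cref{lem:adj-only}. The second simulates that system at temperature 1 via \cref{lem:temp1simulation-adj}, at a constant scale factor $c$. Fix a two-counter machine $M$ and let $\calS_M$ be the syncTAS from the proof of \cref{thm:syncTAM-k-bounded}. We use the following property of that construction: $\calS_M$ is $9$-band bounded iff $M$ does not halt, and the out-of-band tiles appear only after halting. Write $\calS'_M$ for the locally adjacent-only system from \cref{lem:adj-only} and $\calS''_M$ for the temperature 1 simulator of it from \cref{lem:temp1simulation-adj}. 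Each transformation is computable, so $M \mapsto \calS''_M$ is a computable many-one reduction.

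The key step is to check that both transformations preserve band boundedness up to a constant, and that this constant does not depend on $M$. Under the simulation definitions of \cref{simulation-def}, every producible assembly of the simulator maps, through its representation function, to a producible assembly of the simulated system. Each tile lies in, or in a bounded "fuzz" neighbourhood of, a $c\times c$ macrotile whose position is the scaled image of a simulated tile's location. So if every producible assembly of $\calS_M$ lies in $\{-9,\dots,9\}\times\mathbb{Z}$, every producible assembly of the final system lies in $\{-k,\dots,k\}\times\mathbb{Z}$ with $k = c'(9+1)+f$. Here $c'$ is the product of the two scale factors and $f$ accounts for fuzz and the offset of the scaled seed. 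Conversely, if $M$ halts, $\calS_M$ places a tile at some $x$ with $|x|\ge 10$. The simulation must eventually resolve a macrotile representing that tile, and that macrotile contains tiles with $|x| \ge 10c' > k$ once we take $k$ slightly smaller than the threshold. It is cleaner to choose the threshold so that it separates the two regimes: let $k$ be the largest $x$-coordinate a simulator tile can reach while only representing columns $\le 9$. Then any tile representing column $10$ or beyond lies outside the band, by disjointness of macrotile regions. We may also need to pad $\calS_M$ so that, on halting, it travels several columns outward rather than one. This is a trivial modification that ensures escape beyond the fuzz region.

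The main obstacle is that $c$ and $f$ must be independent of $M$, and that fuzz or helper growth must never stray outside the band unless it corresponds to real growth. For \cref{lem:temp1simulation-adj} this requires the scale factor to depend only on the simulated tile set, or better to be a universal constant. If instead it grows with $|T|$, I would apply the lemma in its intrinsically universal form, where the scale factor is fixed and only the seed encodes the tile set. If no construction independent of $M$ is available, I would fall back on the explicit construction mentioned in the text. That construction realises each gadget of the proof of \cref{thm:syncTAM-k-bounded} (transition gadget and the two counter gadgets) at temperature 1. Every cooperative attachment is replaced by a race between two single-glue paths, which the synchronous clock resolves: the path arriving first claims the location, and the arrival times encode the comparison. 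The zero-test is the only cooperative step. It becomes detection of which of the counter and transition fronts reaches height $z_l(i)$ first, and this fits in a constant number of extra columns, giving an explicit $k$.
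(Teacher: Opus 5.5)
Your main route is the paper's own. Its proof of this corollary is the single line ``apply \cref{thm:temp1simulation}'' (that is, \cref{lem:adj-only} followed by \cref{lem:temp1simulation-adj}) ``to the system of \cref{thm:Turing_band}.'' Your bookkeeping for transferring band-boundedness through a simulation is sound: macrotile images, fuzz confined to orthogonally adjacent blocks, and padding the halting gadget so it escapes past any fuzz coming from columns $\pm 9$. You also correctly spot a point the paper's one-liner leaves implicit. As $M$ ranges over all two-counter machines, the tile set of $\calS_M$ grows with $M$. The scale factors of \cref{lem:adj-only} ($O(|T|)$) and \cref{lem:temp1simulation-adj} ($O(\log|T|)$) grow with it. So as set up, the reduction only shows undecidability when $k$ is part of the input, not for a fixed $k$.

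Your proposed repair does not work, however. The intrinsically universal form does not have a fixed scale factor. \cref{def:iu-specific-temp} only provides a constant $m$ for each simulated system. In the construction of \cref{thm:T2-IU-for-T1} the macrotile side is $c = 6l_T+1$ with $l_T = O(|T|\log^2|T|)$, so routing through $\mathcal{U}_1$ still gives a width that grows with $M$.

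The missing idea is to fix the tile set rather than hope the scale factor is universal:
\begin{enumerate}
\item Take a single universal two-counter machine $M_U$, whose halting problem on a varying initial counter value $v$ is undecidable.
\item Encode $v$ only in the seed, as in the paper's note on initialization: replace the counter's starting tile by a seed path of length $8v$, built from a constant number of tile types.
\item All systems $\calS_{M_U,v}$ then share one tile set and temperature. Both \cref{lem:adj-only} and \cref{lem:temp1simulation-adj} are seed-independent (\cref{def:seed-ind}), so the resulting temperature 1 tile set and both scale factors do not depend on $v$.
\item Your width computation then yields a genuine constant $k$, and undecidability comes from the halting problem of $M_U$ on varying inputs.
\end{enumerate}

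Your fallback, a direct temperature 1 construction, is something the paper also mentions, and it gives $9$-band boundedness outright. As you describe it, though, it is only a sketch. It is also inaccurate in calling the zero-test the only cooperative step: the counter gadgets must also learn the increment instruction from the transition gadget. The paper's temperature 1 version handles this by ternary bit-reading of the transition gadget's tiles.
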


%%%%%%%%%%%%%%%%%%%%%%%%%%%%%%%%%%%%%%%%%%%%%%%%%%%%%%%%%%
%%%%%%%%%%%%%%%%%%%%%%%%%%%%%%%%%%%%%%%%%%%%%%%%%%%%%%%%%%
\subsection{Intrinsic Universality and the syncTAM}

Our second main result is that the syncTAM is \emph{Intrinsically Universal}, and moreover, the \emph{universal simulator} works at temperature 1. The formal definitions of the concepts of simulation and intrinsic universality for both the syncTAM and the aTAM are given in \cref{simulation-def}. For this overview, it suffices to know that a TAS $\calS$ can be \emph{simulated} by a TAS $\calU$ if there is a $k \in \mathbb{N}$ such that the evolution of the $k \times k$ square at $(kx, ky)$ (a \emph{macrotile}) in $\calU$  corresponds to the evolution of $(x, y)$: gathering input glues from neighboring macrotiles until an attachment eventually becomes possible and is simulated. The TAS $\calU$ shall then transmit the glues of the newly attached tile to the neighboring macrotiles.

%and a \emph{representation function} $R$ mapping assemblies $\alpha$ of $\calU$ with $\dom \alpha \subset \llbracket 0, k \rrbracket^2$ to tile types such that each 
%The main result of this subsection is \cref{thm:T1-IU}.

\begin{restatable}{theorem}{ToneIU}
\label{thm:T1-IU}

%Let $\mathfrak{T}$ represent the full class of all syncTAM systems. There is a tile set $\mathcal{U}_1$, a representation function generator $\mathcal{R}: \mathfrak{T} \to \REPL$ and a seed generator function $S: \mathfrak{T} \to \mathcal{A}^{U_1}_{< \infty}$ such that $\mathcal{U}_1$ is intrinsically universal for $\mathfrak{T}$ at temperature 1.\todo{this does not state that $\mathcal{R}(T)$ does not depend on the seed of $T$}
Let $\mathfrak{T}$ represent the full class of all syncTAM systems. There exists a tile set $\mathcal{U}_1$ that is a universal simulator for $\mathfrak{T}$ such that, for each $\calT \in \mathfrak{T}$, $\mathcal{U}_1$ can simulate $\calT$ at temperature 1. 
\end{restatable}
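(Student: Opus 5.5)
The proof goes through a chain of intermediate simulations, ending in a temperature-1 universal tile set. The paper has already pointed to \cref{lem:adj-only} and \cref{lem:temp1simulation-adj}, and I would use them as the backbone. Since simulation composes (a simulation of a simulation, with scale factors multiplying, is again a simulation), it suffices to do three things. First, convert an arbitrary syncTAS $\calT$ at any temperature $\tau$ into a locally adjacent-only, staggered syncTAS $\calT'$ that simulates it. Second, build a single temperature-1 tile set $\mathcal{U}_1$ that simulates every locally adjacent-only staggered system. Third, check that the seed and representation functions depend only on $\calT$ in the way the definition of intrinsic universality in \cref{simulation-def} requires.

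\textbf{Reduction to locally adjacent-only staggered systems.} I would use macrotiles of constant-times-$|T|$ size, one per tile location. Glues arriving from the four neighbouring macrotiles at arbitrary synchronous times are collected onto a central ``decision'' region by fixed-length paths. The key point is that the synchronous clock gives exact arrival times. I would therefore pad every input path to the same length and route them so that each decision point receives at most two signals, on adjacent sides. This is what yields local adjacency. I would also offset paths by one step wherever needed so that no empty location receives two glues in the same step, which yields staggering. Cooperative binding thresholds ($\tau \ge 2$, multiple glues summing to $\tau$) would be handled by encoding, for each subset of arrived glues, the set of tile types whose strength threshold is met. This is done by a cascade of adjacent-side cooperative decisions, which stays locally adjacent-only. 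Nondeterminism among candidate tiles is resolved by a single nondeterministic attachment at one location once the full input set is known. The delicate point is the synchronous faithfulness. Every simulated step must take exactly the same number $c$ of simulator steps. Otherwise a frontier location in $\calT$ could see its neighbours' glues at relative times different from the original, and a choice could be made before all inputs arrive. I would fix this by making every macrotile wait exactly $c$ steps before committing, with $c$ determined by $|T|$.

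\textbf{Temperature-1 simulation of adjacent-only systems.} Here the synchronicity substitutes for cooperation. A cooperative binding on two adjacent sides (say $N$ and $E$) can be emulated at temperature 1 by two noncooperative paths racing toward a common region. The $N$-signal path grows a geometric ``probe'' whose shape depends on its glue, and the $E$-path does likewise. Because the system is staggered, their arrival times at a meeting zone are exactly known. The first path to arrive blocks certain positions, and the second, arriving at a known later step, reads which positions are blocked. It thereby learns the first signal's glue and selects the outcome. Local adjacency guarantees there is only one pair of sides to consider at each decision, so no ambiguous three-way races occur. Making this universal means the macrotile must read the encoded tile set $T'$ from a table written into the seed macrotiles. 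That table is then copied noncooperatively into each new macrotile by a fixed-length copying path, with lookups done by the same race-and-block mechanism, again with uniform timing padding.

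\textbf{Main obstacle.} I expect the hardest step to be the temperature-1 universal lookup with exact timing. The copy of the transition table, its traversal, and the blocking-based comparison of glue labels must all finish in a number of steps independent of which labels are compared. Otherwise the synchronous step of $\calT$ is not faithfully simulated, and macrotiles fall out of phase, breaking the staggered/adjacent guarantees at the next level. My first attempt would be to encode labels in unary with a fixed width equal to the maximum label index. Each comparison would then sweep the whole width regardless of value, and a global padding constant would equalize every path.
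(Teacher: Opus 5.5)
The gap is your second step, and it is essentially the whole theorem, not a detail. Your first step is \cref{lem:adj-only}, and the paper also composes simulations as you do. The problem is what comes next. You need one temperature-1 tile set that reads an encoded table of $T'$ from the seed and simulates every locally adjacent-only staggered temperature-2 system, with timing independent of the data. You leave this as your ``main obstacle'' with only a tentative unary-encoding idea. With a fixed tile set at temperature 1, every path carries only constant information. So several tasks would all have to be done by geometric bit-reading: each label comparison, the choice among matching entries (including the nondeterministic choice and the no-match case), and copying the whole table to every output side. In effect you would be building a noncooperative, exactly-timed Turing-machine-like lookup from scratch, and you would also have to merge two adjacent input glues per simulated attachment. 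Such a tile set exists (the paper's composite is one), but nothing in your proposal constructs it. Nor can \cref{lem:temp1simulation-adj} supply it when applied to $\mathcal{T}'$ as you apply it, because its output tile set depends on $T'$. That dependence is exactly what pushes you into redesigning it as a universal machine.

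The missing idea is to swap the order of ``make it universal'' and ``make it temperature 1''; the paper never designs a universal machine at temperature 1.
\begin{enumerate}
\item It first reduces an arbitrary system to a staggered temperature-1 system (\cref{thm:temp1simulation}), so each simulated attachment is triggered by a single input glue.
\item It then builds a universal simulator $\mathcal{U}$ at temperature 2 for that class (\cref{thm:T2-IU-for-T1}). There the lookup is done by ordinary cooperative zig-zag Turing machines whose running time is padded to depend only on $T$. It also arranges that every system $(\mathcal{U},\mathfrak{Y}(\mathcal{T}),2)$ is itself staggered and locally adjacent-only.
\item The compilation of \cref{lem:temp1simulation-adj} is seed-independent, and all these universal systems share the tile set $\mathcal{U}$ and temperature $2$, differing only in their seeds. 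So applying it once to $(\mathcal{U},2)$ produces a single temperature-1 tile set $\mathcal{U}_1$.
\item \cref{lem:seed-ind-IU}, composed with \cref{thm:temp1simulation} on the input side, then makes $\mathcal{U}_1$ a universal simulator for all syncTAM systems.
\end{enumerate}
This is why the paper insists on seed-independence and on the staggered, locally adjacent-only guarantees in \cref{thm:T2-IU-for-T1}. With this composition, your main obstacle never has to be faced.
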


To prove \cref{thm:T1-IU}, we chain together a series of constructions that are meant to each account for distinct aspects of the dynamics of universal simulation. (A schematic depiction of the chain is shown in \cref{fig:crazy_diagram}.)

%%%%%%%%%%%%%%%%%%%%%%%%%%%%%%%%%%%%%%%%%%%%%%%%%%%%%%%%%%
\subsubsection{Reducing dynamics}

The first step, and arguably the most technically challenging, is to simulate any arbitrary syncTAM system by one which is locally adjacent-only, staggered, and at temperature 2. This part is also very specific to the syncTAM, as the concept of being staggered does not apply at all in an asynchronous setting. Furthermore, the simulation is \emph{seed-independent} (see \cref{def:seed-ind}) which means that, if the seed of the system to be simulated is changed, then only the seed of the simulating system needs to be changed (i.e. its tile set, temperature are unchanged).

\begin{restatable}{lemma}{adjonly}\label{lem:adj-only}
Let $\calS = (T,\sigma,\tau)$ be an arbitrary %temperature-2
syncTAM system. Then, there exists $S' = (T',\sigma', 2)$ such that $S'$ synchronously simulates $\calS$ at scale factor $O(|T|)$, and $S'$ is locally adjacent-only and staggered.
 Moreover, this simulation scheme is seed-independent, and $\calS'$ can be obtained from $\calS$ computably.
\end{restatable}

\cref{lem:adj-only} is proven by construction in \cref{sec:adj-proof}. Here, we provide a brief sketch of several main aspects of the construction.

Since the system which is to be simulated, $\calS$, is an arbitrary syncTAM system, its temperature parameter may be any positive integer and it may not be locally adjacent-only or staggered. Therefore, the simulating system, $\calS'$, must account for tile attachments in which: (1) any subset of 1, 2, 3, or 4 tile sides may be the \emph{input} sides that are used for initial binding, (2) any subset of the neighboring tiles providing input sides may arrive in any order and/or simultaneously, (3) input glues may be adjacent and/or across-the-gap, and (4) there may be nondeterminism that allows tiles of multiple type to attach in the same location (possibly using the same or different sets of sides as input sides). $\calS'$ simulates each tile of $\calS$ using a square macrotile (an example macrotile can be seen in \cref{fig:adj-only-macrotile_tour}), and if $n$ tiles are added during a step in the assembly of $\calS$, $\calS'$ ensures that the $n$ macrotiles that grow to represent those tiles all complete at the exact same step in $\calS'$, regardless of which of any of the previously mentioned scenarios occur for each of the $n$ tiles in $\calS$. In this way, the growth of macrotiles remains synchronized throughout the system.

\begin{figure}
    \centering
    \begin{tikzpicture}
    \begin{scope}[scale=1.8]
        
        %\fill[gray!25!white] (0,0) rectangle (1, 1);

        \draw[brown, ultra thick] (0, 1) -- (1, 1);
        \draw[magenta, ultra thick] (1, 1) -- (1, 0) 
            (.95, .95) -- (.95, .05)
            (.9, .9) -- (.9, .1);
        \draw[green, ultra thick] (1, 0) -- (0, 0);
        \draw[yellow, ultra thick] (0, 0) -- (0, 1) (.075, .075) -- (.075, .925);
        \draw[gray!50!white] (0, 0) -- (.2, .2) -- (.2, .8) -- (.8, .8) -- (.8, .2) -- (.2, .2);
        \draw[gray!50!white] (1, 0) -- (.8, .2)
            (1, 1) -- (.8, .8)
            (0, 1) -- (.2, .8);

        \draw (-2, 1) -- (-2, 0);
        \draw [cyan, ultra thick] (-2, 0) -- (-3, 0);
        \draw [yellow, ultra thick] (-3, 0) -- ++(0, -1) (-3.075, 0) -- (-3.075, -1);
        \draw [green, ultra thick] (-3, -1) -- (-2, -1);
        \draw (-2, -1) -- (-2, -2);
        \draw[gray!50!white] (-3, 0) -- (-3.075, 0) (-3, -1) -- (-3.075, -1);

        \draw[->] (.5, .5) to[out=180, in=0] (-2.5, -.5);
        \node[above] at (-2.5, 1) {$\tau = 3$};
    \end{scope}
    
        \node[right] (macro) at (3,0) {    
            \includegraphics[width=3.0in]{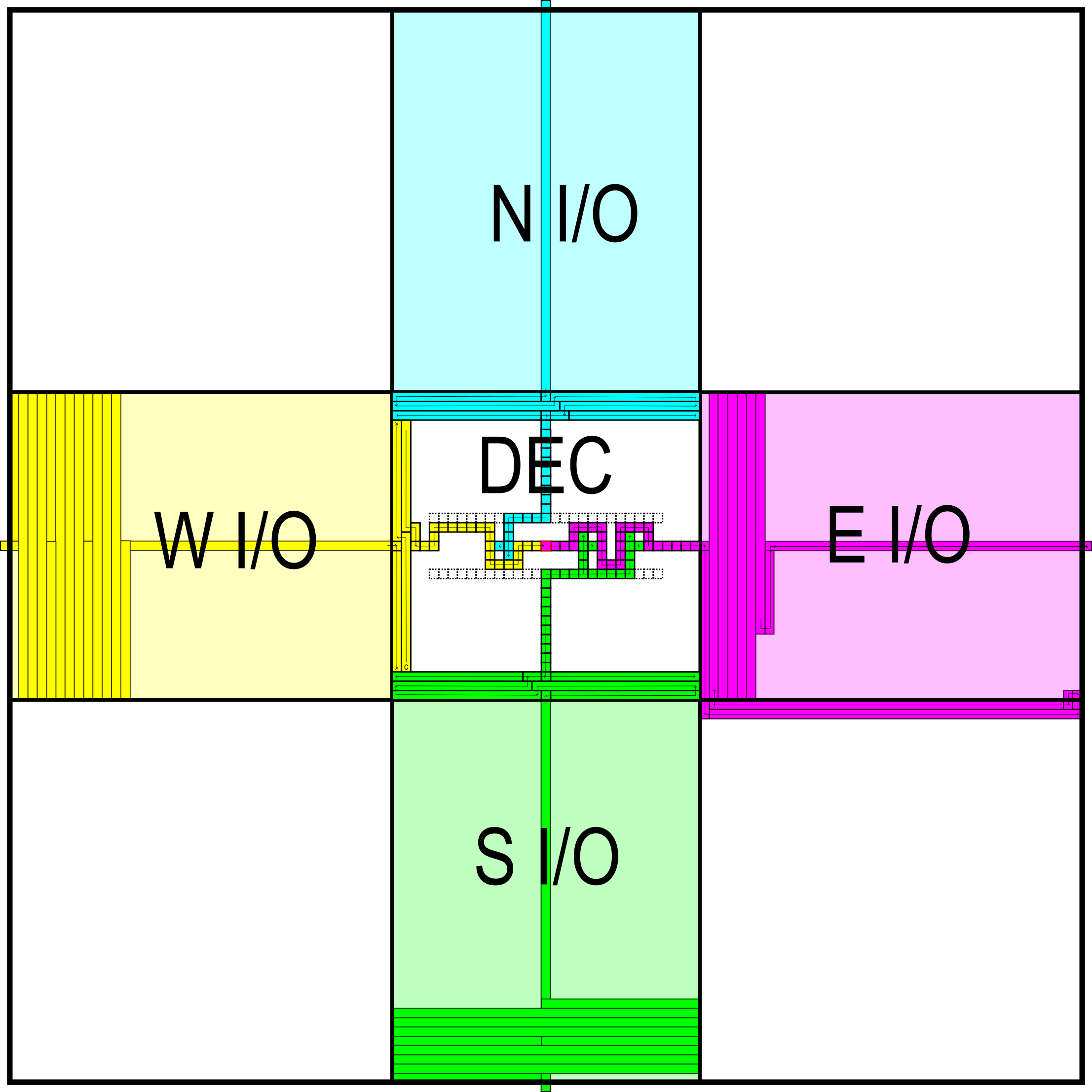}
        };

        %\draw[<-] (img.west) -- +(-1, 0);
        %\draw[<-] (img.north) -- +(0, 1);
        %\draw[<-] (img.south) -- +(0, -1);
        
    \end{tikzpicture}
    \caption{An example macrotile for the construction of the proof of \cref{lem:adj-only}, with (most) submacrotile names shown. In this example, the N, W, and S were input sides and the E an output side. The N had input delay 0 so arrived at the Decision submacrotile first, then the S arrived, and finally the W, which provided enough input glue strength to simulate a tile attachment by placing a tile in the central point of competition (in the Decision region labeled DEC, outlined in red), and then cause output of the one remaining glue, on the E side.\label{fig:adj-only-macrotile_tour}}
\end{figure}

The key technique for ensuring such synchronization is for each side of a growing macrotile that represents an input glue to select a random amount of time to delay its growth toward the center (which it does by growing a longer zig-zag path, rather than a straight path). The set of random delays from which one can be selected by each side are specific for each side and designed so that all possible orderings of arrivals of input side growth to the center are fairly represented and possible. This is done following the protocol of ``Go-First Dice'' \cite{gofirst4player}, and allows the computation at the center of a macrotile to be done in a way that allows each side to contribute its input and compute whether or not a tile attachment can be simulated one at a time. Doing so allows all paths of tiles to be carefully designed so that no empty location simultaneously receives multiple neighboring tiles (with incident glues), and thus the system is staggered. It also allows for paths to be designed so that, whenever cooperation is needed for a tile placement, that cooperation is done using pairs of adjacent tile sides that are unique, keeping $\calS'$ locally adjacent-only. The randomness is needed in order to simulate systems which are not directed (i.e. some locations may receive tiles of different types in different assembly sequences). The randomness allows the possibility of each pathway being taken (and is therefore not needed when simulating a directed system as all pathways lead to the same outcome). While fairness (with respect to the balanced probability of selecting each option) is not technically necessary, it comes for free with the use of Go-First Dice. 

Finally, the paths and points of cooperation are designed so that, once a macrotile receives sufficient input to grow into the representation of a tile in $\calS$, it is then able to ensure that paths that represent output glues grow to the edge of the macrotile and all arrive at exactly the same step (across all outputs of the macrotile, and all macrotiles that grow into tile representations during the same simulated step of $\calS$).

%%%%%%%%%%%%%%%%%%%%%%%%%%%%%%%%%%%%%%%%%%%%%%%%%%%%%%%%%%
\subsubsection{Removing cooperativity}

Next the temperature of the simulating system needs to be lowered to 1. This transformation needs to be done before feeding the system to be simulated to the universal simulator. %, as well as on the output of \cref{thm:T2-IU-for-T1}. 
Hence it needs to be algorithmic and seed-independent.

\begin{figure}
    \centering
    \begin{tikzpicture}
        \begin{scope}[scale=1.8, rotate=90]
        
        %\fill[gray!25!white] (0,0) rectangle (1, 1);

        \draw[brown, ultra thick] (0, 1) -- (1, 1);
        \draw[magenta, ultra thick] (1, 1) -- (1, 0) 
            (.95, .95) -- (.95, .05);
        \draw[green, ultra thick] (1, 0) -- (0, 0);
        \draw[yellow, ultra thick] (0, 0) -- (0, 1);
        \draw[gray!50!white] (0, 0) -- (.2, .2) -- (.2, .8) -- (.8, .8) -- (.8, .2) -- (.2, .2);
        \draw[gray!50!white] (1, 0) -- (.8, .2)
            (1, 1) -- (.8, .8)
            (0, 1) -- (.2, .8);

        \draw (-2, 1) -- (-2, 0);
        \draw [cyan, ultra thick] (-2, 0) -- (-3, 0);
        \draw [yellow, ultra thick] (-3, 0) -- ++(0, -1);
        \draw [green, ultra thick] (-3, -1) -- (-2, -1);
        \draw (-2, -1) -- (-2, -2);
        \draw[gray!50!white] (-3, 0) -- (-3.075, 0) (-3, -1) -- (-3.075, -1);

        \draw[->] (.5, .5) to[out=180, in=0] (-2.5, -.5);
        \node[above] at (-2.5, 1) {$\tau = 2$};
    \end{scope}
    
        \node[right] at (4, -2) {
    \includegraphics[width=2.9in]{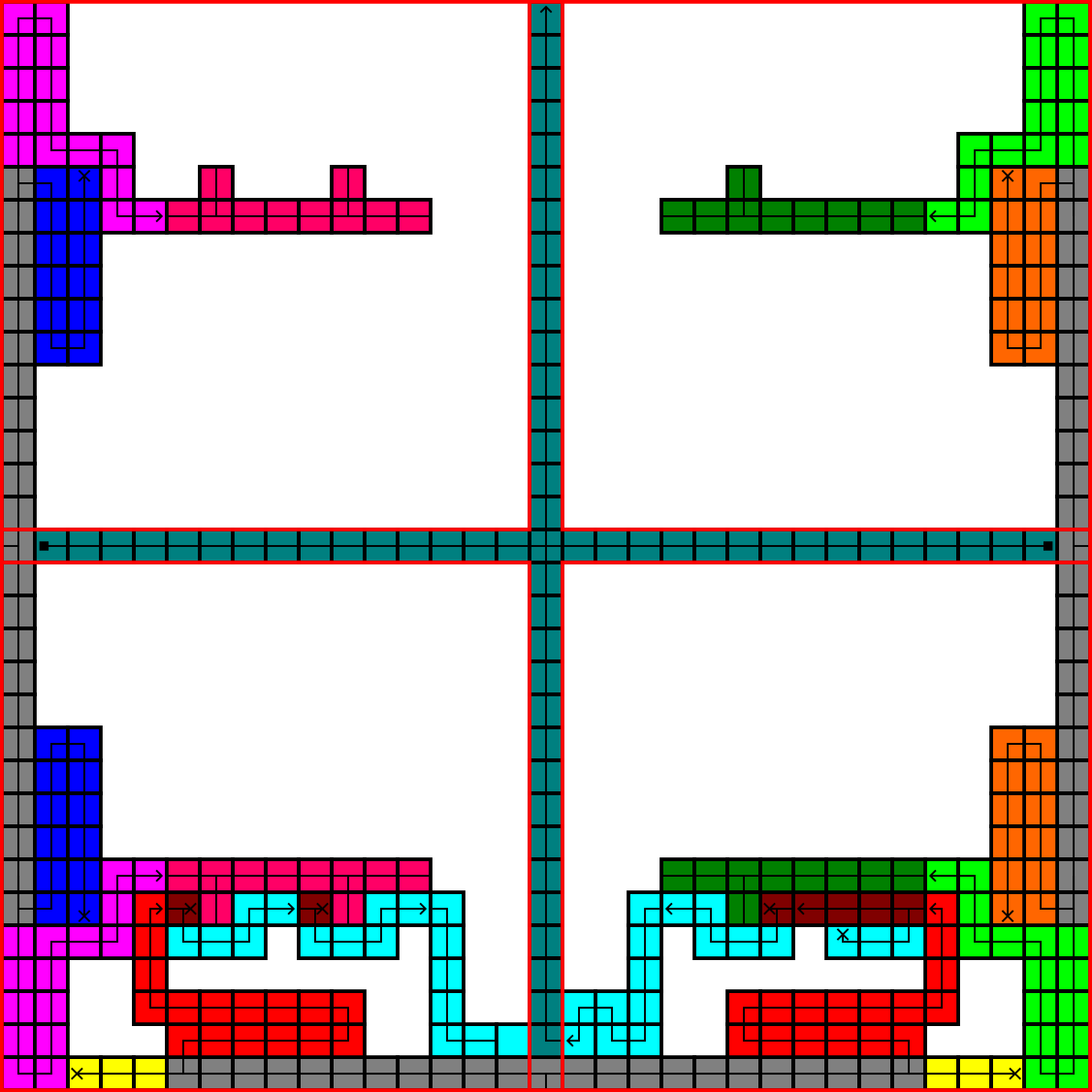}};
    \end{tikzpicture}
    \caption{An example macrotile for the construction of the proof of \cref{lem:temp1simulation-adj} which receives strength-1 input from the S, E, and W adjacent macrotiles. In this instance, the S and E glues end up cooperating and start the output. S and W fail to cooperate for lack of a suitable tile type in $\calT$. The East and West inputs arrived before the S input and each claimed the two corners next to them (in fuchsia and green respectively), which allows the corner-claiming gadgets of S (yellow) to start simulating cooperation (red tiles). Each macrotile region is enclosed with a red border. }
    \label{fig:temp1simulator-example-macrotile-tour}
\end{figure}

\begin{restatable}{lemma}{tempOneSimulationAdj}
        \label{lem:temp1simulation-adj}
    For any temperature 2 TAS $\calS$ which is locally adjacent-only and staggered in the syncTAM, there is a staggered temperature 1 TAS $\calS_1$ which simulates $\calS$ in the syncTAM.

    Moreover, this simulation scheme is seed-independent, and $\calS_1$ can be obtained from $\calS$ computably.
\end{restatable}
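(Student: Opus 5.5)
We prove \cref{lem:temp1simulation-adj} by a macrotile construction: each tile of $\calS$ becomes a $c \times c$ macrotile of $\calS_1$, with $c$ depending only on the tile set $T$ of $\calS$. The construction is guided by \cref{fig:temp1simulator-example-macrotile-tour}. Because $\calS$ is locally adjacent-only, every attachment in $\calS$ is of one of two kinds: a single strength-2 glue, or two strength-1 glues on a pair of \emph{adjacent} sides, meeting at a corner. Moreover, at each frontier location at most one such corner pair can be used. Because $\calS$ is staggered, in each step of $\calS$ at most one neighbor of an empty location delivers a glue. So input glues reach a macrotile one at a time, in distinct simulated steps. That is exactly the information a temperature-1 system needs to replace cooperation by the \emph{order of arrival} combined with the global clock.

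\textbf{The macrotile.} When a neighbor macrotile outputs glue $g$ on side $d$, a temperature-1 path carrying $g$ enters the macrotile along side $d$. It then grows a \emph{corner-claiming} gadget toward both adjacent corners $d\!\cdot\!d'$.
\begin{itemize}
\item If $g$ has strength 2, the path heads directly to the center. There it nondeterministically places a tile at a unique point of competition representing some $t$ with $g_{t,d}=g$.
\item If $g$ has strength 1 and reaches an unclaimed corner, it claims the corner and leaves there a "waiting" tile that exposes $g$ toward the side $d'$.
\item If $g$ arrives at a corner already claimed by an earlier glue $h$ from side $d'$, the two labels are compared. This is the red-tile step: a path carrying $g$ geometrically blocks or competes with the path carrying $h$. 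Whenever the pair $(g,h)$ matches some tile type of $T$ on sides $(d,d')$, the arriving path reads $h$ by attaching to the exposed waiting tile with a tile type specific to $(g,h)$. Since only one glue is read per attachment, this needs only temperature 1.
\end{itemize}
The staggered property guarantees that the two glues never arrive at the corner in the same step, so "first arrival claims, second reads" is well defined. The locally adjacent-only property guarantees that at most one corner can succeed in a given macrotile, so different corners never produce conflicting tile choices. As a safeguard against pathological timing, we still funnel every successful corner, as well as strength-2 inputs, to a single central competition point. A successful corner computation then sends a path to the center, where a tile type $t$ consistent with the inputs is chosen. This nondeterministic choice realizes exactly the set of tile types that could attach in $\calS$.

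\textbf{Output and synchronization.} After $t$ is decided, output paths for the glues of $t$ on its non-input sides grow to the macrotile edges. The key point is timing: every successful macrotile must finish its outputs a fixed number $c'$ of steps after the simulated attachment step. However, the decision time depends on the route taken (strength-2 versus corner, and which corner). To fix this, every route to the center is padded with zig-zag delay gadgets so that its length is identical, and output paths are likewise padded to equal length. Then one step of $\calS$ corresponds to exactly $c'$ steps of $\calS_1$. Failed corner attempts simply stop and never emit outputs. We must also check that $\calS_1$ is itself staggered. Within a macrotile this is arranged by hand: paths are routed so that no empty location has two simultaneously arriving glued neighbors. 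Across macrotiles it follows from the fact that $\calS$ is staggered together with the uniform timing.

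\textbf{Main obstacle.} The hardest step is the geometry of the corner gadgets. Up to four input paths, claims, failed attempts and the winning path must coexist inside one macrotile without blocking one another. Several failed corners may leave debris, and a late glue must still be able to reach its other corner and the center. I would first try to separate the macrotile into four disjoint quadrant lanes, one per corner, plus disjoint radial lanes to the center. The aim is that every gadget occupies only its own lane and that the overall design is planar. Once that layout is fixed, the remaining properties follow directly from the construction. Representation correctness follows by induction on the number of simulated steps. Seed independence and computability hold because the tile set is generated from $T$ alone, and the seed $\sigma_1$ is obtained by replacing each tile of $\sigma$ with its fully grown macrotile.
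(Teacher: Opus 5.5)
Your overall architecture matches the paper's: corner gadgets justified by the locally adjacent-only hypothesis, one input per macrostep thanks to staggeredness, a single central decision tile, and padding for synchrony. However, the step that is supposed to replace cooperation does not work at temperature~1.

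You say the late path ``reads $h$ by attaching to the exposed waiting tile with a tile type specific to $(g,h)$''. There are two problems with this.
\begin{enumerate}
\item A glue exposed by the waiting tile cannot wait. If any tile type of $\calS_1$ matches it, the empty location it faces is in the frontier. That location then receives a tile in the very next synchronous step, whether or not a $g$-path ever arrives. If no type matches it, nothing can ever read it.
\item At temperature~1 a tile is held by one matching glue, and mismatches do not prevent attachment. So the attaching type cannot be forced to depend on both neighbors. If a $(g,h)$-type can bind through the $h$-glue, so can every $(g',h)$-type, and symmetrically through the $g$-glue.
\end{enumerate}
Your justification that ``only one glue is read per attachment'' is exactly why the combination fails.

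The claim step has the same defect. At temperature~1 a path that finds the corner occupied is simply blocked and stops. Nothing in your description lets the second arrival detect the claim and switch into a reading mode. Your phrase about the paths ``geometrically blocking or competing'' points the right way, but it is never turned into a mechanism.

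The missing idea is that non-cooperative paths can exchange information only through geometry, and the global clock is what makes geometry readable. In the paper:
\begin{itemize}
\item The first arrival at a free corner writes $b_h$ as a chain of bit-writers, with one bump or no bump per bit.
\item The second arrival reads it with bit-readers. In each, two paths whose lengths differ by $2$ race, and the shorter wins unless a bump blocks it. The type of the winning path's last tile, which also carries $g$, records the bit.
\item The corner-claimer is itself such a race, between paths of lengths $11$ and $13$. This is what lets the later input notice that the corner is taken and continue growing.
\end{itemize}

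This also changes your synchronization argument. Reading a $1$ costs two more steps than reading a $0$, so the elapsed time depends on the value read. The padding must therefore be chosen as a function of $h$, not merely so as to equalize route lengths; the paper inserts delay bumps proportional to the number of zeros in $b_h$. Once this mechanism is in place, the layout problem you single out as the main obstacle is routine.

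A smaller point concerns the seed. It should be built from dedicated, I/O-marked seed tile types rather than from copies of grown macrotiles. These are joined by paths across macrotile boundaries wherever $\sigma$ has bonds, and the marking ensures seed tiles can never attach elsewhere.
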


This construction relies on two main gadgets which show how a global clock can substitute for cooperation. (See \cref{fig:temp1simulator-example-macrotile-tour} for an example macrotile.) Because $\calT$ is locally adjacent-only, it suffices to simulate cooperation in a corner of each macrotile. The construction makes use of gadgets known as \emph{bit-readers} and \emph{bit-writers} that can be used in non-cooperative systems to read and write information necessary for computation (first used in the syncTAM in \cite{syncTAM}, but widely used in other models and constructions \cite{CookFuSch11,Polygons,Polyominoes,SingleNegative,jDuples}). The bit-reader and bit writer gadgets combine data about the glues from two input sides in a corner. The corner-claiming gadget breaks the symmetry between the first and second input arriving in the same corner, so that the earlier one can wait and the later one can collect both inputs and proceed. Because $\calT$ is staggered, there can be no ties in their arrival times, and the time scale of the simulation is set to be large enough that the first corner gadget is done growing when the second input arrives. Carefully placed delays ensure this. Then, whenever one or two inputs suffice to allow simulation of a tile placement, paths are able to grow to place a tile in the center that is unique to the tile type being simulated and then to the output sides to represent output glues. The paths grow with sufficient delays to account for possible variance in the lengths of paths needed to read different values, keeping the completion of all macrotiles synchronized.

%%%%%%%%%%%%%%%%%%%%%%%%%%%%%%%%%%%%%%%%%%%%%%%%%%%%%%%%%%
\subsubsection{A universal simulator}

%The rest of this section consists in the proof of \cref{thm:T1-IU}.
The next step, and the heart of the universal simulator at temperature 1, is a specialized temperature 2 system for simulating temperature 1 systems from their descriptions, i.e. a universal simulator using a single tile set that encodes all necessary information about any system to be simulated within the seed-representing assembly.

\begin{restatable}{theorem}{TtwoIUforTone}
    \label{thm:T2-IU-for-T1}
    There is a temperature 2 tile set $\calU$, a representation function generator function $\mathfrak{R}$ and a seed generator function $\mathfrak{Y}$ such that, using $\mathfrak{R}$ and $\mathfrak{Y}$, $\calU$ is a universal simulator for $\mathfrak{S}_1$, the class of staggered temperature 1 syncTAM systems. Moreover, for any $\calT \in \mathfrak{S}_1$, the TAS $(\calU,\mathfrak{Y}(\calT),2 )$ that synchronously simulates $\calT$ is staggered and is locally adjacent-only.
\end{restatable}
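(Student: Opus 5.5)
We prove \cref{thm:T2-IU-for-T1} by constructing $\calU$ explicitly as a macrotile-based simulator. Each macrotile carries a copy of the encoded tile set of the simulated system $\calT=(T,\sigma,1)$, and macrotiles complete in lockstep. The seed generator $\mathfrak{Y}(\calT)$ replaces each tile $t$ of $\sigma$ by a $c\times c$ block. The block holds the encoding of $T$ (a list of tile types with their four glues, written as bit strings) and the marker of the represented type $t$. Its boundary carries ``output'' rows that write, along each side, the glue of $t$ on that side together with a copy of the encoding of $T$.

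The scale factor $c$ is chosen as a fixed polynomial in $|T|$ and the glue-label length. The representation function $\mathfrak{R}(\calT)$ maps a $c\times c$ block to the tile type whose marker appears in its designated center region, and to empty otherwise. Because $\calT$ is temperature 1, a single input glue suffices for an attachment. So every macrotile is driven by exactly one side: the side whose glue probe arrives first. Staggeredness of $\calT$ guarantees that in each simulated step at most one neighbor offers a glue toward an empty location. Hence no tie between two simultaneous inputs ever needs resolving.

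The macrotile dynamics are designed in three phases, each of fixed duration independent of the data.
\begin{enumerate}
\item \textbf{Propagation.} When a macrotile completes, it grows, from each side with a non-null glue, a path into the neighboring macrotile region carrying the glue label and the encoding of $T$.
\item \textbf{Lookup.} A temperature-2 zig-zag or counter gadget scans the table of $T$, using cooperation in the standard aTAM fashion. It nondeterministically selects one tile type whose glue on the input side matches; a bit-comparison row performs the matching.
\item \textbf{Output.} The gadget writes the chosen marker in the center and grows output paths to the other sides.
\end{enumerate}
Nondeterminism among matching types is realized by a single competition point in the center, which is unique per macrotile. Every matching type must be reachable there, so the simulator reproduces exactly the set of possible choices of $\calT$.

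To obtain synchronous simulation we pad every path with the delay gadgets from the proofs of \cref{lem:adj-only} and \cref{lem:temp1simulation-adj}. The padding makes each phase take exactly $D(\calT)$ steps regardless of which side was the input, of the position of the matched entry in the table, and of which glues are null. As a result, all macrotiles created in one step of $\calT$ complete at the same step of $\calU$. All lengths depend only on $|T|$ and the glue length, which are readable from the encoding. Paths can therefore compute their own padding by counting over the encoded table (for instance, scanning the whole table every time instead of stopping at the match). This makes durations data-independent.

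It remains to enforce that the resulting system is staggered and locally adjacent-only.
\begin{itemize}
\item \textbf{Locally adjacent-only.} Every cooperative attachment in the zig-zag scanner is placed so that its two inputs are on adjacent sides, for example a south input together with a west or east input as in standard zig-zags. It must also be unique for that location, which means the gadgets never offer a second adjacent pair.
\item \textbf{Staggered.} We must check that no empty location ever receives two glue-bearing neighbors in the same step. For zig-zag rows this holds because the row tile arrives before the tile below it exposes a glue, or because glue-less sides are used. Where paths from different sides could meet, they are offset in time by one step via the fixed delays.
\end{itemize}

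The main obstacle is this simultaneous verification: making all durations data-independent while never creating a same-step double arrival or a non-unique cooperative pair. I would handle it with a small fixed catalogue of gadgets (straight path, zig-zag reader, bit-comparator, delay loop, crossing). For each gadget I would verify staggering and local adjacency once, together with a fixed timing invariant. The macrotile is then assembled from these gadgets with disjoint spatial regions and distinct arrival times at interfaces.
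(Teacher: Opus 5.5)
Your overall architecture agrees with the paper's. Macrotiles carry the encoded table of $T$ on their output sides. A temperature-2 zig-zag lookup sweeps the entire table on every pass, so its duration depends only on $T$. The result is then propagated to the remaining sides. The step that realizes nondeterminism, however, fails. You make the choice at ``a single competition point in the center'' where every matching type is reachable. In the syncTAM a race between paths is deterministic, because every attachable location is filled at every step. So the aTAM device of letting candidate probes race to a point of competition always yields the same winner. If several candidates instead reach the point in the same step, the system is no longer staggered. The only genuine source of nondeterminism is a location at which several tile types can bind to the same input glue(s). Because $\mathcal{U}$ is fixed independently of $T$, such a location offers only a constant number of alternatives; a single tile cannot even hold the $\lceil\log|T|\rceil$-bit marker. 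Meanwhile up to $|T|$ types may match a given glue. The choice therefore has to be built from a sequence of constant-size choices. That sequence must make every matching type reachable and keep the total duration independent of the outcome. Your proposal gives no such mechanism.

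The paper places this inside the lookup computation. The simulated machine $\mathcal{M}_d$ has two transition functions. After every zig, one of two tile types $t_{r,d}$, $r\in\{1,2\}$, attaches nondeterministically in the one-tile gap between $L_d$ and $R_d$. This fixes the transition function for the next (zag, zig) pair in both co-simulations. In the randomized-choice stage, a single rightward sweep visits the matching entries and accepts each with such a binary choice. The last matching entry (marked $\mid^{**}$) is accepted whenever no earlier one was. The sweep then runs to the end of the tape and back regardless, so the running time is unchanged.

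Note also that you cannot reuse the delay gadgets of the two earlier lemmas, since they are hard-coded for a specific tile set. The paper instead synchronizes geometrically: it pads the side encoding with $\texttt{blank}(T)$ so that $\mathcal{M}_d$ accepts after exactly $(l_T-1)/2$ steps and its zig-zag fills an $l_T\times l_T$ square. After that, each propagation gadget takes exactly $l_T^2$ steps.

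Finally, a macrotile need not be driven by the first side to deliver a glue. An earlier input whose glue matches no tile type must halt inside its own $L_d$, $R_d$ regions without obstructing a later successful input or the output propagation.
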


\Cref{thm:T2-IU-for-T1} is proved by construction of a universal simulator that works by (1) creating a seed consisting of macrotiles that represent the tile(s) of the simulated seed and which also encode a description of the full system $\calT$ to be simulated, and (2) building macrotiles that decode $\calT$, simulate tile attachments when possible, and then pass around the description of $\calT$ and of the newly simulated tile. Here, the main challenge, beyond decoding and simulating, is to ensure that for a given $\calT$, the delay between input and output is the same for all input and output pairs of all macrotiles. The amount of cases to consider here is greatly reduced by the assumption that $\calT$ is staggered and at temperature 1: each attachment is done through one neighbor at once.

Decoding and simulating is done by a Turing machine embedded near the input of each macrotile. 
Here, the fact that the simulator can work at temperature 2 lets the design of the various gadgets rely on known techniques for computing in self-assembly.
Care has to be taken so that the space occupied by its space-time diagram only depends on $\calT$ and not on anything specific to the macrotile at hand. Its embedding into the macrotile also needs to ensure that the time it takes the assembly process to traverse its space-time diagram is also constant given $\calT$. The routing of all information can then be synchronized and all delays accounted for using simple gadgets that rotate and reflect information through growing assemblies.

\begin{figure}
    \centering
    \includegraphics[width=0.4\linewidth]{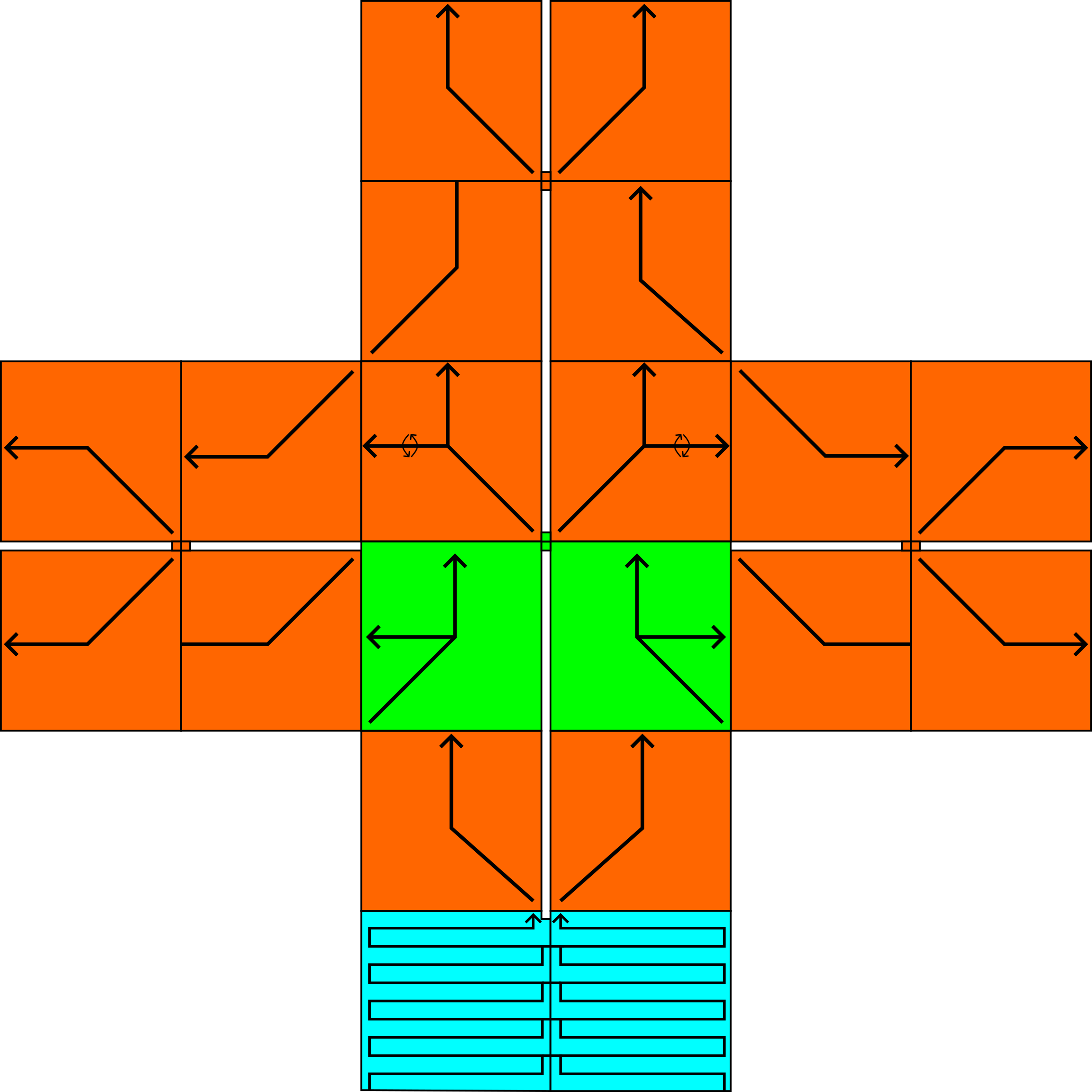}
    \caption{A (very) high level depiction of a macrotile for the construction of the proof of \cref{thm:T2-IU-for-T1} that simulates the attachment of a tile using a south input glue. Squares are colored corresponding to gadgets which are fully described in \cref{sec:temp2iufortemp1}. } 
    \label{fig:temp2IUforTemp1-macrotile-layout}
\end{figure}

In this way, a tile set is created that is universal for all syncTAM systems that are temperature 1 and staggered, in such a way that the simulator using these tiles to simulate any such system $\calT$ itself remains staggered and, although it is now at temperature 2, it is locally adjacent-only.

%%%%%%%%%%%%%%%%%%%%%%%%%%%%%%%%%%%%%%%%%%%%%%%%%%%%%%%%%%
\subsubsection{Universal simulation at temperature 1}

The final link in the chain of constructions is to create a tile set that can be used at temperature 1 to simulate any system created for the universal simulator from the construction of \cref{thm:T2-IU-for-T1}. By careful design, those systems are temperature 2, staggered, and locally adjacent-only, and we already have a construction that takes any such system as input and outputs a temperature 1 tile set to simulate it: the construction of \cref{lem:temp1simulation-adj}. Furthermore, that construction yields a simulation scheme that is seed-independent. This means that, for the entire class of systems using the universal simulator for temperature 2, since they all use the same tile set and temperature and thus vary only by seed structure, the construction of \cref{lem:temp1simulation-adj} yields a single universal tile set that works at temperature 1 to simulate any system that can be simulated by the systems of the construction for \cref{thm:T2-IU-for-T1}. We will call this tile set $\mathcal{U}_1$.

Any arbitrary syncTAM system $\calS$ can be simulated by a system $\calS'$ that is temperature 2, locally adjacent-only, and staggered (\cref{lem:adj-only}). That system $\calS'$ can then be simulated by a temperature 1 system $\calS_1$ that is staggered (\cref{lem:temp1simulation-adj}). A seed over the tile set $\mathcal{U}$, which is universal for temperature 1 systems that are staggered, can then be designed so that the resulting system, $\mathcal{S}_{\mathcal{U}}$, using tile set $\mathcal{U}$ at temperature 2 simulates $\calS_1$. Finally, a seed over the tiles of $\mathcal{U}_1$ can be created so that the resulting system, $\mathcal{S}_{\mathcal{U}_1}$, using tile set $\mathcal{U}_1$ at temperature 1 simulates $\mathcal{S}_{\mathcal{U}_2}$.  Thus, $\mathcal{S}_{\mathcal{U}_1}$ simulates the original syncTAM system $\calS$ using tile set $\mathcal{U}_1$ at temperature 1, proving \cref{thm:T1-IU}.

An overview of the full chain can be seen in \cref{fig:crazy_diagram}.

% The simulator of \cref{thm:T2-IU-for-T1} is still far removed from the temperature 1 intrinsically universal system \cref{thm:T1-IU} claims to exhibit. Two gaps need to be bridged:
% \begin{itemize}
%     \item the input system needs to be simulated by a temperature 1 staggered system
%     \item the output system is staggered, but it needs to be simulated by a temperature 1 system; that simulation scheme needs to be seed-independent so that \cref{lem:seed-ind-IU} applies and intrinsic universality carries through.
% \end{itemize}

% These three constructions together, put together as on \cref{fig:crazy_diagram} produce the intrinsically universal system of \cref{thm:T1-IU}. \Cref{lem:seed-ind-IU} makes it possible to use \cref{lem:temp1simulation-adj} on the universal simulator of \cref{thm:T2-IU-for-T1}.

\begin{figure}
    \centering
    \input{tikz/shine_on_you/crazy_diagram}
    \caption{Depiction of the whole machinery entailed by \cref{thm:T1-IU}. The box for each lemma or theorem is cleft to separate the outputs which are independent of the seed of the input system from those which do depend on it. Elements marked with a dagger depend on the input tileset, while those marked with a star depend on the whole input system.}
    \label{fig:crazy_diagram}
\end{figure}

%%%%%%%%%%%%%%%%%%%%%%%%%%%%%%%%%%%%%%%%%%%%%%%%%%%%%%%%%%
\subsubsection{Maintaining directedness}

Next, we show that the previous constructions can be (very slightly) modified to make a universal simulator with the property that, when simulating any directed system in the syncTAM, the simulator works at temperature 1 and also remains directed. This was proven to be impossible in the aTAM \cite{DirectedNotIU}, at any temperature.

\begin{restatable}{theorem}{directedIU}\label{thm:directed}
    Let $\mathfrak{D}$ be the class of all tile assembly systems within the syncTAM which are directed. $\mathfrak{D}$ is intrinsically universal through an intrinsically universal system at temperature 1.
\end{restatable}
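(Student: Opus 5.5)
The plan is to reuse the chain of constructions behind \cref{thm:T1-IU} and check that each link preserves directedness when its input is directed, after a small modification. The only link that introduces nondeterminism not present in the simulated system is \cref{lem:adj-only}: it uses the random Go-First Dice delays on input sides. Those delays exist only so that every arrival order at the macrotile center is possible, which is needed only when different orders can yield different tile types. For a directed $\calS$, each simulated location receives the same tile type in every assembly sequence. I would therefore replace the dice by a single fixed delay per side. The delays are chosen from one die face per side, so distinct sides still have distinct delays and the construction stays staggered, with no ties. Every arrival order is still legal in the decision region, so the gadgets work unchanged. With only one possible path for each input, the simulator $\calS'$ has no choice left anywhere. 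Hence $\calS'$ is directed, temperature 2, locally adjacent-only and staggered, and the map $\calS\mapsto\calS'$ is still seed-independent and computable.

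Next I would verify that the remaining constructions are deterministic whenever their input is directed. In \cref{lem:temp1simulation-adj}, the corner-claiming gadgets break ties by arrival time. Because the input is staggered, these arrival times are determined by the unique synchronous run, so every location has a unique tile in each step and $\calS_1$ is directed. In the universal simulator of \cref{thm:T2-IU-for-T1}, nondeterminism enters only at the simulated attachment, where the embedded machine chooses among tile types that can attach. For directed $\calT$ there is exactly one choice at every step of the synchronous run. I would make the simulator select a tile type deterministically only in that case, or equivalently note that the selection gadget branches only over the valid candidates. The same check is then applied to $\mathcal{U}_1$, obtained by applying \cref{lem:temp1simulation-adj} to the staggered, locally adjacent-only output of \cref{thm:T2-IU-for-T1}.

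Finally, I would argue that the syncTAM dynamics make directedness compositional here. A syncTAS is directed exactly when every frontier location in the unique synchronous run admits one tile type. Each simulator grows macrotiles that mirror that run step by step, with fixed delays. Every location of the simulator therefore also admits a single tile type, because its tile is a function of the simulated neighbors' glues and of fixed timing. Composing the four simulations as in \cref{fig:crazy_diagram} gives a single tile set $\mathcal{U}_1^D$ that simulates every $\calS\in\mathfrak{D}$ at temperature 1 while remaining directed, which is \cref{thm:directed}.

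The main obstacle is the universal simulator step. Its decoding machine must pick the attaching tile from a lookup over $\calT$'s description. I must ensure that no gadget offers nondeterministic alternatives even when only one outcome is valid: no guessing of indices, and no competing paths resolved by whichever arrives first. The Go-First Dice removal is conceptually the crux but should be routine once I confirm that fixed distinct delays keep staggering.
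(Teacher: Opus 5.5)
Your overall route is the paper's: the same four-link chain, one Go-First Dice face hard-coded per side in \cref{lem:adj-only}, and no change to \cref{lem:temp1simulation-adj}. The gap is in the universal-simulator link. You call it the main obstacle yourself, but you leave it unresolved, and your description of \cref{thm:T2-IU-for-T1} is inaccurate. Nondeterminism there is not confined to the simulated attachment. At the end of \emph{every} zig of the co-simulation of $\mathcal{M}_d$, one of the two tile types $t_{1,d},t_{2,d}$ attaches nondeterministically in the gap between $L_d$ and $R_d$, whatever the machine's configuration. Different choices leave different tiles in the assembly even when $|T_{bond}|=1$ and the accepted output is identical. So $(\mathcal{U},\mathfrak{Y}(\mathcal{T}),2)$ is not directed for directed $\mathcal{T}$, and your fallback claim that the gadget ``branches only over the valid candidates'' is false. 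Branching ``only in that case'' would need a redesign you do not give. It is also unnecessary: since only directed systems must be simulated, the paper simply deletes one tile of each pair $t_{1,d},t_{2,d}$. Then $\mathcal{M}_d$ is deterministic and always accepts the first (or last) marked TLT entry. This is correct because in a directed temperature-1 $\mathcal{T}$ every type in $T_{bond}$ could attach at the location being decided, so $|T_{bond}|\le 1$ in every lookup that actually occurs. For the same reason, your closing ``compositional'' argument (each simulator tile is a function of the simulated glues and fixed timing) cannot replace a construction-by-construction check. That property is exactly what fails for the dice and for the $t_{r,d}$ tiles.

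A smaller instance of the same issue appears in your first step: fixing the dice does not leave ``no choice anywhere''. A Win-Win S side grows a probe into the location of every N glue that could cooperate with its own. When several tile types share that glue pair, the first tile of the probe already chooses among them, without knowing the actual N glue. Suppose $T$ contains such a glue pair that never actually meets in the run of the directed $\mathcal{S}$. Those probe tiles are still chosen nondeterministically, so the simulator is not directed. Hard-coding one type per glue pair removes this. It is safe, since any pair that really cooperates in a directed run admits a unique type. The paper's proof also passes over this point in a sentence, so treat it as a needed patch rather than a flaw peculiar to your plan.
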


In order to prove \cref{thm:directed}, modifications need only be made to the constructions in \cref{lem:adj-only} and \cref{thm:T2-IU-for-T1} to remove the intentionally added nondeterminism from each construction. Specifically, these modifications consist of hard-coding the ``Go-First Dice'' used in \cref{lem:adj-only} to a single option for each direction, and hard-coding the binary choices made by the Turing machine simulations in \cref{thm:T2-IU-for-T1}. With these modifications, the full chain of simulation remains completely deterministic.

Note that it is undecidable to determine if a given syncTAM system is directed, and thus whether it can be accurately simulated by the directed version of the simulator, or instead requires the undirected simulator.

As a consequence of the previous results, anything that can be assembled in the aTAM by a directed system can be assembled in the temperature 1 syncTAM, up to a rescaling.

\begin{corollary}
\label{cor:directed-shapes-T1}
    Let $S \subset \mathbb{Z}^2$ be a (finite or infinite) shape which can be self-assembled in the aTAM by a directed TAS $\calT$. Then there is an integer $k$ and a TAS $\calT'$ in the syncTAM such that $\calT'$ assembles $k \cdot S$ and is directed. 
\end{corollary}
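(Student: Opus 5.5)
The plan is to go through a syncTAM simulation of a directed aTAM system and then apply the directed temperature-1 universal simulator of \cref{thm:directed}. The difficulty is that a directed aTAS $\calT$, run synchronously, need not behave as it does asynchronously. Its set of producible assemblies under synchronous dynamics is contained in the asynchronous one, since a synchronous step is a finite sequence of single attachments. However, in the syncTAM several frontier locations are filled at once, and a tile placed in one step could in principle block or change the others. So the argument has two stages: first show that directedness rules out such conflicts, then pass the resulting directed syncTAS through \cref{thm:directed}.

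\textbf{Step 1.} I would view $\calT = (T,\sigma,\tau)$ as a syncTAS $\calT_s$. By the standard aTAM fact, a directed aTAS has a unique terminal assembly $\alpha$ with $\dom\alpha$ equal to $S$ (up to translation). Moreover, every producible assembly is a subassembly of $\alpha$: if some producible $\beta$ had a tile disagreeing with $\alpha$, then extending $\beta$ to a terminal assembly would give a second terminal assembly.

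Now take a synchronous step from $\beta \sqsubseteq \alpha$. Each frontier location $l$ of $\beta$ receives a tile that can $\tau$-stably attach to $\beta$. That single attachment produces an aTAM-producible assembly, so the tile must equal $\alpha(l)$. Simultaneous placement therefore yields $\beta \cup \alpha|_{\partial\beta}$, which is still a subassembly of $\alpha$. Every location of $\partial\beta$ receives a tile, so synchronous growth never stalls before $\alpha$. For infinite $\alpha$, every point has a finite attachment sequence from $\sigma$, and the sync steps dominate any such sequence by induction on its length, because frontier locations persist until filled. Hence $\calT_s$ has the unique terminal assembly $\alpha$, is directed, and assembles $S$.

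\textbf{Step 2.} I would apply \cref{thm:directed} to $\calT_s \in \mathfrak{D}$. This gives a temperature-1 syncTAS $\calT'$ over the universal tile set, together with a scale factor $k$ and a representation function, such that $\calT'$ is directed and simulates $\calT_s$. I then need $\dom$ of the terminal assembly of $\calT'$ to be exactly $k\cdot S$. Simulation in the sense of \cref{simulation-def} only guarantees that macrotiles map to tiles of $\alpha$. It allows fuzz regions, or possibly macrotiles that are only partially grown inside empty locations. This is the main obstacle. I would first check whether the chain of constructions (\cref{lem:adj-only}, \cref{lem:temp1simulation-adj}, \cref{thm:T2-IU-for-T1}) places tiles only inside macrotiles that eventually represent a tile. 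I suspect it does not: input paths grow into neighbouring blocks that may never resolve.

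If it does not, I would use a standard cleanup. Let $\calT'$ grow its macrotiles within a scale-$c$ block structure, and then, as the final output phase, have each resolved macrotile fill its own $c\times c$ block completely with filler tiles. Attempted growth into unresolved neighbouring blocks would be restricted to an inset region. Alternatively, I would change the target and apply the simulation to a directed system whose tiles' glues point only toward locations in $S$, obtained by deleting glues of $\alpha$ facing empty locations. This is harmless because those glues are never used in a directed assembly of $S$. Then no input paths ever enter blocks outside $k\cdot S$.

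Combining the second option with the block-filling makes the terminal shape exactly $k \cdot S$ with $k = c$. Directedness is preserved because the filler placement is deterministic.
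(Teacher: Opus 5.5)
Your Step~1 is correct. It is the paper's first step, which the paper takes from Lemma~2 of the syncTAM paper and which you reprove directly.

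Going through \cref{thm:directed} in Step~2 is also legitimate. It handles directedness of $\calT'$ more carefully than the paper's one-line proof, which cites \cref{lem:adj-only} and \cref{lem:temp1simulation-adj} and leaves implicit that the hard-coded-dice variant is what keeps the simulator directed. The paper stops at this point. It reads ``$\calT'$ assembles $k\cdot S$'' through the representation function: the unique terminal assembly of $\calT'$ maps under $R^*$ to $\alpha$. It never claims that the domain is literally $k\cdot S$.

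The gap is in your cleanup, on which your final claim of an exact terminal shape $k\cdot S$ rests.

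\begin{itemize}
\item \emph{Glue deletion (your second option).} This is not a transformation of $\calT$ you can actually perform once $\tau\ge 2$. Glues belong to tile types, and deleting a glue from a type is harmless only when no tile type matches it. For $\tau=1$ these are in fact the only glues facing empty locations of $\alpha$, since otherwise $\alpha$ would not be terminal. For $\tau\ge 2$, however, the same glue on the same tile type can be used in a cooperative attachment at one occurrence in $\alpha$. At another occurrence it can face a location that stays empty forever because the cooperating partner glue never arrives. So the glue cannot be removed from the type.
\item \emph{Making the deletion position-dependent.} Position-specific copies of types need infinitely many types when $S$ is infinite. A finite annotation such as ``this side will never bind'' would have to be fixed when the tile attaches. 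Yet whether the partner glue ever shows up can depend on a computation finishing arbitrarily later, or never. Guessing the annotation nondeterministically destroys both correctness and directedness.
\item \emph{Consequence.} Input paths do enter blocks $M(l)$ with $l\notin S$ in the chain of constructions, and they remain in the terminal assembly. Confining them to an inset region (your first option) does not remove them.
\item \emph{Block filling.} The temperature-1 block-filling phase is only asserted. It would have to reach pockets enclosed by the zig-zag paths, stay inside its block, and avoid racing with late-arriving input paths.
\end{itemize}

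So drop the claim that the domain is exactly $k\cdot S$. What Step~1 together with \cref{thm:directed} gives, and all the paper claims, is a directed temperature-1 system whose terminal assembly represents $\alpha$. Its shape is $k\cdot S$ only up to partially grown macrotiles and non-diagonal fuzz.

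Note also that the corollary as literally stated imposes no temperature. Your Step~1 alone therefore already proves it, with $k=1$ and $\calT'=\calT_s$. Step~2 is needed only for the temperature-1 reading suggested by the corollary's label and the sentence preceding it.
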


\begin{proof}
    Because $\calT$ is directed in the aTAM, it has the same terminal production in the syncTAM (Lemma 2 of \cite{syncTAM}). By \cref{lem:adj-only} and \cref{lem:temp1simulation-adj}, that system can be simulated by a temperature 1 syncTAM system, meaning that system produces the same terminal assembly modulo scale factor. 
\end{proof}

Having completed the high-level tour of our results, the following sections contain the full details and proofs of each.
\section{Formalization of Simulations and Universality}
\label{simulation-def}

\subsection{Synchronous and Asynchronous intrinsic simulation}

%The definition of simulation for synchronous systems simulating synchronous systems is shoplifted from \cite{DBLP:conf/cie/BeckerMOT18}.
Intrinsic universality via intrinsic simulation has been studied in a variety of tile-assembly models (e.g. \cite{IUSA,2HAMIU,DirectedNotIU,DDDIU,TempOneNotIU,DDDxModelSimICALP}). Essentially, intrinsic simulation deals with one system simulating the behavior of another modulo scale factor, with squares of tiles in the simulating system representing individual tiles of the simulated system. Intrinsic universality (IU) for a model (or class of systems within a model) occurs when there is a single tile set (along with associated functions, to be discussed) that can be used to create a seed structure from its tiles for any of the systems in the model (or class) and then intrinsically simulate them. Now we provide formal definitions which are largely taken from \cite{DDDxModelSimICALP} but are modified to incorporate the dynamic of synchronicity of the syncTAM.

%Intuitively, simulation of a system $\mathcal{T}$ by another system $\mathcal{S}$ is done by utilizing some scale factor $m \in \mathbb{Z}^+$ such that $m \times m \times m$ cubes of tiles in $\mathcal{S}$ represent individual tiles in $\mathcal{T}$, and there is a ``representation function'' which is able to interpret the assemblies of $\mathcal{S}$ as assemblies in $\mathcal{T}$.

From this point on, let $T$ be a tile set and let the scale factor be $m\in\Z^+$.
An \emph{$m$-block macrotile} over $T$ is a partial function $\alpha : \Z_m^2 \dashrightarrow T$, where $\Z_m = \{0,1,\ldots,m-1\}$.
Let $B^T_m$ be the set of all $m$-block macrotiles over $T$.
The $m$-block with no domain is said to be $\emph{empty}$.
For a general assembly $\alpha:\Z^2 \dashrightarrow T$ and $(x',y')\in\Z^2$, define $\alpha^m_{(x',y')}$ to be the $m$-block macrotile defined by $\alpha^m_{(x',y')}(i_x,i_y) = \alpha(mx'+i_x,my'+i_y)$ for $0 \leq i_x,i_y< m$.
For some tile set $S$, a partial function $R: B^{S}_m \dashrightarrow T$ is said to be a \emph{valid $m$-block macrotile representation} from $S$ to $T$ if for any $\alpha,\beta \in B^{S}_m$ such that $\alpha \sqsubseteq \beta$ and $\alpha \in \dom R$, then $R(\alpha) = R(\beta)$.

For a given valid $m$-block macrotile representation function $R$ from tile set~$S$ to tile set $T$, define the \emph{assembly representation function}\footnote{Note that $R^*$ is a total function since every assembly of $S$ represents \emph{some} assembly of~$T$; the functions $R$ and $\alpha$ are partial to allow undefined points to represent empty space.}  $R^*: \mathcal{A}^{S} \rightarrow \mathcal{A}^T$ such that $R^*(\alpha') = \alpha$ if and only if $\alpha(x,y) = R\left(\alpha'^m_{(x,y)}\right)$ for all $(x,y) \in \Z^2$.
For an assembly $\alpha' \in \mathcal{A}^{S}$ such that $R^*(\alpha') = \alpha$, $\alpha'$ is said to map \emph{cleanly} to $\alpha \in \mathcal{A}^T$ under $R^*$ if for all non empty blocks $\alpha'^m_{(x,y)}$, $(x,y)+(u_x,u_y) \in \dom(\alpha)$ for some $(u_x,u_y) \in \{(0,\pm 1), (\pm 1, 0)\}$, or if $\alpha'$ has at most one non-empty $m$-block $\alpha^m_{0,0}$.  In other words, $\alpha'$ may have tiles in macrotile blocks representing empty space in $\alpha$, but only if that position is adjacent to a macrotile that represents a tile in $\alpha$.  We call such growth ``around the edges'' of $\alpha'$ \emph{fuzz} and thus restrict it to be adjacent to only valid macrotiles, but not diagonally adjacent (i.e.\ we do not permit \emph{diagonal fuzz}).

In the following definitions, let $\mathcal{T} = \left(T, \sigma_T, \tau_T\right)$  and $\mathcal{S} = \left(S, \sigma_S, \tau_S\right)$ each be a syncTAS, and let $R$ be an $m$-block representation function $R:B^S_m \rightarrow T$. We will define how $\calS$ \emph{synchronously simulates} $\calT$ at \emph{spatial scale factor} $m$ and \emph{temporal scale factor} $s \in \mathbb{Z}^+$.

\begin{definition}\label{def:equiv-assemblies}
    Let $\calS = (T, \sigma, \tau)$ be a syncTAS, and let $\alpha$ and $\beta$ be two $\tau$-stable assemblies in $\calS$.  We say that $\beta$ is \emph{equivalent} to $\alpha$, and write $\beta \approx \alpha$, if and only if $\exists \gamma$ such that $\gamma$ is a $\tau$-stable assembly in $\calS$ and the following conditions hold:
    \begin{enumerate}
        \item $\alpha \rightarrow^\calS_1 \gamma$

        \item $\alpha \sqsubseteq \beta \sqsubseteq \gamma$

        %\item $\exists i$, where $0 \le i < l$, such that $|\dom(\alpha)| + i = |\dom(\beta)| = |\dom(\gamma)| - l + i$

        \item $\dom(\beta) \setminus \dom(\alpha) \subseteq \partial^\calS(\alpha)$
    \end{enumerate}
\end{definition}

Essentially, if $\beta \approx \alpha$, $\beta$ can be thought of as an ``intermediate'' assembly between $\alpha$ and $\gamma$ (where $\alpha$ produces $\gamma$ in one step) where only a subset of the locations in the frontier of $\alpha$ have been added. While such an assembly $\beta$ is not producible in $\calS$ if  $\dom(\beta) \setminus \dom(\alpha) \neq \emptyset$, it will be useful to define such an assembly in the context of a simulator that may require many tile additions to simulate each individual tile addition of $\calS$. 

\begin{definition}
\label{def-equiv-prod} We say that $\mathcal{S}$ and $\mathcal{T}$ have \emph{equivalent productions} (under $R$), and we write $\mathcal{S} \Leftrightarrow \mathcal{T}$ if the following conditions hold:
\begin{enumerate}
        %\item $\left\{R^*(\alpha') | \alpha' \in \prodasm{\mathcal{S}}\right\} = \prodasm{\mathcal{T}}$.
        
        \item $\forall \alpha' \in \prodasm{S}, \exists \alpha \in \prodasm{T}$ such that $R^*(\alpha') \approx \alpha$.

        \item $\forall \alpha \in \prodasm{T}, \exists \alpha' \in \prodasm{S}$ such that $R^*(\alpha') = \alpha$.
        
        \item $\left\{R^*(\alpha') | \alpha' \in \termasm{\mathcal{S}}\right\} = \termasm{\mathcal{T}}$.
%        \item For all $\alpha' \in \termasm{\mathcal{S}}$, $R^*(\alpha') \in \termasm{T}$.
        \item For all $\alpha'\in \prodasm{\mathcal{S}}$, $\alpha'$ maps cleanly to $R^*(\alpha')$.
\end{enumerate}
\end{definition}

\begin{definition}%\todo{FlB: Check (and note) that \emph{follows} is the definition from the literature. MJP: It is actually slightly dfferent due to the ``equivalence modulo $l$''}

\label{def:t-follows-s} We say that $\mathcal{T}$ \emph{follows} $\mathcal{S}$ (under $R$), and we write $\mathcal{T} \dashv_R \mathcal{S}$, if $\alpha' \rightarrow^\mathcal{S} \beta'$ for some $\alpha',\beta' \in \prodasm{\mathcal{S}}$ implies that $\exists \alpha'', \beta'' \in \prodasm{\calS}$ such that $R^*(\alpha') \approx R^*(\alpha'')$, $R^*(\beta') \approx R^*(\beta'')$ and $R^*(\alpha'') \to^\mathcal{T} R^*(\beta'')$.
\end{definition}

\begin{definition}\label{def:t-sync-follows-s}%\todo{FlB: can we allow some temporal slack, i.e. $s(i-1) \leq i' < s \cdot i$ instead of $i' = s \cdot i$ and still have our negative results? If so, I think we should. MJP: Yes, good point. Our results will still hold and that would be better.}
We say that $\calT$ \emph{synchronously follows} $\calS$ (under $R$ with temporal scale factor $s$), and we write  $\mathcal{T} \dashv^s_R \mathcal{S}$, if $\mathcal{T} \dashv_R \mathcal{S}$ and, for every assembly sequence $\vec{\beta} = \beta_0, \beta_1, \beta_2, \ldots$ in $\calS$, there exists assembly sequence $\vec{\alpha} = \alpha_0, \alpha_1, \alpha_2, \ldots$ in $\calT$ such that: 
    \begin{enumerate}
        \item $\alpha_0 = \sigma$ and $R^*(\beta_0) = \sigma$
        
        \item For every $0 \le i < |\vec{\alpha}|$, there is $i' \in [i \cdot (s - 1), i \cdot (s+1)]$ such that $R^*(\beta_{i'}) = \alpha_i$

        %\item For every $0 \le i < |\vec{\alpha}|$, and $i' = i*s$, for every $\alpha'_i \in \texttt{PROD}_1(\alpha_i)$, $\exists j \in \mathbb{Z}^+$ where $i' \le j < i'+s$ and $\beta_{j}$ can produce $\beta'$ where $R(\beta') = \alpha'_i$.\todo{finish this...}
    \end{enumerate}
\end{definition}

Note in particular that this definition is lax enough that it \emph{is} possible for a staggered system to synchronously follow a non-staggered system.

The next definition essentially specifies that every time $\mathcal{S}$ simulates an assembly $\alpha \in \prodasm{\mathcal{T}}$, there must be at least one valid growth path in $\mathcal{S}$ for each of the possible next steps that $\mathcal{T}$ could make from $\alpha$ which results in an assembly in $\mathcal{S}$ that maps to that next step. While this definition is unfortunately dense, it accommodates subtle situations such as where $\calS$ must ``commit to'' a subset of possible representations in $\mathcal{T}$ before being explicitly mapped, under $R$, to any one in particular.

\begin{definition}\label{def:s-models-t}
We say that $\mathcal{S}$ \emph{models} $\mathcal{T}$ (under $R$), and we write $\mathcal{S} \models_R \mathcal{T}$, if for every $\alpha \in \prodasm{\mathcal{T}}$, there exists $\Pi \subset \prodasm{\mathcal{S}}$ where $\Pi \neq \emptyset$ and  $R^*(\alpha') = \alpha$ for all $\alpha' \in \Pi$, such that, for every $\beta \in \prodasm{\mathcal{T}}$ where $\alpha \rightarrow^\mathcal{T} \beta$, (1) for every $\alpha' \in \Pi$ there exists $\beta' \in \prodasm{\mathcal{S}}$ where $R^*(\beta') = \beta$ and $\alpha' \rightarrow^\mathcal{S} \beta'$, and (2) for every $\alpha'' \in \prodasm{\mathcal{S}}$ where $\alpha'' \rightarrow^\mathcal{S} \beta'$, $\beta' \in \prodasm{\mathcal{S}}$, $R^*(\alpha'') \approx \alpha$, and $R^*(\beta') = \beta$, there exists $\alpha' \in \Pi$ such that $\alpha' \rightarrow^\mathcal{S} \alpha''$.
\end{definition}

\begin{definition}\label{def:s-simulates-t}
We say that $\mathcal{S}$ \emph{synchronously simulates} $\mathcal{T}$ (under $R$ and temporal scale factor $s$) if $\mathcal{S} \Leftrightarrow_R \mathcal{T}$ (equivalent productions), $\mathcal{T} \dashv^s_R \mathcal{S}$ and $\mathcal{S} \models_R \mathcal{T}$ (equivalent dynamics).
\end{definition}

In \cref{def:s-simulates-t}, every aspect of $\mathcal{S}$ is allowed to depend on the seed of $\mathcal{T}$. Yet, in most cases, only the \emph{seed} of $\mathcal{S}$ depends on that of $\mathcal{T}$. This situation is called \emph{seed-independent simulation}.

\begin{definition}[Seed-Independent Simulation]\label{def:seed-ind}
\label{def:si-simulation}
    A pair $(U, \tau)$ of a tile set and a temperature is said to \emph{seed-independently simulate} another $(T, \tau')$ (under representation function $R$ and scale factor $s$) if there is a computable function $S$ such that for each $\sigma$ over $T$, $(U, S(T, \sigma, \tau'), \tau)$ simulates $(T, \sigma, \tau')$.
\end{definition}

\subsection{Intrinsic Universality}
\label{sec:iu_def}

\newcommand{\REPL}{\mathsf{REPR}}
\newcommand{\frakC}{\mathfrak{C}}

Now that we have a formal definition of what it means for one tile system to simulate another, we can proceed to formally define the concept of intrinsic universality, i.e., when there is one general-purpose tile set that can be appropriately programmed to simulate any other tile system from a specified class of tile systems.

Let $\REPL$ denote the set of all macrotile representation functions (i.e., $m$-block macrotile representation functions for some $m\in\Z^+$).  Define $\frakC$ to be a class of tile assembly systems, and let $U$ be a tile set.  Note that each element of $\frakC$, $\REPL$, and $\mathcal{A}^U_{< \infty}$ is a finite object, hence encoding and decoding of simulated and simulator assemblies can be defined to be computable via standard models such as Turing machines and Boolean circuits. 
% representation functions the encodings we use are such that they can be be represented in a suitable format for computation in some formal system such as Turing machines. 
% `` simple'' use are Turing machine computable (or better). % 

\begin{definition}\label{def:iu-specific-temp}
We say $U$ is a \emph{universal simulator} for $\frakC$ \emph{at temperature} $\tau' \in \Z^+$ %\marginpar{SMS: added this definition for intrinsic universality at a specific temperature to help with wording of theorem statements}
if there are computable functions $\mathcal{R}:\frakC \to \REPL$ and $S:\frakC \to \mathcal{A}^U_{< \infty}$ such that, for each $\mathcal{T} = (T,\sigma,\tau) \in \frakC$, there is a constant $m\in\Z^+$ such that, letting $R = \mathcal{R}(\mathcal{T})$, $\sigma_\mathcal{T}=S(\mathcal{T})$, and $\mathcal{U}_\mathcal{T} = (U,\sigma_\mathcal{T},\tau')$, $\mathcal{U}_\mathcal{T}$ simulates $\mathcal{T}$ at scale $m$ and using macrotile representation function~$R$.

If additionally, for all $\calT$, $(U, \sigma_{\calT}, \tau') \in \frakC$, $U$ is then said to be \emph{intrinsically universal} for $\frakC$ at tempetature $\tau'$.
\end{definition}

That is, $\mathcal{R}(\mathcal{T})$, the \emph{representation function generator function}, outputs a representation function that interprets assemblies of $\mathcal{U}_\mathcal{T}$ as assemblies of $\mathcal{T}$, and $S(\mathcal{T})$, the \emph{seed generator function}, outputs the seed assembly used to program tiles from $U$ to represent the seed assembly of $\mathcal{T}$. 

\begin{lemma}
\label{lem:seed-ind-IU}
    If $U$ is a universal simulator for some $\frakC$ at temperature $\tau$, and $(U', \tau')$ seed-independently simulates $U$, then $U'$ is a universal simulator for $\frakC$ at temperature $\tau'$.
\end{lemma}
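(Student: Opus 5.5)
The plan is to compose simulations: build the new representation-function generator and seed generator by composition, and then show that synchronous simulation is transitive. Since $U$ is a universal simulator for $\frakC$ at temperature $\tau$, we have computable $\mathcal{R}$ and $S$. For each $\calT \in \frakC$ this gives $\calU_\calT = (U, S(\calT), \tau)$, which simulates $\calT$ at some scale $m$ and temporal factor $s$ under $R_\calT = \mathcal{R}(\calT)$. Seed-independence of $(U',\tau')$ over $(U,\tau)$ gives a computable seed map $S'$, a fixed representation $R'$ at scale $m'$, and a temporal factor $s'$. These do not depend on the seed, so a single $R'$ serves for every $\calU_\calT$. I would define
\[
S''(\calT) = S'(U, S(\calT), \tau), \qquad \mathcal{R}''(\calT) = R_\calT \circ R'_{m},
\]
where $R'_{m}$ applies $R'$ blockwise to an $mm'$-block, producing an $m$-block over $U$, and $R_\calT$ is then applied to that block. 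Both maps are clearly computable. On assemblies the induced representation satisfies $(\mathcal{R}''(\calT))^* = R_\calT^* \circ R'^*$. I would also check that $\mathcal{R}''(\calT)$ is a valid $mm'$-block representation: it must be monotone under $\sqsubseteq$, which follows because both factors are monotone.

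The main work is proving that if $\calU'$ simulates $\calU$ under $R'$ and $\calU$ simulates $\calT$ under $R$, then $\calU'$ simulates $\calT$ under $R\circ R'$. I would verify each clause of \cref{def:s-simulates-t} in turn.

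For equivalent productions (\cref{def-equiv-prod}), surjectivity onto $\prodasm{\calT}$ and equality of terminal sets compose directly. The clause $R^*(\alpha')\approx\alpha$ needs a small lemma: if $\beta \approx \alpha$ in $\calU$, then $R^*(\beta) \approx R^*(\alpha'')$ for some producible $\alpha''$ with $R^*(\alpha'')$ producible in $\calT$. Such intermediate assemblies lie between $\alpha$ and its one-step successor $\gamma$, so $R^*(\beta)$ lies between $R^*(\alpha)$ and $R^*(\gamma)$, and hence is itself an intermediate of a $\calT$-step or is equal to a producible one. Cleanliness composes because fuzz of fuzz stays within orthogonally adjacent $mm'$-blocks. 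The reason is that $R'$-fuzz lives inside $m'$-blocks adjacent to $U$-tiles, and those blocks sit inside the $m$-block containing the tile or in an $m$-block orthogonally adjacent to it.

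For the dynamics clauses, \emph{follows} and \emph{models} are chained in the standard way used for aTAM transitivity, by picking $\Pi$ sets as preimages of preimages. The synchronous timing is combined as follows. A step $i$ of $\calT$ corresponds to a step of $\calU$ in $[i(s-1), i(s+1)]$, and that step corresponds to a step of $\calU'$ within a factor $s'\pm1$. I would choose the composite temporal factor $s'' = ss'$ and check that the interval bounds hold, possibly after adjusting the slack.

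I expect the main obstacle to be the timing and $\approx$ bookkeeping. The temporal window $[i(s-1), i(s+1)]$ does not compose exactly, because $[i(s-1)(s'-1), i(s+1)(s'+1)]$ is not contained in $[i(ss'-1), i(ss'+1)]$. The intermediate-assembly relation $\approx$ also has to be carried through two levels. To handle the timing, I would first try to use the actual constructions, whose temporal factors are exact, i.e.\ $i' = si$. Failing that, I would state and prove the composition with the slack interpreted loosely, as the definition's remark on laxness suggests is intended.
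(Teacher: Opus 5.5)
Your construction of the new generators is exactly the paper's. Its proof takes $\mathcal{S}'(\mathcal{T})=S_0(U,\tau,S(\mathcal{T}))$ and $\mathcal{R}'(\mathcal{T})=R_0\circ\mathcal{R}(\mathcal{T})$; your blockwise $R_{\mathcal{T}}\circ R'_m$ is the type-correct reading of that composition. It then simply asserts that $(U',\tau',S_0(\mathcal{T}'))$ simulates $\mathcal{T}'$ ``and consequently $\mathcal{T}$''. Everything in your proposal beyond defining the generators, i.e.\ checking transitivity clause by clause, is work the paper does not do. The problems you run into there are not resolved in the paper either.

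\emph{Timing.} Your diagnosis is right, and the situation is worse than needing to adjust the slack. The definitions only place the hitting index somewhere in $[i(s-1)(s'-1),\,i(s+1)(s'+1)]$. This interval lies inside $[i(s''-1),\,i(s''+1)]$ for all $i$ only if $s''-1\le(s-1)(s'-1)$ and $s''+1\ge(s+1)(s'+1)$, which forces $s+s'\le 1$. So no choice of $s''$ works. Neither your argument nor the paper's proves the lemma for simulations that actually use the slack allowed by \cref{def:t-sync-follows-s}. What does go through is the exact case: $i'=si$ and $i''=s'i'$ give $i''=ss'i$. Every construction in the paper is exact in this sense, so your first fallback is the correct repair, but it should be stated as a hypothesis of the lemma.

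\emph{Cleanliness.} Your argument breaks when the $U$-tile carrying $\mathcal{U}'$-fuzz lies in a $\mathcal{U}$-fuzz block $Z'$ that is merely adjacent to a resolved block. Then $\mathcal{U}'$-fuzz just across the far edge of $Z'$ makes the next $mm'$-block nonempty, although none of its orthogonal neighbours is resolved. Moreover, the $(R')^*$-image of a producible assembly of $\mathcal{U}'$ is in general only an intermediate ($\approx$) assembly of $\mathcal{U}$, not a producible one. So $\mathcal{U}$'s clean-mapping clause does not even apply to it, and this clause also composes only given extra structure of the concrete simulators.

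\emph{The $\approx$ lemma.} The ``hence'' needs one more observation. Along any run of $\mathcal{U}$, the image under $R^*$ is monotone and passes exactly through every $\alpha_k$ of the followed run of $\mathcal{T}$. Consequently the images of two consecutive $\mathcal{U}$-assemblies lie in a single interval $\alpha_k\sqsubseteq\cdot\sqsubseteq\alpha_{k+1}$. Without this, they could be intermediates of two different $\mathcal{T}$-steps.
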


\begin{proof}
    Assume $(U', \tau')$ seed-independently simulates $(U, \tau)$ under representation function $R_0$, thanks to the seed function $S_0$. Assume moreover that $U$ is a universal simulator through the functions $\mathcal{R}$ and $S$.

    Consider an arbitrary system $\calT$. By universal simulation, $\calT' = (U, \tau, S(\calT))$ simulates $\calT$ with representation function $\mathcal{R}(\calT)$. But then $(U', \tau', S_0(\calT'))$ simulates $\calT'$ and consequently $\calT$.

    Thus, $U'$ is a universal simulator at temperature $\tau'$ through the functions $\mathcal{R}': \calT \mapsto R_0 \circ \mathcal{R}(\calT)$ and $\mathcal S': \calT \mapsto S_0(U, \tau, S(\calT))$.
\end{proof}

\begin{definition}
\label{def:iu-general}
We say that~$U$ is \emph{intrinsically universal} for $\frakC$ if it is intrinsically universal for $\frakC$ at some temperature $\tau'\in \Z^+$.
\end{definition} %\marginpar{SMS: this is the normal definition of IU. DW: agreed! Nice.}

\begin{definition}
We say that $\frakC$ is intrinsically universal if there exists some $U$ such that $U$ is intrinsically universal for $\frakC$.
\end{definition}

\subsection{Local Predecessor Relation}

\begin{definition}[local predecessor relation]%\todo{this must exist somewhere under some name}
    Let $\calT$ be an aTAS, and $\alpha$ an assembly sequence of $\calT$. The local predecessor relation $\prec_\alpha$ of $\alpha$ is the relation defined on $\dom \lim \alpha$ by $z \prec_\alpha z'$ if $z$ and $z'$ are neighbor positions and $z$ is attached before $z'$ in $\alpha$.
\end{definition}

The terminal productions of $\calT$ and the possible local predecessor relations of each one are enough to fully describe the dynamics of $\calT$.

\begin{lemma}
\label{lem:local-prec}
    Let $\calT$ be a TAS with seed $\sigma$ and temperature $\tau$, $A$ an assembly of $\calT$ and $\prec$ a relation on $\dom A$ such that:
    \begin{itemize}
        \item $\prec$ is an orientation of the edges of $\dom A$ with one extremity at least outside $\dom \sigma$,
        %\item $\prec$ induces no infinite path,
        %\item $\prec$ is a spanning dag of $\dom A$, with $\dom \sigma$ as its set of minima
        \item $\dom \sigma$ is the set of minima of $\prec$, and
        \item for all $z \notin \dom \sigma$, the binding strength between $\{z\}$ and $\{ z' | z' \prec z\}$ in $A$ is at least $\tau$.
    \end{itemize}

    Then any sequence $\alpha$ of sub-assemblies of $A$ starting at $\sigma$ such that for all $i$, $\dom \alpha_{i}$ is closed under $\prec$ and $|\dom \alpha_{i + 1} \setminus \dom \alpha_i| = 1$ is an assembly sequence of $\calT$ producing $A$ with $\prec_\alpha = \prec$.
\end{lemma}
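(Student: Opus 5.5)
The plan is to check directly that $\alpha$ satisfies the definition of a $\calT$-assembly sequence, and then to read off $\lim \alpha = A$ and $\prec_\alpha = \prec$. I will read the sequence as exhausting $\dom A$: every position of $A$ appears in some $\dom\alpha_i$. This is automatic when the sequence is finite and ends at $A$, and it is implicit in the claim that $\alpha$ produces $A$. Throughout, $\alpha_0 = \sigma$ and each $\alpha_i$ is the restriction $A|_{\dom \alpha_i}$.

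\textbf{Each step is a legal attachment.} Fix $i$ and let $z$ be the unique position in $\dom\alpha_{i+1}\setminus\dom\alpha_i$. I would argue as follows.
\begin{itemize}
\item First, $z\notin\dom\sigma$, since $\dom\sigma \subseteq \dom\alpha_0 \subseteq \dom\alpha_i$.
\item Take any neighbor $z'$ of $z$ with $z'\prec z$. Because $\dom\alpha_{i+1}$ is closed under $\prec$ and contains $z$, it contains $z'$. Since $z'\neq z$, we get $z'\in\dom\alpha_i$.
\item Every element of $\{z' \mid z'\prec z\}$ is a neighbor of $z$, because $\prec$ only orients edges of $\dom A$. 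So this whole set already lies in $\dom\alpha_i$.
\item By the third hypothesis, the glues $A(z)$ shares with these tiles have total strength at least $\tau$. Both endpoints carry their tiles from $A$, so $A(z)$ can $\tau$-stably attach to $\alpha_i$ at $z$.
\end{itemize}
It remains to check that $\alpha_i$ and $\alpha_{i+1}$ are $\tau$-stable assemblies. I would prove this by induction on $i$. The base case holds because $\sigma$ is $\tau$-stable. For the inductive step, adding one tile bound with strength at least $\tau$ to a $\tau$-stable assembly keeps it connected, and every cut of the new binding graph has weight at least $\tau$. A cut that separates $z$ alone costs at least $\tau$ by the attachment strength. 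Any other cut restricts to a cut of $\alpha_i$, which costs at least $\tau$ by induction. Hence $\alpha_i \to^{\calT}_1 \alpha_{i+1}$.

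\textbf{Result and predecessor relation.} The result $\lim\alpha$ is the union of the $\alpha_i$. Each $\alpha_i$ is a subassembly of $A$ and, by the exhaustion convention, their domains cover $\dom A$, so $\lim\alpha = A$.

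For the predecessor relation, let $z,z'$ be neighbors in $\dom A$.
\begin{itemize}
\item If both lie in $\dom\sigma$, the edge is left unoriented by $\prec$, by the first hypothesis. It is also unordered by $\prec_\alpha$, since both positions are present at time $0$ and neither is attached before the other.
\item Otherwise exactly one of $z\prec z'$ or $z'\prec z$ holds; say $z\prec z'$. Then $z'\notin\dom\sigma$, because the elements of $\dom\sigma$ are minima of $\prec$. Let $j$ be the step at which $z'$ is added. Closure of $\dom\alpha_j$ under $\prec$ forces $z\in\dom\alpha_{j-1}$, so $z$ is attached strictly before $z'$ and $z\prec_\alpha z'$.
\end{itemize}
Conversely, if $z\prec_\alpha z'$, then the edge lies outside $\dom\sigma$ and so is oriented by $\prec$. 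The orientation $z'\prec z$ is impossible by the argument just given, so $z\prec z'$. Therefore $\prec_\alpha=\prec$.

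\textbf{Main obstacle.} The delicate points are conventions rather than mathematics. One is how $\prec_\alpha$ treats seed–seed edges. The other is whether the sequence is required to cover $\dom A$. Both are settled by the reading adopted at the start, and the rest is a direct verification.
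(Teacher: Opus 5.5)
Your proof is correct and uses the same key observation as the paper's: downward closure of $\dom\alpha_{i+1}$ forces every $\prec$-predecessor of the newly added position to be present already, so the new tile binds with strength at least $\tau$. The paper packages this as an induction on finite downward-closed sub-assemblies that peels off the last attached tile, while you check each step forward and also verify $\tau$-stability, $\lim\alpha=A$ and $\prec_\alpha=\prec$, which the paper leaves implicit; your forward check also covers infinite sequences directly.

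Like the paper, you tacitly assume the sequence is increasing, $\dom\alpha_i\subseteq\dom\alpha_{i+1}$ (for example in ``$\dom\sigma\subseteq\dom\alpha_0\subseteq\dom\alpha_i$''). The hypothesis $|\dom\alpha_{i+1}\setminus\dom\alpha_i|=1$ alone does not give this, so it belongs next to exhaustion among the conventions you list.
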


\begin{proof}
    Let $A$ and $\prec$ be as in the lemma. Proceed by induction on the sub-assemblies of $A$ which happen to be closed downwards for $\prec$ and contain $\sigma$.

    The only sequence of sub-assemblies of $A$ starting at $\sigma$ and ending at $\sigma$ is the trivial one, and it is indeed an assembly sequence of $\calT$ producing $\sigma$.

    Let $A'$ be such a sub-assembly of $A$, and $\alpha'$ a sequence of sub-assemblies of $A'$ of length $n$ satisfying the hypothesis of the lemma. If $\sigma \subsetneq A'$, $n > 0$ must hold and let $z$ be the last attached position in $\alpha'$. Let $\alpha''$ be the prefix of length $n - 1$ of $\alpha'$ and $A'' = \lim \alpha''$. By induction, $\alpha''$ is an assembly sequence producing $A''$. The tile addition at position $z$ at the end of $\alpha'$ is a legal tile addition because the binding strength there is at least $\tau$. Thus, the induction hypothesis holds for $A'$ too.
\end{proof}

It is very easy to characterize when the sequence $\alpha$ of \cref{lem:local-prec} exists.

\begin{lemma}
    \label{lem:local-prec-charac}
    Let $\calT$ be a TAS with seed $\sigma$ and temperature $\tau$, $A$ an assembly of $\calT$ and $\prec$ a relation on $\dom A$ such that:
    \begin{itemize}
        \item $\prec$ is an orientation of the edges of $\dom A$ with one extremity at least outside $\dom \sigma$,
        %\item $\prec$ induces no infinite path,
        %\item $\prec$ is a spanning dag of $\dom A$, with $\dom \sigma$ as its set of minima
        \item $\dom \sigma$ is the set of minima of $\prec$, and
        \item for all $z \notin \dom \sigma$, the binding strength between $z$ and $\{ z' | z' \prec z\}$ in $A$ is at least $\tau$.
    \end{itemize}

    There is an assembly sequence $\alpha$ with $\prec_\alpha = \prec$ if and only if the transitive closure $\prec^*$ of $\prec$ has no cycles and no infinite paths.
\end{lemma}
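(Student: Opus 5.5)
We prove the two directions separately, reducing everything to \cref{lem:local-prec}.

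\emph{Necessity.} Suppose $\alpha$ is an assembly sequence with $\prec_\alpha = \prec$. Let $t(z)$ be the index of the step at which $z$ is attached, with $t(z)=0$ for $z \in \dom\sigma$. By definition of $\prec_\alpha$, if $z \prec z'$ then $t(z) < t(z')$, and hence the same holds for $\prec^*$. A cycle $z \prec^* \dots \prec^* z$ would give $t(z) < t(z)$, which is impossible. Because every attachment step has finite index and $t$ takes values in $\mathbb{N}$, there is no infinite descending chain. What remains is to show there is no infinite ascending path, which needs more care.

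\emph{Interpreting ``infinite paths''.} An infinite ascending path $z_0 \prec z_1 \prec \cdots$ is harmless: it occurs naturally in infinite productions. So I would read ``infinite paths'' as infinite \emph{descending} chains, or equivalently as requiring that every $z$ has only finitely many $\prec^*$-predecessors. In the necessity direction this holds because every $\prec^*$-predecessor $z'$ of $z$ satisfies $t(z') < t(z)$. Each step adds one tile and $\sigma$ is finite, so at most $|\dom\sigma| + t(z)$ positions have index below $t(z)$. (Here $t(z') = 0$ covers seed positions, which are minima anyway.)

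\emph{Sufficiency.} Assume $\prec^*$ is acyclic and every element has finitely many predecessors; equivalently, the down-set $\downarrow z$ is finite for each $z$. Then $\prec^*$ is a well-founded partial order on the countable set $\dom A$. Enumerate $\dom A \setminus \dom\sigma$ as $w_0, w_1, \dots$ and build a linear extension greedily. At stage $k$, append, in some topological order, the finitely many elements of $\downarrow w_k$ not yet listed. Acyclicity guarantees such an order exists on a finite set. This gives a sequence $z_1, z_2, \dots$ listing every position of $\dom A \setminus \dom\sigma$ exactly once, and every prefix is closed under $\prec$.

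Set $\dom\alpha_i = \dom\sigma \cup \{z_1, \dots, z_i\}$. Each of these sets is $\prec$-closed, because the minima of $\prec$ are exactly $\dom\sigma$. Consecutive sets differ by exactly one position. \Cref{lem:local-prec} therefore yields an assembly sequence producing $A$ with $\prec_\alpha = \prec$. Its third hypothesis, the binding-strength condition, is assumed here.

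\emph{Main obstacle.} The delicate point is making the enumeration exhaust all of $\dom A$ when $A$ is infinite. A naive topological sort could wander off along one infinite branch and never return to other positions. Dovetailing over the enumeration $w_k$ and using finiteness of the down-sets resolves this. I would also check that $\prec$ being an orientation of the edges makes $\prec_\alpha$ agree with $\prec$ on every edge, and not only on the pairs we forced.
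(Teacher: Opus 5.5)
Your proof is correct and has the same skeleton as the paper's. Attachment times rule out cycles and infinite descending chains. For the converse, you enumerate $\dom A$ so that every prefix is $\prec$-closed and then invoke \cref{lem:local-prec}.

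Where you differ is in how that enumeration is produced. The paper only says that ``a breadth-first search of $\prec^*$'' lists all positions. Read literally, that search does not give $\prec$-closed prefixes. By distance from the seed along $\prec$, a position can be one step away through one predecessor and several steps away through another. By $\prec^*$-distance, every non-seed position lands in the first layer. So the paper's search must be read as a topological, rank-layered traversal, which it leaves implicit. Your dovetailing over an arbitrary enumeration $w_0,w_1,\dots$, inserting the finite set $\downarrow w_k$ in topological order at stage $k$, makes both prefix-closure and exhaustiveness explicit. You are also right that ``infinite paths'' must mean descending ones; the paper itself speaks of an ``infinite downward path'' in its Window Movie Lemma proof. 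The sentence saying it ``remains to show there is no infinite ascending path'' should simply be deleted.

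One step should not be left as a bare ``equivalently''. The lemma's hypothesis is the absence of infinite descending paths, but your sufficiency argument uses finiteness of every down-set $\downarrow z$. These differ in general. They agree here because each position has at most four grid neighbours, hence at most four immediate $\prec$-predecessors. The descending $\prec$-paths from $z$ therefore form a finitely branching tree in which every element of $\downarrow z$ occurs as an endpoint. If $\downarrow z$ were infinite, K\"onig's lemma would give an infinite descending path, whose vertices are distinct by acyclicity.

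You should also add the routine remark that a $\prec$-closed set containing $\dom\sigma$ is connected, since following predecessors from any position ends in $\dom\sigma$. This makes your $\alpha_i$ genuine sub-assemblies, as \cref{lem:local-prec} requires. With these two additions the proof is complete.
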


\begin{proof}
    First consider a production $A = \lim \alpha$ of $\calT$ and $\prec_\alpha$ the local predecessor relation of $\alpha$. If $\alpha$ had a cycle or infinite path, then the tiles in that part of the assembly could never be attached by $\alpha$.

    Then consider an assembly $A$ with an orientation $\prec$ of its edges as in the lemma. The transitive closure $\prec^*$ of $\prec$ forms a spanning directed acyclic graph of $\dom A$ rooted at the tiles of the seed. Since it has no infinite path, a breadth-first search $(z_i)_{i < |A|}$ of $\prec^*$ enumerates all the positions of $A$. Let $\alpha_i$ be the sub-assembly of $A$ with $\dom \alpha_i = \{z_k | k < i\}$. By \cref{lem:local-prec}, the sequence $\alpha$ is the desired assembly sequence.
\end{proof}

By this lemma, the set of local predecessor relations of all productions of a TAS $\calT$ describes its whole dynamics, as the assembly sequences are the $\prec$-non-decreasing sequences of positions. If, instead, one considers the total orders extending each $\prec$, the result is the set of all \emph{ordinal} assembly sequences as in the Tree Pump Theorem of \cite{FractalsSelfAssembleSODA}.

\subsection{Window Movie Lemma}\label{sec:wml}
The Window Movie Lemma is a powerful tool often used in impossibility results for tile-based self-assembly models.  Originally introduced in \cite{temp1notIU}, it allows surgery on an assembly sequence to obtain a new producible assembly (assuming that some conditions are met).  In its simplest form, the lemma states that if there are two enclosed areas of the plane that have the same ``shape'' and the same glues appear in the same order at the same position relative to these two enclosed shapes (that is, the same window movie), then the subassemblies produced in the two enclosed areas are interchangeable.  See Figure~\ref{fig:window-movie-lemma} for a graphical representation of this simple case of the Window Movie Lemma.

\begin{figure}[ht]
    \centering
    \includegraphics[width=\textwidth]{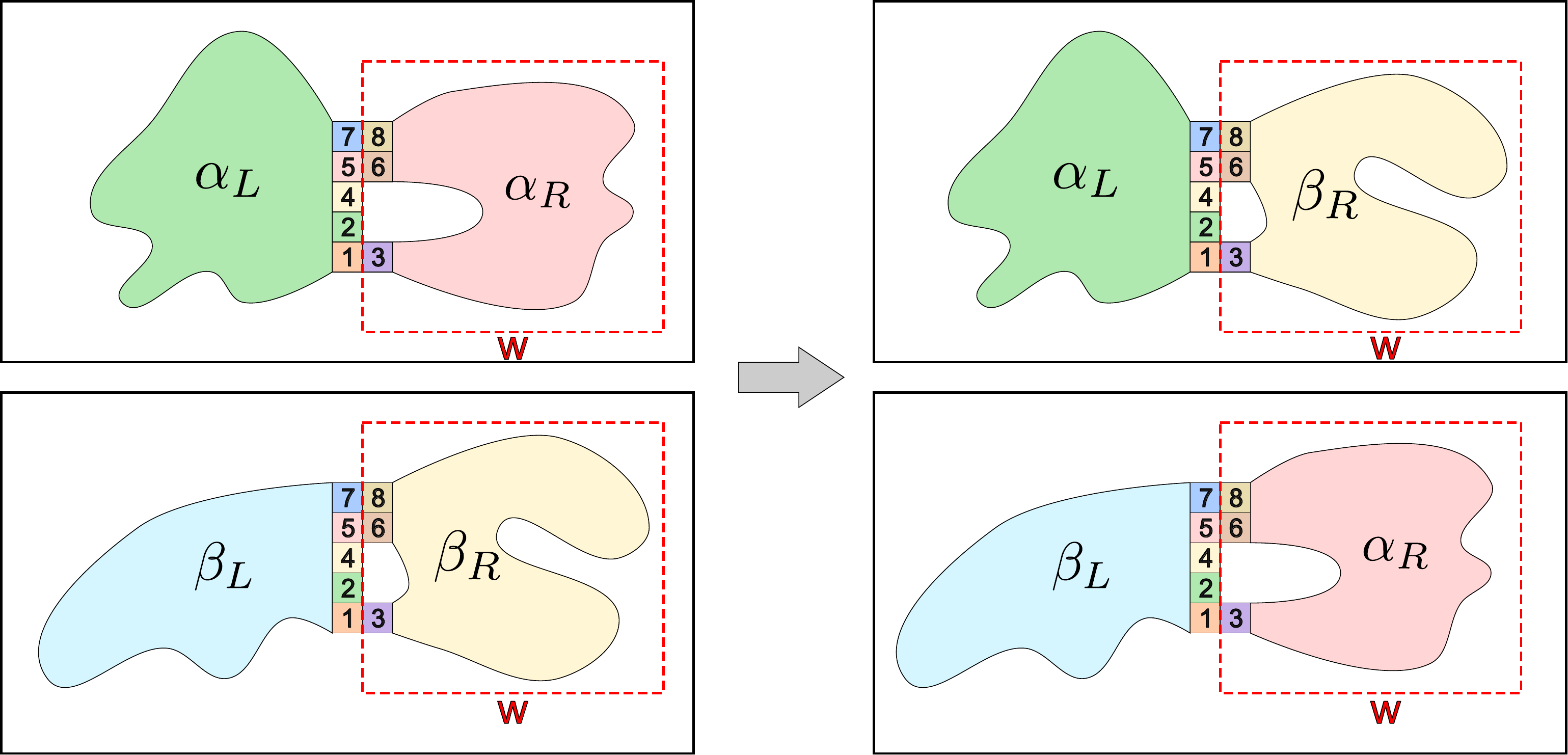}
    \caption{An illustration of the window movie lemma.  The squares represent tiles and the color of the tile is an indication of the glue type the tile exposes along the window $w$ (represented by a red dotted box).  The numbers on the tile represent the relative order in which the glues contained on the tile appear on the window as the assembly sequence is played forward.  The Window Movie Lemma tells us that since $\alpha_L \cup \alpha_R$ is producible (shown top left) and $\beta_L \cup \beta_R$ is producible (show bottom left) and because the assembly sequence produces the same window movie sequence along $w$, the assemblies shown on the right are also producible.}
    \label{fig:window-movie-lemma}
\end{figure}

The above surgical procedure can be extended in order to pump assemblies.  Indeed, if there is a fixed width subassembly of sufficient length, then there is some window movie which repeats.  We can then pump the subassembly between these two window movies indefinitely.  See Figure~\ref{fig:window-movie-lemma-pumping} for a graphical representation showing how assemblies can be pumped via the Window Movie Lemma.

The following definitions are taken from \cite{temp1notIU}, and replicated here for completeness.  A \emph{window} is an edge cut which partitions the lattice graph ($\mathbb{Z}^2$ in 2D or $\mathbb{Z}^3$ in 3D) into two regions. Given some window $w$ and some assembly sequence $\vec{\alpha}$ in a TAS $\calT$, a \emph{window movie} $M$ is defined to be the ordered sequence of glues presented along $w$ by tiles in $\calT$ during the assembly sequence $\vec{\alpha}$. Informally, the window $w$ is a cut dividing two regions of tile locations and $M$ is constructed by recording the glues which appear on the cut and their relative order as the assembly sequence $\vec{\alpha}$ is played forward. More formally, a \emph{window movie} is the sequence $M^{\vec{\alpha}}_w = \{(v_i, g_i)\}$ of pairs of grid graph vertices $v_i$ and glues $g_i$, given by order of appearance of the glues along window $w$ during $\vec{\alpha}$. Furthermore, if $k$ glues appear along $w$ during the same assembly step in $\vec{\alpha}$, then these glues appear contiguously and are listed in lexicographical order of the unit vectors describing their orientation in $M^{\vec{\alpha}}_w$.

\begin{figure}
    \centering
    \includegraphics[width=\textwidth]{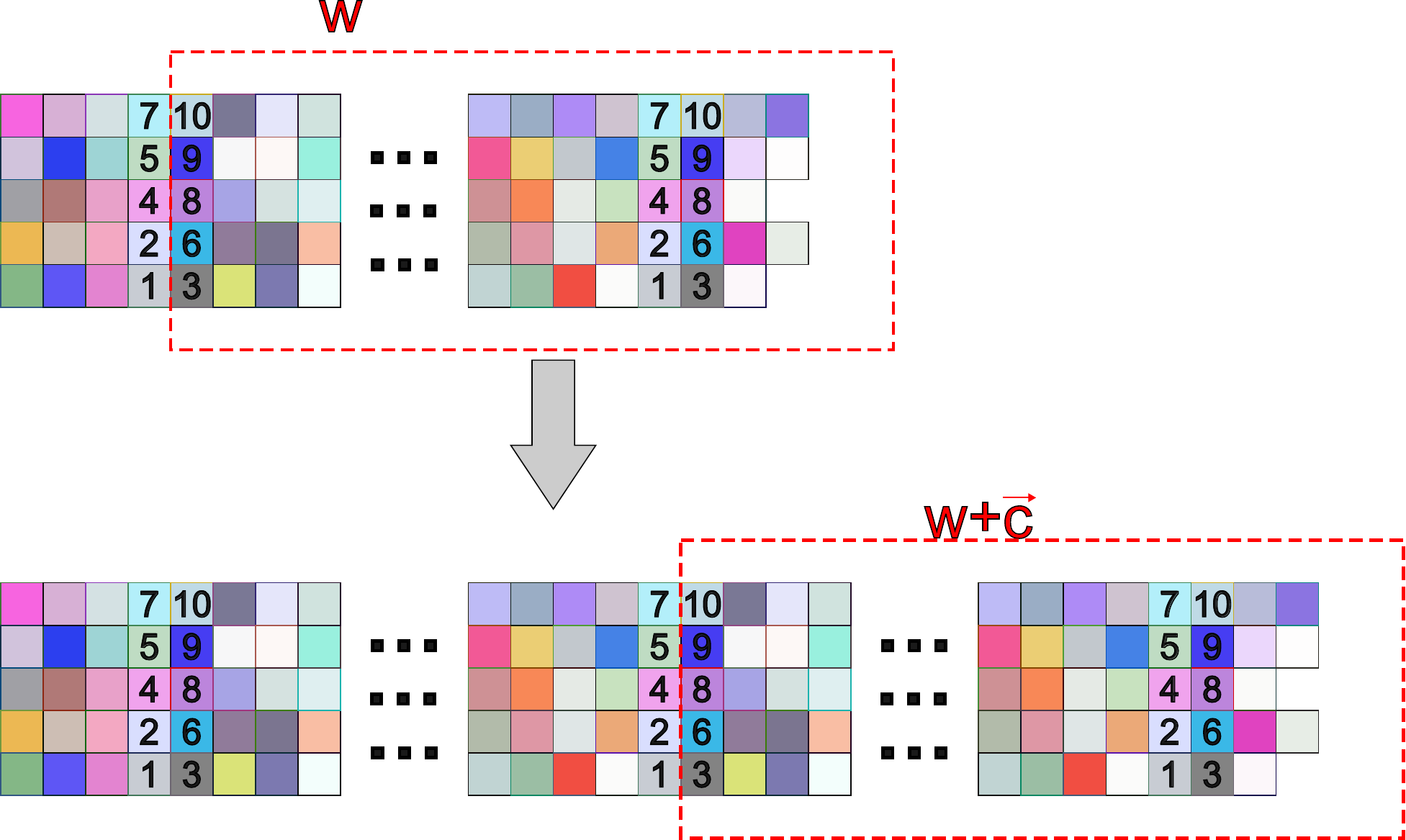}
    \caption{An illustration of how the Window Movie Lemma can be leveraged to pump assemblies.  The squares represent tiles and the color of the tile is an indication of the glue type the tile exposes along the window $w$ (represented by a red dotted box).  The numbers on the tile represent the relative order in which the glues contained on the tile appear on the window as the assembly sequence is played forward.  The top assembly shows a fixed width ribbon grown sufficiently long. By the pigeon hole principle, there must exist a window that has the same window movie when translated to two distinct positions along the ribbon (as indicated by the same colored tiles with the same numbers on them occurring twice in the ribbon).  Then, by the Window Movie Lemma, the assembly between these two translations of the window can be repeated indefinitely.  This is illustrated in the bottom portion of the figure which shows that the portion of the ribbon between repeating window movies can be also be assembled beginning where the last repeating window movie is located.}
    \label{fig:window-movie-lemma-pumping}
\end{figure}

\begin{lemma}[Window Movie Lemma]
    Let $\vec{\alpha} = \{\alpha_i\}$ and $\vec{\beta}=\{\beta_i\}$ be assembly sequences in TAS $\calT$ and let $\alpha, \beta$ be the result assemblies of each respectively. Let $w$ be a window that partitions $\alpha$ into two configurations $\alpha_L$ and $\alpha_R$ and let $w'=w+\vec{c}$ be a translation of $w$ that partitions $\beta$ into two configurations $\beta_L$ and $\beta_R$ (with $\alpha_L$ and $\beta_L$ being the configurations containing their respective seed tiles). Furthermore define $M^{\vec{\alpha}}_w$ and $M^{\vec{\beta}}_w$ to be the window movies for $\vec{\alpha},w$ and $\vec{\beta},w'$ respectively. Then if $M^{\vec{\alpha}}_w = M^{\vec{\beta}}_w$, the assemblies $\alpha_L\cup\beta'_R$ and $\beta'_L\cup\alpha_R$ (where $\beta'_L = \beta_L - \vec{c}$ and $\beta'_R = \beta_R - \vec{c}$) are also producible.
\end{lemma}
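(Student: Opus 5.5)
The approach is the standard cut-and-paste argument used for the Window Movie Lemma in \cite{temp1notIU}, adapted to the asynchronous aTAM setting of a TAS $\calT$. First, fix the setup. Let $\vec{\alpha}$ produce $\alpha$ and $\vec{\beta}$ produce $\beta$. The window $w$ cuts $\alpha$ into $\alpha_L$ (containing the seed) and $\alpha_R$, and $w' = w+\vec{c}$ cuts $\beta$ into $\beta_L,\beta_R$. We treat the claim for $\alpha_L \cup \beta'_R$; the other claim, $\beta'_L\cup\alpha_R$, is symmetric with the roles of the two sequences exchanged. Write $\vec{\alpha}_L$ for the subsequence of tile attachments of $\vec{\alpha}$ that fall in the left region of $w$, and $\vec{\beta}'_R$ for the subsequence of attachments of $\vec{\beta}$ in the right region of $w'$, translated by $-\vec{c}$.

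The key step is to interleave these two subsequences into a single sequence $\vec{\gamma}$, using the common window movie $M = M^{\vec{\alpha}}_w = M^{\vec{\beta}}_w$ as a schedule.
\begin{enumerate}
\item Split $\vec{\alpha}_L$ into maximal blocks separated by the attachments that place a new glue of $M$ on the window. Do the same for $\vec{\beta}'_R$.
\item Equality of the movies means that the $j$-th window glue appears at the same position with the same label in both sequences. It also means it appears on a known side: the left side in $\vec{\alpha}$ or the right side in $\vec{\beta}$ (after translation).
\item Build $\vec{\gamma}$ by processing $M$ entry by entry. Before emitting the $j$-th window event, play all attachments of the side that produces it up to and including that event. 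Meanwhile, the other side waits until then.
\end{enumerate}

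It remains to check, by induction along $\vec{\gamma}$, that each attachment is legal. Take a tile $t$ placed on the left at location $z$. In $\vec{\alpha}$, $t$ bound with strength $\geq\tau$ to its neighbours present at that time. Neighbours on the left are present in $\vec{\gamma}$ with the same types, since left attachments appear in the same relative order. Neighbours across the window contributed glues that are entries of $M$ occurring earlier. By the schedule, those entries have already been realized in $\vec{\gamma}$ by right-side tiles of $\beta'$. Those tiles carry the same glue labels and strengths at the same positions, so the bonds across the window are identical.

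Two further points need care. First, binding strength only helps at the aTAM temperature, so extra neighbours in $\vec{\gamma}$ that were absent in $\vec{\alpha}$ cause no harm. Second, no location is occupied twice, since the two pieces lie on opposite sides of the cut. The same argument applies symmetrically to right-side tiles. Hence $\vec{\gamma}$ is a $\calT$-assembly sequence whose result is $\alpha_L\cup\beta'_R$.

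I expect the main obstacle to be the bookkeeping of the interleaving. The first difficulty is showing that every cross-window bond used by a tile corresponds to a movie entry already emitted. The second is that the lemma states nothing about the seed lying only on the left. We must use the assumption that the seed is in $\alpha_L$ and $\beta_L$, so that $\vec{\beta}'_R$ never needs a seed tile from the right.

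For infinite sequences, the result $\alpha_L\cup\beta'_R$ is the limit of $\vec{\gamma}$. This requires that the interleaving eventually schedules every attachment. That holds because each attachment is preceded by only finitely many movie entries and finitely many same-side attachments.
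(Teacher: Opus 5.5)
Your proof is correct in outline but takes a different route from the paper, and its schedule, as you state it, needs one fix.

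\paragraph{Comparison of routes.} You give the classical direct argument of \cite{temp1notIU}. You build the interleaved assembly sequence explicitly, use the common movie as a clock, and check each attachment by hand. The paper never writes down a sequence. It forms a relation $\prec_\gamma$ on $\dom(\alpha_L\cup\beta'_R)$ that agrees with $\prec_\alpha$ on the left and with the translate of $\prec_\beta$ on the right. It orients each edge crossing $w$ by the order of the window indices, on which $\prec_\alpha$ and $\prec_\beta$ agree because the movies coincide. The edge-wise hypotheses of \cref{lem:local-prec-charac} (seed positions are the minima, and each position binds with strength at least $\tau$ to its predecessors) are inherited from $\prec_\alpha$ and $\prec_\beta$. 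A cycle or infinite path of $\prec_\gamma^*$ would have to cross $w$, infinitely often in the second case, and so would induce a cycle or infinite downward path in the window indices, which is impossible. The paper's route therefore delegates all scheduling, including the infinite case, to the characterization lemma. Yours is self-contained and makes explicit why each attachment is legal: a tile bonding across $w$ presents a non-null window glue, so it is itself a movie entry and is placed only after the entries of its cross-window neighbours. The cost is that you must get the scheduling exactly right.

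\paragraph{The scheduling fix.} As written, the non-producing side ``waits until'' it produces an event. Suppose the movie is infinite but from some point on all its entries come from the left. Then the right-side attachments of $\vec\beta$ after its last movie entry are never played, and $\lim\vec\gamma\neq\alpha_L\cup\beta'_R$. Likewise, for a finite movie you do not say what happens after the last entry. Your closing justification in fact describes the schedule that works:
\begin{enumerate}
\item After realizing entry $j$, play for both sides all attachments lying strictly between entries $j$ and $j+1$ in that side's own sequence. Each such block is finite, since both entries occur at finite indices.
\item Then play the attachment producing entry $j+1$.
\item If the movie is finite, dovetail the two tails after the last entry.
\end{enumerate}
The attachments inside a block present no window glue, so they depend only on earlier same-side tiles. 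Your legality argument therefore goes through unchanged for this schedule, and every attachment is now reached after finitely many steps.
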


This classical result admits a concise modern proof as a corollary of \cref{lem:local-prec-charac}.

\begin{proof}
    By symmetry, it is enough to show that $\gamma = \alpha_L \cup \beta'_R$ is producible.
    Let $\prec_\alpha$ and $\prec_\beta$ be the local predecessor relations of $\vec{\alpha}$ and $\vec{\beta}$ respectively. Let $\prec_\gamma$ be the relation on $\gamma$ which coincides with $\prec_\alpha$ on $\alpha_L$, with $\prec_\beta$ on $\beta'_R$ (modulo translation by $\vec{c}$), and with the order of the window indices for the edges across $w$.

    It suffices to prove that $\prec_\gamma$ satisfies the hypothesis of \cref{lem:local-prec-charac}. For all the conditions of \cref{lem:local-prec-charac} which are checked edge-wise (the bullet points), since $\prec_\alpha$ and $\prec_\beta$ agree on any edge through $w$ and satisfy them, so does $\prec_\gamma$. Because $\prec^*_\alpha$ and $\prec^*_\beta$ do not admit cycles or infinite paths, any cycle of $\prec_\gamma^*$ would have to go through $w$, and any infinite path of $\prec^*_\gamma$ would have to go infinitely through it. In turn, such a cycle or infinite path through $w$ would beget a cycle or infinite downward path in the window indices.

    Thus, \cref{lem:local-prec-charac} makes $\gamma$ an assembling sequence.
\end{proof}

\section{Connection Between k-Band-Bounded aTAM Systems and Büchi Automata}\label{sec:buchi}

\subsection{The $k$-Band-Bounded Case}

In the $k$-band-bounded case, the set of local predecessor relations is recognizable by a deterministic Büchi Automaton\footnote{A deterministic Büchi Automaton is a deterministic Finite Automaton which, instead of receiving as input finite strings, receives infinite strings. Its acceptance condition is then that, on a given input, it must pass through one of its accepting states infinitely many times.}. Hence, for a tile set $T$, each $k$-band bounded assembly decorated with a local predecessor relation needs to be represented as an infinite word on some finite alphabet. For this, let $\Sigma_{T} = T \times \mathcal{P}(\{N, E, S, W\})$ be the possible values of a tile decorated with its set of predecessors according to some local predecessor relation. An assembly $\alpha: \mathbb{Z} \times \{-k, \ldots, k\} \dashrightarrow T$ together with a local predecessor relation $\prec$ on $\dom \alpha$ is represented as a word $w_\alpha$ on $(\Sigma_T \cup \#)^{2 \cdot (2k + 1)}$, where the $i$-th component of $w_\alpha(x)$ is $\alpha(x,i-k-1)$ with its set of $\prec$-predecessors if $i \leq 2k+1$, and $\alpha(-x,i-3k-2)$ otherwise.

\begin{figure}
    \centering
    \begin{tikzpicture}
        \input{tikz/band/buechi_prod}

        \begin{scope}[yshift=-5cm]
        \input{tikz/band/buechi_fold}
        \draw[thick, red] (2, 4) +(-.25, .25) rectangle +(1.25, -9.25);
\node[below=.3em, red] at (2.5, -5) {$w_\alpha(2)$};
\draw[thick, green] (6, 4) +(-.25, .25) rectangle +(1.25, -9.25);
\node[below=.3em, green] at (6.5, -5) {$w_\alpha(6)$};
        \end{scope}
    \end{tikzpicture}
    \caption{Above: a production $\alpha$ with bounded height $k = 4$ together with a predecessor relation $\prec_\alpha$. There is a notch from each tile to each of its predecessors according to $\prec_\alpha$. Below: its encoding as a (potentially infinite) word $w_\alpha$. Each letter of $w_\alpha$ is a column of height $2k$ with the labels of the tiles as well as their notches, i.e. the direction of their predecessors.}
    \label{fig:buechi_encoding}
\end{figure}

\aTAMbuchi*

\begin{proof}
    The language $\mathcal{L}_\prec(\calT)$ can be defind as the intersection of four languages, each recognized by a DFA with a Büchi acceptance condition.

    The first language $S$ asserts that $\emptyset$ appears as a set of predecessors exactly on the positions of $\dom \sigma$, and that for every edge outside $\dom \sigma$, one of its extremity is the predecessor of the other. The second language $B$ asserts that the binding strength between each tile and its predecessors is at least $\tau$. Both conditions are entirely local and easily checked by a DFA with a Büchi acceptance condition.

    The third language $C$ asserts that the transitive closure $\prec^*$ of $\prec$ has no cycle. The associated automaton keeps as its state the relations between position in the leftmost and rightmost visited columns by $\prec^*$, and updates it as it reads each new pair of columns.

    \begin{figure}
        \centering
        \begin{tikzpicture}[scale=.75, mid arrow/.style={postaction={decorate,decoration={
        markings,
        mark=at position .5 with {\arrow[#1]{stealth}}
      }}}]
      \begin{scope}[opacity=.2]
        \input{tikz/band/buechi_fold}
      \end{scope}
            \begin{scope}[shift={(1,0)}]
\node at (.5, .5) { $a$ };
\draw[rotate around={(0:(.5, .5))}] (0, 1) -- (.25, 1) -- (.5, .75) -- (.75, 1) -- (1, 1);
\draw[rotate around={(270:(.5, .5))}] (0, 1) -- (.25, 1) -- (.5, .75) -- (.75, 1) -- (1, 1);
\draw[gray, rotate around={(180:(.5, .5))}] (0, 1) -- (1, 1);
\draw[rotate around={(90:(.5, .5))}] (0, 1) -- (.25, 1) -- (.5, 1.25) -- (.75, 1) -- (1, 1);
\end{scope}
\begin{scope}[shift={(1,1)}]
\node at (.5, .5) { $b$ };
\draw[rotate around={(0:(.5, .5))}] (0, 1) -- (.25, 1) -- (.5, .75) -- (.75, 1) -- (1, 1);
\draw[rotate around={(270:(.5, .5))}] (0, 1) -- (.25, 1) -- (.5, 1.25) -- (.75, 1) -- (1, 1);
\draw[rotate around={(180:(.5, .5))}] (0, 1) -- (.25, 1) -- (.5, 1.25) -- (.75, 1) -- (1, 1);
\draw[rotate around={(90:(.5, .5))}] (0, 1) -- (.25, 1) -- (.5, .75) -- (.75, 1) -- (1, 1);
\end{scope}
\begin{scope}[shift={(1,2)}]
\node at (.5, .5) { $c$ };
\draw[gray, rotate around={(0:(.5, .5))}] (0, 1) -- (1, 1);
\draw[gray, rotate around={(270:(.5, .5))}] (0, 1) -- (1, 1);
\draw[rotate around={(180:(.5, .5))}] (0, 1) -- (.25, 1) -- (.5, 1.25) -- (.75, 1) -- (1, 1);
\draw[gray, rotate around={(90:(.5, .5))}] (0, 1) -- (1, 1);
\end{scope}
\begin{scope}[shift={(1,3)}]
\fill[gray] (0, 0) rectangle +(1, 1);
\draw[gray, rotate around={(0:(.5, .5))}] (0, 1) -- (1, 1);
\draw[gray, rotate around={(270:(.5, .5))}] (0, 1) -- (1, 1);
\draw[gray, rotate around={(180:(.5, .5))}] (0, 1) -- (1, 1);
\draw[gray, rotate around={(90:(.5, .5))}] (0, 1) -- (1, 1);
\end{scope}
\begin{scope}[shift={(1, -5)}, rotate={180}, yscale=-1]
\begin{scope}[shift={(-1,0)}]
\node at (.5, .5) { $b$ };
\draw[rotate around={(0:(.5, .5))}] (0, 1) -- (.25, 1) -- (.5, .75) -- (.75, 1) -- (1, 1);
\draw[rotate around={(270:(.5, .5))}] (0, 1) -- (.25, 1) -- (.5, 1.25) -- (.75, 1) -- (1, 1);
\draw[gray, rotate around={(180:(.5, .5))}] (0, 1) -- (1, 1);
\draw[rotate around={(90:(.5, .5))}] (0, 1) -- (.25, 1) -- (.5, .75) -- (.75, 1) -- (1, 1);
\end{scope}
\begin{scope}[shift={(-1,1)}]
\node at (.5, .5) { $b$ };
\draw[rotate around={(0:(.5, .5))}] (0, 1) -- (.25, 1) -- (.5, .75) -- (.75, 1) -- (1, 1);
\draw[rotate around={(270:(.5, .5))}] (0, 1) -- (.25, 1) -- (.5, 1.25) -- (.75, 1) -- (1, 1);
\draw[rotate around={(180:(.5, .5))}] (0, 1) -- (.25, 1) -- (.5, 1.25) -- (.75, 1) -- (1, 1);
\draw[gray, rotate around={(90:(.5, .5))}] (0, 1) -- (1, 1);
\end{scope}
\begin{scope}[shift={(-1,2)}]
\node at (.5, .5) { $a$ };
\draw[gray, rotate around={(0:(.5, .5))}] (0, 1) -- (1, 1);
\draw[gray, rotate around={(270:(.5, .5))}] (0, 1) -- (1, 1);
\draw[rotate around={(180:(.5, .5))}] (0, 1) -- (.25, 1) -- (.5, 1.25) -- (.75, 1) -- (1, 1);
\draw[rotate around={(90:(.5, .5))}] (0, 1) -- (.25, 1) -- (.5, .75) -- (.75, 1) -- (1, 1);
\end{scope}
\begin{scope}[shift={(-1,3)}]
\fill[gray] (0, 0) rectangle +(1, 1);
\draw[gray, rotate around={(0:(.5, .5))}] (0, 1) -- (1, 1);
\draw[gray, rotate around={(270:(.5, .5))}] (0, 1) -- (1, 1);
\draw[gray, rotate around={(180:(.5, .5))}] (0, 1) -- (1, 1);
\draw[gray, rotate around={(90:(.5, .5))}] (0, 1) -- (1, 1);
\end{scope}
\end{scope}
\draw[shift={(.5, .5)}, thick, red, mid arrow] (1,-4) to[out=180, in=180] (1, -5);
\draw[shift={(.5, .5)}, thick, red, mid arrow] (1,-4) to[out=180, in=180] (1, 1);
\draw[shift={(.5, .5)}, thick, red, mid arrow] (1,-3) to[out=180, in=180] (1, -4);
\draw[shift={(.5, .5)}, thick, red, mid arrow] (1,1) to[out=180, in=180] (1, 0);
\draw[shift={(.5, .5)}, thick, red, mid arrow] (1,2) to[out=180, in=180] (1, 1);
\begin{scope}[shift={(2,0)}]
\node at (.5, .5) { $d$ };
\draw[rotate around={(0:(.5, .5))}] (0, 1) -- (.25, 1) -- (.5, .75) -- (.75, 1) -- (1, 1);
\draw[gray, rotate around={(270:(.5, .5))}] (0, 1) -- (1, 1);
\draw[gray, rotate around={(180:(.5, .5))}] (0, 1) -- (1, 1);
\draw[rotate around={(90:(.5, .5))}] (0, 1) -- (.25, 1) -- (.5, 1.25) -- (.75, 1) -- (1, 1);
\end{scope}
\begin{scope}[shift={(2,1)}]
\node at (.5, .5) { $b$ };
\draw[gray, rotate around={(0:(.5, .5))}] (0, 1) -- (1, 1);
\draw[rotate around={(270:(.5, .5))}] (0, 1) -- (.25, 1) -- (.5, .75) -- (.75, 1) -- (1, 1);
\draw[rotate around={(180:(.5, .5))}] (0, 1) -- (.25, 1) -- (.5, 1.25) -- (.75, 1) -- (1, 1);
\draw[rotate around={(90:(.5, .5))}] (0, 1) -- (.25, 1) -- (.5, .75) -- (.75, 1) -- (1, 1);
\end{scope}
\begin{scope}[shift={(2,2)}]
\fill[gray] (0, 0) rectangle +(1, 1);
\draw[gray, rotate around={(0:(.5, .5))}] (0, 1) -- (1, 1);
\draw[gray, rotate around={(270:(.5, .5))}] (0, 1) -- (1, 1);
\draw[gray, rotate around={(180:(.5, .5))}] (0, 1) -- (1, 1);
\draw[gray, rotate around={(90:(.5, .5))}] (0, 1) -- (1, 1);
\end{scope}
\begin{scope}[shift={(2,3)}]
\node at (.5, .5) { $d$ };
\draw[gray, rotate around={(0:(.5, .5))}] (0, 1) -- (1, 1);
\draw[rotate around={(270:(.5, .5))}] (0, 1) -- (.25, 1) -- (.5, 1.25) -- (.75, 1) -- (1, 1);
\draw[gray, rotate around={(180:(.5, .5))}] (0, 1) -- (1, 1);
\draw[gray, rotate around={(90:(.5, .5))}] (0, 1) -- (1, 1);
\end{scope}
\begin{scope}[shift={(1, -5)}, rotate={180}, yscale=-1]
\begin{scope}[shift={(-2,0)}]
\node at (.5, .5) { $b$ };
\draw[gray, rotate around={(0:(.5, .5))}] (0, 1) -- (1, 1);
\draw[rotate around={(270:(.5, .5))}] (0, 1) -- (.25, 1) -- (.5, 1.25) -- (.75, 1) -- (1, 1);
\draw[gray, rotate around={(180:(.5, .5))}] (0, 1) -- (1, 1);
\draw[rotate around={(90:(.5, .5))}] (0, 1) -- (.25, 1) -- (.5, .75) -- (.75, 1) -- (1, 1);
\end{scope}
\begin{scope}[shift={(-2,1)}]
\fill[gray] (0, 0) rectangle +(1, 1);
\draw[gray, rotate around={(0:(.5, .5))}] (0, 1) -- (1, 1);
\draw[gray, rotate around={(270:(.5, .5))}] (0, 1) -- (1, 1);
\draw[gray, rotate around={(180:(.5, .5))}] (0, 1) -- (1, 1);
\draw[gray, rotate around={(90:(.5, .5))}] (0, 1) -- (1, 1);
\end{scope}
\begin{scope}[shift={(-2,2)}]
\node at (.5, .5) { $c$ };
\draw[gray, rotate around={(0:(.5, .5))}] (0, 1) -- (1, 1);
\draw[rotate around={(270:(.5, .5))}] (0, 1) -- (.25, 1) -- (.5, 1.25) -- (.75, 1) -- (1, 1);
\draw[gray, rotate around={(180:(.5, .5))}] (0, 1) -- (1, 1);
\draw[rotate around={(90:(.5, .5))}] (0, 1) -- (.25, 1) -- (.5, .75) -- (.75, 1) -- (1, 1);
\end{scope}
\begin{scope}[shift={(-2,3)}]
\fill[gray] (0, 0) rectangle +(1, 1);
\draw[gray, rotate around={(0:(.5, .5))}] (0, 1) -- (1, 1);
\draw[gray, rotate around={(270:(.5, .5))}] (0, 1) -- (1, 1);
\draw[gray, rotate around={(180:(.5, .5))}] (0, 1) -- (1, 1);
\draw[gray, rotate around={(90:(.5, .5))}] (0, 1) -- (1, 1);
\end{scope}
\end{scope}
\draw[shift={(.5, .5)}, thick, purple, mid arrow] (2,-3) to[out=0, in=0] (2, 1);
\draw[shift={(.5, .5)}, thick, purple, mid arrow] (2,1) to[out=0, in=0] (2, 0);
\draw[shift={(.5, .5)}, thick, blue, mid arrow] (2,0) -- (1, 0);
\draw[shift={(.5, .5)}, thick, blue, mid arrow] (1,1) -- (2, 1);
\draw[shift={(.5, .5)}, thick, blue, mid arrow] (2,-5) -- (1, -5);
\draw[shift={(.5, .5)}, thick, blue, mid arrow] (2,-3) -- (1, -3);
        \end{tikzpicture}
        \caption{Principle of the Büchi automaton for the sublanguage $C$ asserting that there are no dependency cycles, demonstrated on the transition from $w_\alpha(1)$ to $w_\alpha(2)$. The transitive closure of the predecessor relations seen so far by the automaton is in red; the automaton only stores the relations between elements in the pair of column it has seen last, here columns $\pm 1$. Then it uses the predecessor relations of the tiles in the current pair of columns (the notches) to compute the transitive relation between these (in purple). From there, it can deduce the relations between the tiles in columns $\pm 2$ (in purple). }
        \label{fig:placeholder}
    \end{figure}

    Last, the language $F$ asserts that $\prec^*$ has no infinite path. It has a Büchi acceptance condition on transition rather than states, which is equivalent. It does so by keeping in its state, a list of edges going outwards from the leftmost and rightmost column, ordered by the smallest $|i|$ such that there is a path from a position in column $i$ to that outgoing edge (ties broken arbitrarily). This is only a finite amount of information, as there are at most $2 \cdot (2k + 1)$ such edges, and only their \emph{order} is of interest. Each new pair of columns contains all the information that is needed to update this information. Accepting transitions are those where the first edge of the old state has no path to the first edge of the new state, or both states have empty lists of outgoing states. If there are infinite paths in $\prec^*$, one of them eventually has its edges at the first position of the state. From that point on, the automaton cannot enter an accepting transition again. Conversely, if the automaton stops running into accepting transitions, it must be because some fixed position keeps depending on ever further edges, which implies an infinite path in $\prec^*$.

    \begin{figure}
        \centering
        \begin{tikzpicture}[scale=.75]
            \draw[dashed, red] (-2, -5.25) rectangle (6, 4.25);
            \draw[dashed, blue] (-2.1, -5.35) rectangle (7, 4.35);
            \begin{scope}[opacity=.2]
            \input{tikz/band/buechi_fold}
            \end{scope}
            \begin{scope}[shift={(6,0)}]
\node at (.5, .5) {  };
\draw[gray, rotate around={(0:(.5, .5))}] (0, 1) -- (1, 1);
\draw[rotate around={(270:(.5, .5))}] (0, 1) -- (.25, 1) -- (.5, 1.25) -- (.75, 1) -- (1, 1);
\draw[gray, rotate around={(180:(.5, .5))}] (0, 1) -- (1, 1);
\draw[gray, rotate around={(90:(.5, .5))}] (0, 1) -- (1, 1);
\end{scope}
\begin{scope}[shift={(6,1)}]
\fill[gray] (0, 0) rectangle +(1, 1);
\draw[gray, rotate around={(0:(.5, .5))}] (0, 1) -- (1, 1);
\draw[gray, rotate around={(270:(.5, .5))}] (0, 1) -- (1, 1);
\draw[gray, rotate around={(180:(.5, .5))}] (0, 1) -- (1, 1);
\draw[gray, rotate around={(90:(.5, .5))}] (0, 1) -- (1, 1);
\end{scope}
\begin{scope}[shift={(6,2)}]
\node at (.5, .5) {  };
\draw[rotate around={(0:(.5, .5))}] (0, 1) -- (.25, 1) -- (.5, 1.25) -- (.75, 1) -- (1, 1);
\draw[rotate around={(270:(.5, .5))}] (0, 1) -- (.25, 1) -- (.5, .75) -- (.75, 1) -- (1, 1);
\draw[gray, rotate around={(180:(.5, .5))}] (0, 1) -- (1, 1);
\draw[rotate around={(90:(.5, .5))}] (0, 1) -- (.25, 1) -- (.5, .75) -- (.75, 1) -- (1, 1);
\end{scope}
\begin{scope}[shift={(6,3)}]
\node at (.5, .5) {  };
\draw[gray, rotate around={(0:(.5, .5))}] (0, 1) -- (1, 1);
\draw[rotate around={(270:(.5, .5))}] (0, 1) -- (.25, 1) -- (.5, .75) -- (.75, 1) -- (1, 1);
\draw[rotate around={(180:(.5, .5))}] (0, 1) -- (.25, 1) -- (.5, .75) -- (.75, 1) -- (1, 1);
\draw[rotate around={(90:(.5, .5))}] (0, 1) -- (.25, 1) -- (.5, 1.25) -- (.75, 1) -- (1, 1);
\end{scope}
\begin{scope}[shift={(1, -5)}, rotate={180}, yscale=-1]
\begin{scope}[shift={(-6,0)}]
\node at (.5, .5) {  };
\draw[rotate around={(0:(.5, .5))}] (0, 1) -- (.25, 1) -- (.5, .75) -- (.75, 1) -- (1, 1);
\draw[rotate around={(270:(.5, .5))}] (0, 1) -- (.25, 1) -- (.5, 1.25) -- (.75, 1) -- (1, 1);
\draw[gray, rotate around={(180:(.5, .5))}] (0, 1) -- (1, 1);
\draw[rotate around={(90:(.5, .5))}] (0, 1) -- (.25, 1) -- (.5, .75) -- (.75, 1) -- (1, 1);
\end{scope}
\begin{scope}[shift={(-6,1)}]
\node at (.5, .5) {  };
\draw[gray, rotate around={(0:(.5, .5))}] (0, 1) -- (1, 1);
\draw[rotate around={(270:(.5, .5))}] (0, 1) -- (.25, 1) -- (.5, .75) -- (.75, 1) -- (1, 1);
\draw[rotate around={(180:(.5, .5))}] (0, 1) -- (.25, 1) -- (.5, 1.25) -- (.75, 1) -- (1, 1);
\draw[rotate around={(90:(.5, .5))}] (0, 1) -- (.25, 1) -- (.5, 1.25) -- (.75, 1) -- (1, 1);
\end{scope}
\begin{scope}[shift={(-6,2)}]
\fill[gray] (0, 0) rectangle +(1, 1);
\draw[gray, rotate around={(0:(.5, .5))}] (0, 1) -- (1, 1);
\draw[gray, rotate around={(270:(.5, .5))}] (0, 1) -- (1, 1);
\draw[gray, rotate around={(180:(.5, .5))}] (0, 1) -- (1, 1);
\draw[gray, rotate around={(90:(.5, .5))}] (0, 1) -- (1, 1);
\end{scope}
\begin{scope}[shift={(-6,3)}]
\fill[gray] (0, 0) rectangle +(1, 1);
\draw[gray, rotate around={(0:(.5, .5))}] (0, 1) -- (1, 1);
\draw[gray, rotate around={(270:(.5, .5))}] (0, 1) -- (1, 1);
\draw[gray, rotate around={(180:(.5, .5))}] (0, 1) -- (1, 1);
\draw[gray, rotate around={(90:(.5, .5))}] (0, 1) -- (1, 1);
\end{scope}
\end{scope}
\begin{scope}[shift={(5,0)}]
\fill[gray] (0, 0) rectangle +(1, 1);
\draw[gray, rotate around={(0:(.5, .5))}] (0, 1) -- (1, 1);
\draw[gray, rotate around={(270:(.5, .5))}] (0, 1) -- (1, 1);
\draw[gray, rotate around={(180:(.5, .5))}] (0, 1) -- (1, 1);
\draw[gray, rotate around={(90:(.5, .5))}] (0, 1) -- (1, 1);
\end{scope}
\begin{scope}[shift={(5,1)}]
\node at (.5, .5) {  };
\draw[rotate around={(0:(.5, .5))}] (0, 1) -- (.25, 1) -- (.5, 1.25) -- (.75, 1) -- (1, 1);
\draw[gray, rotate around={(270:(.5, .5))}] (0, 1) -- (1, 1);
\draw[gray, rotate around={(180:(.5, .5))}] (0, 1) -- (1, 1);
\draw[gray, rotate around={(90:(.5, .5))}] (0, 1) -- (1, 1);
\end{scope}
\begin{scope}[shift={(5,2)}]
\node at (.5, .5) {  };
\draw[rotate around={(0:(.5, .5))}] (0, 1) -- (.25, 1) -- (.5, 1.25) -- (.75, 1) -- (1, 1);
\draw[rotate around={(270:(.5, .5))}] (0, 1) -- (.25, 1) -- (.5, 1.25) -- (.75, 1) -- (1, 1);
\draw[rotate around={(180:(.5, .5))}] (0, 1) -- (.25, 1) -- (.5, .75) -- (.75, 1) -- (1, 1);
\draw[rotate around={(90:(.5, .5))}] (0, 1) -- (.25, 1) -- (.5, .75) -- (.75, 1) -- (1, 1);
\end{scope}
\begin{scope}[shift={(5,3)}]
\node at (.5, .5) {  };
\draw[gray, rotate around={(0:(.5, .5))}] (0, 1) -- (1, 1);
\draw[rotate around={(270:(.5, .5))}] (0, 1) -- (.25, 1) -- (.5, .75) -- (.75, 1) -- (1, 1);
\draw[rotate around={(180:(.5, .5))}] (0, 1) -- (.25, 1) -- (.5, .75) -- (.75, 1) -- (1, 1);
\draw[rotate around={(90:(.5, .5))}] (0, 1) -- (.25, 1) -- (.5, 1.25) -- (.75, 1) -- (1, 1);
\end{scope}
\begin{scope}[shift={(1, -5)}, rotate={180}, yscale=-1]
\begin{scope}[shift={(-5,0)}]
\node at (.5, .5) {  };
\draw[rotate around={(0:(.5, .5))}] (0, 1) -- (.25, 1) -- (.5, .75) -- (.75, 1) -- (1, 1);
\draw[rotate around={(270:(.5, .5))}] (0, 1) -- (.25, 1) -- (.5, 1.25) -- (.75, 1) -- (1, 1);
\draw[gray, rotate around={(180:(.5, .5))}] (0, 1) -- (1, 1);
\draw[rotate around={(90:(.5, .5))}] (0, 1) -- (.25, 1) -- (.5, .75) -- (.75, 1) -- (1, 1);
\end{scope}
\begin{scope}[shift={(-5,1)}]
\node at (.5, .5) {  };
\draw[gray, rotate around={(0:(.5, .5))}] (0, 1) -- (1, 1);
\draw[gray, rotate around={(270:(.5, .5))}] (0, 1) -- (1, 1);
\draw[rotate around={(180:(.5, .5))}] (0, 1) -- (.25, 1) -- (.5, 1.25) -- (.75, 1) -- (1, 1);
\draw[rotate around={(90:(.5, .5))}] (0, 1) -- (.25, 1) -- (.5, 1.25) -- (.75, 1) -- (1, 1);
\end{scope}
\begin{scope}[shift={(-5,2)}]
\fill[gray] (0, 0) rectangle +(1, 1);
\draw[gray, rotate around={(0:(.5, .5))}] (0, 1) -- (1, 1);
\draw[gray, rotate around={(270:(.5, .5))}] (0, 1) -- (1, 1);
\draw[gray, rotate around={(180:(.5, .5))}] (0, 1) -- (1, 1);
\draw[gray, rotate around={(90:(.5, .5))}] (0, 1) -- (1, 1);
\end{scope}
\begin{scope}[shift={(-5,3)}]
\fill[gray] (0, 0) rectangle +(1, 1);
\draw[gray, rotate around={(0:(.5, .5))}] (0, 1) -- (1, 1);
\draw[gray, rotate around={(270:(.5, .5))}] (0, 1) -- (1, 1);
\draw[gray, rotate around={(180:(.5, .5))}] (0, 1) -- (1, 1);
\draw[gray, rotate around={(90:(.5, .5))}] (0, 1) -- (1, 1);
\end{scope}
\end{scope}
            \draw[shift={(.5, .5)}, ultra thick, red, ->] (4,2) node[left] {\cancel{1}} -- (5,2) -- (6,2);
\draw[shift={(.5, .5)}, ultra thick, red] (5,1) -- (5,2);
%\draw[shift={(.5, .5)}, ultra thick, red, ->] (5,-4) -- +(-1, 0) node[left] {\cancel{2}\textcolor{blue}{1}};
\draw[shift={(.5, .5)}, ultra thick, gray, ->] (6,2) -- (6,3) -- (5,3) -- +(-1, 0) node[left, black] {$\bigstar$};
\draw[shift={(.5, .5)}, ultra thick, gray] (5,2) -- (5,3) (5,-4) -- (5,-5) (6,-4) -- (6,-5);
\draw[shift={(.5, .5)}, ultra thick, gray] (6,2) -- (7,2) (6,-5) -- (7,-5) (7,3) -- (6,3);
\draw[shift={(.5, .5)}, ultra thick, gray, ->] (6, -5) -- (5,-5) -- +(-1, 0) node[left, black] {$\bigstar$};

\draw[shift={(.5, .5)}, ultra thick, red, ->] (5,-4) node[left] {\cancel{2}\textcolor{blue}{1}} to (6,-4);
\draw[shift={(.5, .5)}, ultra thick, blue] (5, -4) to[bend left] (6, -4);
\draw[shift={(.5, .5)}, ultra thick, blue, ->] (6,-4) -- (7,-4);
\draw[shift={(.5, .5)}, ultra thick, blue, ->] (6,0) node[left] {2} -- (7,0);
        \end{tikzpicture}
        \caption{Principle of the Büchi automaton for the language $F$ asserting that there are no infinite paths, demonstrated on the transtion from $w_\alpha(5)$ to $w_\alpha(6)$. The red edges are the one from which one can exit the 5 central columns following arcs of $\prec_\alpha$; the red number at the left of each of their connected component orders them according to how close they come to column $0$, \emph{id est} how early the automaton encountered them. Likewise for the blue edges and numbers regarding the 6 central columns. All paths starting from the gray edges reach back to the seed. The red edges marked \textcolor{red}{1} disappear as the right column is read: the automaton just discovered their missing edges back to the seed. The red \textcolor{red}{2} becomes a blue \textcolor{blue}{1}, as this sub-DAG is now the blue one reaching closest to column $0$. One new tile has its only predecessor in column 7, which creates a new family of hanging edges, which takes the lowest available number: \textcolor{blue}{2}. Since the old \textcolor{red}{1} has been replaced, this transition is accepting.}
        \label{fig:placeholder}
    \end{figure}

    For the language of terminal productions, a fifth condition is added, which checks that no empty position is attachable. Once again, this is a local safety condition which can be checked by a DFA with a Büchi acceptance condition.
\end{proof}

As a consequence, the set of (final) productions of a $k$-band-bounded TAS is $\omega$-regular, as it is the projection of $\mathcal{L}_\prec(\calT)$ which forgets the $\prec$ decorations.
\section{Bounded-Width Computation}\label{sec:bounded}

In this section, we prove that it is decidable to determine if an arbitrary aTAM system grows outside of a fixed-width region of the plane, but that the same question is undecidable for an arbitrary syncTAM system.

\kBandProblem*

\subsection{Decidability in the aTAM}

\aTAMkbounded*

\begin{proof}
Let $\calS = (\calT, \tau, \sigma)$ be a TAS in the aTAM and let $G$ be its set of glues. Assume without loss of generality that $\sigma$ is included in $\{-k, \dots, k\} \times \{-h,\dots,h\}$ for some $h$.

Let $\mathfrak{C}_k = (G \cup \{\#\})^{(\{-k, \dots, k\})}$ be the set of configurations of glues of $\calT$ on a horizontal segment around the $y$ axis. Let $\mathfrak{M}_k \subset \mathfrak{C}_k^*$ be the set of potential glue movies on such a segment, i.e. the set of sequences $s$ where for all $k$, $s_k$ and $s_{k+1}$ only differ in one position, and that position is empty in $s_k$.  Pose $G = |\mathfrak{M}_k|$.

Assume $\calT$ has a production $P$ which has exactly one tile outside the vertical band of radius $k$ around the vertical axis, and that this tile is at position $(k+1, y)$ with $y > h + G$. Then there are two values $h < w_1 < w_2 < y$ such that the horizontal lines
at height $w_1$ and $w_2$ define windows with the same glue movies. Thus, by the Window Movie Lemma, there is a production of $\calT$ where the first attachment outside the central vertical band is located closer to the seed than in $P$. Finally, $\calT$ satisfies $k$-Band Boundedness if and only if all of its production $\alpha$ with $\dom \alpha \subset \{-h -G, \dots, h + G\} \times \mathbb{Z}$ satisfy $\dom \alpha \subset \mathbb{Z} \times \{-k, \dots, k\}$.

Checking this last condition can be done by searching through all assembly sequences which do not reach too far up or down for one going horizontally more than $k$ away from the axis. The bound on the distance is given by the number of glue movies of width $k$: $d_{\max} = k! (|G|+1)^k$; an assembly sequence of $\calS$ can take at most $2 (k + |\sigma|)$ within that region, so ensuring none of them are ready to stray outside is in Co-NEXPTIME.

\end{proof}

\subsection{Undecidability in the syncTAM}\label{sec:two-counter-result}

We now show that the same problem is undecidable in the syncTAM by providing a construction capable of simulating an arbitrary two-counter machine in the syncTAM without placing any tiles outside of the region bounded by $x$-coordinates $-9$ and $9$. Since two-counter machines are Turing universal \cite{Minsky1967}, this results in the undecidability of the $k$-Band Boundedness problem.

A \emph{two-counter machine} (a.k.a. \emph{two-register machine} or \emph{Minsky machine}) is a type of finite state automaton that has as its memory two \emph{counters} (or \emph{registers}), each of which can hold any non-negative integer value. For each transition, a two-counter machine is able to take as input the status of each counter in terms of either \emph{zero} or \emph{not zero} (i.e. whether or not its current value is $0$), along with the machine's current state, with the result of the transition being the next state for the machine and the possibility of altering each counter by either decrementing by 1, incrementing by 1, or leaving it unchanged. An important caveat is that, if the current value of a counter is $0$ and the transition attempts to decrement that counter, it remains unchanged.  (See \cref{sec:two-counter-example} for an example two-counter machine and \url{https://self-assembly.net/software/two_counter_machines/two_counter_machines-latest/} for a two-counter machine simulator and a function to create a simulating syncTAM system, along with examples.)

\begin{definition}[Two-Counter Machine]
    A \emph{two-counter machine} $M$ is a tuple $(Q, q_0, H, \delta)$ where:
    \begin{itemize}
        \item $Q$ is a finite set of states
        \item $q_0 \in Q$ is the initial state
        \item $H \subset Q$ is the set of halting states
        \item \(\delta: Q \times \{0, 1\}^2 \to Q \times \{-1, 0, 1\}^2\) is the transition function, which must satisfy $\delta(q, v_l, v_r) = (q', d_l, d_r) \Rightarrow (v_l + d_l \geq 0 \wedge v_r + d_r \geq 0)$
    \end{itemize}

    A \emph{configuration} of $M$ is a triplet $(q, l, r) \in Q \times \mathbb{N}^2$; the successor $\Delta(q, l, r)$ of a configuration $(q, l, r)$ is $(q', l + c_l, r + c_r)$ where $(q', c_l, c_r) = \delta(q, v_l, v_r)$, $v_l = \max(l, 1)$ and $v_r = \max(r, 1)$.
    
    The \emph{run} of $M$ on input $(l_0, r_0) \in \mathbb{N}^2$ is the sequence of configurations defined by $c_0 = (q_0, l_0, r_0)$ and $c_{i+1} = \Delta(c_i)$. The machine $M$ \emph{halts} on input $(l_0, r_0)$ if its run enters $H \times \mathbb{N}^2$.
\end{definition}\syncTAMkbounded*

\subsubsection{Example two-counter machine}\label{sec:two-counter-example}

An example of a simple two-counter machine that calculates the Collatz function can be seen in \cref{fig:two-counter-collatz}. A trace of a run when its counter values are initialized to $3$ and $0$ is shown in Listing \ref{lst:trace}.

%Two-counter machines are known to be computationally universal \cite{Minsky1967}.

\begin{figure}
    \centering
\begin{tikzpicture}[->, >=Stealth, auto, node distance=3.0cm, every state/.append style={minimum size=1.8cm}]
        \node[state, initial] (check0) {check0};
        \node[state, right=of check0] (check1) {check1};
        \node[state, above=of check1] (halt) {halt};
        \node[state, above=of check0] (triple-c) {triple-c};
        \node[state, above=of triple-c] (triple-b) {triple-b};
        \node[state, right=of triple-b] (triple-a) {triple-a};
        \node[state, right=of triple-a] (odd) {checkOdd};
        \node[state, below=of odd] (even) {checkEven};
        \node[state, below=of even] (half-a) {half-a};
        \node[state, below=of check1] (half-b) {half-b};
        \path
            (check0) edge node[sloped, align=center] {$z,z$ \\ $nz,z$} (halt)
            (check0) edge node[swap] {$z,nz \rightarrow 1, -1$} (check1)
            (check1) edge node {$nz,z$} (halt)
            (check1) edge node[sloped] {$nz,nz \rightarrow 1, -1$} (even)
            (even) edge[bend left, sloped] node {$nz,nz \rightarrow 1, -1$} (odd)
            (odd) edge[bend left, sloped] node {$nz,nz \rightarrow 1, -1$} (even)
            (even) edge node {$nz,z \rightarrow -1, 0$} (half-a)
            (half-a) edge[bend left, sloped] node[align=center]
                {$nz,z \rightarrow -1, 1$ \\ $nz,nz \rightarrow -1, 1$} (half-b)
            (half-b) edge[bend left, sloped] node {$nz,nz \rightarrow -1, 0$} (half-a)
            (half-b) edge node[sloped] {$z,nz \rightarrow 0, 0$} (check0)
            (odd) edge node[swap] {$nz,z \rightarrow 0, 1$} (triple-a)
            (triple-a) edge node[swap] {$nz,nz \rightarrow 0, 1$} (triple-b)
            (triple-b) edge node[swap] {$nz,nz \rightarrow -1, 1$} (triple-c)
            (triple-c) edge node[sloped] {$nz,nz \rightarrow 0, 1$} (triple-a)
            (triple-c) edge node[swap] {$z,nz \rightarrow 0, 1$} (check0);
    \end{tikzpicture}
    \caption{Example two-counter machine that computes the Collatz function}
    \label{fig:two-counter-collatz}
\end{figure}

\begin{lstlisting}[basicstyle=\ttfamily\small, frame=single,
                   caption={Beginning of the trace of the two-counter machine from \cref{fig:two-counter-collatz} when its counters are initialized to 0 and 3},
                   label={lst:trace}]
    # Check if the value in r2 is 0, 1, even, or odd, moving it to r1:
    check_zero 0 3 -> check_one 1 2
    check_one 1 2 -> check_even 2 1
    check_even 2 1 -> check_odd 3 0
    
    # The value in r1 is odd, multiply by 3 and add 1, moving it to r2:
    check_odd 3 0 -> triplea 3 1
    triplea 3 1 -> tripleb 3 2
    tripleb 3 2 -> triplec 2 3
    triplec 2 3 -> triplea 2 4
    triplea 2 4 -> tripleb 2 5
    tripleb 2 5 -> triplec 1 6
    triplec 1 6 -> triplea 1 7
    triplea 1 7 -> tripleb 1 8
    tripleb 1 8 -> triplec 0 9

    # Check if the value in r2 is 0, 1, even, or odd, moving it to r1:
    triplec 0 9 -> check_zero 0 10
    check_zero 0 10 -> check_one 1 9
    check_one 1 9 -> check_even 2 8
    check_even 2 8 -> check_odd 3 7
    check_odd 3 7 -> check_even 4 6
    check_even 4 6 -> check_odd 5 5
    check_odd 5 5 -> check_even 6 4
    check_even 6 4 -> check_odd 7 3
    check_odd 7 3 -> check_even 8 2
    check_even 8 2 -> check_odd 9 1
    check_odd 9 1 -> check_even 10 0

    # The value in r1 is even, half it, and move it to r2:
    check_even 10 0 -> halfa 9 0
    halfa 9 0 -> halfb 8 1
    halfb 8 1 -> halfa 7 1
    halfa 7 1 -> halfb 6 2
    halfb 6 2 -> halfa 5 2
    halfa 5 2 -> halfb 4 3
    halfb 4 3 -> halfa 3 3
    halfa 3 3 -> halfb 2 4
    halfb 2 4 -> halfa 1 4
    halfa 1 4 -> halfb 0 5

    # Check if the value in r2 is 0, 1, even, or odd, moving it to r1:
    halfb 0 5 -> check_zero 0 5
    check_zero 0 5 -> check_one 1 4
    check_one 1 4 -> check_even 2 3
    check_even 2 3 -> check_odd 3 2
    check_odd 3 2 -> check_even 4 1
    check_even 4 1 -> check_odd 5 0
\end{lstlisting}

\subsubsection{Constructing a syncTAS to simulate a two-counter machine}

The rest of this section describes a scheme for compiling a two-counter machine $M$ into a TAS $\calS_M$ which places a tile outside $\{-9, \dots, 9\} \times \mathbb{Z}$ exactly when the counter machine halts. \Cref{thm:Turing_band} then follows from this construction.

% \begin{figure}
%     \centering
%     \input{tikz/band/example-machine}
    
%     \caption{An example two-counter machine $C$ computing the Collatz sequence starting from 0 in the left counter and an initial value in the right counter. The symbols on each transition are $v_l / v_r / c_l / c_r$; a dash for $v_l$ or $v_l$ is a wildcard.}
%     \label{fig:band-example-machine}
% \end{figure}

% \begin{table}
% \centering
%     \begin{tabular}{l|l|l|l|l|l}
%     0 3 $e$ & 3 0 $o$ & 0 10 $e$ & 10 0 $e$ & 0 5 $e$ &  \\
%     1 2 $o$ & 3 1 $*3$ & 1 9 $o$ & 9 0 $/2$ & 1 4 $o$ \\
%     2 1 $e$ & 3 2 $*3'$ & 2 8 $e$ & 8 1 $/2'$ & 2 3 $e$\\
%     3 0 $o$ & 2 3 $*3''$ & 3 7 $o$ & 7 1 $/2$ & 3 2 $o$\\
%           & 2 4 $*3$ & ... & 6 2 $/2'$ & 4 1 $e$\\
%           & 2 5 $*3'$ & 10 0 $e$ & 5 2 $/2$ & 5 0 $o$\\
%           & 1 6 $*3''$ & & 4 3 $/2'$ & \\
%           & 1 7 $*3$ & &3 3 $/2$ & \\
%           & 1 8 $*3'$ & & 2 4 $/2'$ & \\
%           & 0 9 $*3''$ & & 1 4 $/2$ & \\
%           & 0 10 $e$ & & 0 5 $/2'$ & \\
%           &          & & 0 5 $e$ & & ...\\
%           \hline
%     3 is odd & $3*3 + 1 = 10$ & 10 is even & $10/2 = 5$ & 5 is odd & ...\\
%     \end{tabular}
%     \caption{Beginning of the run of $C$ on input 3 (see \cref{fig:band-example-machine}).}
%     \label{tab:band-example-execution}
% \end{table}

The construction is made of a number of gadgets, each with its function and invariants. All productions fit in the  $\{-9, \dots, 9\} \times \mathbb{Z}$ band, except a few tiles which are only added if $M$ halts. Tiles in the columns between $-4$ and $4$ are from the ``transition gadgets'', columns $-9$ to $-4$ are from the ``left counter gadgets'', and columns $4$ to $9$ are the ``right counter gadgets''. The trick is that while altitude in the assembly of $\calS_M$ represents increasing time in the run of $M$ on input $(0, 0)$, the three gadgets around altitude $6i$ which represent the $i$th transition of the run are not assembled at the same time. In particular, let $z_l(i)$ be the position $(-4, 6i + 4)$ and $(q_i, l_i, r_i)$ be the $i$th configuration of the run of $M$ on input $(0, 0)$: the transition gadget reaches $z_l(i)$ at time $20i + 8$, while the left counter gadget reaches it at time $8(i + l_i) + 4$. Likewise, $z_r(i) = (4, 6i + 4)$ is reached at the same time as $z_l(i)$ by the transition gadget, but at time $20i + 8r_i + 4$ by the right transition gadget. In other words, \emph{the value of the counters are represented by the delay between either counter gadget and the transition gadget}. Their nullity can be observed as it is the only case where the counter gadgets arrives at $z_l(i)$ or $z_r(i)$ before the transition gadget. They can be increased or decreased by having the counter gadgets rise slower (28 ticks per gadget) or faster (12 ticks per gadget) than the rate of the transition gadget (20 ticks per gadget). These operations suffice to simulate $M$, their details follow. \cref{fig:band-example-sim} shows a time-lapse simulation of the described construction of the two-counter machine that calculates the Collatz function from \cref{fig:two-counter-collatz}.

\paragraph*{Transition gadget}

Let $i = (q, z_l, z_r) \to o = (q', c_l, c_r)$ be a transition, where:
\begin{itemize}
    \item $q, q'$ are states,
    \item $z_l, z_r$ are boolean values: whether each counter's value is 0
    \item $c_l, c_r \in \{-1, 0,1\}$ are the values to be added to the counters
\end{itemize}

The tiles for the transition gadget are depicted in \cref{fig:band-transition-tiles}. Starting from a tile representing $i = (q, z_l, z_r)$ at $(0, 0)$ at time 0, it places a tile representing $c_l$ at $(-3, 1)$ and a tile representing $c_r$ at $(3, 1)$, both at time 3. It then places a tile representing $i' = (q', z_l', z_r')$ at time 20 at position $(0, 6)$, where $z_l'$ is true if the position $(-4, 6)$ has been occupied by a \texttt{zl} tile by time 16, and false if that position is free at that time. Likewise for $z_r$, with respect to a \texttt{zr} tile at position $(4, 6)$. This implements a cycle of the two-counter machine, assuming that the placement of the tiles at $(-3, 1)$ and $(3, 1)$ trigger the appropriate update in the forthcoming counter gadgets, and that the attachment or absence of the \texttt{zl} and \texttt{zr} tiles corresponds to the nullity of the counter values after that update. 

\begin{figure}
    \centering
    \begin{tikzpicture}[%
    tile/.style={draw, regular polygon, regular polygon sides=4, minimum size=.8cm, inner sep=1pt},%
    s1/.style={thick,dashed},%
    s2/.style={ultra thick}%
    ]
      \node (start) at (0,0) {$i (t=0)$};
      \node[tile] (next) at (0,6) {$i'$};
      \node[tile] (dispatch) at (0,1) {$o$};
      \node[tile, inner sep=0] (0l) at (-1,1) {$*$};
      \node[tile] (0r) at (1,1) {$c_r$};
      \node[tile] (1l) at (-2,1) {$c_l$};
      \node[tile] (1r) at (2,1) {$c_r$};
      \node[tile] (2l) at (-3,1) {$c_l$};
      \node[tile] (2r) at (3,1) {$c_r$};
      \node[tile] (3l) at (-3,2) {};
      \node[tile] (3r) at (3,2) {};
      \node[tile] (4l) at (-2,2) {};
      \node[tile] (4r) at (2,2) {};
      \node[tile] (5l) at (-1,2) {$q'$};
      \node[tile] (5r) at (1,2) {};
      \node[tile] (6l) at (-1,3) {$q'$};
      \node[tile] (6r) at (1,3) {};
      \node[tile] (7l) at (-2,3) {};
      \node[tile] (7r) at (2,3) {};
      \node[tile] (8l) at (-3,3) {};
      \node[tile] (8r) at (3,3) {};
      \node[tile] (9l) at (-3,4) {};
      \node[tile] (9r) at (3,4) {};
      \node[tile] (10l) at (-2,4) {};
      \node[tile] (10r) at (2,4) {};
      \node[tile] (11l) at (-2,5) {};
      \node[tile] (11r) at (2,5) {};
      \node[tile] (12l) at (-3,5) {};
      \node[tile] (12r) at (3,5) {};
      \node[tile] (checkl) at (-4,5) {$z_l^?$};
      \node[tile, fill=yellow] (checkr) at (4,5) {$z_r^?$};
      \node[tile] (win0l) at (-4,6) {$z_l^\bot$};
      \node[tile] (win0r) at (4,6) {$z_r^\bot$};
      \node[tile] (win1l) at (-3,6) {$z_l^\bot$};
      \node[tile] (win1r) at (3,6) {$z_r^\bot$};
      \node[tile] (win2l) at (-2,6) {$z_l^\bot$};
      \node[tile] (win2r) at (2,6) {$z_r^\bot$};
      \node[tile, fill=gray] (lose0l) at (-4,7) {\texttt{zl}};
      \node[tile, fill=gray] (lose0r) at (4,7) {\texttt{zr}};
      \node[tile] (lose1l) at (-3,7) {$z_l^\top$};
      \node[tile] (lose1r) at (3,7) {$z_r^\top$};
      \node[tile] (lose2l) at (-2,7) {$z_l^\top$};
      \node[tile] (lose2r) at (2,7) {$z_r^\top$};
      \node[tile] (state0) at (-1,4) {$q'$};
      \node[tile] (state1) at (-1,5) {$q'$};
      \node[tile] (state2) at (-1,6) {$\dagger$};
      \node[tile] (out) at (1,6) {$z_r$};
            
      \draw[ultra thick] (start) -- (dispatch);
      \foreach \side in {l, r} {
        \draw[s2] (dispatch) -- (0\side);
        \draw[s1] (0\side) -- (1\side);
        \draw[s2] (1\side) -- (2\side) -- (3\side) -- (4\side) (5\side) -- (6\side)
            -- (7\side) -- (8\side) -- (9\side) -- (10\side) -- (11\side) -- (12\side)
            -- (check\side) -- (win0\side);
        \draw[s2]
            (win0\side) -- (win1\side)
            (lose1\side) -- (lose0\side);
        \draw[s2] (win1\side) -- (win2\side) (lose1\side) -- (lose2\side);
      }
      \draw[s2] (4r) -- (5r);
      \draw[s1] (4l) -- (5l) -- (0l);
      \draw[s2] (6l) -- (state0) -- (state1);
      \draw[s1] (state1) -- (state2) -- (win2l);
      \draw[s1] (state2) -- (lose2l);
      \draw[s2] (win2r) -- (out) -- (lose2r);
      \draw[s1] (out) -- (next) -- (state2);
      \draw[s2] (next.north) -- +(0, .5) node[fill=white, inner sep=0] {$i' (t=20)$};
      \draw[s1] (2l.west) -- +(-.5, 0) node [left] {$c_l$};
      \draw[s1] (2r.east) -- +(.5, 0) node [right] {$c_r$};
      \draw[gray, s2] (lose0l.west) -- +(-.5, 0) node[above] {$z^\top_l$}  node [left] {$\dots (t=16)$};
      \draw[gray, s2] (lose0r.east) -- +(.5, 0)  node[above] {$z^\top_r$} node [right] {$\dots (t=16)$};
      \draw[s1] (checkr.east) -- +(.5, 0) node [right] {$z_r^\top$};
\end{tikzpicture}
    \caption{Transition gadget. The tile marked $*$ corresponds to the pair $(q', c_l)$ and the tile marked $\dagger$ to $(q', z_l)$.
    Starting from a strength $2$ glue representing the inputs $i$ of a transition, the tile $o$ representing its output can attach.
    The new state $q'$ as well as the changes $c_l, c_r$ to the counters are routed upwards, left and right respectively.
    The tiles $z_l^?$ and $z_r^?$ become attachable 14 time steps after $o$ if that position is then free.
    %The $z^\bot$ series of tiles can attach from there.
    If one of the $z^?$ positions is occupied, a tile with a $z^\top$ glue will attach at time step 16, followed by a series of tiles so that by time 20, a tile $i'$ with all the information for the next transition attaches in the central column, restarting the process for the next transition.
    The $z^\bot$ series of tiles corresponds to a non-zero counter value, and the $z^\top$ to a counter at 0.
    }
    \label{fig:band-transition-tiles}
\end{figure}

\paragraph*{Counter gadgets}
There are 5 counter update gadgets, shown in \cref{fig:band-decl,fig:band-keepnz,fig:band-keepz,fig:band-incrnz,fig:band-incrz}: three increments ($+1, 0, -1$) each in a zero and a non-zero version, except for $-1$, which has no zero version. Each of these gadgets has a right and a symmetrical left orientation. The description of the right counter gadgets are as follows:

All gadgets start at $(4, -1)$ at time 0. The non-zero versions rendezvous with the instruction from the transition gadget at $(3, 1)$ at time 7. The decrease gadget then reaches $(4, 4)$ at time 10. If the tile $z_r^?$ is already present there, a non-zero gadget can start at position $(4, 5)$; otherwise a zero gadget can start at this position at time 14. For a keep gadget, the delay before the next iteration is 20, and that next iteration is always a non-zero. For an increase gadget, the next iteration is always non-zero, and starts at time $28$.

The zero gadgets are ready to gather their instruction (keep or increase) at time 2. The keep gadget starts a new, non-zero iteration 10 time steps after it has cooperated with the instruction from the transition gadget. The increase gadget does the same, but with a delay of 18 after it has received its instruction.

\paragraph*{Seed configuration} The seed configuration consists of:
\begin{itemize}
    \item the starting tile of the transition gadgets corresponding to the starting state of $M$ and both counters at 0, placed at $(0,0)$
    \item the starting tile of the left and right counters in their zero variant, instruction yet undetermined, at $(0, -5)$ and $(0, 5)$ respectively.
\end{itemize}

\begin{proof}[Proof of \cref{thm:Turing_band}]

Starting from the seed configuration, a standard induction establishes the following. Columns $-9$ to $-4$ contain left counter gadgets, columns $-4$ to 4 contain the transition gadgets, and columns 4 to 9 contain right counter gadgets. For each horizontal line at height $6t - 1$, the state represented on the transition gadget is the state of the two-counter machine at time $t$, and the delay between the first attachment of the transition gadget and that of the left (resp. right) counter gadget is $8l - 4$ if the value of the first (resp. second) counter is $l$ at time $t$.

For any halting state of the two counter machine, a few special tiles are added which stop the growth of both counters and reach columns -10 and 10.
\end{proof}

\begin{figure}
    \centering
    \begin{tikzpicture}[%
    tile/.style={draw, regular polygon, regular polygon sides=4, minimum size=.8cm, inner sep=1pt},%
    s1/.style={thick,dashed},%
    s2/.style={ultra thick}%
    ]
    \node[left] (instr0) at (-2, 1) {$[\texttt{r -= 1}] (t < 7)$};
    \begin{scope}[every node/.style={fill=gray!50!white}]
    \node[tile] (dec0) at (0, 1) {};
    \node[tile] (dec1) at (1, 1) {};
    \node[tile] (start) at (0, 0) {};
    \node[tile] (t4) at (3, 1) {};
    \node[tile] (t5) at (2, 1) {};
   \foreach \n in {1,...,3} {
        \node[tile] (t\n) at (\n, 0) {};
    }        
    \end{scope}
    \node[tile] (instr) at (-1, 1) {};
    \node[below] (go) at (0, -1) {$r_k > 0 (t = 0)$};
     \foreach \n/\y in {6/2,7/3,8/4} {
        \node[tile] (t\n) at (0, \y) {};
    }
    \node[tile] (challenge) at (-1, 4) {};
    \node[tile, fill=yellow] (winning) at (-1, 5) {};
    \node[tile] (resultnz) at (0, 5) {};
    \node[tile, fill=gray] (badnews) at (-3, 5) {$z_r^?$};
    \node[tile] (resultz) at (-2, 5) {};

    \draw[s2] (instr0) -- (instr);
    \draw[s1] (instr) -- (dec0);
    \draw[s2] (go) -- (start) -- (t1) -- (t2) -- (t3) -- (t4) -- (t5) -- (dec1);
    \draw[s1] (dec1) -- (dec0) (winning) -- (resultnz) -- (t8); 
    \draw[s2] (dec0) -- (t6) -- (t7) -- (t8) -- (challenge) -- (winning);
    \draw[s1] (badnews) -- (resultz);
    \draw[s1] (t8.north) to[in=270, out=90] (resultz.south);
    \draw[s2, gray] (badnews) -- +(-.5, 0) node[left] {$(t < 10)$};
    \draw[s2] (resultnz) -- +(0, .5) node[above, align=center] {$r_{k+1} = 0$\\$(t = 14)$};
    \draw[s2] (resultz) -- +(0, .5) node[above, align=center] {$r_{k+1} > 0$\\$(t = 12)$};
\end{tikzpicture}
    \caption{Right \emph{Decrease} gadget. The left one is symmetrical. The light gray tile types at the bottom are shared with the gadgets for non-zero input of all instructions.}
    \label{fig:band-decl}
\end{figure}

\begin{figure}
    \centering
    \begin{tikzpicture}[%
    tile/.style={draw, regular polygon, regular polygon sides=4, minimum size=.8cm, inner sep=1pt},%
    s1/.style={thick,dashed},%
    s2/.style={ultra thick}%
    ]
    \node[left] (instr0) at (-2, 1) {$\texttt{[keep]} (t < 7)$};
    \node[tile] (instr) at (-1, 1) {};
    \node[below] (go) at (0, -1) {$r_k > 0 (t = 0)$};
    \begin{scope}[every node/.style={fill=gray!50!white}]
        \node[tile] (dec0) at (0, 1) {};
        \node[tile] (dec1) at (1, 1) {};
        \node[tile] (start) at (0, 0) {};
        \node[tile] (t4) at (3, 1) {};
        \node[tile] (t5) at (2, 1) {};
        \foreach \n in {1,...,3} {
            \node[tile] (t\n) at (\n, 0) {};
        }        
    \end{scope}
    \foreach \n/\x/\y in {6/0/2,7/1/2,8/2/2,9/3/2,10/4/2,%
                        11/4/3,12/3/3,13/2/3,14/1/3,15/0/3,16/0/4} {
        \node[tile] (t\n) at (\x, \y) {};
    }
    %\node[tile] (t4) at (3, 1) {};
    %\node[tile] (t5) at (2, 1) {};
    %\foreach \n/\y in {6/2,7/3,8/4} {
    %    \node[tile, red] (t\n) at (0, \y) {};
    %}
    \begin{scope}[gray]
    \node[tile] (challenge) at (-1, 4) {};
    \node[tile, fill=yellow] (winning) at (-1, 5) {};
    \node[tile] (resultnz) at (0, 5) {};        
    \end{scope}

    \node[tile, fill=gray] (badnews) at (-3, 5) {$z_r^?$};
    \node[tile] (resultz) at (-2, 5) {};

    \draw[s2] (instr0) -- (instr);
    \draw[s1] (instr) -- (dec0);
    \draw[s2] (go) -- (start) -- (t1) -- (t2) -- (t3) -- (t4) -- (t5) -- (dec1);
    \draw[s1] (dec1) -- (dec0);
    \draw[s1, gray](winning) -- (resultnz) -- (t16); 
    \draw[s2] (dec0) -- (t6) -- (t7) -- (t8) -- (t9) -- (t10) -- (t11) -- (t12) --
    (t13) -- (t14) -- (t15) -- (t16);
    \draw[s2, gray] (t16) --  (challenge) -- (winning);
    \draw[s1] (badnews) -- (resultz);
    \draw[s1] (t16.north) to[in=270, out=90] (resultz.south);
    \draw[s2] (badnews) -- +(-.5, 0) node[left] {$(t < 10)$};
    \draw[s2, gray] (resultnz) -- +(0, .5) node[above, align=center] {$r_{k+1} = 0$\\$(\textrm{never})$};
    \draw[s2] (resultz) -- +(0, .5) node[above, align=center] {$r_{k+1} > 0$\\$(t = 20)$};
\end{tikzpicture}
    \caption{Right \emph{Keep} gadget for a non-zero value. The left one is symmetrical. The two tiles at the top-right would attach to signal a zero value, which cannot happen starting from a non-zero value.}
    \label{fig:band-keepnz}
\end{figure}

\begin{figure}
    \centering
    \begin{tikzpicture}[%
    tile/.style={draw, regular polygon, regular polygon sides=4, minimum size=.8cm, inner sep=1pt},%
    s1/.style={thick,dashed},%
    s2/.style={ultra thick}%
    ]
    \node[left] (instr0) at (-2, 1) {$\texttt{[keep]} (t =10)$};
    \node[tile] (instr) at (-1, 1) {};
    \node[below] (go) at (0, -1) {$r_k = 0 (t = 0)$};
    \begin{scope}[every node/.style={fill=gray!50!white}]
        \node[tile] (dec0) at (0, 1) {};
        \node[tile] (start) at (0, 0) {};        
    \end{scope}
    \foreach \n/\x/\y in {6/0/2,7/1/2,%
                          14/1/3,15/0/3,16/0/4} {
        \node[tile] (t\n) at (\x, \y) {};
    }
    %\node[tile] (t4) at (3, 1) {};
    %\node[tile] (t5) at (2, 1) {};
    %\foreach \n/\y in {6/2,7/3,8/4} {
    %    \node[tile, red] (t\n) at (0, \y) {};
    %}
    \begin{scope}[gray]
    \node[tile] (challenge) at (-1, 4) {};
    \node[tile, fill=yellow] (winning) at (-1, 5) {};
    \node[tile] (resultnz) at (0, 5) {};        
    \end{scope}

    \node[tile, fill=gray] (badnews) at (-3, 5) {$z_r^?$};
    \node[tile] (resultz) at (-2, 5) {};

    \draw[s2] (instr0) -- (instr);
    \draw[s1] (instr) -- (dec0);
    \draw[s1] (start) -- (dec0);
    \draw[s1](winning) -- (resultnz) -- (t16); 
    \draw[s2] (go) -- (start) (dec0) -- (t6) -- (t7) -- (t14) -- (t15) -- (t16);
    \draw[s2] (t16) --  (challenge) -- (winning);
    \draw[s1, gray] (badnews) -- (resultz);
    \draw[s1, gray] (t16.north) to[in=270, out=90] (resultz.south);
    \draw[s2, gray] (badnews) -- +(-.5, 0) node[left] {$(t = 20)$};
    \draw[s2] (resultnz) -- +(0, .5) node[above, align=center] {$r_{k+1} = 0 (t = 20)$};
    \draw[s2, gray] (resultz) -- +(0, .5) node[above, align=center] {$r_{k+1} > 0$\\$(never)$};
\end{tikzpicture}
    \caption{Right \emph{Keep} gadget when the counter value is zero. The left one is symmetrical.}
    \label{fig:band-keepz}
\end{figure}

\begin{figure}
    \centering
    \begin{tikzpicture}[%
    tile/.style={draw, regular polygon, regular polygon sides=4, minimum size=.8cm, inner sep=1pt},%
    s1/.style={thick,dashed},%
    s2/.style={ultra thick}%
    ]
    \node[left] (instr0) at (-2, 1) {$\texttt{[incr]} (t =10)$};
    \node[tile] (instr) at (-1, 1) {};
    \node[below] (go) at (0, -1) {$r_k = 0 (t = 0)$};
    \begin{scope}[every node/.style={fill=gray!50!white}]
        \node[tile] (dec0) at (0, 1) {};
        \node[tile] (start) at (0, 0) {};        
    \end{scope}
    \foreach \n/\x/\y in {6/0/2,7/1/2,8/1/1,9/2/1,10/2/2,11/2/3,12/3/3,
                            13/3/4, 14/2/4,15/1/4,16/1/3, 17/0/3, 18/-1/3,%
                            19/-1/4, 20/0/4
                            } {
        \node[tile] (t\n) at (\x, \y) {};
    }
    %\node[tile] (t4) at (3, 1) {};
    %\node[tile] (t5) at (2, 1) {};
    %\foreach \n/\y in {6/2,7/3,8/4} {
    %    \node[tile, red] (t\n) at (0, \y) {};
    %}
    \begin{scope}[gray]
    \node[tile] (challenge) at (-1, 4) {};
    \node[tile] (resultnz) at (0, 5) {};        
    \end{scope}

    \draw[s2] (go) -- (start) (instr0) -- (instr);
    \draw[s1] (instr) -- (dec0);
    \draw[s1] (start) -- (dec0);
    \draw[s2] (dec0) -- (t6) -- (t7) -- (t8) -- (t9) -- (t10) -- (t11)  -- (t12) -- (t13) -- (t14) -- (t15) -- (t16) -- (t17) -- (t18) -- (t19) -- (t20);
    \draw[s2] (t20) -- (resultnz);
    \draw[s2] (resultnz) -- +(0, .5) node[above, align=center] {$r_{k+1} > 0 (t = 28)$};
\end{tikzpicture}
    \caption{Right \emph{Increase} gadget when the counter value is zero. The left one is symmetrical.}
    \label{fig:band-incrz}
\end{figure}

\begin{figure}
    \centering
    \begin{tikzpicture}[%
    tile/.style={draw, regular polygon, regular polygon sides=4, minimum size=.8cm, inner sep=1pt},%
    s1/.style={thick,dashed},%
    s2/.style={ultra thick}%
    ]
    \node[left] (instr0) at (-2, 1) {$\texttt{[incr]} (t < 7)$};
    \node[tile] (instr) at (-1, 1) {};
    \node[below] (go) at (0, -1) {$r_k > 0 (t = 0)$};
    \begin{scope}[every node/.style={fill=gray!50!white}]
        \node[tile] (dec0) at (0, 1) {};
        \node[tile] (dec1) at (1, 1) {};
        \node[tile] (start) at (0, 0) {};
        \node[tile] (t4) at (3, 1) {};
        \node[tile] (t5) at (2, 1) {};
        \foreach \n in {1,...,3} {
            \node[tile] (t\n) at (\n, 0) {};
        }        
    \end{scope}
    \foreach \n/\x/\y in {6/0/2,7/1/2,8/2/2,9/3/2,10/4/2,%
                        11/4/3, 12/4/4, 13/4/5, 14/3/5, 15/3/4,
                        16/3/3, 17/2/3, 18/2/4, 19/2/5, 20/1/5,
                        21/1/4, 22/1/3, 23/0/3, 24/0/4}{
                        %12/3/3,13/2/3,14/1/3,15/0/3,16/0/4} {
        \node[tile] (t\n) at (\x, \y) {};
    }

    \node[tile, fill=gray] (badnews) at (-1, 5) {$z_r^?$};
    \node[tile] (resultz) at (0, 5) {};

    \draw[s2] (instr0) -- (instr);
    \draw[s1] (instr) -- (dec0);
    \draw[s2] (go) -- (start) -- (t1) -- (t2) -- (t3) -- (t4) -- (t5) -- (dec1) (dec0) -- (t6);
    \draw[s1] (dec1) -- (dec0);
    \foreach \n [remember=\n as \lastn (initially 6)] in {7,...,24} {
        \draw[s2] (t\lastn) -- (t\n);
    }
    \draw[s1] (badnews) -- (resultz);
    \draw[s1] (t24.north) to[in=270, out=90] (resultz.south);
    \draw[s2] (badnews) -- +(-.5, 0) node[left] {$(t < 10)$};
    \draw[s2] (resultz) -- +(0, .5) node[above, align=center] {$r_{k+1} > 0$\\$(t = 28)$};
\end{tikzpicture}
    \caption{Right \emph{Increase} gadget when the counter value is non-zero. The left one is symmetrical.}
    \label{fig:band-incrnz}
\end{figure}

\begin{figure}
    \centering
    \input{tikz/band/band-example-sim}
    \caption{Simulation of the two-counter machine of \cref{fig:two-counter-collatz}: a time-lapse of the determination that 5 is odd, done while copying the value 5 from the right counter to the left one. Images are taken every 20 ticks; the left counter starts at value 0 and wins the race against the decision gadget (pink tile in the -4 column). Meanwhile the right counter, which holds the value 5 only enters the picture $8 * 5 = 40$ ticks later. The right counter gadget climbs fast as the value of the counter decreases, reaching zero five computation steps later, i.e. 30 rows higher. This happens at time step $t + 100$, when the right counter gadget wins the race against the transition gadget. The left counter gadget increases its value, thus rising at a leisurely trot. Having by then accumulated $5$ counter value worth of delay, in only reaches the 30th row at time step $t + 140$.}
    \label{fig:band-example-sim}
\end{figure}

\syncTAMToneBounded*

\paragraph*{Non-cooperative construction of a two-counter machine} 
\cref{cor:syncTAMone-k-bounded} follows directly from \cref{thm:Turing_band,thm:T1-IU} by simulating the cooperative construction described above in temperature 1. However, a non-cooperative construction can be made directly to show that given a temperature 1 syncTAS, it is also undecidable whether it is 9-band bounded. The non-cooperative construction works similarly to the cooperative construction with only minor differences in the timings. Let $z_l(i)$ and $z_r(i)$ be the positions $(-4, 12i+3)$ and $(4, 12i+3)$, respectively, and $(q_i, l_i, r_i)$ as the $i$th configuration of the run of $M$. The transition gadget reaches $z_l(i)$ and $z_r(i)$ at time $34i + 10$ while the left and right counter gadget reaches $z_l(i)$ and $z_r(i)$ at time $34i + 10l_i + 8$ and $34i + 10r_i + 8$, respectively. The left and right counters have a value of zero if they arrive at $z_l(i)$ and $z_r(i)$ before the transition gadget. The counters can be increased or decreased by having the counter gadgets rise slower (44 ticks per gadget) or faster (24 ticks per gadget) than the rate of the transition gadget (34 ticks per gadget). \cref{fig:band-example-sim-temp1} shows a time-lapse simulation of the non-cooperative construction of the two counter machine that calculates the Collatz function from \cref{fig:two-counter-collatz}.

The non-cooperative transition gadget, shown in \cref{fig:band-transition_nz_nz-temp1,fig:band-transition_z_z-temp1}, starts at $(-3, 0)$ and $(3, 0)$ at time 0 and carries the current state $q$ throughout the left side of the gadget. At time 10, the left register is determined to be non-zero, $z^\bot$, or zero $z^\top$ by the absence or presence of a \texttt{zl} tile at position $(-4, 3)$. The absence of a \texttt{zl} tile allows for the light gray tiles representing $z_l^\bot$ to grow, preventing further growth of the dark gray tiles, representing $z_l^\top$ as shown in \cref{fig:band-transition_nz_nz-temp1}. The presence of a \texttt{zl} tile prevents the growth of the light gray tiles entirely, allowing the dark gray tiles to grow as shown in \cref{fig:band-transition_z_z-temp1}. Likewise for the right register, with respect to a \texttt{zr} tile at position $(4, 3)$. The two sides of the gadget communicate using a similar mechanism of preventing and allowing growth of certain tiles. It then grows the magenta tiles which represent the instruction to be done to the counter using a ternary bit. The next transition gadget starts with the new state $q'$ at time 34.

Similar to the cooperative version, there are 5 non-cooperative counter update gadgets, shown in \cref{fig:band-decr-temp1,fig:band-keep-temp1,fig:band-incr-temp1}: three increments (+1, 0, -1) each in a zero and a non-zero version, except for -1 which has no zero version. Each gadget has a left and symmetrical right orientation. Let time 0 be the time that the previous right counter gadget start a new non-zero counter gadget, and let position $(7, 3)$ be the start of the new non-zero counter gadget. If the current value of the counter is zero, the gadget will grow a \texttt{zr} tile at position $(4, 3)$ which grows a tile at $(4, 4)$ to prevent the growth of the non-zero counter gadget. A non-zero counter gadget then grows from the transition gadget at $(4, 5)$ at time 18. If the current value of the counter is non-zero, then a $z_r^\bot$ tile from the transition gadget blocks a \texttt{zl} tile from appearing, which allows the non-zero counter gadget to keep growing. The counter gadgets then travel to and perform ternary bit-reading on the magenta tiles to determine which increment should be done, i.e., which counter gadget to use.

\begin{figure}
    \centering
    \includegraphics[width=1.0\textwidth]{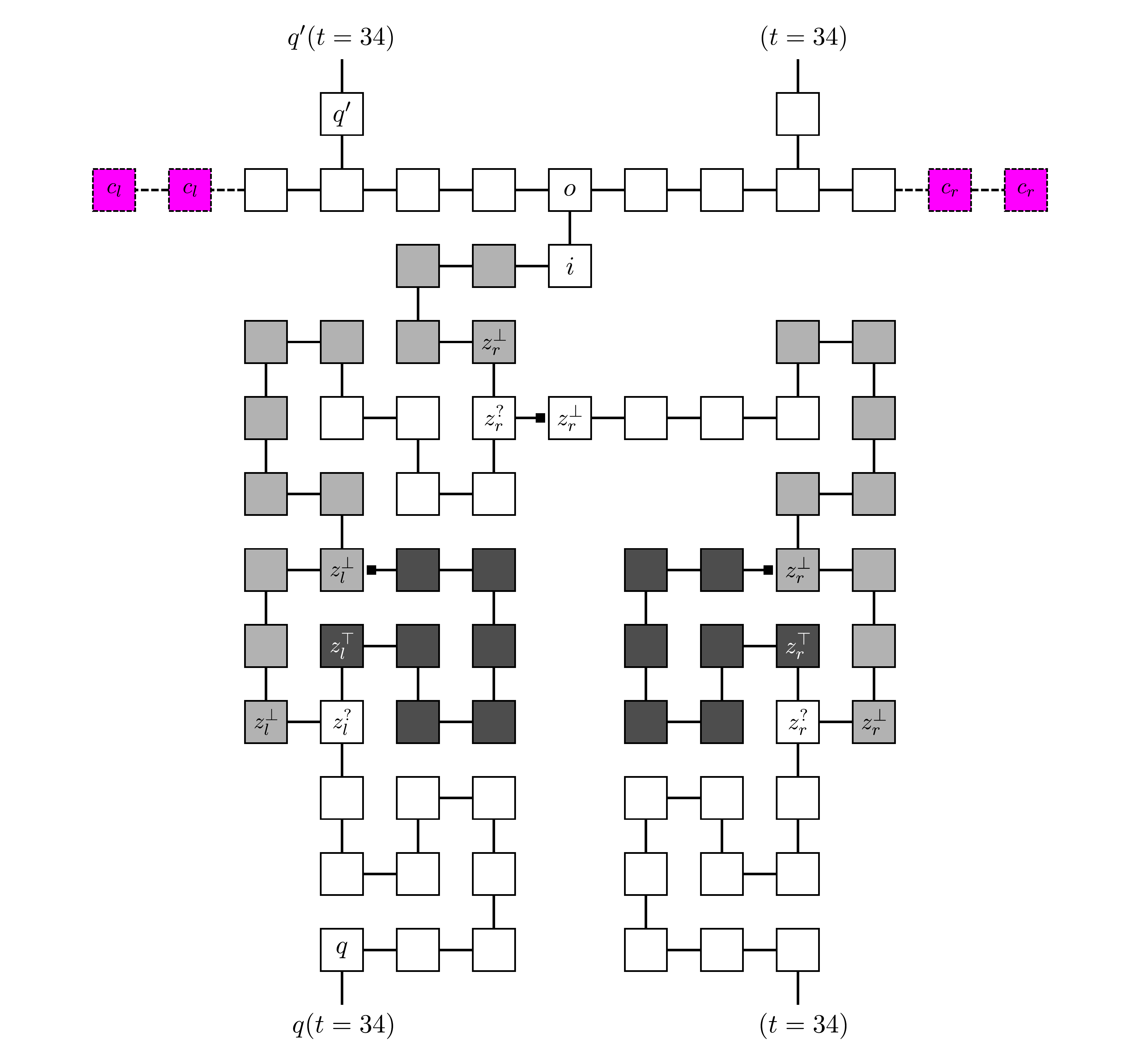}
    \caption{Non-cooperative transition gadget where both counter values are non-zero. It carries the current state of the $M$ on the left side of the gadget, signified by the tile marked with $q$. The tiles marked with $z^?$ creates a series of $z^\top$ and $z^\bot$ tiles which represent an input of zero and non-zero for each counter for the current state, respectively. The series of tiles indicated by $z^\top$ are unable to complete their growth, which signifies that the input for the current state is non-zero. The tiles marked with $c$ represent a ternary bit for the counters to read which indicates whether to decrease, increase, or stay the same. The tile marked with $q'$ represents the state of the next configuration of $M$.}
    \label{fig:band-transition_nz_nz-temp1}
\end{figure}

\begin{figure}
    \centering
    \includegraphics[width=1.0\textwidth]{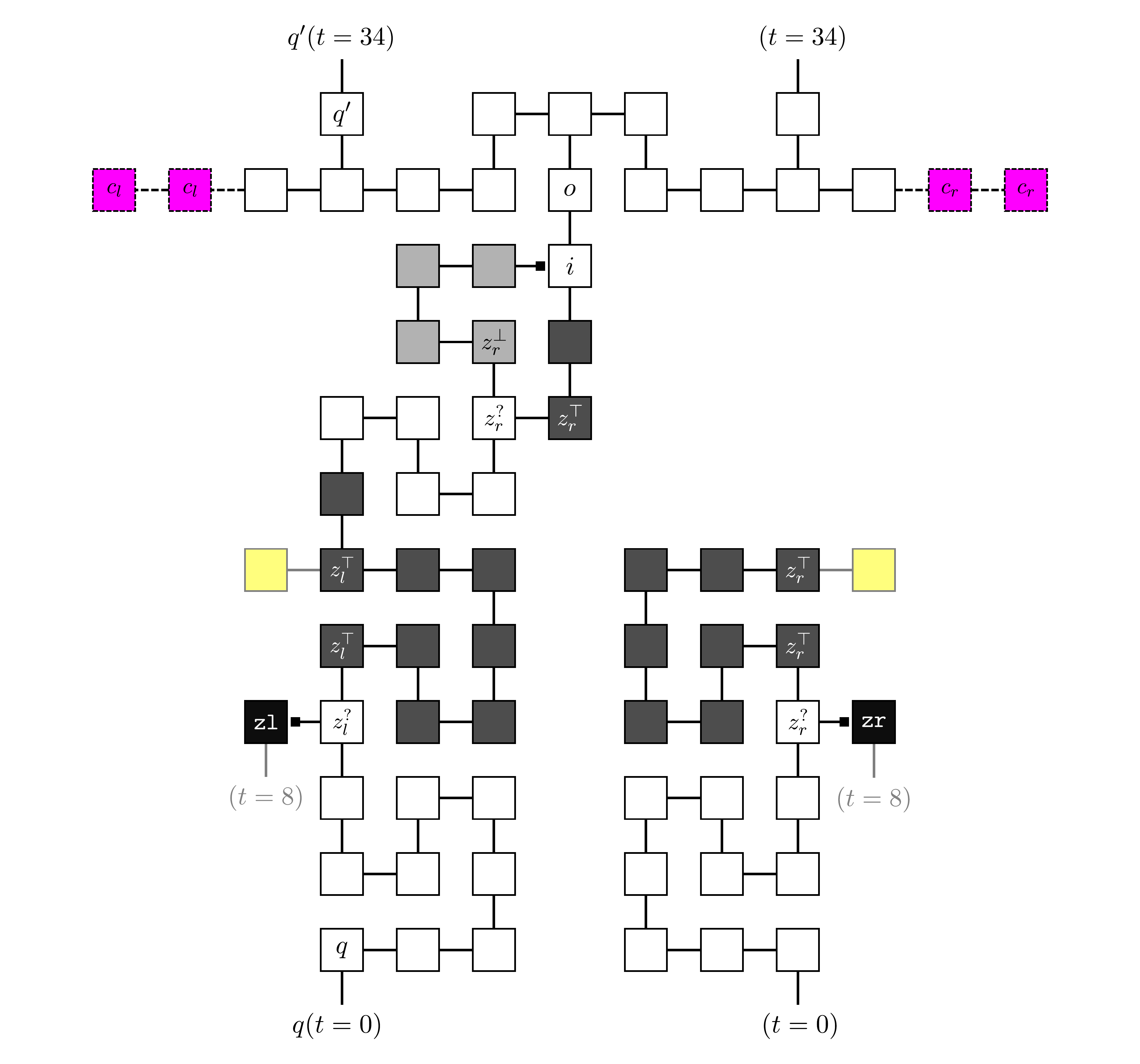}
    \caption{Non-cooperative transition gadget when both counter values are zero, indicated by the presence of \texttt{zl} and \texttt{zr} tiles, which are placed by the counter gadgets. Since the series of $z^\bot$ tiles are absent, the series of $z^\top$ tiles are able to grow to signify the input for this current state is zero for each counter. The series of $z^\top$ tiles also grow new counter gadgets, represented by the yellow tiles, since the \texttt{zl} and \texttt{zr} tiles stop the growth of the counter gadgets. This is done to ensure that the counter gadgets do not grow past the transition gadget.}
    \label{fig:band-transition_z_z-temp1}
\end{figure}

\begin{figure}
    \centering
    \includegraphics[width=.57\textwidth]{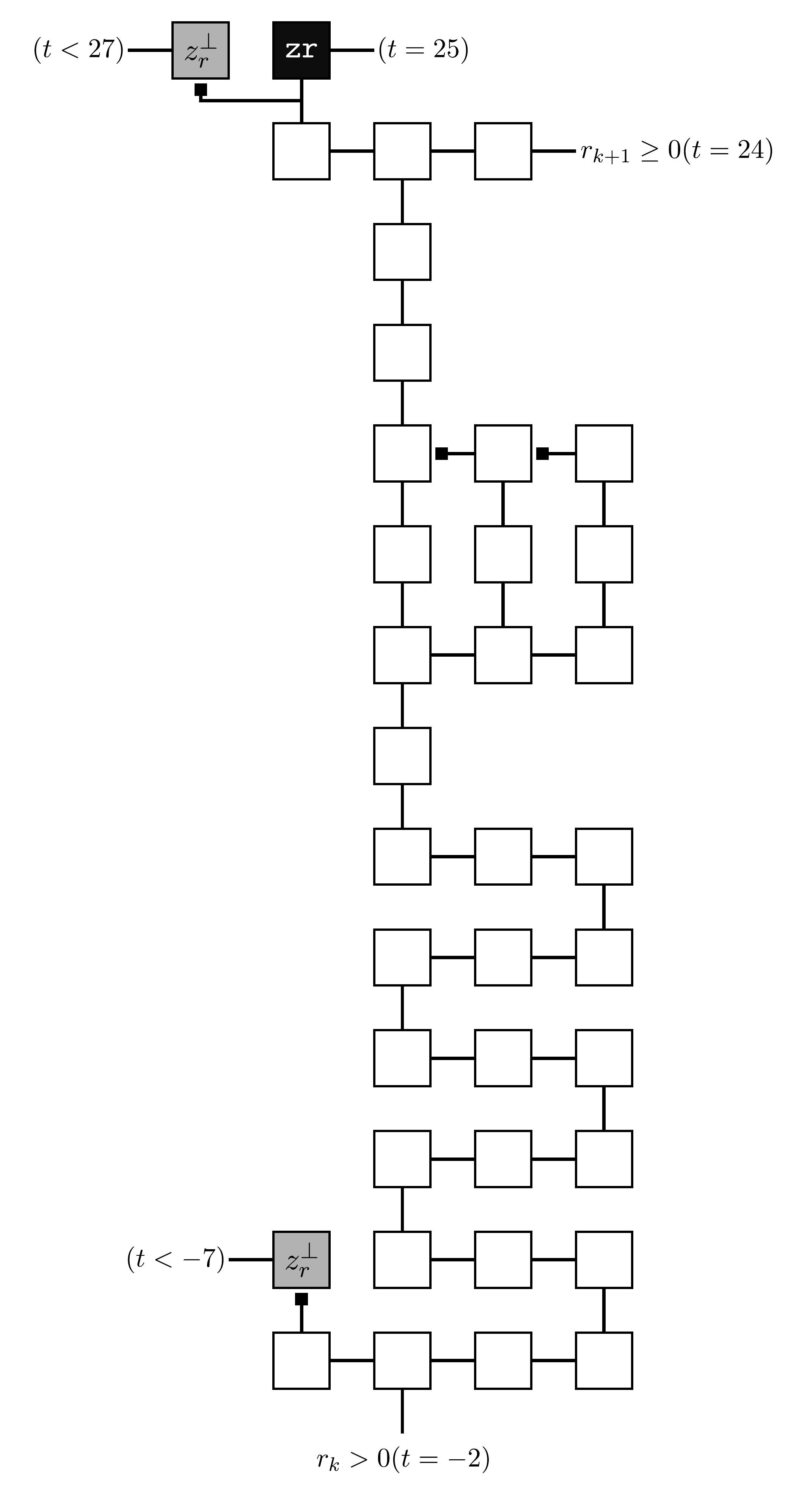}
    \caption{Non-cooperative Right \emph{Decrease} gadget. The left one is symmetrical. The counter value can only be non-zero. The ternary bit read is 0.}
    \label{fig:band-decr-temp1}
\end{figure}

\begin{figure}
    \centering
    \begin{subfigure}[t]{0.5\textwidth}
        \centering
        \includegraphics[width=\linewidth]{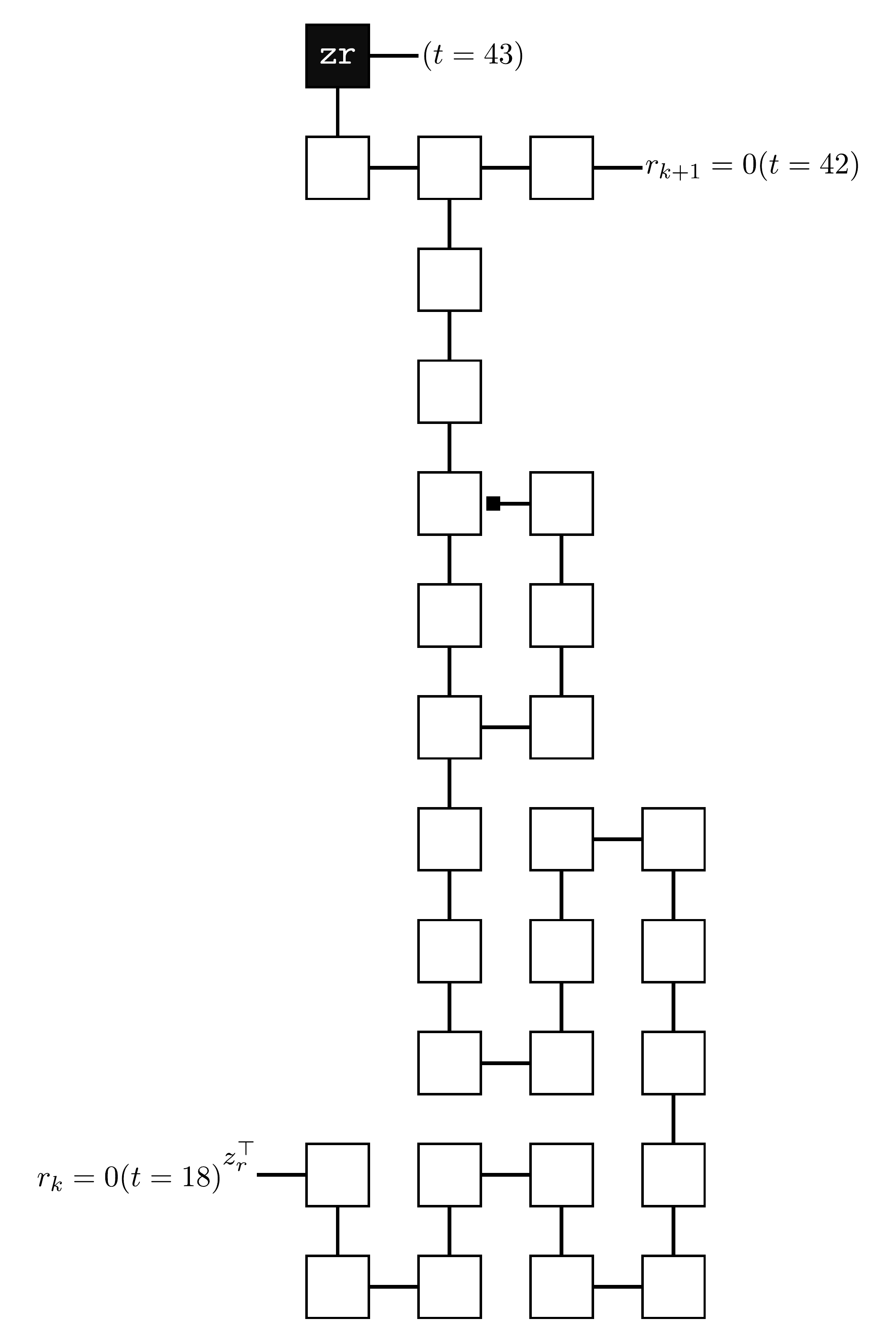}
        \caption{}
    \end{subfigure}%
    ~ 
    \begin{subfigure}[t]{0.5\textwidth}
        \centering
        \includegraphics[width=\linewidth]{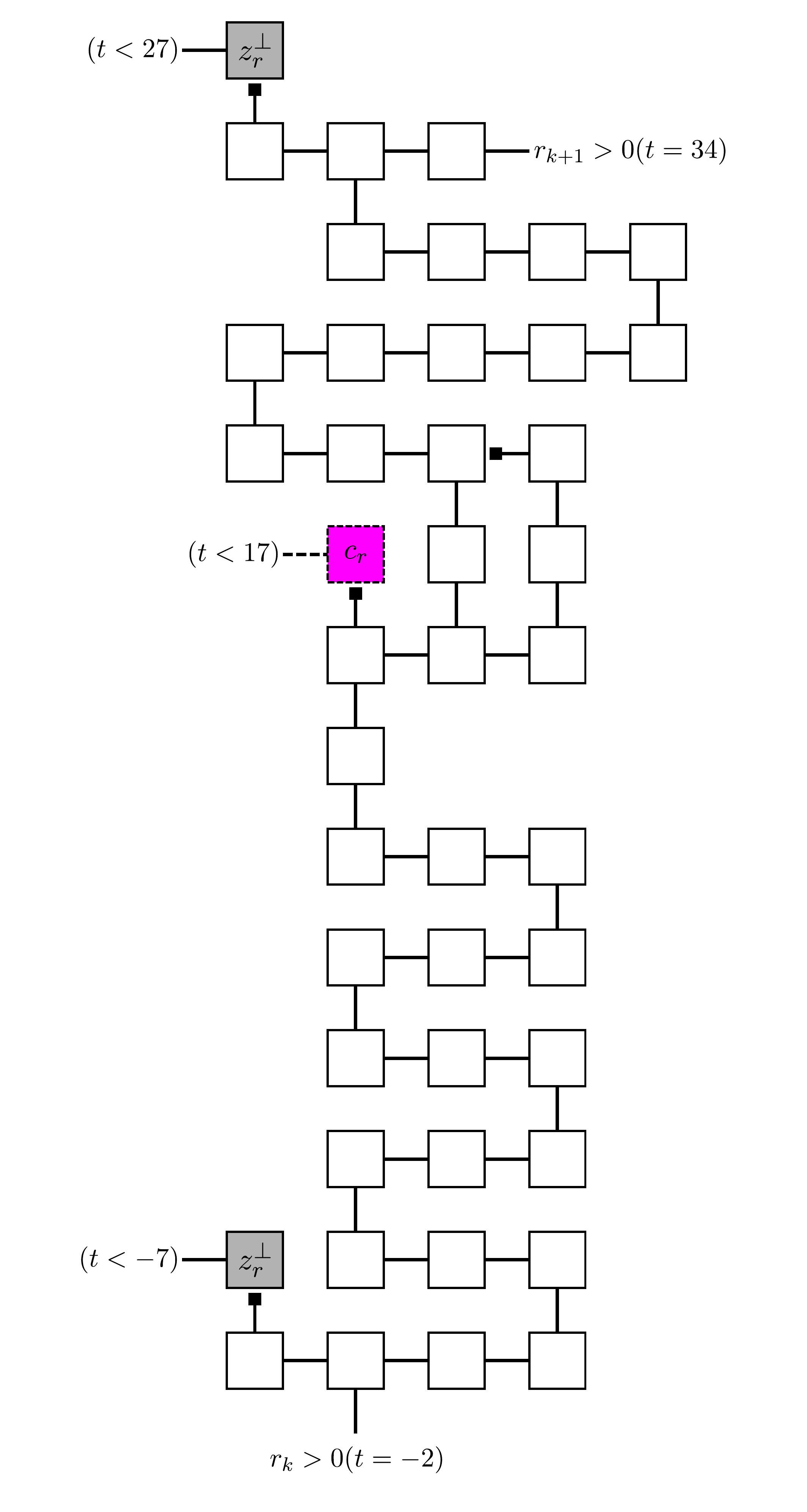}
        \caption{}
    \end{subfigure}
    \caption{Non-cooperative Right \emph{Keep} gadgets. The left ones are symmetrical. (a) The counter value is zero. The ternary bit read is 0. (b) The counter value is non-zero. The ternary bit read is 1.}
    \label{fig:band-keep-temp1}
\end{figure}

\begin{figure}
    \centering
    \begin{subfigure}[t]{0.5\textwidth}
        \centering
        \includegraphics[width=\linewidth]{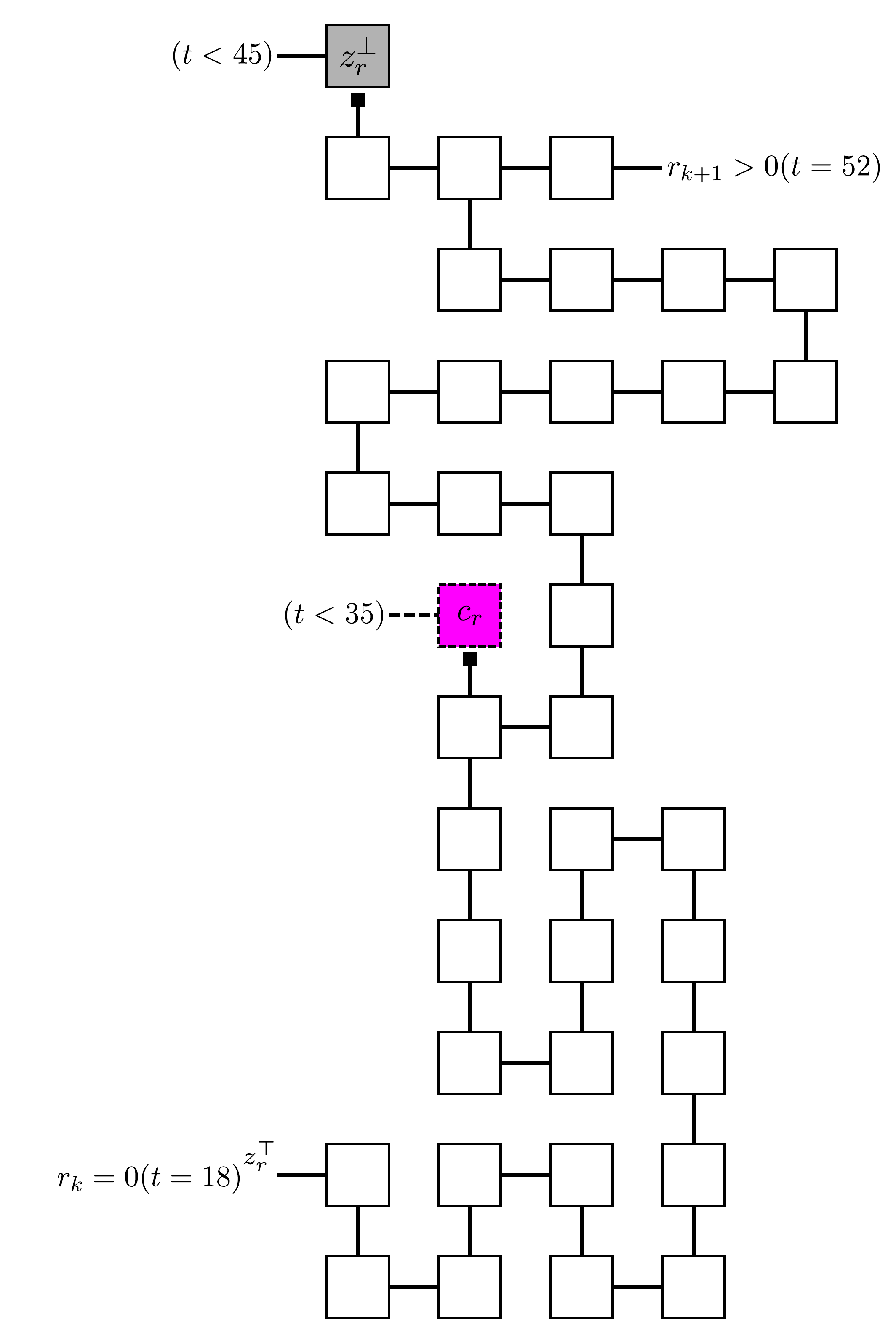}
        \caption{}
    \end{subfigure}%
    ~ 
    \begin{subfigure}[t]{0.5\textwidth}
        \centering
        \includegraphics[width=\linewidth]{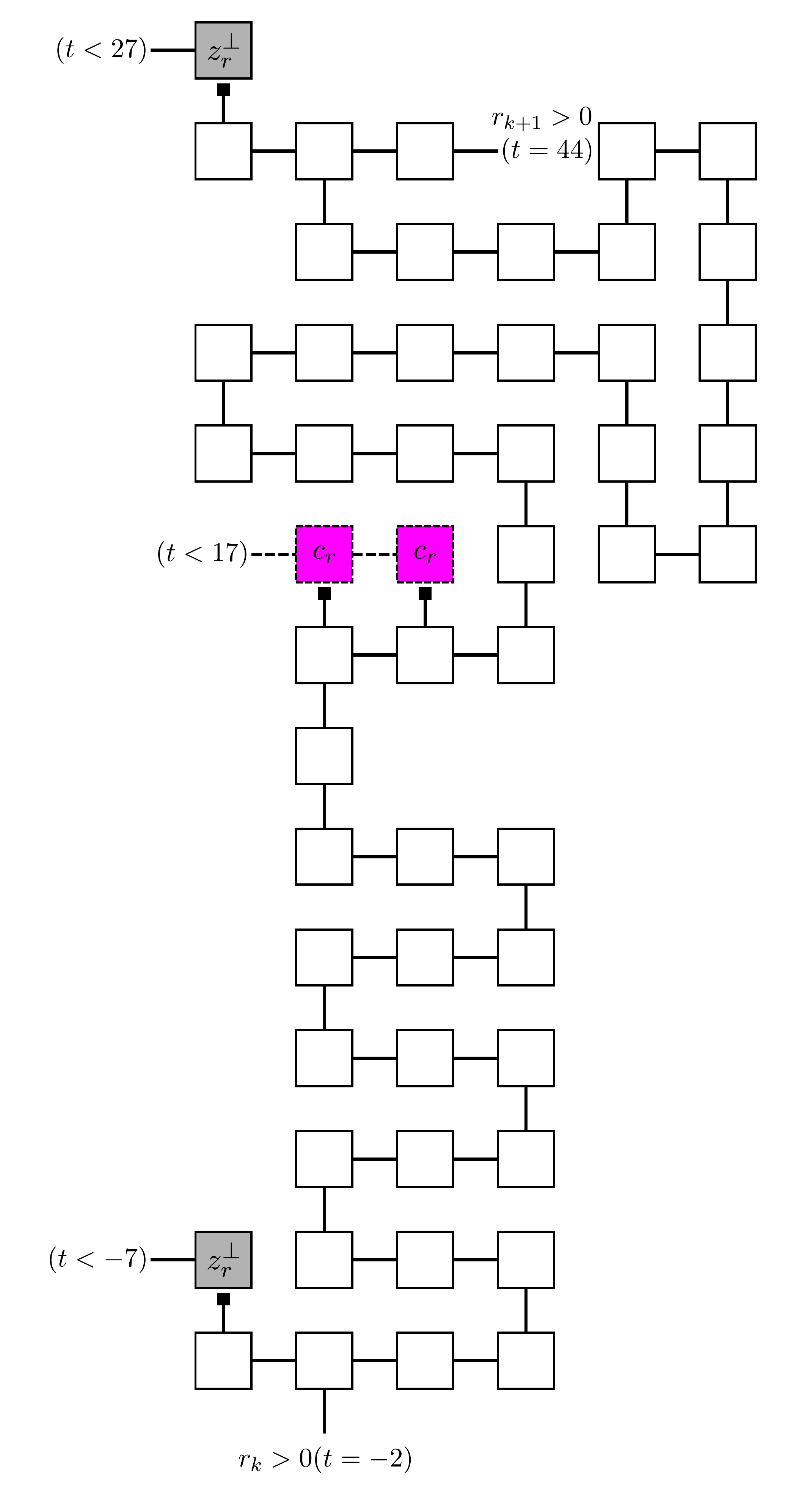}
        \caption{}
    \end{subfigure}
    \caption{Non-cooperative Right \emph{Increase} gadgets. The left ones are symmetrical. (a) The counter value is zero. The ternary bit read is 1. (b) The counter value is non-zero. The ternary bit read is 2.}
    \label{fig:band-incr-temp1}
\end{figure}

\begin{figure}
    \centering
    \input{tikz/band/band-example-temp-1}
    \caption{Non-cooperative version of the simulation shown in \cref{fig:band-example-sim}. As opposed to the cooperative version of the two-counter machine, the non-cooperative version has images taken every 34 ticks, due to needing more time and vertical space to compute the same amount of information.}
    \label{fig:band-example-sim-temp1}
\end{figure}

\paragraph*{Note on the initialization of counters at the start of the simulation} In order to prove \cref{thm:Turing_band}, it is enough to simulate
any two-counter machine $M$ on a starting configuration where both counters are at 0. It is also possible to initialize the counter at a different value $v$ by replacing the tile of the seed which starts the counter with a path of length $8v$ for the cooperative construction and $10v$ for the non-cooperative construction.

Note that we have implemented a JavaScript simulator for two-counter machines, along with functionality that can convert an arbitrary two-counter machine into a syncTAM system that uses bounded-width and can be run in an embedded syncTAM simulator, and have made them (along with examples) available online here: \url{http://self-assembly.net/software/two_counter_machines/two_counter_machines-latest/} \cite{TwoCounterSimulator}. It uses the seed trick discussed above. When the simulated machine halts, it goes into a hard-coded loop decrementing the right counter and attaching special blue tiles at the position of the left counter gadget. This way, the final value of the right counter can be read as the number of ticks at the top left of the terminal assembly.

\section{Simulating arbitrary syncTAM systems using systems with simplified dynamics}\label{sec:adj-proof}

\adjonly*

\begin{figure}
    \centering
    \input{tikz/shine_on_you/dice}
    \caption{A visual overview of \cref{lem:adj-only}. The squiggly line between the two TAS shows that $(T', 2, \sigma')$ simulates $(T, \tau, \sigma)$ under the representation function $R$. The box representing the operation of the lemma is cleft to the left of the seed output to show that the simulation scheme is seed-independent: the outputs $R, \calT'$ do not depend on the input $\sigma$. \\
    \emph{The dice illustration is from the collections of Musée du Louvre: \url{https://collections.louvre.fr/ark:/53355/cl010167272}}
    }
    \label{fig:dice}
\end{figure}

\begin{proof}
We prove \cref{lem:adj-only} by construction. Given arbitrary syncTAM system $S = (T, \sigma, \tau)$, we will define $\calS' = (T', \sigma', 2)$, $c,s \in \mathbb{N}$ where both $c$ and $s = O(|T|)$, and macrotile representation function $R$ such that $\calS'$ simulates $\calS$ at spatial scale factor $c$ and temporal scale factor $s$ under $R$. The simulation will be seed-independent, and $\calS'$ will be locally adjacent-only and staggered. A visual overview of \cref{lem:adj-only} is provided in \cref{fig:dice}.

Using spatial scale factor $c$, $\calS'$ will simulate $\calS$ using macrotiles of size $c \times c$.
Using temporal scale factor $s$, each step of $\calS$ will be represented by $s$ assembly steps in $\calS'$, and for every $s \in \mathbb{N}$ such that $s \mod s = 0$, the assembly of $\calS'$ will map to a valid assembly of $\calS$. For each assembly step $i \in \mathbb{Z}^+$ of $\calS$, we will refer to the series of $s$ steps in $\calS'$ where $(i-1)s < s \le is$ as the $i$th \emph{macrostep}.%\todo{These definitions should be cleaned up and improved and included in a definition of simulation.}

\begin{figure}
    \centering
    \includegraphics[width=5.0in]{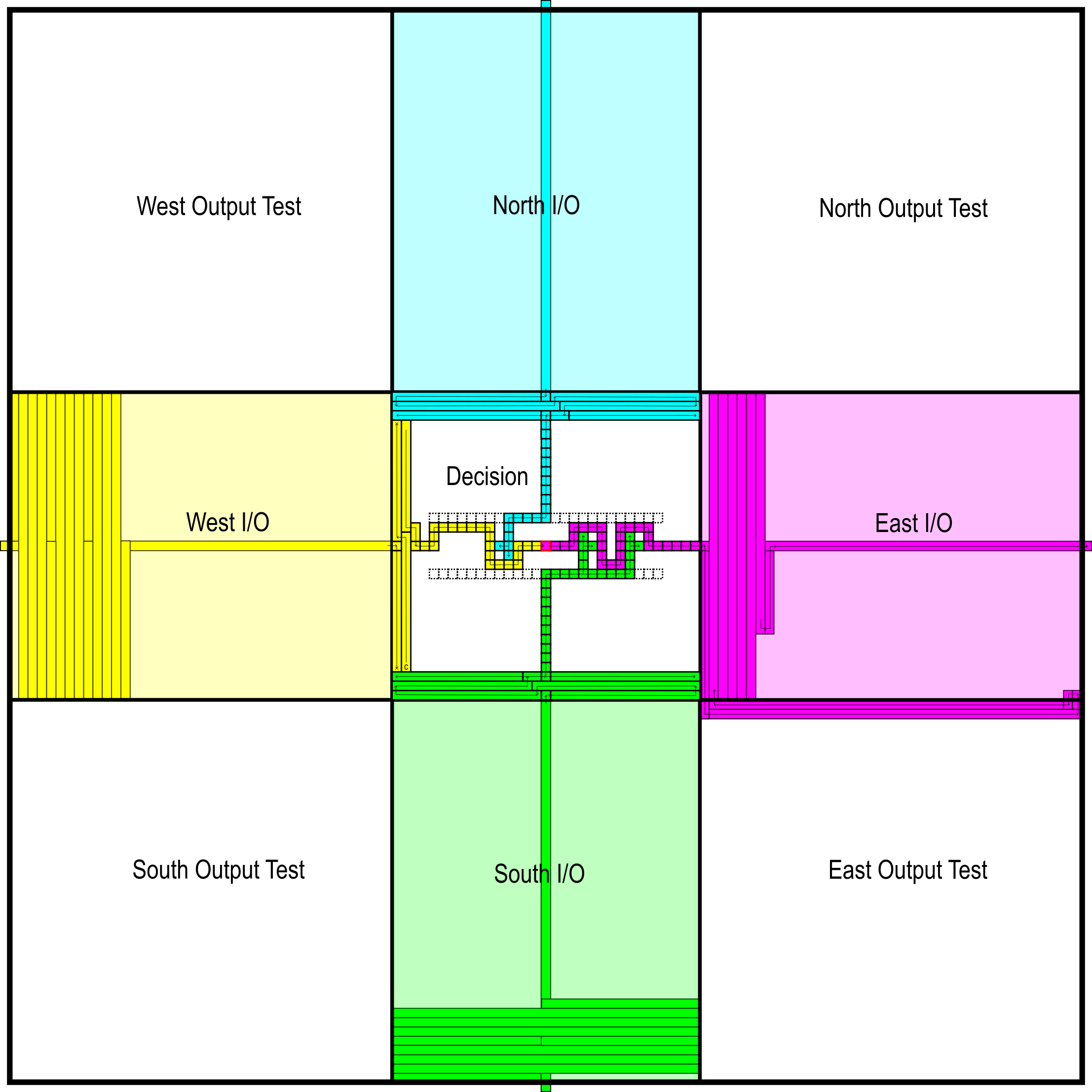}
    \caption{An example macrotile for the construction of the proof of \cref{lem:adj-only}, with the submacrotile names shown. In this example, the N, W, and S were input sides and the E an output side. The N had input delay 0 so arrived at the Decision submacrotile first, then the S arrived, and finally the W, which provided enough input glue strength to simulate a tile attachment by placing a tile in the central POC, and then cause output of the one remaining glue, on the E side.\label{fig:adj-only-macrotile}}
\end{figure}

Let $\vec{\alpha}$ be an arbitrary assembly sequence in $\calS$ and let $0 \le i < |\vec{\alpha}|$ be the index of the assembly step in $\vec{\alpha}$ which yields assembly $\alpha_i$. Let $L_i \subset \mathbb{Z}^2$ be the set of locations in $\calS$ where tiles are placed by step $i$. For each $l \in L_i$ let $M(l)$ represent the set of locations in $\calS'$ that map to the location $l$ (i.e. the macrotile representing $l$). For $l = (x,y)$, these locations will be $M((x,y)) = \{(x',y')\mid cx \le x' < c(x+1)$ and $cy \le y' < c(y+1)\}$. During macrostep $i$, $M(l)$ will complete growth (to be explained) that causes it to change from mapping to empty space under $R$ to mapping, under $R$, to a tile of the type that $\calS$ places in $l$. Additionally, the tile placements during the final step of macrostep $i$ will place a tile within, but adjacent to the boundary of, the macrotile $M(l)$ at each side for which an \emph{output glue} needs to be represented. 
(That is, in each direction in which the tile in $\calS$ in location $l$ exposed a non-null glue that wasn't used in its initial binding, as an \emph{input glue}, but which instead is adjacent to an empty location and may be used as an input glue during the binding of a neighboring tile in a future step.)

Then, during the first step of macrostep $i+1$, each of those output glues will initiate growth into those neighboring macrotile locations. That is, for each $l \in L_i$, for each direction $d$, if the tile added into location $l$ in $\calS$ during step $i$ placed an output glue in direction $d$, then let $l'$ refer to the location on the $\texttt{OPP}(d)$ side of $l$. During macrostep $i+1$, in $\calS'$ growth will enter side $\texttt{OPP}(d)$ of macrotile $M(l')$. The first tile placed into $M(l')$, during the first step of macrostep $i+1$, will be in the center of side $\texttt{OPP}(d)$. %into a region of the macrotile called an \emph{I/O submacrotile}.

An example macrotile of $\calS'$ can be seen in \cref{fig:adj-only-macrotile}. Each macrotile is logically divided into $9$ square and rectangular regions called \emph{submacrotiles}. For each direction $d \in D$, there is a square \emph{output test} submacrotile along the counterclockwisemost third of side $d$. This submacrotile only ever contains any growth if side $d$ attempts to grow an output glue representation. 
The $4$ submacrotiles adjacent to the center of each exterior macrotile side are rectangular and are the \emph{I/O submacrotiles} (i.e. \emph{input/output submacrotiles}). 
The center submacrotile is square and is the \emph{decision submacrotile}. The center location of the decision submacrotile is called the \emph{central point of competition} (POC). 
Initial growth into any macrotile enters through the central location of a side (through an \emph{input side} representing an input glue), into that side's I/O submacrotile. 
Final growth out of any macrotile leaves through the central location of a side (through an \emph{output side} representing an output glue), out of that side's I/O submacrotile.

In order to ensure that the completion of growth of macrotile $M(l)$ is synchronized, across all $l \in L_i$, various stages of the growth of each $M(l)$ will consist of paths that \emph{delay} for carefully determined amounts of time. This is done within the first submacrotile type to be explained (in \cref{sec:IO}), the I/O submacrotile. However, first we provide a high-level overview of the phases of growth of a macrotile to give context to the growth in the individual regions.

\subsection{Macrotile growth phases}

As before, let $\vec{\alpha}$ be an arbitrary assembly sequence in $\calS$. Let $l \in \mathbb{Z}^2$ be a location in $\calS$ that is initially empty (i.e. not in the seed $\sigma$). Let $\texttt{IN}(l,i) \subseteq \{N,E,S,W\}$ be the set of directions such that for each $d \in \texttt{IN}(l,i)$, at step $i$ of $\vec{\alpha}$ a tile is placed in the location adjacent to $l$, in direction $d$, which has a non-null glue on side $\texttt{OPP}(d)$. %\todo{PD: I moved the definition of \texttt{IN}() to \cref{sec:prelims}}
That is, the set of directions $\texttt{IN}(l,i)$ represents the sides of $l$ adjacent to neighboring locations that received tiles in step $i$ and have glues facing $l$. These sides are (potential) input sides, as those neighboring glues may contribute binding to a tile placed in location $l$ in some step $j > i$. Note that for the following discussion we will refer to all such glues as input glues, even though some of them may not actually contribute to binding of a tile in $l$, since it is possible that: (1) location $l$ may permanently remain empty in the case where there are not neighboring glues with enough strength to allow for a tile attachment, or (2) a subset of neighboring glues not including all (or any) of these may serve as the input glues for a later tile attachment, with some (or all) of these glues remaining unbound as mismatched glues. 
However, the growth within a macrotile that combines information about neighboring glues in order to determine which, if any, tile type attachment should be simulated is essentially the same until a possible attachment type is selected, so we'll refer to all such sides as input sides for ease of discussion.

Given location $l \not \in \dom{\sigma}$, let $M(l)$ be the macrotile in $\calS'$ that maps to location $l$, and note that $M(l)$ is initially completely empty.

\begin{enumerate}
    \item For any step $i$ such that $l$ is empty and $\texttt{IN}(l,i) \neq \emptyset$, macrostep $i+1$ proceeds as follows:
    \begin{enumerate}
        \item Simultaneously, for each $d \in \texttt{IN}(l,i)$:
        
        \begin{enumerate}
            \item A delay $D(d)$ is randomly selected (from the values in the row in \cref{tab:go-first} corresponding to direction $d$), and the I/O submacrotile of $M(l)$ on side $d$ begins growth that delays its entry into the decision submacrotile by an amount proportional to its value $D(d)$.
        \end{enumerate}

        \item Let $D_S$ be the ordered list of elements of $\texttt{IN}(l,i)$ sorted by the delay amount selected by that side, from smallest to largest. In order, for each $d' \in D_S$:

        \begin{enumerate}
            \item The delay growth for side $d'$ either: (1) is blocked by growth coming out of the decision submacrotile and terminates, or (2) completes and growth proceeds into the $d'$ side of the decision submacrotile.

            \item If growth from side $d'$ proceeds into the decision submacrotile, then $M(l)$ has not yet grown to represent a tile in $\calS$. That growth combines with growth from any other sides that may have arrived earlier.
            
            \begin{enumerate} 
                \item If the glues from the directions that have already grown into the decision submacrotile are sufficient to allow simulation of a tile placement, growth continues to place a tile in the central POC, and then out into each I/O submacrotile that hadn't previously initaited growth into the decision submacrotile. A row grows from the center of each such I/O submacrotile into the output test macrotile for that side. There, it grows a path that tests whether or not that side has initiated any input growth (to be explained). If not, growth then continues within the I/O submacrotile to perform sufficient delay (to be explained) and then grow to the central location of that side adjacent to the exterior of the macrotile to represent the output glue.

                \item Otherwise, growth halts to wait for the possible arrival of one or more additional glue representations from input sides.
            \end{enumerate}
        \end{enumerate}
    \end{enumerate}
\end{enumerate}

Note that during any step $i$ in which $\texttt{IN}(l,i) \neq \emptyset$, macrostep $i+1$ may have enough input sides present for $M(l)$ to complete growth causing it to represent a tile in $\calS$, in which case it places a tile in the central POC and grows a path to the center location adjacent to the exterior of each output side (with some output path growth possibly colliding with already-existing input growth from any side that was an input side before reaching the macrotile boundary). 
However, if the input sides present during macrostep $i+1$ are insufficient to allow for simulation of a tile binding, then growth stalls in the decision submacrotile and does not resume unless and until a future macrostep $j > i$ such that $\texttt{IN}(l,j) \neq \emptyset$, at which point the same growth process occurs for the newly arrived input side(s).

% \begin{enumerate}
%     \item Define the locations and corresponding macrotile locations of growth for each (macro)step

%     \item Define the corresponding frontier locations

%     \item Discuss the timing of growth into a frontier macrotile

%     \item Discuss the high-level function of the I/O submacrotile

%     \item Discuss the high-level function of the decision submacrotile
% \end{enumerate}

\subsection{I/O submacrotiles}\label{sec:IO}

Each I/O submacrotile serves the primary purpose of providing space for an input or output path to stall for precisely determined numbers of steps (to be explained) by growing zig-zag paths that delay the growth through the I/O submacrotile. Growth through an I/O submacrotile from an adjacent macrotile on its exterior toward the decision submacrotile is called \emph{input growth} and its purpose is to delay arrival at the decision submacrotile by an amount that ensures that input growth from multiple directions will reach the decision submacrotile at different times, sufficiently spread out in time to prevent unintended collisions. (Specifics details will be provided in \cref{sec:decision-sub}.)

On a side that has input growth, the first tile placed into its I/O submacrotile is nondeterministically selected from a set of $12$ possible \emph{input-delay} tile types that are specific, and distinct, for the direction of that side. The values represented by the $12$ possible tile types, unique for each side, can be seen in \cref{tab:go-first}. The values in that table are a slight modification (i.e. the value of $1$ has been subtracted from each) of results for 4-player ``Go-First Dice'' \cite{gofirstford2023,gofirstPurcell,gofirst4player} which guarantee that there are never ties and all permutations of orderings are fairly generated. %\todo{Could add more explanation and could also discuss possibly smaller sets of values from related results.} 
From whichever tile type of the $12$ that nondeterministically attaches, a zig-zag path whose length is proportional to the value $r$, which is the nondetermistically selected value uniquely represented by that tile type, grows first, and when that completes a row grows to the center of adjacent side of the decision submacrotile. The length of the zig-zag path is $r \cdot u_d$, where $u_d$ is the \emph{unit of delay} (to be explained). In \cref{fig:adj-only-macrotile}, example input-delay paths are seen growing from the sides N, S, and W.

% \begin{table}
% \centering
% \begin{tabular}{| c | l | l | l | l | l | l | l | l | l | l | l | l |}
% \hline
% \textbf{N}& 1 & 8 & 11 & 14 & 19 & 22 & 27 & 30 & 35 & 38 & 41 & 48 \\
% \hline
% \textbf{E}& 2 & 7 & 10 & 15 & 18 & 23 & 26 & 31 & 34 & 39 & 42 & 47 \\
% \hline
% \textbf{S}& 3 & 6 & 12 & 13 & 17 & 24 & 25 & 32 & 36 & 37 & 43 & 46 \\
% \hline
% \textbf{W}& 4 & 5 & 9 & 16 & 20 & 21 & 28 & 29 & 33 & 40 & 44 & 45 \\
% \hline
% \end{tabular}\label{tab:go-first}
% \caption{The 12 random numbers assigned to each direction.}
% \end{table}

\begin{table}
\centering
\begin{tabular}{| c | l | l | l | l | l | l | l | l | l | l | l | l |}
\hline
\textbf{N}& 0 & 7 & 10 & 13 & 18 & 21 & 26 & 29 & 34 & 37 & 40 & 47 \\
\hline
\textbf{E}& 1 & 6 & 9 & 14 & 17 & 22 & 25 & 30 & 33 & 38 & 41 & 46 \\
\hline
\textbf{S}& 2 & 5 & 11 & 12 & 16 & 23 & 24 & 31 & 35 & 36 & 42 & 45 \\
\hline
\textbf{W}& 3 & 4 & 8 & 15 & 19 & 20 & 27 & 28 & 32 & 39 & 43 & 44 \\
\hline
\end{tabular}
\caption{The 12 random numbers assigned to each direction to determine the length of input-delay.\label{tab:go-first}}
\end{table}

Growth into the decision submacrotile from one or more input sides may result in the macrotile resolving to a tile in $T$. If and when this occurs, growth (called \emph{output growth}) will be initiated outward to any I/O submacrotiles which did not provide input growth into the decision submacrotile. 
This may occur on sides where no input growth has ever been initiated, or where it was initiated but had an input-delay sufficiently long to delay its arrival until after growth completes in the decision submacrotile. 
On a side $d$ where output growth is initiated, this will occur in two phases. In the first phase, a path will enter the I/O submacrotile for direction $d$ in the center of the side adjecnt to the decision submacrotile and grow directly to the output test submacrotile of side $d$ (as shown in \cref{fig:output-test}). There, it will grow into the second row where it will grow a path to the edge of the macrotile. At this point, it initiates the growth that \emph{tests} whether or not input growth has already occurred in side $d$'s I/O submacrotile. It does this by attempting to grow up into the nearest corner of $d$'s I/O submacrotile. If input growth has already occurred in side $d$, this growth will be blocked and the output growth will terminate (since no output can, or needs to be, exposed on side $d$. Otherwise, a path will grow back to the side from which the output growth initially began. It will then begin zig-zag growth for the \emph{output-delay} (whose length will be discussed) and then a path that grows to the center of the exterior boundary of the I/O submacrotile (and thus the exterior boundary of the macrotile itself). This path carries the information about the output glue to be represented by that side of the macrotile.

Note that if input growth had previously been initiated and terminated the growth of the output, the input growth will eventually collide with the initial portion of the output growth and terminate as well, which is by design since the macrotile will already have resolved to a tile in $T$. %However, that is not a problem since the macrotile will already correctly resolve to a tile in $T$ and there is no need to complete the output growth since there must have already been an adjacent macrotile (that has also resolved to a tile in $T$) which initiated the colliding input growth, and thus there is no need for the output growth to complete.

\begin{figure}
    \centering
    \includegraphics[width=5.5in]{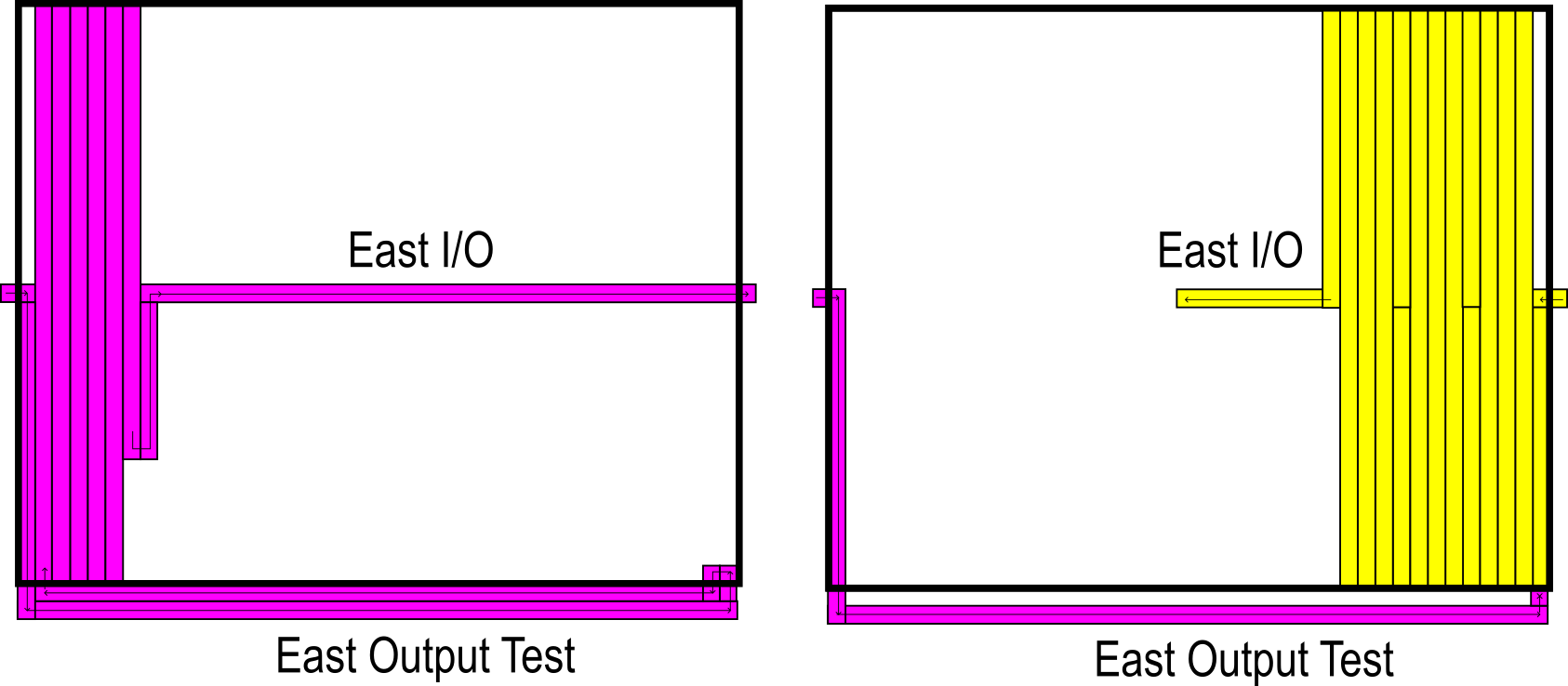}
    \caption{(Left) An example E I/O submacrotile into which output growth was initiated from the decision submacrotile. That growth  creates a path into the second row of the E output test submacrotile which proceeds to grow to the far right side then upward into the southern location of the easternmost row of the E I/O submacrotile. Since no input growth has arrived, that growth is unblocked. It then proceeds back down into the first (northernmost) row of the E output test submacrotile, where it grows back to the west. It then grows the necessary output delay zig-zag rows, then a row to the E side of the macrotile to represent the output glue. (Right) A similar situation, but here input growth has already begun from the E. This blocks the growth upward from the E output test submacrotile, which terminates its growth. Once the E input growth completes its input delay growth, it proceeds to the west where it will eventually crash into the output growth from the decision submacrotile, which will cause it to terminate.\label{fig:output-test}}
\end{figure}

The length of the output delay is dependent upon the full length of the path leading from the beginning of the last input growth arriving in the decision submacrotile that contributed to the resolution of the macrotile. Those paths will be discussed in \cref{sec:decision-sub}. For the discussion of the I/O submacrotile's output growth, we will need to define the unit of delay, $u_d$. This will be related to the longest path that can grow into the decision submacrotile, through the central POC, and then to the location where an output path grows out of the decision submacrotile. We will call the length of such a longest path $l_{max}$. The growth of the input delay path is a zig-zag path that grows the width of the I/O submacrotile, which is the same as the dimensions of the decision submacrotile and which we refer to as $w$. (This value is defined in Section \ref{sec:probe-region} and shown to be $O(|T|)$.) Let $l_{div} = \lceil (l_{max}+1) / w\rceil$ and $l_{mod} = (l_{max} + l_{div} + 1) \mod w$. The unit of delay, $u_d = l_{max} + l_{mod} + 1$, which is set so that even if sides $d$ and $d'$ select random numbers $r$ and $r+1$, respectively, making their input delays $r\cdot u_d$ and $(r+1) \cdot u_d$, respectively, then the growth into the decision submacrotile from side $d$ will have time to complete growth through the submacrotile and place at least one tile as output growth into the I/O submacrotile for side $d'$ before the input growth of side $d'$ can reach that point. (That is in the case that the input growth from side $d$ is sufficient to provide enough, possibly additional, input glue strength for the macrotile to resolve. Otherwise, its growth in the decision submacrotile will stall and the input growth from $d'$ will have the opportunity to arrive and possibly complete the growth.) Note that $u_d = O(|T|)$ and that $u_d$ is a multiple of dimension $w$. Therefore, since input delay growth begins from the center of an I/O submacrotile's side, input delay growth will always complete one position away from the center. From that point it will begin growth down the center of the I/O submacrotile directly to the decision submacrotile. If the macrotile has not yet resolved to a tile, it will then enter the decision submacrotile. If it has, then growth will already have come out of the decision submacrotile placing at least one tile in the I/O submacrotile, directly in line with that path, causing it to terminate there (shown on the right of \cref{fig:output-test} where the yellow path will crash into the magenta).

If a side $d$ attempts to grow output and no input growth has begun on that side (so that its output test growth is not blocked), once the output test path returns to the side closest to the decision submacrotile (the left side in \cref{fig:output-test}), the length of the output delay path that then grows, via zig-zagging rows, is determined as follows. Let $d'$ be the side whose input growth was the last to enter the decision submacrotile, causing it to grow into the central POC, resolve into a tile of $T$, then initiate the output growth to direction $d$. Let $r'$ be the random number selected by side $d'$ to determine its input delay, and let $l'$ be the length of the path (i.e. the number of assembly steps) that began from the first tile placed by side $d'$ in the decision submacrotile until the last placed in the decision submacrotile before exiting into the I/O submacrotile on side $d$. The length of the output delay is $u_d(47-r') + (l_{max} - l')$. Effectively, this makes the path length and number of assembly steps that begins with the first tile placed in the input side that causes the macrotile to resolve, and ends with the final tile placed adjacent to the macrotile's exterior on the output side, to be padded out to be the maximum possible. That is, given the maximum possible input delay multiple of 47, the difference between this path and one that grew that maximal input delay is first grown as output delay. Then, additional delay is added to account for how ever much shorter the path through the decision submacrotile was than the path of maximal length. Thus, all output paths are padded to the same length and growth of the macrotile will complete at exactly the final step of a macrostep in $\calS'$.

The dimension of the I/O submacrotiles that is not the same as that of the decision submacrotile is set to $48(u_d / w)+2$. 
This provides sufficient space for the maximum output delay growth (since $l_{max} - l' \le u_d$ and therefore requires $\le u_d/w$ zig-zag rows) and the maximum size of the other portion of the output delay requires $47(u_d / w)$ rows. 
The extra two rows allow for the growth of the rows for the output test. Note that  $48(u_d / w)+2 = O(|T|)$. A final point to mention is that, since this must be an even number and the other dimension is an odd number, the overall dimensions of a macrotile are odd numbers (i.e. the sum of two even numbers and one odd number). 
Since paths enter and exit all macrotiles and submacrotiles through the center locations of sides with odd lengths, all paths must be of odd length (due to being paths through the grid graph). This also means that the differences in lengths between any two paths must be even values, and therefore for any necessary length of an output delay, a valid zig-zag path can be created for that delay (noting that one zig and zag for each will likely not cover the full width of the I/O submacrotile.

%it is sufficient to note that such a path will have started with some input delay value $r$ (as it grew through its initial I/O submacrotile) and that its path through the decision submacrotile, that eventually reaches the I/O submacrotile for output, will have some additional length $l$. Furthermore, for any system $\calS'$ the maximum input delay value possible is $47$ (see \cref{tab:go-first}) and the maximum length path through the decision submacrotile is some $l_{max}$ (which is dependent upon the system $\calS$ being simulated and further discussed in \cref{sec:decision-sub}). Let $w$ be the width of the I/O submacrotiles (in the dimension in which the I/O paths zig and zag). The length of the output delay path will be $(l_{max} - l) + (47-r)4w$. Since $l_{max} + (47)4r$ is the longest that any such path can be, 

%this ensures that all output paths are delayed so that, with the delay, they are of that maximum length and thus all the same length. Therefore, all output paths that complete during a macrostep do so during the exact same assembly step in $\calS'$.

\subsection{Decision submacrotiles}\label{sec:decision-sub}

\begin{figure}
    \centering
    \includegraphics[width=3.0in]{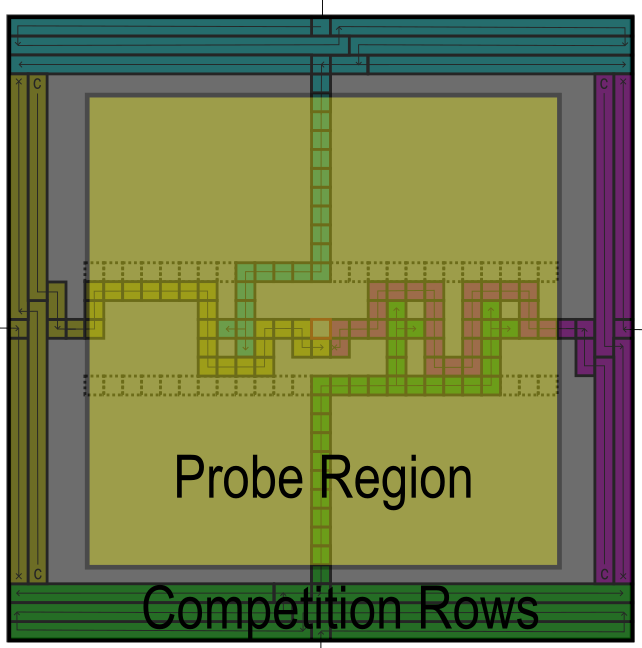}
    \caption{The layout of a decision submacrotile. The outer 4 rows are the competition rows, and the region inside of that is the probe region.\label{fig:decision-submacrotile}}
\end{figure}

Each decision submacrotile consists of 4 rows immediately inside its outer boundary called the \emph{competition rows}, and the remainder is called the \emph{probe region}. (An example can be seen in \cref{fig:decision-submacrotile}.)

\subsubsection{Probe region structure}\label{sec:probe-region}

The main function of the probe region is to provide (when necessary) subassemblies, called \emph{probes}, that grow from opposite directions (N and S, or E and W) that can cooperate (using adjacent cooperation) if the glues represented by those opposite directions could cooperate to place a tile in $\calT$. However, those probes must also be carefully designed so that, if that pair of opposite-side glues aren't sufficient to allow simulation of a placed tile, they don't prevent growth initiated by the other pair of sides that may need to pass between the probes. Note that any individual macrotile will grow either (1) no probes, (2) N and/or S probes for N-S glues, or (3) E and/or W probes for E-W glues. There will never be an attempt to grow probes from more than one pair of opposite sides (and the mechanism for determining this will be described in Section \ref{sec:decision-protocol}).

%We will describe the probes for the N-S pair of sides and note that those for the E-W pair of sides are created using the same procedure with all directions rotated by 90 degrees. 
Let $T_{N,S} \subseteq T$ be the subset of tile types from $T$ such that, for each $t \in T_{N,S}$, neither its N nor S glue is null or has strength $\tau$, but the sum of those two glues is $\ge \tau$. Essentially, $T_{N,S}$ is the set of tile types that could potentially bind across-the-gap using exactly their N and S pair of glues. Let $T_{E,W} \subseteq T$ be defined similarly but for the directions E and W.

\begin{figure}
    \centering
    \includegraphics[width=5.5in]{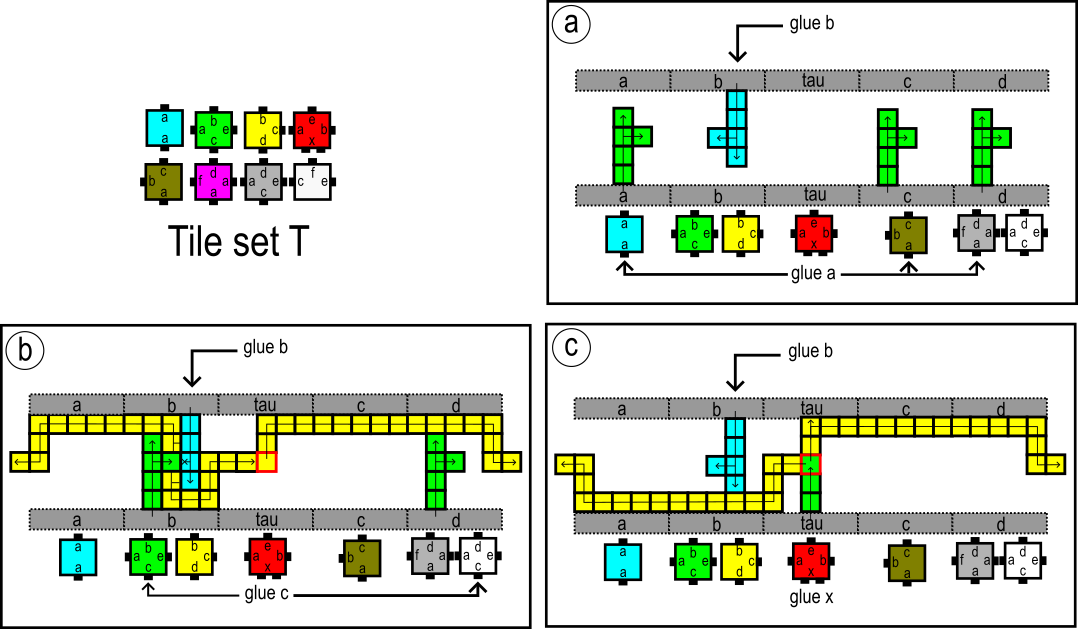}
    \caption{Examples of the probe regions for a pair of N and S input glues given the tile set $T$. The set $G_N$ for $T$ is $\{a,b,c,d\}$, since those are the glues such that each is capable of cooperating with a glue from the S to place a tile of $T$ (assuming $\tau=2$ for this example). There are probe locations for each, as well as a central region for $\tau$-strength glues. (a) The N input glue is $b$ and the S input glue is $a$. Since there is no tile type with $b$ on the N side and $a$ on the S side, no pair of probes share the same probe location. (b) The N input glue is $b$ and the S input glue is $c$. Since there are two tiles with $c$ on the S, one with $b$ on its N and one with $d$ on its N, S probes are grown in the $b$ and $d$ locations. This causes probes to meet in the probe location for $b$, resulting in a tile of the green tile type being simulated. Paths then grow to the E and W (with the E path also claiming the center POC) to attempt to grow the corresponding output glues for the green tile type. (c)The N input glue is $b$ and the S input glue is $x$, which is $\tau$-strength. Therefore, a probe grows to the central POC in the $\tau$ location. Since it wins the POC, it attempts to grow output to W, N, and E sides.\label{fig:probes}}
\end{figure}

Given a tile set $T$ and a side $d \in D$, let $G(T,d)$ be a function that returns the set of all glues that appear on the $d$ side of any tile type in $T$. Let $G_N = G(T_{N,S},N)$ and $G_E = G(T_{E,W},E)$ be the sets of southern and western glues on the tiles in the tile sets $T_{N,S}$ and $T_{E,W}$, respectively. Now let $g_{max} = \texttt{max}(|G_N|,|G_E|)$ and let $g_{max}' = g_{max} + (g_{max} \mod 2)$ (i.e. $g_{max}'$ is the smallest even number $\ge g_{max}$).  The width and height of the probe region are $p = 5(g_{max}'+1)$. The width and height of the decision submacrotile  are $w = p + 8$, which is guaranteed be an odd number. Note that since $w$ is dependent upon the maximum of the sizes of the sets of northern and eastern glues, $w = O(g'_{max}) = O(|T|)$.

The layout of the probe region grown for N-S probes can be seen in \cref{fig:probes} and at a high level works similarly to the probe region of \cite{IUSA}. However, since the tile set of system $\calS'$ is not a universal simulator but is instead specifically generated to simulate tiles of $\calS$, the probe region is much more compact that that of \cite{IUSA}. The probes in the probe region either grow from N and S, or they grow from E and W. The directions depend upon which input side grows into the decision submacrotile first. However, the general structure of N-S probes is the same as E-W probes, with the E-W probes being similar to the N-S probes but rotated $90$ degrees clockwise. Therefore, we will describe the structure of N-S probes and then mention the differences for the E-W probes afterward.

Given the subset of tiles from $T$ whose N and S glues are sufficient to cooperatively allow attachment, namely $T_{N,S}$, there will be a $5 \times 5$ \emph{probe location} for each glue in $G_N = G(T_{N,S},N)$. If the number of these glues is odd, an empty probe location is added to the E end. These regions are ordered by sorting the glue labels lexicographically, and a special probe region is inserted into the center to contain the central POC. Whenever input from the N represents some glue $g \in G_N$, a probe will grow from the N into the probe location for $g$. Examples can be seen in the top and bottom depictions of probe regions in \cref{fig:probes}.

Input from the S is handled a bit differently. If the S input represents some glue $g \in G(T_{N,S},S)$, then let $T' \subseteq T_{N_S}$ be the set of tile types from $T_{N,S}$ that have glue $g$ on their S and let $G' = G(T',N)$ be the set of N glues from all of those tiles. From the S, a probe is grown into the location for each glue in $G'$. In this way, a southern probe is grown into the location of every glue that could be represented by the N that the glue from the south could cooperate with to cause the macrotile to resolve to a tile in $T$. 
Furthermore, if the glues represented by N and S probes are not the N and S glues of a tile type in $T$, or those glues do not sum to strength $\ge \tau$, any probes that grow will be in different probe locations, leaving room for a path of tiles to pass through the entire probe region later, if needed.

If probes from the N and S grow into the same probe location, they each grow a vertical column of $4$ tiles, but the first to arrive also places one tile horizontally adjacent to the third tile of its column. The second to arrive is blocked from placing a tile there by the first, but then paths are cooperatively initiated above and below that middle tile. There is (adjacent-side) cooperation between tiles of both probes to initiate those two paths so that both input glues are known to both, and the tile type in $T$ that the macrotile resolves to, and thus also the output glues for the E and W directions, is then fixed and encoded in each corresponding path. The path above carries the information about the output glue to the W, and the path below carries the information for the output glue to the E. Note that if there is nondeterminism in $T$, meaning that the N and S input glues could cooperate to place tiles of multiple types, then the type of the S probe will be nondeterministically selected from a set representing each so that the random ``choice'' is made by the placement of the first tile of the S probe. In this way, all options can be simulated and it is guaranteed that the random selection is consistent across both the upper and lower paths. Further note that the ordering of arrival of the probes is clearly determined by the placement of the center tile between probes and thus which tiles cooperate to form the paths, so that timing information can be encoded into the paths. The utility of this information will be explained later.

\subsubsection{Decision submacrotile growth protocol}\label{sec:decision-protocol}

The growth of the portions of the decision submacrotile are carefully designed to follow a protocol that ensures that, as soon as a set of input paths arrive that represent glues sufficient to allow attachment of a tile in system $\calS$ (and thus the resolution of the macrotile in system $\calS'$), they are able to cooperate (if necessary), grow to the central POC, and grow paths to any I/O submacrotiles that may need to grow output paths.

We now present the protocol that governs the growth of the decision submacrotile.

\begin{enumerate}
    \item Any path that enters or exits a decision submacrotile does so through the middle location of one of the 4 sides. (Recall that the side lengths are guaranteed to be an odd number, so there is always a unique center location.)

    \item Let $d \in D$ be the side of a decision submacrotile into which an input path grows. Once the first tile has been placed inside of side $d$, it initiates a path which first grows to the corner on the counterclockwise side, and then across to the corner on the clockwise side.

    \item At each corner, if the row is able to grow all the way into that corner and place a tile in the corner location we way that side $d$ \emph{Wins} (W) that corner. If, instead, some other side has already grown a path that occupies the corner, the growth to the corner is blocked and we say that side $d$ \emph{Loses} (L) that corner.

\begin{figure}
    \centering
    \includegraphics[width=3.5in]{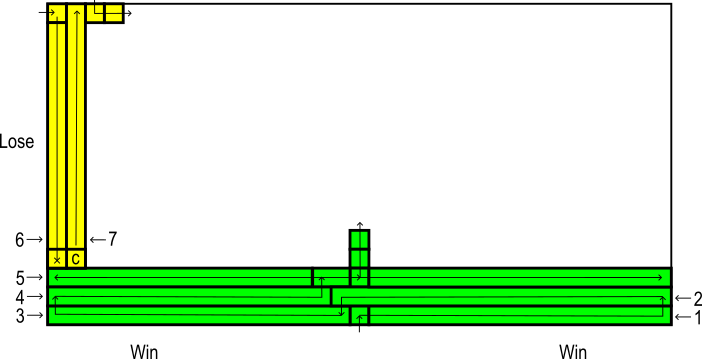}
    \caption{An example of the growth of competition rows, showing the full rows grown by the S side (but not its probes) and the bottom half of those grown for the west side. In this example, the S side grew first and Won both corners. When the west side arrived later, its competition rows were able to cooperate with a tile of the S side's competition rows to allow completion of their second row while also propagating information about the S glue and the fact that the S side Won both corners. (The tile placed by cooperation is labeled with 'C'.)\label{fig:competition-rows}}
\end{figure}

    \item Regardless of whether the paths of side $d$ Win or Lose either corner, they grow back to the center of side $d$.
    \begin{enumerate}
        \item W: For a Winning corner, the path is simply a hard-coded path where each tile attaches to the previous in the path with strength $\tau$.

        \item L: For a Losing corner, there are guaranteed to be $3$ rows of tiles that have grown inside of that corner by the adjacent side. (The reason for that guarantee is the difference in input delay path lengths that different sides must have.) The final tile placed of the first (blocked) row grown by side $d$ will have a glue on its side facing toward the center of the decision submacrotile, and that will be able to cooperatively place a tile with the diagonally adjacent tile placed by the other side. This tile will initiate the second row on side $d$, which grows back toward the center of side $d$. \cref{fig:competition-rows} shows an example of the completed competition rows for the S side (which, in this case, grew into the decision submacrotile before any other side). The labels `1` through `5' denote that those completed in that order. When the competition rows for the west side begin growth, the first one (labeled `6' in \cref{fig:competition-rows} because it is the 6th overall to grow) has its final 3 tiles of growth blocked by the 3rd row of the S side. The tile in the location labeled `C' attaches via cooperation between the adjacent green and yellow tiles. The tile that attaches encodes the information about the glue on the S (as well as the fact that the S side Won both corners), and initiates growth of the second west competition row (labeled `7') back to the center of the west side.
    \end{enumerate}

    \item The second competition rows from each side grow back to the center, with the growth from the second corner then growing up into the third competition row and initiating growth toward the corner of any side that was Won, plus growth toward the center of the decision submacrotile.

    \item For any corner that is Lost, on that half of the decision submacrotile side $d$ only two competition rows are grown.
    
    \item For any corner that is Won, a third competition row is grown. (See the competition rows in \cref{fig:decision1} for examples.) That third row grows from the center to that corner and, since both corners have now been visited, it encodes information about:
    \begin{enumerate}
        \item the input glue on side $d$,

        \item the Win and Lost status of both corners, and

        \item if the other corner Lost, the glue and Win/Loss information of the side $d'$ that won that corner instead of $d$. This will be available because the Losing competition row is guaranteed to grow after the competition rows of $d'$ have completed, and since the corner is Won by $d'$, then the third competition row of $d'$ will include the glue information and Win/Loss status of its corners. (For example, in \cref{fig:decision1}, the third row of the S side will encode information about the S input glue and the fact that the S side Won both corners. That information will be acquired by the second W row that grows back up to the center of the W side, and then transferred to the top corner of the W side by its third row, which will also contain information about the S input glue and the fact that the S side won both of its corners. The same process occurs if growth for an input glue occurs on the E side. Then, in the event that a N input glue appears later, when the N grows its competition rows they acquire information about all 4 input glues, which allows a decision to be made if there is any set of input glues that could allow the macrotile to resolve to a tile of $T$.)
    \end{enumerate}

    \item Once rows have grown toward both corners of side $d$ then returned to the middle, they can initiate further growth toward the center of the decision submacrotile and encode information that determines how that growth will proceed. This information is dependent upon the Won/Lost status of the two corners. The following list contains all possible options. Keep in mind that once a side places its first tile into the decision submacrotiles, all growth inside of the decision submacrotile initiated by that side will complete before any new growth is initiated by another side. (Again, the reason for this is the differences in input delays that is guaranteed between any set of input sides.) As soon as growth from a side contains information about adjacent glue(s) that is sufficient to allow for the simulation of tile attachment of some type $t \in T$, growth from that side will claim the central POC then grow to the necessary I/O submacrotiles (for the directions in which output glues need to be represented), all of which would complete before growth from any new side would begin in the decision submacrotile. In the following case analysis, we consider the ``worst case'' where inputs from all four sides would need to arrive to determine any simulated tile placement. This allows us to demonstrate how information can be correctly gathered from all sides without blocking any necessary output paths up until the point where a fourth input side is required (at which point no output growth is needed).
    
    The cases for the two corners of side $d$ are:
    \begin{enumerate}
        \item Win-Win: if both corners were Won, the side grows probes following the protocol specific to side $d$, previously described. (See the S side of \cref{fig:decision1} for an example.)

        \item Win-Lose: let $d'$ be the side that $d$ Lost a corner to. If $d'$ had competition rows that were:

        \begin{figure}
            \centering
            \includegraphics[width=3.0in]{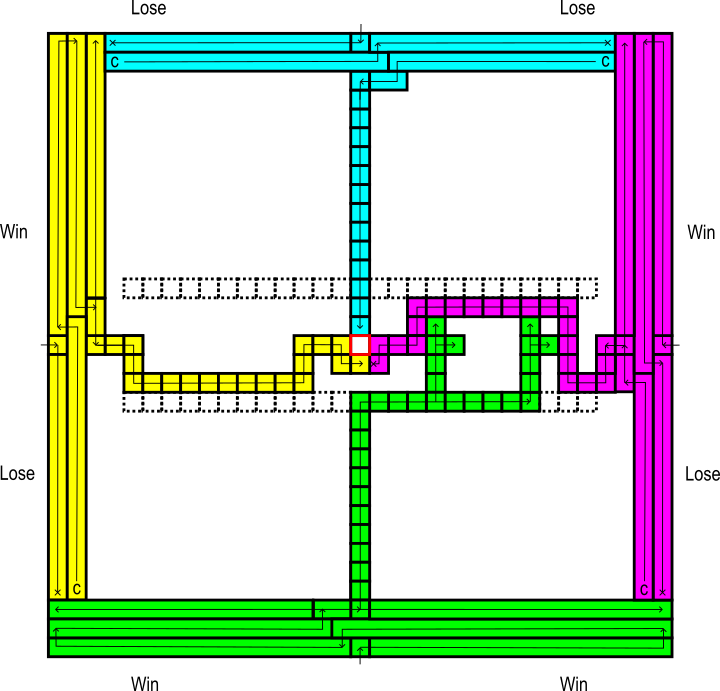}
            \caption{An example of growth within a decision submacrotile showing a particular ordering of inputs from all 4 sides. The S side is first and Win-Win and thus grows probes, then E and W sides arrive and grow toward the center, dodging the probes of the S, but there is still not enough glue cooperation to allow simulation of tile attachment. Finally, the N side arrives. If tile attachment could then be simulated, a final tile would be placed in the central POC.\label{fig:decision1}}
        \end{figure}
            
        \begin{enumerate}
            \item Win-Win: then side $d'$ will have grown probes. Since side $d$ gains the information (via the cooperation that allows it to finish its second competition row) about the glue on side $d'$ which is encoded by those probes, it grows a path toward the POC that dodges those probes. (Examples can be seen in \cref{fig:decision1}, where the E and W sides ``know'' that the S side Won both corners and will therefore grow probes, but they also know what the S input glue is so they grow paths that are designed to dodge those probes. With respect to this case, in \cref{fig:decision1}, side $d$ is E and side $d'$ is S.) 
            For each of the E and W input glues, if they are $\tau$-strength, or their combination with the S input glue is sufficient to simulate the attachment of some tile type $t \in T$ then the path grows into the central POC and the macrotile then resolves to $t$ under $R$, and then output paths will be grown to the other sides. Otherwise, the path grows immediately adjacent to the central POC, then grows S one tile, then grows below the POC and stops. (For example, the yellow path in \cref{fig:decision1}.) If it is the second such path to attempt to do this (e.g. the magenta path in \cref{fig:decision1}) it will be blocked from placing its final tile by the first path that arrived. 
            However, those two paths will now have information about at least the E, S, and W side input glues and if any set of them could cooperate to simulate attachment of a tile type $t \in T$, then a tile can attach using adjacent cooperativity with tiles of both paths into the central POC. This makes the macrotile resolve to $t$ under $R$ and then an output path is grown to the N. Note, very importantly (for reasons to be discussed later), there are different tile types that attach into the POC based on which of the E and W paths arrived first and ``won'' the location below the POC (since the sides cooperating to place the tile into the POC will be different in each case). Because of this, the output path ``knows'' which input path arrived last.

            \begin{figure}
                \centering
                \includegraphics[width=3.0in]{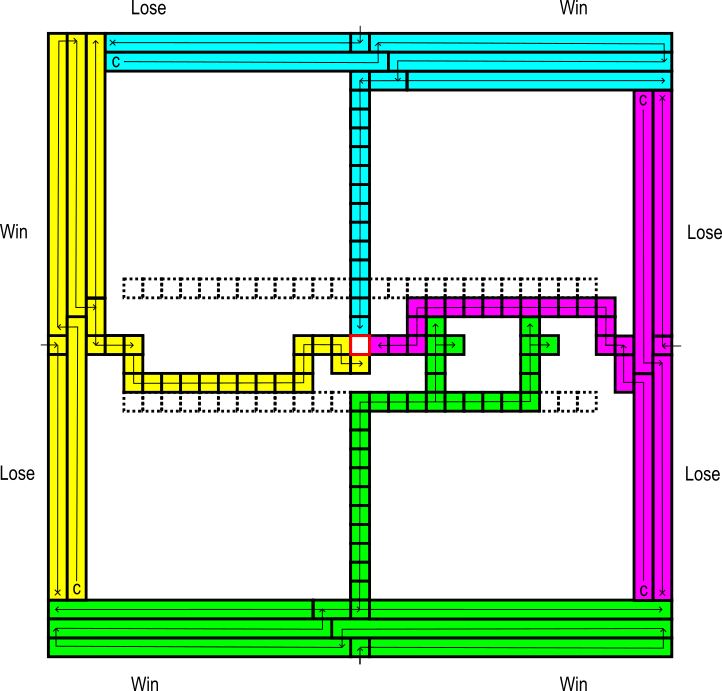}
                \caption{An example of growth within a decision submacrotile showing a particular ordering of inputs from all 4 sides. The ordering of side arrivals is S, W, N, E. The S is the only Win-Win and thus the only to grow probes.\label{fig:decision2}}
            \end{figure}
            
            \item Win-Lose:\label{case:dec2} then side $d'$ will have passed along info about the side that beat it at one corner (which we'll refer to as $d''$), so side $d$ has information about the input glues of 3 sides. (With respect to this case, in \cref{fig:decision2}, side $d$ is N, $d'$ is W, and $d''$ is S.) They are its own glue, that of side $d'$ which is adjacent to it, and that of side $d''$ which is opposite it. Since $d'$ Lost a corner, it will not grow probes. Also, since $d''$ beat $d'$, which beat $d$, and $d$ Won the remaining side, $d''$ must have also Won both of its sides and will therefore have grown probes. However, side $d$ now already has the information about the glue on side $d''$, and if any combination of the glues of sides $d$, $d'$, and $d''$ could cooperate to allow simulation of a tile attachment, side $d$ will grow a path directly to the central POC to ``claim'' the macrotile, and then a path to the remaining output side. If not, that path will stop one position before, and adjacent to, the central POC. (See the aqua path from the N side of \cref{fig:decision2} for an example.)

        \end{enumerate}

        \item Lose-Lose: Here there are 3 possible scenarios for the sides, $d'$ and $d''$, to which side $d$ Lost.
            \begin{enumerate}
                \item Both were Win-Win. (With respect to this case, in \cref{fig:decision3}, side $d$ is E and sides $d'$ and $d''$ are N and S.) In this case, both $d'$ and $d''$ grew probes, but side $d$ knows what types they are and can grow around them. As discussed previously, the path grown from side $d$ will then either claim the central POC (if the glues represented by sides $d$, $d'$, and $d''$ suffice to allow for the simulation of attachment of some tile type $t \in T$) or grow to a location adjacent to that, if not.

                    \begin{figure}
                        \centering
                        \includegraphics[width=3.0in]{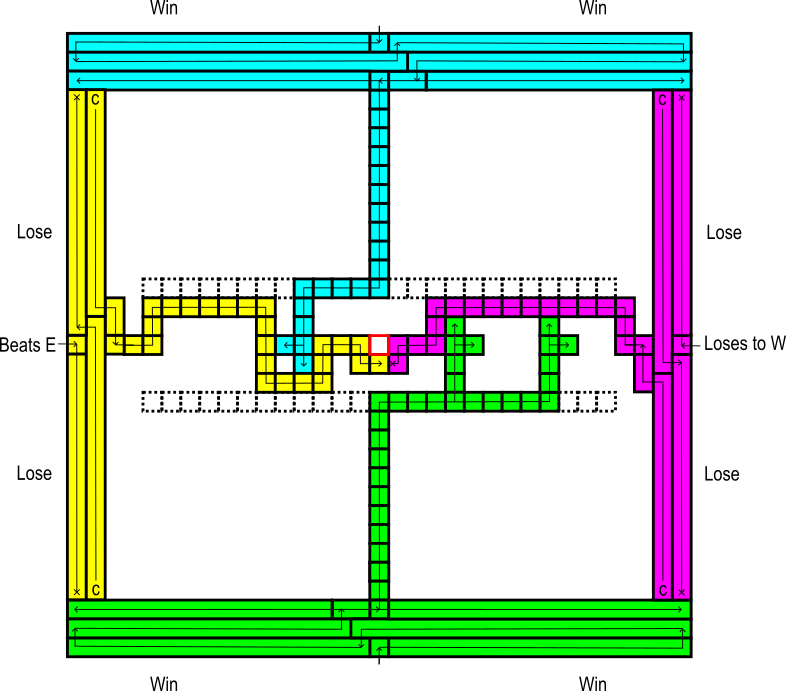}
                        \caption{An example of growth within a decision submacrotile, showing a particular ordering of inputs from all 4 sides. In this case, N and S are Win-Win and thus each grow probes. The E and W sides which arrive later have information about both sides so are able to grow paths that dodge those probes.\label{fig:decision3}}
                    \end{figure}

                \item Side $d'$ was Win-Win and side $d''$ was Win-Lose. (With respect to this case, in \cref{fig:decision2}, side $d$ is E, $d'$ is S, and $d''$ is N.) This case was discussed above in \cref{case:dec2}, and essentially side $d$ grows a path into, or adjacent to, the POC (depending on whether the glues represented by sides $d$, $d'$, and $d''$ are sufficient to simulate a tile attachment) that dodges the probes from side $d'$.

                \item Both were Win-Lose. (With respect to this case, in \cref{fig:decision1}, side $d$ is N, and side $d'$ and $d''$ are E and W.) In this case, side $d$ will have gathered information about all $4$ input glues because they all arrived before it. It therefore can terminate growth if no subset of those glues is sufficient to allow simulation of a tile attachment, or otherwise grow a path directly to the central POC (since it knows that each of its neighbors Lost a corner and therefore neither grew probes) and claim it. At this point, the tile type $t \in T$ that is simulated is fixed and no further growth needs to occur because no representations of output glues need to be grown to the exterior of the macrotile (because all sides already had incident input glue representations).
            \end{enumerate}
    \end{enumerate}
\end{enumerate}

The correctness of synchronous simulation requires that the growth of macrotiles is synchronized and thus that the output paths all complete at the same time. The ability of each path to be unique for the necessary output delay requires information about the path from the side, which we'll refer to as $d^*$, that was the last to provide input into the decision submacrotile to allow the macrotile to resolve. It needs to be unique for the random delay value selected for side $d^*$, as well as for the length of the path that grew through the decision macrotile from $d^*$. Note that it is important that input sides could have grown into the decision submacrotile during different macrosteps. Because of this, it isn't necessarily the input side with the largest input delay that arrives last. For this reason, it is important that in any location in which a tile attaches cooperatively to tiles that grew from different input sides, there are different subsets of sides used for that attachment depending on which arrived first, and thus a unique tile type attaches for each. This allows the cooperatively attaching tile to correctly encode the information about the path that arrived last (even if it had a smaller input delay but arrived in a later macrostep). Examples of such locations are those on the two sides of cooperating probes (see \cref{fig:probes}), as well as the central POC in some of the cases with 3 input sides.

\subsection{Seed structure}\label{sec:adj-seed}

We design a special set of tile types $T'_\sigma \subset T'$ to make up the seed $\sigma'$ so that the simulation done by $\calS'$ can be seed-independent. Therefore, we explain the full process for creating the tile types in $T'_\sigma$ and the function that outputs the seed $\sigma'$ such that for any arbitrary $\sigma$ of $S = (T,\sigma,\tau)$, $R^*(\sigma') = \sigma$ and $\calS' = (T',\sigma',2)$ correctly simulates $\calS$ using fixed representation function $R$, spatial scale factor $c$, and temporal scale factor $s$ independent of $\sigma$.

However, in order to ensure that the tile types in $T'_\sigma$ are constrained to the seed structure and cannot also attach later in the assembly process, we use a technique called \emph{I/O marking} which we first describe.

\subsubsection{I/O marking}\label{para:adj-io-marking}

The only tiles that form bonds across macrotile boundaries are those in the center locations of the exterior boundaries of I/O submacrotiles, with those glues used to propagate input glue information into a macrotile, or output glue information out of a macrotile. These glues are always strength-2 glues (regardless of the strength of the glue from a tile in $T$ that they are representing). Therefore, it is important that when a glue exposed on the exterior of a macrotile is propagating information that is being output from the macrotile, that the tile that attaches to it (in the adjacent location of the neighboring macrotile) initiates input growth. If, for instance, some glue $g$ is exposed on the E side of a tile type in $T$ and the W side of the same or another type in $T$, then the glues in $T'$ that simulate those must be carefully designed. If the final tiles on the output paths that grow both E and W expose the same glue to represent $g$, then a macrotile could grow that has that glue on the easternmost edge of its easternmost (final) output tile. The desired behavior would be for the tile that attaches to it in the neighboring macrotile to begin growth of an input path. However, if the same glue was used for E and W output paths, the last tile of an output path that grows W could attach. This would lead, incorrectly, to reverse growth of the output path into the macrotile, rather than forward growth of the input path. 

Another problem that must be avoided is that the special tiles in $T'_\sigma$ that are used to encode seed tiles must not be able to attach in locations outside of the pre-formed macrotiles representing the seed. This is because they do not perform any of the necessary logic of simulating inputs, computing valid tile attachments, or synchronizing timing. Therefore, it must not be possible for the tiles that finish the output paths from macrotiles to allow tiles in $T'_\sigma$ to attach to them.

Both problems can be solved by \emph{I/O marking}. This is the process of appending an extra, ``directional,'' symbol to the end of glue labels. The symbols used for input in the directions N,E,S,W are `v', `$<$', `\textasciicircum', `$>$', respectively. The symbols used for output in the directions N,E,S,W, are `\textasciicircum', `$>$', `v', `$<$', respectively. By appending these symbols to the labels of glues on the corresponding sides (and noting that a glue on side $d$ of a tile can only ever be adjacent to a glue on the side opposite $d$ on another tile), the only glues that can ever match are those for opposite directions with opposite I/O markings. (For example, output glue $g$ on the E side of a tile would be I/O marked `$g>$', which would match with I/O marked input glue `$g>$' on the W side of a tile.) We define the function $\texttt{MARK\_IN}(g,d)$ which takes as input an unmarked glue label $g$ and direction $d$, and returns the I/O marked version of $g$ for input, and $\texttt{MARK\_OUT}(g,d)$ similarly but for output marking. Note that all tiles at the edge of macrotiles that begin input paths will have exterior facing glues that were marked using $\texttt{MARK\_IN}$, and those that finish output paths (including those of the seed macrotiles) will have exterior facing glues that were marked using $\texttt{MARK\_OUT}$. In this way, output paths that terminate in one macrotile only ever initiate input paths into another, and since seed macrotiles only have output paths, their tiles can never attach outside of the seed.

\subsubsection{Seed tile types}\label{para:adj-seed-tile-types}

For each tile type $t \in T$, a set of tiles is created to be used to form $T'_\sigma$ as follows. (Examples can be seen in \cref{fig:seed-tiles}.) Note that all glues except those marked using $\texttt{MARK\_OUT}$ are only used by tile types in $T'_\sigma$ and are found on no other tiles in $T'$. They are also all strength-2, even when representing glues that are strength-1 in $T$.

\begin{enumerate}
    \item A \emph{central} tile type $c_t$ will be created that is recognized by $R$ and causes a macrotile with $c_t$ at its center to be mapped to $t$. For each side $d$ of $t$ which has a non-null glue $g$, the $d$ side of $c_t$ is given a glue $g_\sigma$.

    \item For each side $d \in \{N,S\}$ of $t$ which has a non-null glue $g$, create the following tile types if they haven't already been created (which could have happened if glue $g$ was already seen on a side $d \in \{N,S\}$ of another tile type):
    \begin{enumerate}
        \item \emph{vertical} tile type $v_g$, which has glue $g_\sigma$ on its N and S sides

        \item tile type $g_N$, which has glue $g_\sigma$ on its S and glue $\texttt{MARK\_OUT}(g,N)$ on its N

        \item tile type $g_S$, which has glue $g_\sigma$ on its N and glue $\texttt{MARK\_OUT}(g,S)$ on its S

        \item N \emph{connector} tile type $g_{N\_}$, which has glue $g_\sigma$ on its S side and glue $\_$ on its N side

        \item S \emph{connector} tile type $g_{S\_}$, which has glue $g_\sigma$ on its N side and glue $\_$ on its S side
    \end{enumerate}

    \item For each side $d \in \{E,W\}$ of $t$ which has a non-null glue $g$, create the following tile types if they haven't already been created (which could have happened if glue $g$ was already seen on a side $d \in \{E,W\}$ of another tile type):
    \begin{enumerate}
        \item \emph{horizontal} tile type $h_g$, which has glue $g_\sigma$ on its E and W sides

        \item tile type $g_E$, which has glue $g_\sigma$ on its W and glue $\texttt{MARK\_OUT}(g,E)$ on its E

        \item tile type $g_W$, which has glue $g_\sigma$ on its E and glue $\texttt{MARK\_OUT}(g,W)$ on its W

        \item E \emph{connector} tile type $g_{E\_}$, which has glue $g_\sigma$ on its W side and glue $\_$ on its E side

        \item W \emph{connector} tile type $g_{W\_}$, which has glue $g_\sigma$ on its E side and glue $\_$ on its W side
    \end{enumerate}
\end{enumerate}

Finally, the following two tile types are created and added to $T'_\sigma$. The \emph{vertical connector} tile type $v_\_$, which has glue $\_$ on its N and S sides, and the \emph{horizontal connector} tile type $h_\_$, which has glue $\_$ on its E and W sides. (Note that the connector glue $\_$ is unique to tiles in $T'_\sigma$ and is not used in any other tile types in $T'$.)

\begin{figure}
    \centering
    \includegraphics[width=5.5in]{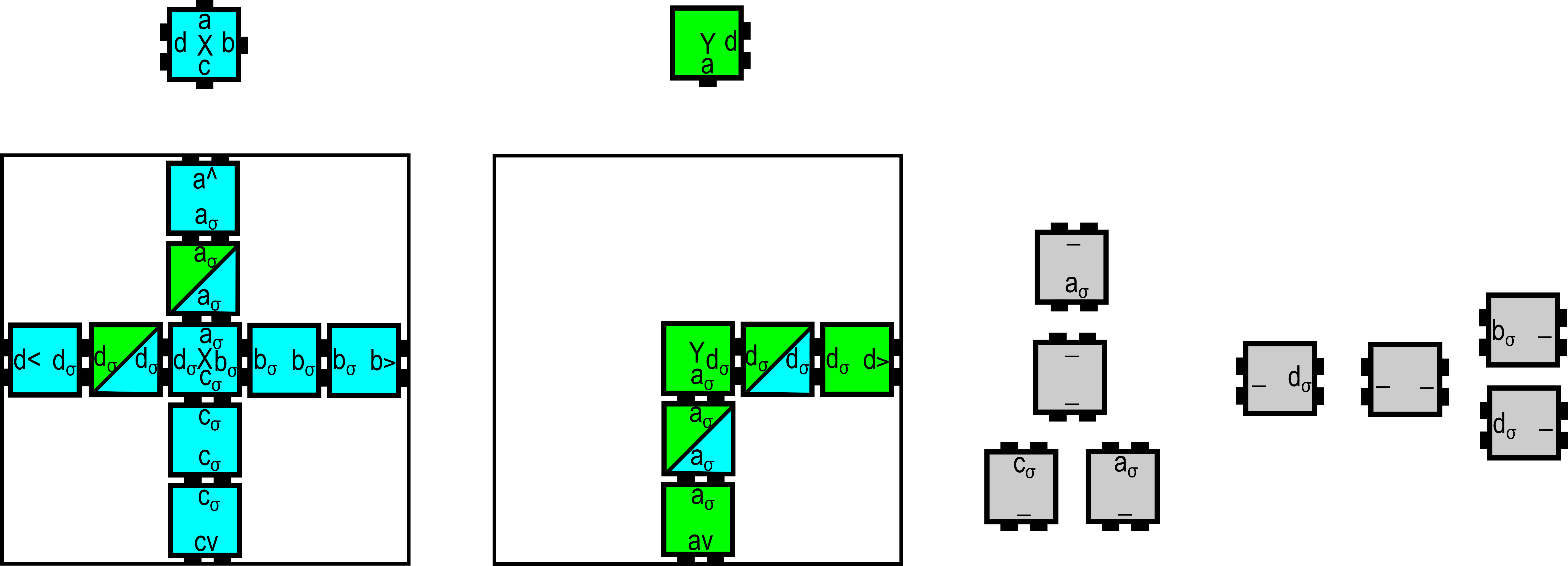}
    \caption{(Top) An example of two tile types $t_1, t_2 \in T$, labeled `X' (aqua) and `Y' (green), respectively. (Bottom) The tile types of $T'_\sigma \subset T'$ created to allow any seed over tiles of $T$ to be created. Below each of $t_1$ and $t_2$ are the tile types created to allow macrotiles representing each to be formed. (Multi-colored tile types are shared by both.) The grey tiles to the right are the tile types created to form connected paths between macrotiles when multi-tile seeds are formed.\label{fig:seed-tiles}}
\end{figure}

\subsubsection{Seed creation} \label{para:seed-creation}

The algorithm which takes as input the seed $\sigma$ in $\calS$ and returns the seed $\sigma'$ for $\calS'$ is the following:

\begin{enumerate}
    \item For each tile, given its tile type $t \in T$ and location $(x,y) \in \mathbb{Z}^2$, such that $(t,(x,y)) \in \sigma$:
    \begin{enumerate}
        \item The locations of the macrotile representing the tile will be $M((x,y)) = \{(x',y')\mid cx \le x' < c(x+1)$ and $cy \le y' < c(y+1)\}$.

        \item Let $m = (c-1)/2$ be the distance from a location on the boundary of a macrotile to the center location. The macrotile's center location $l_c = (cx+m, cy+m)$. A tile of type $c_t$ is placed in location $l_c$.

        \item For each side $d$ of $t$ that has a non-null glue $g$, and that glue does not form a bond with a neighboring tile in $\sigma$:

        \begin{enumerate}
            \item If $d \in \{N,S\}$ then the vertical tile $v_g$ is placed on the $d$ side of the center location, with additional copies of it placed in direction $d$ until stopping before the final location in the macrotile. Else, if $d \in \{E,W\}$ then the horizontal tile $h_g$ is placed similarly for direction $d$.

            \item A tile of type $g_d$ is placed in the final location of direction $d$, at the edge of the macrotile.
        \end{enumerate}

        \item For each side $d$ of $t$ that has a non-null glue, and that glue forms a bond with a neighboring tile in $\sigma$:

        \begin{enumerate}
            \item A tile of the connector type for glue $g$, $g_d\_$ is placed on side $d$ of the center location.

            \item If $d \in \{N,S\}$, copies of the vertical connector tile type $v_\_$ are placed in all remaining locations in the row in direction $d$ to the edge of the macrotile. Else if $d \in \{E,W\}$, copies of the horizontal connector tile type $h_\_$ and placed similarly for direction $d$.
        \end{enumerate}
    \end{enumerate}
\end{enumerate}

Thus, each tile of $\sigma$ is represented by a macrotile of the form shown in \cref{fig:adj-seed}, which consists of a tile in the central POC and a path of tiles to each of the macrotile boundaries on sides where glues are exposed or form bonds within the seed, ending in the center of the I/O submacrotiles. For each direction in which a tile in $\sigma$ has an exposed non-null glue, that side has a strength-2 glue exposed toward the neighboring macrotile that will initiate growth of a corresponding input path into that macrotile. For each direction in which a tile of $\sigma$ forms a bond with a neighboring tile in $\sigma$, there is a path connecting the corresponding macrotiles in $\sigma'$, so $\sigma'$ is a stable assembly that correctly represents $\sigma$ under $R$, using scale factor $c$, and will initiate valid simulation.

Additionally, given any arbitrary $\tau$-stable seed $\sigma$ over $T$, the tile types in $T'_\sigma \subset T'$ are sufficient to allow for the creation of $\sigma'$ such that $R^*(\sigma') = \sigma$. The algorithm given above is clearly computable, so $(T',2)$ seed-independently simulates $(T,\tau)$ under representation function $R$ and scale factor $c$.

\begin{figure}
    \centering
    \includegraphics[width=4.2in]{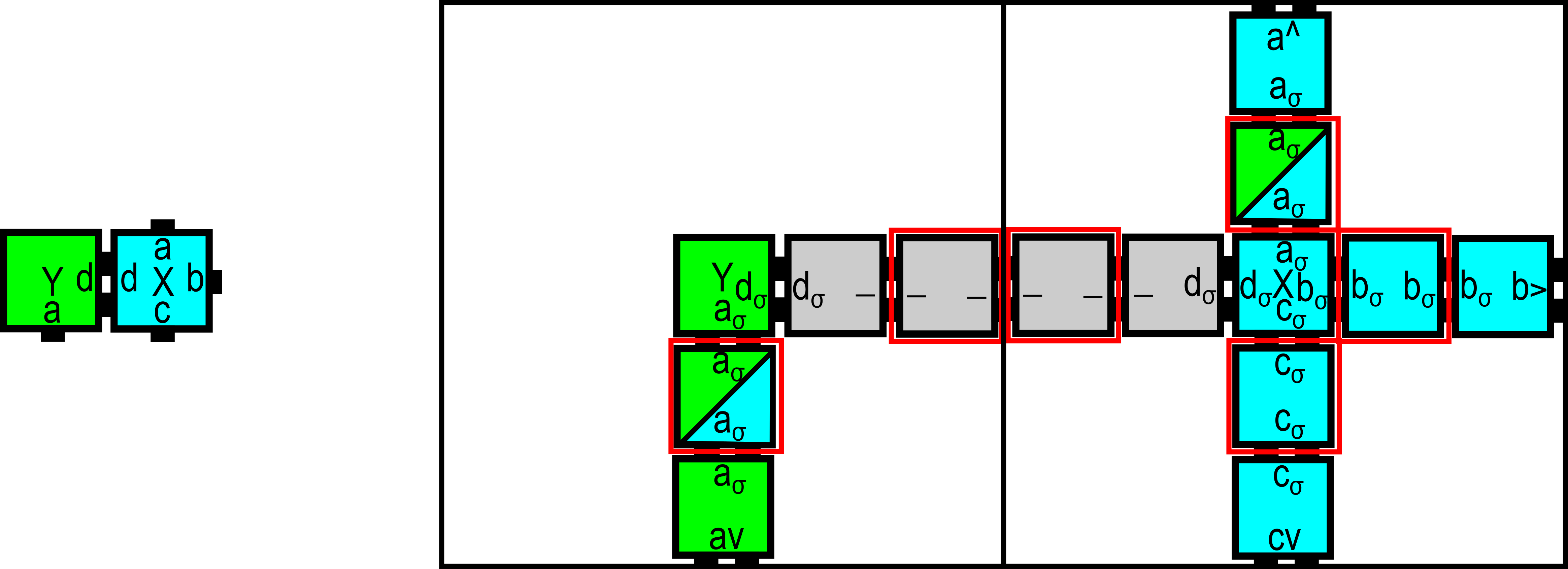}
    \caption{(Left) An example seed $\sigma$ of $\calS$ containing two tiles. (Right) The macrotiles in $\calS'$ that form $\sigma'$, where $R(\sigma') = \sigma$ at spatial scale factor $5$ (which is unrealistically small but sufficient to display the design for any scale factor, since the only change would be the repetition of tiles of the types outlined in red).\label{fig:adj-seed}}
\end{figure}

\subsection{Correctness of the construction for \cref{lem:adj-only}}

Here we prove how the construction for \cref{lem:adj-only} proves the lemma. Recall that we begin with an arbitrary syncTAM system $\calS = (T, \sigma, \tau)$. Throughout \cref{sec:adj-proof} we have discussed the construction of system $\calS' = (T', \sigma', 2)$ and how it self-assembles. We now review why $\calS'$:

\begin{enumerate}
    \item Synchronously simulates $\calS$

    \item Has spatial and temporal scale factors that are $O(|T|)$

    \item Is locally adjacent-only

    \item Is staggered

    \item Is seed-independent
\end{enumerate}

We will prove that under $R$ and with spatial scale factor $c$ and temporal scale factor $s$, $\calS'$ synchronously simulates $\calS$ by showing that (1) $\calS$ synchronously follows $\calS'$, and (2) $\calS'$ models $\calS$, and that these imply that $\calS$ and $\calS'$ have equivalent productions. %We start from the seed of $\calS$, $\sigma$, and the seed of $\calS'$, $\sigma'$. As shown in \cref{sec:adj-seed}, $R(\sigma') = \sigma$.

%showing that for any arbitrary assembly sequence $\vec{\alpha}$ of $\calS$, there is a valid assembly sequence $\vec{\alpha'}$ of $\calS'$ that synchronously simulates it at spatial scale factor $c$ and temporal scale factor $s$ under representation function $R$, and no valid assembly sequence in $\calS'$ creates an invalid assembly. Let $i$ refer to an index of $\vec{\alpha}$ and $j$ an index of $\vec{\alpha'}$, with $\alpha_i$ being the $i$th assembly in $\vec{\alpha}$ and $\alpha'_j$ the $j$th assembly in $\vec{\alpha'}$. The induction hypothesis is that, for $j \mod s == 0$ and $i = j/s$, if $R(\alpha'_j) = \alpha_i$, then there exists $\vec{\alpha'}$ such that at step $j+s$ of $\vec{\alpha'}$, $R(\alpha'_{j+s}) = \alpha_{i+1}$. Furthermore, there exists no valid assembly sequence $\vec{\alpha''}$ in $\calS$ such that 

%The base case is for $j = i = 0$, i.e. $\alpha_i = \sigma$ and $\alpha'_j = \sigma'$. As shown in \cref{sec:adj-seed}, $R(\alpha'_j) = \alpha$.

First, we show that $\calS$ synchronously follows $\calS'$ under $R$ with temporal scale factor $s$. Let $\vec{\beta} = \beta_0, \beta_1, \beta_2, \ldots$ be an arbitrary assembly sequence in $\calS'$. We will prove the existence of a valid assembly sequence $\vec{\alpha} = \alpha_0, \alpha_1, \alpha_2, \ldots$ in $\calS$, where for each $i'$ such that $i' \mod s == 0$, $R^*(i') = \alpha_i$ using induction. Our inductive hypothesis is that for arbitrary $i'$ such that $i' \mod s == 0$, $\alpha_i = R^*(\beta_{i'})$. For the base case, by construction (as shown in Section \ref{sec:adj-seed}), for $\beta_0 = \sigma'$, $R^*(\beta_0) = \sigma = \alpha_0$. 
For the inductive step, we begin with arbitrary $i'$ such that $i' \mod s == 0$ and $\beta_{i'}$ such that $R^*(\beta_{i'}) = \alpha_i$. As explained previously, the subsequent sequence of $s$ assembly steps in $\calS'$ will be the $(i+1)$th macrostep in $\calS'$ (where $i = i' / s$) and simulate the $(i+1)$th step in $\calS$. This is because at the beginning of each macrostep in $\calS'$, for any macrotile location $M(l)$ that had an output glue that appeared adjacent to $M(l)$ during the previous macrostep, all such output glues will arrive as input glues to the decision submacrotile of $M(l)$, one at a time, until the first such set that is sufficient to allow for simulation of the attachment of a tile of some type $t \in T$. At this point the macrotile will grow so that it resolves to a tile of type $t$ (by placing a corresponding tile in the central POC) and the output glue representations will grow to the outside edges of all output sides of $M(l)$ using output delays so that they all complete during the same assembly step. 
This will complete at exactly the final assembly step of macrostep $i+1$ (as shown in Section \ref{sec:IO}). Thus for $i'' = (i+1) \cdot s$, $\alpha_{i+1} = R^*(\beta_{i''})$, which must be a valid assembly such that $\alpha_i \rightarrow^\calS_1 \alpha_{i+1}$ because if and only if a macrotile in $\calS'$ changed from mapping to empty space to mapping to a tile in $T$, a tile of the same type could have attached in the corresponding location in $\calS$. 
Therefore, $\vec{\alpha}$ is a valid assembly sequence in $\calS$ and $\calS$ synchronously follows $\calS'$ under $R$ with temporal scale factor $s$. 

We now show that $\calS'$ models $\calS$ under $R$. We do this by showing how, for every producible assembly $\alpha \in \prodasm{S}$, there exists a set of producible assemblies $\Pi \subset \prodasm{S'}$ such that each maps under $R$ to $\alpha$ and for every assembly $\beta$ producible from $\alpha$ in $\calS$ in one step, that (1) every assembly in $\Pi$ can grow into an assembly in $\calS'$ representing any such $\beta$, and (2) every assembly sequence in $\calS'$ from some $\alpha'$ that represents $\alpha$ to some $\beta'$ that represents $\beta$ includes an assembly in $\Pi$.

For this we also use induction. 
Our induction hypothesis is that, given arbitrary producible assembly $\alpha_i \in \prodasm{S}$ there exists a set of producible assemblies $\Pi_i \subset \prodasm{S'}$ that satisfies the requirements stated above. The base case is again the seed assemblies $\sigma = \alpha_0$ in $\calS$ and $\sigma'$ in $\calS'$. In this case, the only element of $\Pi_0$ is $\sigma'$. By construction (as shown in Section \ref{sec:adj-seed}), $R^*(\sigma') = \sigma = \alpha_0$. 
Trivially, since $\sigma'$ is the only element of $\Pi_0$, the second condition holds (an assembly from $\Pi_0$ is included in every assembly sequence from $\sigma'$). 
For the first condition (i.e. for each assembly that $\alpha_0$ can grow into in one step, the assembly in $\Pi_0$ can grow into at least one assembly that represents it) we can simply recall how the construction works. For every empty location $l$ in $\calS$ toward which a non-null glue of a seed tile is facing, input growth will enter the corresponding macrotile location $M(l)$ in $\calS'$. If there are multiple such glues adjacent to a location, for every possible ordering of those glues there is an assembly sequence in $\calS'$ that selects random numbers for the input delays so that the glues arrive at the decision submacrotile in that order. As soon as the first set arrives that are sufficient to simulate placement of a tile of type $t \in T$, the macrotile will resolve to such a tile and then begin growth of outputs that are delayed so that they will all complete growth to the output sides of the macrotile at the exact same step, which in this case will be the $s$th step (where $s$ is the temporal scale factor of the simulation). Because of this, for every location $l$, for every tile type $t$ that could be placed there at this step in $\calS$, there is an assembly sequence beginning from the sole assembly $\sigma'$ contained in $\Pi_0$ that causes the corresponding macrotile to resolve to that tile type, and therefore the assembly $\sigma' \in \Pi_0$ can grow into a representation of anything that $\sigma$ can grow into and the base case is shown.

For the induction step, given arbitrary producible assembly $\alpha_i \in \prodasm{S}$ and its corresponding set $\Pi_i$, we will show that for any assembly $\alpha_{i+1}$ such that $\alpha_i \rightarrow^1_\calS \alpha_{i+1}$, there exists set $\Pi_{i+1}$ that satisfies the necessary conditions. Recall that the construction is designed so that every step $i$ of $\calS$ is simulated by $s$ steps in $\calS'$, with each such macrostep in $\calS'$ completing at exactly step $i \cdot s$, with every macrotile receiving its final output tiles at that step. 
We will define each $\Pi_i \subset \prodasm{S'}$ to include the assemblies producible during the final assembly step of every macrostep.
Let $\beta_i \in \Pi_i$, and thus $R^*(\beta_i) = \alpha_i$. 
Let $0 < j \le s$ be the assembly step in $\calS'$ within macrostep $i$. 
Let $k$ be the number of tiles added to $\alpha_i$ during step $i+1$ to create this $\alpha_{i+1}$. 
As previously shown, the corresponding $k$ macrotile locations in $\calS'$ will grow to resolve to tiles of $\calS$ during macrostep $i$, with there being at least one valid assembly sequence that leads to macrotiles resolving in $\calS'$ to the same tiles placed in $\calS$ to form $\alpha_{i+1}$.
The very first step of macrostep $i$ will consist of the first input tiles attaching into each input side of macrotiles that received neighbors with output glues facing them during the previous macrostep. Those input tiles are the ones that nondeterministically select one of the $12$ input delay values that are specific for their sides and thus determine the ordering of growth for the inputs into each macrotile location during that macrostep. 
Also as previously shown, for any tile type $t \in T$ that could appear in a location in $\calS$ during step $i$, the corresponding macrotile location in $\calS'$ could grow into a macrotile representing a tile of type $t$. 
As the $s$ steps of macrostep $i$ occur, until $k$ new macrotiles resolve, the assembly $\beta_{i+j}$ (for $0 \le j < s$) will be equivalent to $\alpha_i$, i.e. $R^*(\beta_{i+j}) = \alpha_i$. 
If the $k$th new macrotile resolves during the $j'$th step of the macrostep, then for each $j''$ where $j' \le j'' \le s$, $R^*(\beta_{i+j''}) = \alpha_{i+1}$. 
Any such assembly will now deterministically grow into an assembly that is the final for macrostep $i$, and these and that final assembly of macrostep $i$ will be in $\Pi_{i+1}$. 
This will satisfy the requirements for the set $\Pi_{i+1}$ because (1) the final assembly created by macrostep $i$ and representing $\alpha_{i+1}$ is guaranteed, for any $\alpha_{i+2}$ such that $\alpha_{i+1} \rightarrow^1_\calS \alpha_{i+2}$, to be able to grow into some assembly $\beta_{i+2} \in \prodasm{S}$ such that $R^*(\beta_{i+2}) = \alpha_{i+2}$, and (2) as soon as any assembly resolves under $R$ to $\alpha_{i+1}$ it is in $\Pi_{i+1}$, which means that all assembly sequences that resolve to assemblies $\alpha_i$ and $\alpha_{i+1}$ include an assembly from $\Pi_{i+1}$. Thus, the induction step is shown and $\calS'$ is shown to model $\calS$.

$\calS$ and $\calS'$ have equivalent productions because $\calS$ follows $\calS'$ and $\calS'$ models $\calS$, which means that, starting from the seed assemblies $\sigma$ in $\calS$, and $\sigma'$ in $\calS'$, as assembly $\alpha$ is producible (resp. terminal) in $\calS$ if and only if an assembly $\beta$ is producible (resp. terminal) in $\calS'$ such that $\beta$ maps under $R$ to $\alpha$ or an equivalent assembly. Thus, $\calS'$ is shown to synchronously simulate $\calS$ under $R$ and using temporal scale factor $s$.

In \cref{sec:IO} and \cref{sec:decision-sub}, it was shown that the scale factors for the IO submacrotiles and decision submacrotiles are $O(|T|)$, and since the dimensions of each macrotile are the sums of those of two IO submacrotiles and a decision submacrotile, the overall size of a macrotile, and thus the spatial scale factor of simulation, is $O(|T|)$. For the temporal scale factor $s$, we note that it was previously shown that all macrosteps are of the exact same number of steps which is the common length of all paths that enter a macrotile as input and ultimately terminate as output at another macrotile side, since they are all padded to be the same length. (For the following, recall the $w$ is the width and height of the decision submacrotile and that $w = O(|T|)$ as shown in Section \ref{sec:probe-region}, $l_{max}$ is the length of the longest path through the decision submacrotile (whose length will be explained below), and $u_d$ is the unit of delay defined in Section \ref{sec:IO} where it was shown that $u_d$ must be less than $2l_{max}$.) The length of such a path is bounded by $47u_d + 4(48u_d/w + 2) + (w + 4) + l_{max}$, where:
\begin{enumerate}
    \item $47u_d$ accounts for the sum of the input and output delays that don't account for the portion $l$ through the decision submacrotile and its difference from $l_{max}$.

    \item $48u_d/w + 2$ is the full length of each IO submacrotile, and it is crossed once by the input path and then $3$ times by the output path (with $2$ of those due to the output test path).

    \item $w + 4$ bounds the additional growth needed for the output test path.

    \item $l_{max}$ accounts for the length of the path through the decision submacrotile plus the remaining output delay growth.
\end{enumerate}

Note that it was shown that $w = O(|T|)$ and that $u_d \le 2l_{max}$. Thus it remains to show that $l_{max} = O(|T|)$. This holds because the longest path through the decision submacrotile would consist of: (1) the completed first two competition rows (because the third is not on the path to the output) which each grow at the same time from the center to the sides and back to the center, (2) the growth to the center and then of the probes, (3) cooperation between a pair of probes at one of the ends of the probe region, and (4) growth of a probe dodging path through the entire probe region (in a worst case scenario as shown in \cref{fig:probe-dodging}) and then to the output location at the opposite side of the decision submacrotile. Each of these quantities for (1), (2), and (3) are clearly bounded by the dimension $w$ of the decision submacrotile and therefore combined remain $O(|T|)$, and \cref{fig:probe-dodging} shows that (4) is also $O(|T|)$. Therefore, $l_{max} = O(|T|)$, and since $u_d$ must be less than $2l_{max}$, then $u_d = O(|T|)$. This means that the total for the length of a complete path through a macrotile, and thus the number of assembly steps in a macrostep is $O(|T|)$, and the temporal scale factor $s = O(|T|)$.

We show that $\calS'$ is a locally adjacent-only system by inspecting the growth of its macrotiles and first showing that there are no attachment through opposite sides of a frontier location, then that it is locally adjacent-only. Opposite sides of a frontier position can be involved in an attachment because (1) there could be a tile attachment in which a tile could bind on more than two sides when it attaches, or (2) if there could be a tile attachment in which a tile could bind cooperatively using two opposite sides (N and S, or E and W). There are very few locations in a macrotile in which tiles are placed by cooperation (i.e. using more than one side), and simple inspection shows them to all use exactly two sides which are adjacent to each other, never opposite. For example, cooperation occurs when a side's competition rows are blocked in the decision submacrotile, when two probes are aligned to allow for simulation of cooperation across two opposite sides, or when multiple sides grow paths to the central POC and need to cooperate to place a tile there. In every case, the paths of the macrotiles are designed so that the cooperation is between two adjacent sides. Thus, whenever two sides of a frontier position are involved in an attachment of $\calS'$, they are adjacent. Finally, in all such locations and situations of cooperative binding, there is always exactly one pair of sides to which a tile can bind. The only situations where a location may, in different assembly sequences, have tiles attach cooperatively using differing subsets of sides are (1) above and below the tile in between two cooperating probes, and (2) the central POC. However, the subset of sides used for cooperation in both situations is determined by the ordering of arrival of the neighboring tiles, and for each ordering there is only one pair of sides that will expose glues to which a tile can bind, enforcing the property of being locally adjacent-only. Therefore, $\calS'$ is locally adjacent-only.

We show that $\calS'$ is a staggered system also by inspecting the growth of its macrotiles. For $\calS'$ not to be staggered, there would need to be at least one possible empty location $l \in \mathbb{Z}^2$ where, during a single assembly step, more than one neighboring location receives a tile, and each has a non-null glue facing $l$. This can only happen if there are two paths of growth that simultaneously arrive at locations neighboring $l$. Every macrotile grows by first having input delay growth in some subset of its I/O submacrotiles. This initial growth in each I/O submacrotile is completely separate from all others and is formed by a continuous single (zig-zagging) path of tiles, so none has more than one path and thus cannot violate the property of being staggered. For each I/O submacrotile which is able to complete its input delay growth and grow into the decision submacrotile, the input delays are such that the growth from each direction completes before growth from any other can arrive. This means that to violate the property of being staggered in the decision submacrotile, the only possibility would be for two paths that begin from the same side to simulataneously converge on a location $l$. However, the only times a path splits in the decision submacrotile are (1) to grow the final competition row for a side as well as a path toward the center, (2) to grow a row of probes, and (3) to grow the two paths away from a point of cooperation between two probes. In all such cases, the paths strictly grow away from each other and cannot converge on any single location $l$, so growth within the decision submacrotile cannot violate the property of being staggered. Finally, growth out of the decision submacrotile and into an I/O submacrotile only occurs if that side hadn't previously grown into the decision submacrotile. If growth proceeds out of the decision submacrotile into the I/O submacrotile in some direction $d$, there are only two cases to consider: (1) no input growth has begun in side $d$, so the output test growth, then output delay, then final output path all complete growth as a single unbranching path that cannot violate the property of being staggered, or (2) input growth has begun in side $d$. In this case, the input delay is guaranteed to be sufficient for the output test path out of the decision submacrotile to begin growth before any collisions occur. The output test path will grow as a single path that crashes into the first row of the input delay growth and terminate. The input delay growth will eventually complete and result in an input path that grows toward the decision submacrotile but crashes into the initial portion of the output test path. In both cases, no branching paths occur and paths that crash do so in locations where one of them completed many assembly steps prior to the arrival of the second, so there is also no possibility for the property of being staggered to be violated. 
This concludes the analysis of all cases within a macrotile, and since the growth of separate macrotiles is all disjoint, $\calS'$ is a staggered system.

Finally, we show that the simulation of $\calS$ by $\calS'$ is seed-independent. In \cref{sec:adj-seed} it was shown how the tile set $T'_\sigma \subset T'$ is created from the tile set $T$ so that for any seed $\sigma$ in $\calS$, tiles from $T'_\sigma$ can be used to form seed $\sigma'$ such that $R^*(\sigma') = \sigma$ and $\calS' = (T',\sigma',2)$ synchronously simulates $\calS = (T,\sigma,\tau)$ under representation function $R$ with spatial scale factor $c$ and temporal scale factor $s$, where $T'$, $R$, $c$, $s$, and the temperature $2$ for $\calS'$ are all fixed, independent of the seed $\sigma$ in $\calS$. The function that generates the seed $\sigma'$ for $\calS'$ is clearly computable (with an example of its output shown in \cref{fig:adj-seed}). Therefore $(T',2)$ seed-independently simulates $(T,\tau)$.

Thus, \cref{lem:adj-only} is proven.

\begin{figure}
    \centering
    \includegraphics[width=4.0in]{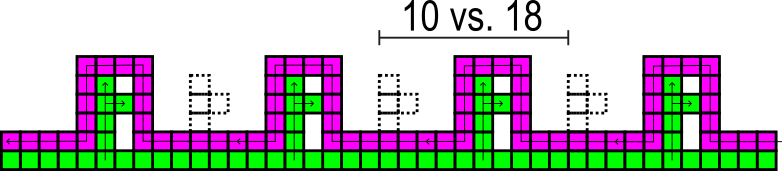}
    \caption{An example of a maximal length path dodging probes. Assuming that the path growing from the E had information about the glue on the S, including that it would grow probes, and it needed to grow to the W across the entire probe region, it would grow along the bottom as much as possible to dodge a possible probe from the N. Each time it must grow over a probe from the S, the path becomes longer. However, if two adjacent probe locations have probes, it doesn't need to grow back down then up between them (since it wouldn't be able to dodge a probe from the N that had already grown between them), a longest possible path occurs when each alternating position grows a probe from the S. In such a case, the path length between the $i$th probe location and the $(i+2)$th, when there is a probe at the $(i+1)$th location, is $18$ (rather than the $10$ it would be if there was no probe in between). Thus, the average length between probe locations is $9$, with $g'_{max}$ probe locations (see Section \ref{sec:probe-region}), for a maximal path length of $9g'_{max} = O(|T|)$.\label{fig:probe-dodging}}
\end{figure}

\end{proof}

\section{Simulating locally adjacent-only systems at temperature 1}

\tempOneSimulationAdj*

\begin{figure}
    \centering
    \input{tikz/shine_on_you/temp1simulation-adj}
    \caption{A visual overview of \cref{lem:temp1simulation-adj}. The squiggle between the $\calS = (T, 2, \sigma)$ TAS and the ``Staggered + locally adjacent-only'' column indicates that $\calS$ needs to be staggered for the simulation to work.\\
    The ``pixel-art allegory of winter'' is \emph{Figure représentant l'hiver} by Federico \textsc{de Vinciolo} from the collections of Musée du Louvre \url{https://collections.louvre.fr/ark:/53355/cl020582721}
    }
    \label{fig:temp1simulation-adj}
\end{figure}
We prove \cref{lem:temp1simulation-adj} by construction, thus, starting with an arbitrary temperature 2 SyncTAM system $\calS = (T, \sigma, 2)$ which is adjacent-only and staggered, we will define $\calS' = (T', \sigma', 1)$ and macrotile representation function $R$ such that $\calS'$ simulates $\calS$ under $R$ at spatial scale factor $c$ and temporal scale factor $s$, where $c, s = O(\log |T|)$. The simulation by $\calS'$ will be seed independent, and $\calS'$ will be staggered. We provide a visual overview of the statement of \cref{lem:temp1simulation-adj} in \cref{fig:temp1simulation-adj}.

For each glue $g \in G_T$ on any tile in $T$, we assign a unique binary string of length $\lceil\log |G_T|\rceil + 1$, with the null glue assigned to the string consisting of all $0$s. Let the binary string which represents $g$ be referred to as $b_g$. We use the functions $\mathtt{num\mathunderscore 1s}(b_g)$ and $\mathtt{num\mathunderscore 0s}(b_g)$ to refer to the number of $1$s, and $0$s, respectively, in $b_g$.

Let $\vec{\alpha}$ be an arbitrary assembly sequence in $\calS$ and let $0 \le i < |\vec{\alpha}|$ be the index of the assembly step in $\vec{\alpha}$ which yields assembly $\alpha_i$. Let $L_i \subset \mathbb{Z}^2$ be the set of locations in $\calS$ where tiles are placed in step $i$. For each $l \in L_i$ let $M(l)$ represent the set of locations in $\calS'$ that map to the location $l$ (i.e. the macrotile representing $l$). During macrostep $i$, $M(l)$ will complete growth that causes it to change from mapping to empty space under $R$ to mapping, under $R$, to a tile of the type that $\calS$ places in $l$. Additionally, the tile placements during the final step of macrostep $i$ will place a tile within, but adjacent to the boundary of the macrotile $M(l)$ at each side for which an \textit{output glue} needs to be represented (That is, in each direction in which the tile in $\calS$ in location $l$ exposed a non-null glue that wasn't used in its initial binding as an \textit{input glue},  but which is adjacent to an empty location and may be used as an input glue during the binding of a neighboring tile in a future step.). As with \cref{sec:adj-proof}, we use the functions $\mathtt{MARK\mathunderscore IN}(g,d)$ and $\mathtt{MARK\mathunderscore OUT}(g,d)$ for $g \in G_T, d \in \{N,E,S,W\}$ to refer the the input, and output marked versions of the glue $g$.

Then, during the first step of macrostep $i + 1$, each of those output glues will initiate growth into those neighboring macrotile locations. That is, for each $l \in L_i$, for each direction $d$, if the tile added into location $l$ in $\calS$ during step $i$ placed an output glue in direction $d$, then let $l'$ refer to the location on the \texttt{OPP}$(d)$ side of $l$. During macrostep $i+1$, in $\calS'$ growth will enter side \texttt{OPP}$(d)$ of macrotile $M(l')$. The first tile placed in $M(l')$ during the first step of macrostep $i+1$  will be in the center of the side \texttt{OPP}$(d)$.
\begin{figure}
    \centering
    \includegraphics[width=0.75\linewidth]{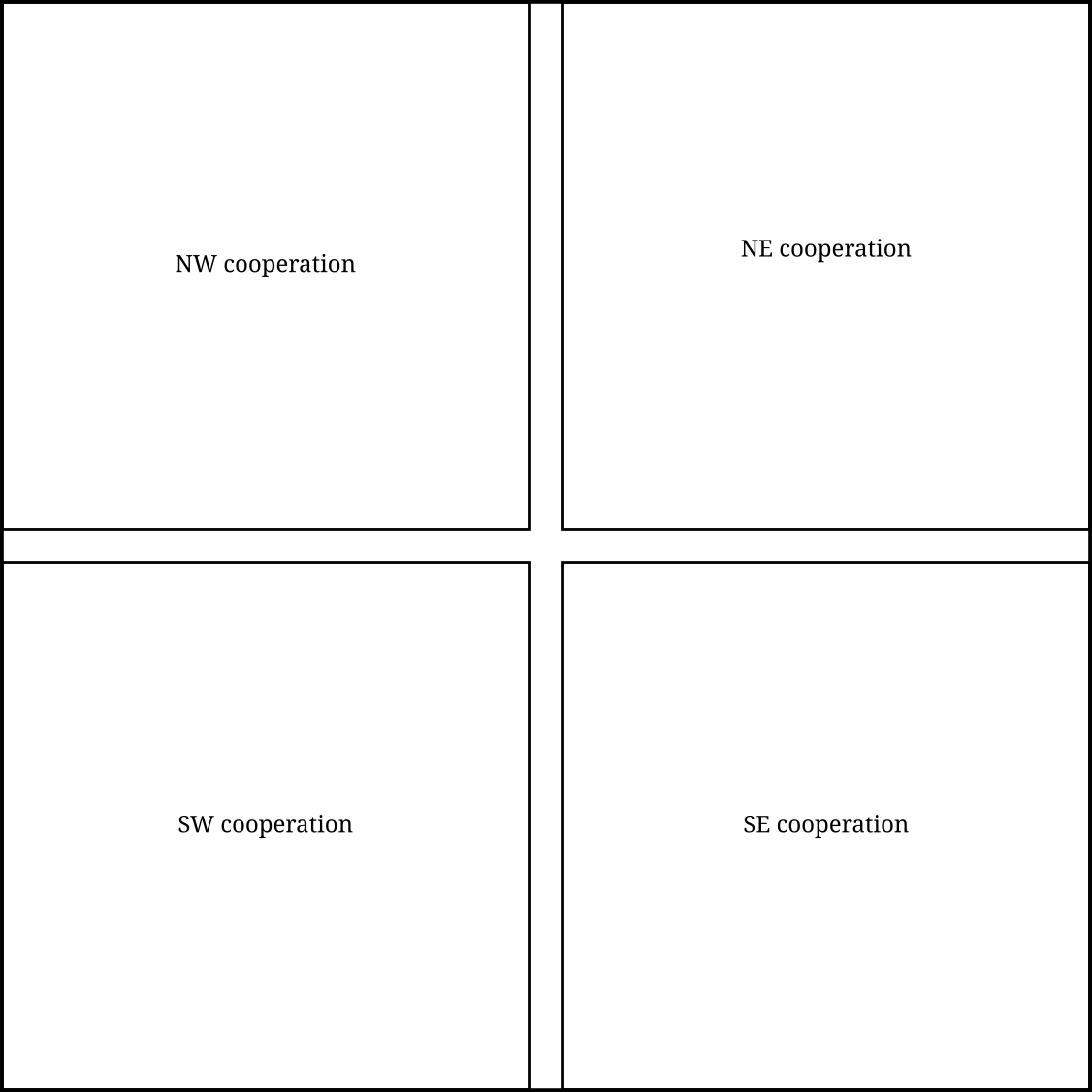}
    \caption{The regions of a macrotile for the construction of the proof of \cref{lem:temp1simulation-adj}. The four corner regions are labeled, and each facilitate cooperation between a set of two adjacent sides. The center, cross-shaped region is called the \textit{decision region}, and facilitates the resolution of the macrotile to a specific tile type, as well as the output and input of glue values to/from adjacent macrotile locations.}
    \label{fig:temp1simulator-macrotile-layout}
\end{figure}

\begin{figure}
    \centering
    \includegraphics[width=0.75\linewidth]{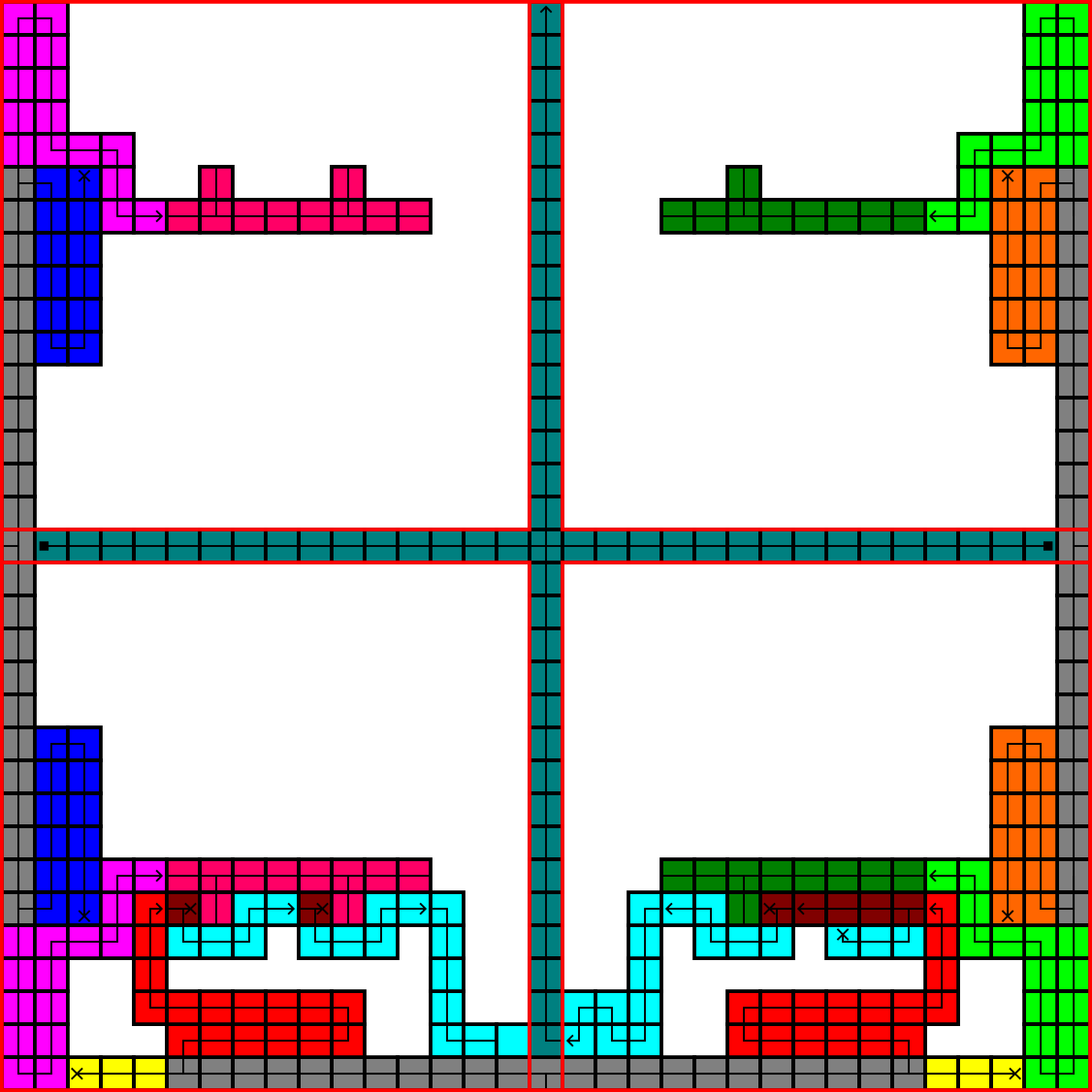}
    \caption{An example macrotile for the construction of the proof of \cref{lem:temp1simulation-adj} which receives strength-1 input from the adjacent macrotiles on its $E$, $S$ and $W$ sides. In this instance, the $E$ and $S$ glues cooperate, and initiate the growth of output to adjacent macrotile locations. The glues presented to the $S$ and $W$ sides of the macrotile  fail to cooperate for lack of a suitable tile type in $T$, and thus do not present a glue to the central decision region. The $E$ and $W$ inputs arrived before the $S$ input and each claimed the two corners next to them (in fuchsia and green respectively), which allows the corner-claiming gadgets on $S$ side of the macrotile (yellow) to start simulating cooperation (red tiles). Each macrotile region is enclosed with a red border.}
    \label{fig:temp1simulator-example-macrotile}
\end{figure}

Each macrotile is logically divided into five unique regions, which are shown in Figure \ref{fig:temp1simulator-macrotile-layout}.  The regions in each corner of the macrotile are called \textit{cooperation regions}, as they contain the locations in which paths growing from adjacent resolved macrotiles may interact and read each other's respective glue values. Conversely, the cross-shaped region in the center of the macrotile is called the \textit{decision region}, and may informally be seen as a ``highway'' throughout the macrotile, as its center tile is the location which denotes the tile type $t \in T$ that $M(l)$ resolves to under $R$. From the aforementioned center tile grows output paths which provide input to adjacent macrotiles. An example macrotile is shown in \cref{fig:temp1simulator-example-macrotile}, with each region outlined in red.

In order to ensure that the completion of growth of the macrotile $M(l)$ is synchronized across all $l \in L_i$, each macrotile $M(l)$ will contain space to allow for the growth of paths that \textit{delay} for carefully determined amounts of time. This behavior is described in \cref{sec:temp1simulation-synchronization}, however we first describe some reusable gadgets, as well as the growth procedure of the macrotile as to provide context for how/when this delay is determined.

\subsection{Commonly used gadgets} \label{sec:temp1simulator-common-gadgets}

\begin{figure*}[t!]
    \centering
    \begin{subfigure}[t]{0.25\textwidth}
        \centering
        \includegraphics[width=0.5\linewidth]{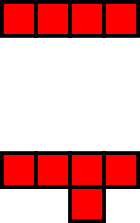}
        \caption{}
        \label{fig:temp1simulator-bitreader}
    \end{subfigure}%
    ~ 
    \begin{subfigure}[t]{0.25\textwidth}
        \centering
        \includegraphics[width=0.5\linewidth]{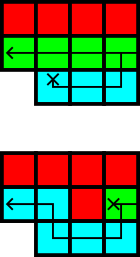}
        \caption{}
        \label{fig:temp1simulator-bitwriter}
    \end{subfigure}
    \caption{(a) A bit-writer gadget expressing a $0$, and a $1$, indicated by the presence or lack thereof of a single-tile bump. (b) A bit-reader gadget reading a 0, and a 1, respectively.}
    \label{fig:temp1simulator-bitreaderwriter}
\end{figure*}

\begin{figure*}[t!]
    \centering
    \begin{subfigure}[t]{0.5\textwidth}
        \centering
        \includegraphics[width=0.5\linewidth]{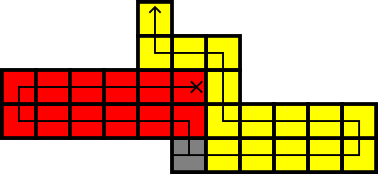}
        \caption{}
    \end{subfigure}%
    ~ 
    \begin{subfigure}[t]{0.5\textwidth}
        \centering
        \includegraphics[width=0.5\linewidth]{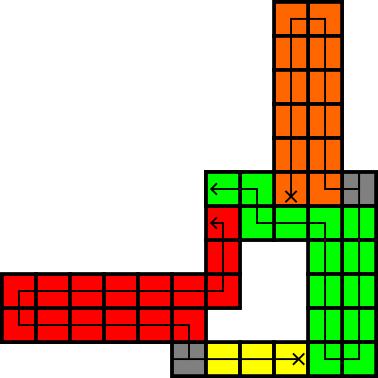}
        \caption{}
    \end{subfigure}
    \caption{A corner-claimer gadget which attempts to claim the $SE$ corner of a macrotile location. (a)   The shorter path, colored yellow, successfully claims it, blocks the red path from continuing on, and completes growth. (b) The shorter path is blocked by the presence of a corner-claimer gadget which grew from the $E$ side of the macrotile. The yellow path crashes into the already present corner-claimer, and the red path is allowed to complete its growth.}
    \label{fig:temp1simulation-cornerclaimer}
\end{figure*}

In order to simplify the description of $\calS'$, we will first describe a set of gadgets which will be used throughout our construction. Namely, these gadgets are as follows: \textit{bit-reader}, \textit{bit-writer}, and \textit{corner-claimer}.

We will first describe the operation of the complementary bit-reader and bit-writer gadgets, whose designs are borrowed from \cite{syncTAM}, and are shown in \cref{fig:temp1simulator-bitreaderwriter}. $\calS'$ is a non-cooperative system, and thus information between two paths of tiles must be communicated through geometry. A bit-writer gadget is able to encode a single bit of information through the presence (or lack thereof) of a single-tile bump. When connected to each other in series, a set of bit-writers may represent an arbitrarily long bit string. In concert, a bit-reader gadget detects the presence of the aforementioned bump by growing two parallel paths of connected tiles. One of these paths is exactly $2$ tiles longer than the other, however the shorter path is blocked when the aforementioned bump is present. As a result of the synchronous nature of the syncTAM, when there is nothing blocking the shorter path of tiles (i.e. when there is no bump placed by a previous bit-writer) the longer path is blocked, and thus the victorious shorter path encodes within its west-most tile the fact that it has read a $0$. Conversely, if the shorter path is blocked by the presence of a single-tile bump placed by a previous bit-reader, the longer path is able to complete growth, and the west-most tile of the bit-reader will encode the fact that it has read a $1$.

Bit-reader gadgets may be chained together in series to read arbitrarily long binary strings. However, it must be noted that after the first bit of the binary string, there must be tile types incorporated into the tile set which correspond to two bit-reader gadgets in order to read the second position of the binary string (for a total of $3$). For a tile set to have the capability of reading, and encoding into the last tile of the chain of bit-readers a binary string of length $n \in \mathbb{N}$, it must incorporate tile types corresponding to $2^{n}-1$ unique bit-reader gadgets.

Using the same notion of computation through geometry, corner-claimer gadgets, shown in \cref{fig:temp1simulation-cornerclaimer} appear on the corners of a macrotile location, and are used to detect the presence (or lack thereof) of previous input growth on adjacent sides of the macrotile locations. Corner-claimer gadgets consist of two paths of tiles which grow synchronously, one of length 11, and one of length 13, both of which race to reach a point of competition. The longer path only wins if the shorter one is blocked by the presence of the shorter path of another rotated/reflected corner-claimer.

We note that bit-writer, and corner-claimer gadgets are clearly staggered, as during their growth there is never a single location $l \in \mathbb{Z}^2$ such that the two paths which constitute the gadget ever face $l$, and present a glue during the same assembly step.

\subsection{Macrotile growth procedure} \label{sec:temp1simulation-growth-procedure}

As before, let $\vec{\alpha}$ be an arbitrary assembly sequence in $\calS$. Let $l \in \mathbb{Z}^2$ be a location in $\calS$ that is initially empty (i.e. not in the seed $\sigma$). If $\texttt{IN}(l,i) \neq \emptyset$, then each side $d \in \texttt{IN}(l,i)$ is a (potential) input side, as that neighboring glue may contribute binding to a tile placed in location $l$ in some step $j > i$. Note that for the following discussion we will refer to all such glues as input glues, even though some of them may not actually contribute to binding of a tile in $l$, since it is possible that: (1) location $l$ may permanently remain empty in the case where there are not neighboring glues with enough strength to allow for a tile attachment, or (2) a subset of neighboring glues not including all (or any) of these may serve as the input glues for a later tile attachment, with some (or all) of these glues remaining unbound as mismatched glues. However, the growth within a macrotile that combines information about neighboring glues in order to determine which, if any, tile type attachment should be simulated is essentially the same until a possible attachment type is selected, so we'll refer to all such sides as input sides for ease of discussion. For the context of this construction, we are able to guarantee that $|\texttt{IN}(l,i)|=1$ as a result of the fact that the system that we are simulating, $\calS$, is staggered, and thus at macrostep $i$, for each empty location $l \in \mathbb{Z}^2$, at most one neighboring location $l'$ resolves to a tile and contains a non-null glue abutting $l$. Additionally, for each $d \in \{N,E,S,W\}$, let $\texttt{ADJ}(d) \subset \{N,E,S,W\}$ refer to the sides of the macrotile which are adjacent to side $d$. (For example, the sides adjacent to $N$ are $E$ and $W$.)

Given location $l \not \in \dom{\sigma}$, let $M(l)$ be the macrotile in $\calS'$ that maps to location $l$, and $0 \le i < |\vec{\alpha}|$ be some assembly step of $\vec{\alpha}$. 
For all macrosteps such that $\texttt{IN}(l,i-1) = \emptyset$, no tiles will be placed in $M(l)$. We will now describe the growth procedure which occurs in $M(l)$ in $\calS'$ each macrostep such that $\texttt{IN}(l,i-1) \neq \emptyset$, wherein a tile with the glue $g \in G_T$ abuts the decision region of $M(l)$ on its $d$ side. For the context of this section, we will assume that $g$ is a strength-1 glue and defer the discussion of simulating strength-2 glues to \cref{sec:temp1simulation-strength2}. Note that in this description we will reference the $d d_{adj}$ cooperation region (where $d_{adj} \in \texttt{ADJ}(d)$), whose name corresponds to a region shown in \cref{fig:temp1simulator-macrotile-layout}.

Growth within $M(l)$ begins with the attachment of a tile to glue $\mathtt{MARK\mathunderscore IN}(g, d)$ in the center-most location of side $d$. This tile contains glues on sides $\texttt{ADJ}(d)$. From each side $d_{adj} \in \texttt{ADJ}(d)$, the following series of tile attachments is initiated: 
\begin{enumerate}
        \item A single-tile wide path of tiles of length $w_{sync} + 4(\lceil \log |G_T| \rceil + 1)+1 $ in direction $d_{adj}$, where $w_{sync} \in \mathbb{N}$ is the length necessary to perform the \textit{delay} phase of the macrotile growth procedure, which is discussed in \cref{sec:temp1simulation-synchronization}. The last tile of this path contains a glue on side $d_{adj}$. \label{step:temp1simulation-1}
        
        \item To the aforementioned glue attaches a corner-claimer gadged, rotated / reflected such that its shorter path grows towards the $dd_{adj}$ corner of $M(l)$. \label{step:temp1simulation-2}
        
        \item If the shorter path of the corner-claimer gadget ``wins'', then a corner-claimer gadget must \textit{not} have grown into the $dd_{adj}$ corner of $M(l)$ during a previous macrostep. Therefore, it is necessary to write the value of $g$ such that if $d_{adj} \in \texttt{IN}(l,j)$ (for some $j > i$), then the information regarding glue $g$ on side $d$ of the macrotile may be read: \label{step:temp1simulation-3}
        
        \begin{enumerate}
            \item Onto the $\texttt{OPP}(d)$ side of the last tile placed by the shorter path of the corner-claimer gadget attaches a set of $\lceil \log |G_T| \rceil + 1$ bit-writer gadgets connected in series, each of which encode a single bit in the string $b_g$.\label{step:temp1simulation-2a}
            \item After the last bit-writer has completed growth, no more tiles attach in the $d d_{adj}$ cooperation region during macrostep $i$. \label{step:temp1simulation-2b}
        \end{enumerate}
        
        \item If the longer path of the corner-claimer gadget ``wins'', a corner-claimer gadget must have been placed during a previous macrostep in the $dd_{adj}$ corner of $M(l)$, whose shorter path won. As such, there is a set of bit-writer gadgets present which encode $b_{g'}$ (where $g'$ is the glue which abuts side $d_{adj}$ of $l$ in $\alpha_i$).
        \label{step:temp1simulation-4}
        
        \item To the $\texttt{OPP}(d_{adj})$ side of last tile of the longer path of the corner-claimer gadget attaches $\lceil \log |G_T| \rceil + 1$ bit-reader gadgets (rotated / reflected as is necessary) connected in series such that they read $b_{g'}$. 
        \label{step:temp1simulation-5}

        \item To the $\texttt{OPP}(d_{adj})$ side of the last tile to be placed by the $(\lceil \log |G_T| \rceil + 1)$th bit-reader attaches a single-tile wide path of 4 tiles in the $\texttt{OPP}(d)$ direction, to the $\texttt{OPP}(d_{adj})$ side of which attaches a single-tile wide path which grows in the $\texttt{OPP}(d_{adj})$ direction for $w_{sync}$ tiles, abutting the decision region. During the growth of this path of tiles, the delay growth phase is performed, which is described in \cref{sec:temp1simulation-synchronization}. 

        The tile which abuts the decision region presents a glue on its $d_{adj}$ side iff there exists a tile type $t \in T$ which may attach through the cooperation of $g$ and $g'$, otherwise that side will contain the null glue. Note that even though both paths of tiles which grow through the cooperation regions associated with each adjacent direction in $\texttt{ADJ}(d)$ will synchronously abut the same location in the decision region, only one of them will have read a glue for which a tile type may make an attachment through cooperation with it and $g$. This is because $\calS$ is locally-adjacent-only. 
        \label{step:temp1simulation-6}
        
        \item To the aforementioned glue facing the decision region attaches a path of tiles of length $w_{sync}+4(\lceil\log|G_T|+1)+5$, whose last tile abuts the center of the decision region (which is the center of $M(l)$), and presents the glue $\mathtt{MARK\mathunderscore OUT}(g_{g,g'}, d)$.  \label{step:temp1simulation-7}

        \item To said glue, in the center of the decision region attaches the \textit{decision tile} $t'_{g,g'}$ that corresponds to the tile $t \in T$ which, in $\calS$, may attach to $\alpha_{i}$ through the cooperation of glues $g$ and $g'$ on its $d$ and $d_{adj}$ sides. If there are multiple such tile types (i.e. if $\calS$ is nondeterministic), then there are correspondingly multiple decision tiles that may attach in the central square. 
        
        Each side $d_{res} \in \{N,E,S,W\}-\{d\}$ of the decision tile contains a glue which encodes $g_{t,d_{res}}$. And the notion of a decision tile is discussed further in \cref{sec:temp1simulation-representation-function}
        \label{step:temp1simulation-8}.

        \item From each side $d_{res} \in \{N,E,S,W\}-\{d\}$ of the decision tile grows a single-tile wide path of length $w_{sync} + 4(\lceil \log |G_T|\rceil + 1) + 6$ in the $d_{res}$ direction whose last tile abuts the $d_{res}$ side of $M(l)$. If $g_{t,d_{res}}$ is not the null glue, then the $d_{res}$ side of the last tile of this path presents the glue $\mathtt{MARK\mathunderscore OUT}(g_{t,d_{res}}, d_{res})$.\label{step:temp1simulation-9}

    \end{enumerate}

    During future discussion of the macrotile growth procedure, we will refer to the path(s) of tiles that constitute items \ref{step:temp1simulation-1}-$\ref{step:temp1simulation-7}$ as \textit{input paths}, as they attach to an input glue from an adjacent macrotile which resolved to a tile type in $T$ during the previous macrostep, and we will refer to the paths of tiles that constitute item \ref{step:temp1simulation-8} as \textit{output paths}, as they grow to the edge of the macrotile location, and to them may attach input paths in adjacent macrotile locations (if previously attached geometry allows it).

\subsection{Synchronization and scale Factor}\label{sec:temp1simulation-synchronization}

In this section, we will prove that the temporal scale factor $s$ of $\calS'$ is $8\lfloor \frac{\lceil \log |G_T|\rceil}{4}\rfloor + 20(\lceil \log |G_T| \rceil + 1) + 
2(\lceil \log |G_T| \rceil) + 30$, and that the spatial scale factor $c$ of $\calS'$ is $16 \lfloor \frac{\lceil \log |G_T|\rceil}{4}\rfloor + 8\lceil \log |G_T|\rceil+13$. In doing so, we will find an appropriate value of $w_{sync}$, which is selected to ensure the paths grown during the macrotile growth procedure provide the space to perform the necessary attachments such that for each macrostep $i$ of $\vec{\alpha}$, each location that resolves to a tile of type $T$ under $R$ presents output to adjacent macrotiles at step $is + s$.

Before being able to discuss our method of ensuring synchronization, we must first carefully document the number of assembly steps which occur during each portion of the macrotile growth procedure. In order to do so, we have split the macrotile growth procedure into a set of arbitrarily chosen \textit{growth phases}. These growth phases are cataloged in \cref{tab:temp1simulator-synchronization}, to which \cref{fig:temp1simulator-synchronization-coloring} is intended to be used as a reference. The exact number of assembly steps necessary to complete each growth phase is provided, however for the sake of brevity we leave the demonstration of the provided values as exercises for the reader.

Let us now discuss the synchronization method used to ensure $\calS'$ has temporal scale factor $s$. Let $l \in \mathbb{Z}^2$ such that, during assembly step $i-1$, $\texttt{IN}(l,i-1) \neq \emptyset$, and $d \in \texttt{IN}(l,i-1)$ is the side of $M(l)$ which receives an input glue as a result of the resolution of an adjacent macrotile location in macrostep $i$ (recall that $|\texttt{IN}(l,i-1)| \le 1$). As with \cref{sec:temp1simulation-synchronization}, we defer the discussion of synchronization when $g$ is strength-2 to \cref{sec:temp1simulation-strength2}, and note that when $\texttt{IN}(l,i-1) = \emptyset$, no growth occurs in $M(l)$ during macrostep $i$.. 

We will start by determining the total number of assembly steps necessary to complete the macrotile growth procedure. We have marked each growth phase that is completed by the connected path of tiles between the input glue $g$ (which is presented on side $d$ of $M(l)$ at assembly step $is$), and each output glue on sides $\{N,E,S,W\}-\{d\}$ of $M(l)$ (assuming that input growth from adjacent sides does not block the growth of output paths to the edge of the macrotile) without concerning ourselves with the \textit{delay} growth phase. That is to say, let us find the total number of assembly steps necessary for $M(l)$ to resolve, and express output without ensuring that the number of assembly steps within each macrotile which received input glue during macrostep $i$ is equal. We have labeled all applicable growth phases with a ``$\star$'' in \cref{tab:temp1simulator-synchronization}, and the total number of assembly steps necessary to complete them is $s_{unsync} = 4w_{sync} + 20(\lceil \log |G_T|\rceil + 1) + 2 \mathtt{num\mathunderscore1s}(b_{g'})+30$. Notice that for all macrotile locations that receive an input glue $g$ during macrostep $i$, the only way $s_{unsync}$ differs between each of them is a result of the number of $1$s in $b_{g'}$ (due to the fact that a bit-reader requires $2$ more assembly steps to read a $1$ than a $0$). We will refer to the necessary amount of delay to ensure that growth within a macrotile which reads the adjacent input glue $g'$  as $s_{delay, g'} = 2(\mathtt{num\mathunderscore0s}(b_{g'}) - \texttt{min}_{g'' \in G_T}(\mathtt{num\mathunderscore0s}(b_{g''}) )$, which is $2$ times the difference of number of $0$s between $b_{g'}$ and $b_{g''}$, which is the glue $g'' \in G_T$ such that $b_{g''}$ has the \textit{least} $0$s.

In order to incorporate $s_{delay}$ ``delay'' assembly steps into the macrotile growth procedure, we simply make the following modification to item \ref{step:temp1simulation-6} of the growth procedure: Modify the single-tile wide path that grows in the direction $\texttt{OPP}(d_{adj})$, which we will call the \textit{delay path} for the context of this discussion, such that there are sufficient ``bumps'' of height $\le 4$ which each delay the path from reaching the decision region by 2, 4, 6, or 8 assembly steps (for bumps heights 1, 2, 3, and 4, respectively) whose summed lengths are equal to $s_{delay, g'}$. The maximum height that bumps can reach is chosen arbitrarily such that the uppermost tile of the delay path is parallel to the chain of bit-writer gadgets in the $dd_{adj}$ cooperation region. Additionally, the height of each individual bump is not important as it is only necessary that the summation of every bump's height is $\frac{s_{delay, g'}}{2}$. 

Using this technique, the largest delay which may be incurred is dependent on the width of the delay path, $w_{sync}$.
The maximum amount of delay that may be handled by a delay path of length $w_{sync}$ is $s_{max} = 4(2\lfloor\frac{w_{sync}}{2}\rfloor)$. In order to ensure that $w_{sync}$ is large enough to provide sufficient delay when reading an arbitrary bit-string corresponding to a glue in $G_T$, we must ensure that $8 \lfloor \frac{w_{sync}}{2}\rfloor) \ge \texttt{max}_{g'' \in G_T}(s_{delay, g''})$, which is equivalent to: 

\[8\lfloor \frac{w_{sync}}{2}\rfloor \ge 2(\texttt{max}_{g'' \in G_T}(\mathtt{num\mathunderscore0s}(b_{g''}) - \texttt{min}_{g'''\in G_T}(\mathtt{num\mathunderscore 0s}(b_{g'''})))\]

Note that the most $0$s that any bit string corresponding to a glue $G_T$ may have is $\lceil \log |G_T|\rceil + 1$ (which corresponds to the null glue), and all bit strings corresponding to glues in $G_T$ must have at least 1 $0$s, as each is padded with a single $0$, which gives us : 

\[8\lfloor \frac{w_{sync}}{2}\rfloor \ge 2(\lceil \log |G_T|\rceil+1 - 1)\]

Solving for $w_{sync}$, we get:

\[w_{sync} \in \{2\lfloor \frac{\lceil \log |G_T| \rceil}{4}\rfloor, 2 \lfloor \frac{\lceil \log |G_T|\rceil}{4}\rfloor + 1\}\]

Either value of $w_{sync}$ presented above is sufficient to allow for the necessary amount of delay bumps to grow, and thus we set $w_{sync} = 2\lfloor \frac{\lceil \log |G_T|\rceil}{4}\rfloor$. With the incorporation of the delay growth phase, the number of assembly steps which occurs within a macrotile $M(l)$ that resolves to a tile type $t \in T$ during macrostep $i$ through cooperation between two strength-1 glues, $g$ and $g'$, is $s_{unsync} + s_{delay, g'} = 4w_{sync}+20(\lceil \log |G_T|\rceil + 1) + 2\lceil \log |G_T|\rceil + 30$. Therefore, the temporal scale factor of  $\calS'$ when simulating a cooperative attachment is $s = s_{sync}+s_{unsync, g'} = 8\lfloor \frac{\lceil \log |G_T| \rceil}{4}\rfloor + 20(\lceil \log |G_T|\rceil + 1) + 2 \lceil \log |G_T| \rceil + 30 = O(\log |T|)$. 

Additionally, we note that (as per \cref{sec:temp1simulation-growth-procedure}), the length of each output path which grows from the center of a macrotile location to its edge is $w_{sync} + 4(\lceil \log |G_T|\rceil + 1) + 6$, and thus the spatial scale factor of each macrotile location is $c = 16 \lfloor \frac{\lceil \log |G_T|\rceil}{4}\rfloor + 8\lceil \log |G_T|\rceil+13 = O(\log |T|)$ (since output paths must be able to grow out of two opposite sides of the decision tile).

\begin{figure}
    \centering
    \includegraphics[width=0.75\linewidth]{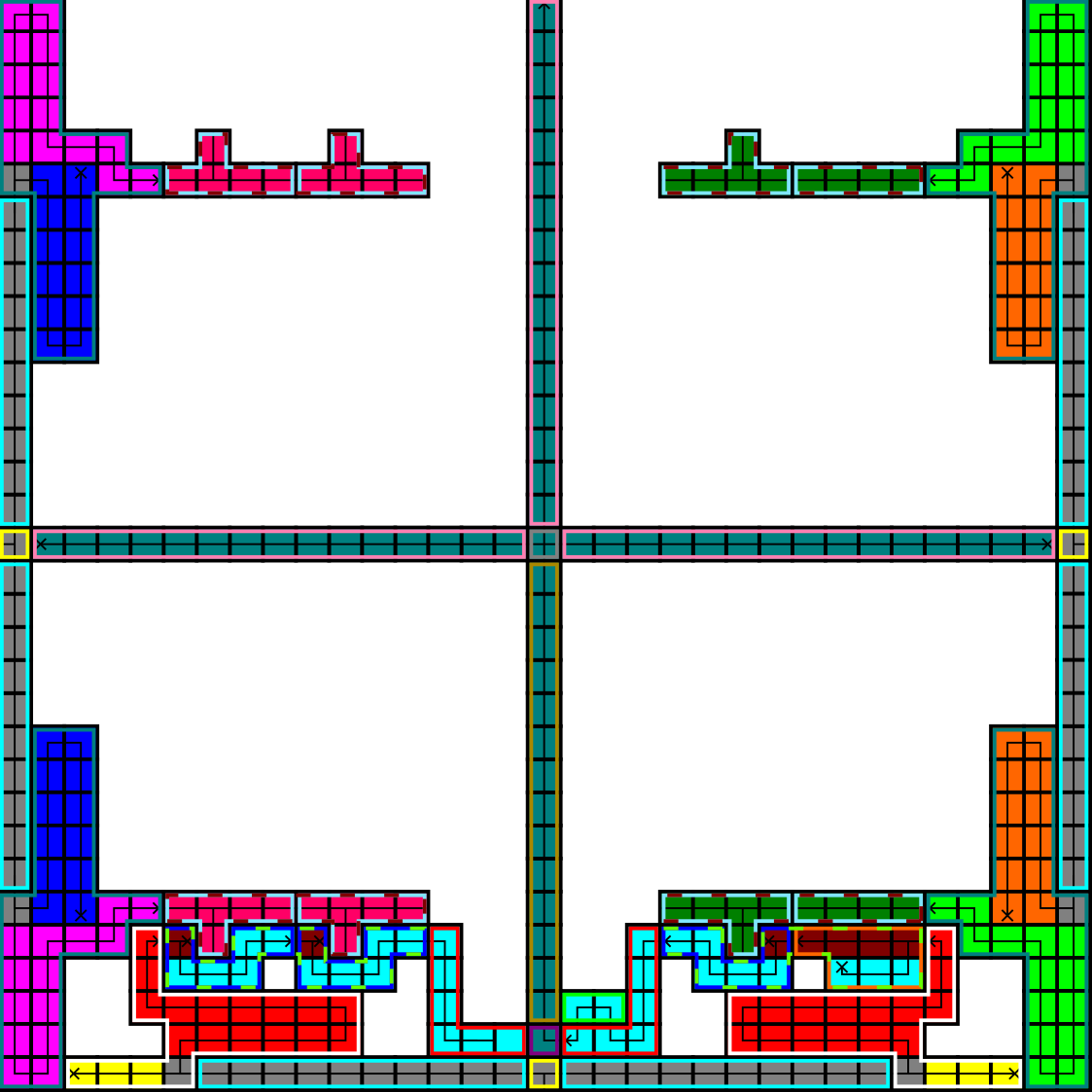}
    \caption{The example macrotile shown in \cref{fig:temp1simulator-example-macrotile} with additional outlines added to indicate the growth phase in which they attach into the assembly. This figure is intended to serve as a reference for \cref{tab:temp1simulator-synchronization}. Portions of the macrotile which map to multiple growth phases presented in \cref{tab:temp1simulator-synchronization} are represented using dashed lines.}
    \label{fig:temp1simulator-synchronization-coloring}
\end{figure}

\begin{table}[!htp]\centering\small
\resizebox{\textwidth}{!}{
\begin{tabular}{lrrrr}\toprule
Growth Phase &Number of assembly steps &Steps in \cref{sec:temp1simulation-growth-procedure} &Color in \cref{fig:temp1simulator-synchronization-coloring} \\\midrule
Initial tile attachment in $M(l)$ $\star$ &$1$ &Before step \ref{step:temp1simulation-1} &Yellow \\
Path of tiles from the decision region, to the corner-claimer gadget $\star$ &$w_{sync} + 4(\lceil \log |G_T| \rceil + 1)$ &Step \ref{step:temp1simulation-1} &Cyan \\
Corner-claimer gadget (shortest path wins and str$(g) == 1)$ &$15$ &Steps \ref{step:temp1simulation-2} and \ref{step:temp1simulation-3} &Navy \\
Corner-claimer gadget (shortest path wins and str$(g) == 2)$ &$14$ &\cref{sec:temp1simulation-strength2} &Shown in \cref{fig:temp1simulation-cornerclaimer-strength2} \\
Corner-claimer gadget (longest path wins and str$(g) == 1)$ $\star$ &$16$ &Steps \ref{step:temp1simulation-2} and \ref{step:temp1simulation-4} &White \\
Bit writer gadget &$4$ &\cref{sec:temp1simulator-common-gadgets} &Shown in \cref{fig:temp1simulator-bitreader} \\
Chain of bit-writer gadgets which write $b_g$ $\star$ &$4(\lceil \log |G_T| \rceil + 1)$ &Steps \ref{step:temp1simulation-2a} and \ref{step:temp1simulation-2b} &Dark Red \\
Bit-reader gadget (longest path wins) &$6$ &\cref{sec:temp1simulator-common-gadgets} &Shown in \cref{fig:temp1simulator-bitreader} \\
Bit-reader gadget (shortest path wins) &$4$ &\cref{sec:temp1simulator-common-gadgets} &Shown in \cref{fig:temp1simulator-bitreader} \\
Chain of bit-reader gadgets which read $b_{g'}$ $\star$ &$4(\lceil \log |G_T| \rceil + 1) + 2(\mathtt{num\mathunderscore1s}(b_g))$ &Step \ref{step:temp1simulation-5} &Light Green (Dashed) \\
Path from last bit-reader gadget towards decision region $\star$ &$w_{sync} + 4$ &Step \ref{step:temp1simulation-6} &Red \\
Delay growth &$2(\mathtt{num\mathunderscore0s}(b_g))$ &Step \ref{step:temp1simulation-6} &Light Green (Solid) \\
Placement of first tile in decision region $\star$ &$1$ &Step \ref{step:temp1simulation-7} &Purple \\
Growth towards center of $M(l)$ $\star$ &$w_{sync}+ 4(\lceil \log |G_T| \rceil + 1) + 3$ &Step \ref{step:temp1simulation-7} &Gold \\
Placement of the decision tile &$1$ &Step \ref{step:temp1simulation-8} &Gray \\
Growth to edges of $M(l)$ $\star$ &$w_{sync} + 4(\lceil \log |G_T| \rceil + 1) + 5$ &Step \ref{step:temp1simulation-9} &Pink \\
\bottomrule
\end{tabular}
}
\caption{Table which depicts the number of steps required to complete each possible growth phase which may occur within $M(l)$ during some macrostep such that $M(l)$ receives input growth. For each growth phase, a reference to the related steps as is described in \cref{sec:temp1simulation-growth-procedure} (where applicable) is provided, along with a key (where applicable) to the reference provided in \cref{fig:temp1simulator-synchronization-coloring}. Some growth phases are marked with ``$\star$'' to facilitate easier reference in \cref{sec:temp1simulation-growth-procedure}}\label{tab:temp1simulator-synchronization}
\end{table}

\subsection{Strength 2 glues}\label{sec:temp1simulation-strength2}
\begin{figure}
    \centering
    \includegraphics[width=0.5\linewidth]{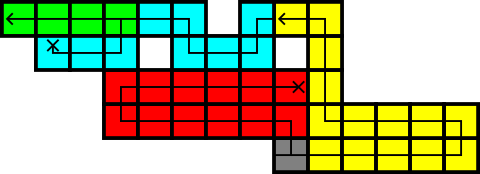}
    \caption{The modified behavior of a corner-claimer gadget, and subsequently attaching bit-readers that assembles when a strength-2 input path successfully claims a corner.}
    \label{fig:temp1simulation-cornerclaimer-strength2}
\end{figure}
In this section, we discuss the growth procedure of a macrotile which simulates a tile in $\calS$ that attaches using a strength-2 glue. As with \cref{sec:temp1simulation-growth-procedure}, for any assembly step $i$ of $\vec{\alpha}$, let $l \in \mathbb{Z}^2$ be a location in $\calS$ such that $\texttt{IN}(l,i-1) \neq \emptyset$, $d \in \texttt{IN}(l,i-1)$ is the direction of macrotile $M(l)$ which receives an input glue in step $i$ of $\vec{\alpha}$, and we note that $|\texttt{IN}(l,i-1)| = 1$ as a result of $\calS$ being staggered.

The growth procedure within $M(l)$ is identical to what is described in \cref{sec:temp1simulation-growth-procedure} up until the attachment of a corner-claimer gadget. Unlike when $g$ is a strength-1 glue, it is not necessary that the corner of the macrotile be claimed, as there is no need to read a glue from the $d_{adj}$ (for $d_{adj} \in \texttt{ADJ}(d)$) side of the macrotile in order to determine if $l$ resolves to a tile type in $T$. However, for the sake of ensuring a consistent temporal scale factor $s$, we still attach a corner-claimer gadget, which behaves as follows, depending on the presence of another corner claimer gadget on side $d_{adj}$ of $M(l)$.

If the longer path wins (i.e. there is already a corner-claimer gadget present on side $d_{adj}$), then to the last tile of the corner-claimer gadget attaches $\lceil \log |G_T| \rceil+1$ bit-readers, and growth towards the decision region occurs as normal. However, regardless of the value of $g'$ which was observed, there will only be a glue pointing towards the decision region if there is both a tile $t \in T$ which may attach using glue $g$ on its $d$ side, and the cooperation region that the tile pointing towards the decision region is in is to the ``right'' of the side of $M(l)$ which was presented a glue (when $d = S$, this is the $SE$ decision region, when $d=E$, this is the $NE$ decision region, etc.).

Otherwise, if the shorter path wins (i.e. there is no corner-claimer gadget present on side $d_{adj}$, and the corner is successfully claimed), then unlike when $g$ is a strength-1 glue, to the last tile of the corner-claimer gadget \textit{still} attaches $\lceil \log |G_T| \rceil + 1$ bit-readers connected in series (which all read $0$s), and growth towards the decision region occurs similarly to the growth procedure in the case that the corner is already claimed. Note that the shorter path of the corner-claimer gadget will win prior to the attachment of the aforementioned bit-readers, and entrance into the portion of the growth procedure which grows bit-readers will happen two assembly steps ``early''. However, if each bit-reader were to be unmodified, then growth through the bit-reading portion of the assembly will be unexpectedly fast (since reading a $0$ requires exactly two fewer assembly steps than reading a $1$), and there will be exactly 2 more steps of delay which will be necessary to incur during growth back to the decision region that cannot always be made up for during said growth. As such, the first bit-reader in the chain of $\lceil \log |G_T| \rceil + 1$ bit readers is modified such that it do not contain a `$0$' path, which incurs the two steps of delay that were made necessary by the shorter assembly path taken during growth of the corner-claimer gadget. At this point, during the delay portion of the assembly (which is explained in \cref{sec:temp1simulation-synchronization}), the amount of delay that is imparted is equivalent to that which occurs when the null glue (i.e. the glue whose binary representation is all $0s$) is read. 
Additionally, note that if a future corner-claimer gadget grows into the $dd_{adj}$ corner, which only occurs in the extraneous case wherein there is a tile in $\calS$ which contains a strength-2 glue to which no tiles may bind, then there will be no bit-writers in the $d d_{adj}$ cooperation region which correspond to glue $g$ on side $d$, and the null glue will consequentially be read.

The only other change to the macrotile growth procedure which is necessary to correctly simulate strength-2 glues is the modification of the last glue of the path described in item \ref{step:temp1simulation-7} to instead be the glue $\mathtt{MARK\mathunderscore OUT}(g_g,d)$.

\subsection{Seed structure}\label{sec:temp1simulation-seed}

The seed $\sigma'$ of $\calS'$ is composed of a special set of tile types $T'_{\sigma} \subset T'$. Here we explain the process of creating the tile types in $T'_{\sigma}$ and the function that outputs the seed $\sigma'$ for any arbitrary $\sigma$ of $\calS = (T, \sigma, \tau)$, so that $R^*(\sigma') = \sigma$, and $\calS' = (T', \sigma', 1)$ correctly simulates $\calS$ using fixed representation function $R$, spatial scale factor $c$, and temporal scale factor $s$ independent of $\sigma$. This section is structured in a similar manner to that of \cref{sec:adj-seed}, with details deferred when relevant .

\subsubsection{I/O marking}

As is described in \cref{para:adj-io-marking}, it is necessary to ensure the following two behaviors of $\calS'$: 

\begin{enumerate}
\item When a glue is exposed on the exterior of a macrotile to propagate output, the tile that attaches to it initiates input growth into the neighboring macrotile location.
\item The tiles of $T'_{\sigma}$ must not attach in any locations outside of the pre-formed macrotiles representing the seeds.
\end{enumerate}

We ensure these behaviors through the usage of I/O marking. That is to say, all tiles which face the edge of a macrotile that finish output paths have exterior facing glues that are marked using $\mathtt{MARK\mathunderscore OUT}$, and all tiles which attach to the edge of said output paths to begin input paths do so with glues marked using $\mathtt{MARK\mathunderscore IN}$. We defer further detail regarding I/O marking to \cref{para:adj-io-marking}.

\subsubsection{Creation of seed tile types and assembly}

For each tile type $t \in T$ a set of tiles is created to form $T'_{\sigma}$, which are generated using a nearly identical process as to that which is shown in \cref{para:adj-seed-tile-types}. The only difference between the seed tile types generation function described in \cref{para:adj-seed-tile-types}, and that which is used in this section is that each glue is of strength-1, as opposed to strength-2.

As for the creation of the seed $\sigma'$, the algorithm that takes as input the seed $\sigma$ in $\calS$ and returns the seed $\sigma'$ for $\calS'$ is identical to that which is described in \cref{para:seed-creation}. Thus, each tile of $\sigma$ is represented by a macrotile which consists of a tile in its center-most square of the decision region, and a path of tiles to each of the macrotile boundaries on sides where glues are exposed or form bonds within the seed. For each direction in which a tile in $\sigma$ has an exposed non-null glue, that side has a glue exposed towards the neighboring macrotile that will initiate growth of a corresponding input path into that macrotile. For each direction in which a tile of $\sigma$ forms a bond with a neighboring tile on $\sigma$, there is a path connecting the corresponding macrotiles in $\sigma'$, so $\sigma'$ is a stable assembly that correctly represents $\sigma$ under $R$, using scale factor $c$, and will initiate valid simulation.

\subsection{Representation function}\label{sec:temp1simulation-representation-function}

In this section, we define the representation function $R$ which maps macrotiles of $\calS'$ to tile types of $\calS$. 

For each tile type $t \in T$, there is a corresponding set of decision tile types $T'_{R,t} \subset T'$ which is used by $R$ to determine if a macrotile location resolves to $t$. $T'_{R,t}$ consists of (1) the central tile type $c_t \in T'_\sigma$, and (2) a set of \textit{decision} tiles.

The decision tiles corresponding to tile type $t$ are created using the following algorithm for each side $d \in \{N,E,S,W\}$ : 

\begin{enumerate}
\item If str$(g_{t,d})==1$, then for each $d_{adj} \in \texttt{ADJ}(d)$ : 
\begin{enumerate}
\item If str$(g_{t, d_{adj}}) == 1$, then add the tile type $t'_{g_{t,d},g_{t, d_{adj}}}$ to $T'_{R,t}$ which has the glue $\mathtt{MARK\mathunderscore IN}(g_{g,g'}, d)$ on its $d$ side, and glues corresponding to $g_{t,d'}$ on each side $d' \in \{N,E,S,W\}-\{d\}$ (including if $g_{t,d'}$ is a null glue).
\end{enumerate}

\item Otherwise, if str$(g_{t,d})==2$, then add the tile type $t'_{g_{t,d}}$ to $T'$ which has the glue $\mathtt{MARK\mathunderscore IN}(g_{g}, d)$on its $d$ side, and glues corresponding to $g_{t,d'}$ on each side $d' \in \{N,E,S,W\}-\{d\}$ (including if $g_{t,d'}$ is a null glue).
\end{enumerate}

For all producible assemblies of $\calS$, the tile types in $T'_{R,t}$ only appear in the center-most square of the decision region of a macrotile. As such $R$ operates by taking as input a macrotile, and if there is a tile type in its center-most location which is in $T'_{R,t}$, then it maps said macrotile to be the corresponding tile type $T$.

Note that under the previously discussed macrotile growth procedures, all macrotiles which resolve to a tile type in $T$ during a given macrostep do so simultaneously. That is to say, for any macrostep $i$, and any location $l \in \mathbb{Z}^2$ such that $l$ resolves to a tile of type $t \in T$ during macrostep $i$, $l$ will map under $R$ to $t$ at \textit{exactly} assembly step $i \cdot s + 6 \lfloor \frac{\lceil \log |G_T|\rceil}{4}\rfloor + 16(\lceil \log|G_T|\rceil + 1)+2\lceil\log|G_T|\rceil+25$ (by \cref{tab:temp1simulator-synchronization}).

\subsection{Correctness of the construction for \cref{lem:temp1simulation-adj}}

Here we present our argument for the correctness of our construction for \cref{lem:temp1simulation-adj}. Recall that we begin with a system $\calS = (T, \sigma, 2)$ that is staggered and locally adjacent-only, and throughout Sections \ref{sec:temp1simulation-growth-procedure}, \ref{sec:temp1simulation-strength2}, and \ref{sec:temp1simulation-synchronization} \ref{sec:temp1simulation-seed} we've discussed the construction of the system $\calS' = (T', \sigma', 1)$ and how it self-assembles. In our argument for the correctness of our construction, we will show why $\calS'$ : 

\begin{enumerate}
\item Synchronously simulates $\calS$
\item Has spatial and temporal scale factors that are $O(\log |T|)$
\item Is staggered
\item Is seed-independent
\end{enumerate}

Let us start by showing that under $R$ and with spatial scale factor $c$ and temporal scale factor $s$, $\calS'$ synchronously simulates $\calS$ by showing that (1) $\calS$ synchronously follows $\calS'$, (2) $\calS'$ models $\calS$,  and (3) these imply that $\calS$ and $\calS'$ have equivalent productions. 

First, we show that $\calS$ synchronously follows $\calS'$ under $R$ with temporal scale factor $s$. Let $\vec{\beta} = \beta_0, \beta_1, \beta_2,\ldots$ be an arbitrary assembly sequence in $\calS'$. We will prove the existence of a valid assembly sequence $\vec{\alpha} = \alpha_0, \alpha_1, \alpha_2,\ldots$ in $\calS$, where for each $i'$ such that $i' \mod s == 0$, $R^*(\beta_{i'}) = \alpha_i$ using induction. Our inductive hypothesis is that for arbitrary $i'$ such that $i' \mod s == 0$, $\alpha_i = R^*(\beta_{i'})$. For the base case, by construction (as shown in \cref{sec:temp1simulation-seed}), for $\beta_0 = \sigma'$, $R^*(\beta_0) = \sigma = \alpha_0$. For the inductive step, we begin with arbitrary $i'$ such that $i' \mod s ==0$ and $\beta_{i'}$ such that $R^*(\beta_{i'}) = \alpha_i$. 
As explained previously, the subsequent sequence of $s$ assembly steps in $\calS'$ will be the $(i+1)$th macrostep in $\calS'$ (where $i = i'/s$) and simulate the $(i+1)$th step in $\calS$. This is because at the beginning of each macrostep in $\calS'$, for any macrotile $M(l)$ which had an input glue appear adjacent to $M(l)$ during the previous macrostep, that input glue will have arrived adjacent to $M(l)$ at time step $is$. If a set of input glues have been presented to the macrotile $M(l)$ which are sufficient to allow for the simulation of the attachment of a tile type $t\in T$ (the most recent of which arriving at time step $is$), then within the macrotile will occur growth sufficient to resolve $M(l)$ to a tile of type $t$ (by placing a corresponding tile in the center location of the decision region), and output glue representations will grow to the outside edges of all output sides of $M(l)$ at exactly time step $(i+1)s$ (As shown in \cref{sec:temp1simulation-synchronization}). Thus for $i'' = (i+1) \cdot s, \alpha_{i+1}=R^*(\beta_{i''})$, which must be a valid assembly such that $\alpha_i \rightarrow^\calS_1 \alpha_{i+1}$ because if and only if a macrotile in $\calS'$ changed from mapping to empty space to mapping to a tile in $T$, a tile of the same type could have attached in the corresponding location in $\calS$. Therefore, $\vec{\alpha}$ is a valid assembly sequence in $\calS$ and $\calS$ synchronously follows $\calS'$ under $R$ with temporal scale factor $s$.

We now show that $\calS'$ models $\calS$ under $R$. We do this by showing how, for every producible assembly $\alpha \in \prodasm{S}$, there exists a set of producible assemblies $\Pi \subset \prodasm{S'}$ such that each maps under $R$ to $\alpha$ and for every assembly $\beta$ producible from $\alpha$ in $\calS$ in one step, that (1) every assembly in $\Pi$ can grow into an assembly in $\calS'$ representing any such $\beta$, and (2) every assembly sequence in $\calS'$ from some $\alpha'$ that represents $\alpha$ to some $\beta'$ that represents $\beta$ includes an assembly in $\Pi$.

For this we also use induction. Our induction hypothesis is that, given an arbitrary producible assembly $\alpha_i \in \prodasm{S}$ there exists a set of producible assemblies $\Pi_i \subset \prodasm{S'}$ that satisfies the requirements stated above. The base case is again the seed assemblies $\sigma = \alpha_0$ in $\calS$ and $\sigma'$ in $\calS'$. In this case, the only element in $\Pi_0$ is $\sigma'$. By construction (as shown in \cref{sec:temp1simulation-seed}), $R^*(\sigma') = \sigma = \alpha_0$. Trivially, since $\sigma'$ is the only element in $\Pi_0$, the second condition holds (an assembly from $\Pi_0$ is included in every assembly sequence from $\sigma'$). For the first condition (i.e. for each assembly that $\alpha_0$ can grow into in one step, the assembly in $\Pi_0$ can grow into at least one assembly that represents it) we can simply recall how the construction works. For every empty location $l$ in $\calS$ toward which a non-null glue of a seed tile is facing, input growth will enter the corresponding macrotile location $M(l)$ in $\calS'$. $\calS$ is staggered, and thus there is never more than one of such glues facing a location. If the glue facing a location is sufficient to simulate placement of a tile of type $t \in T$, the macrotile will resolve to such a tile and then begin growth of outputs, each of which abut an output glue to the edges of the macrotile on the $s$th assembly step (where $s$ is the temporal scale factor of the simulation). Because of this, for every location $l$, for every tile type $t$ that could be placed there at this step in $\calS$, there is an assembly sequence beginning from the sole assembly $\sigma'$ contained in $\Pi_0$ that causes the corresponding macrotile to resolve to that tile type, and therefore the assembly $\sigma' \in \Pi_0$ can grow into a representation of anything that $\sigma$ can grow into and the base case is shown.

For the induction step, given arbitrary producible assembly $\alpha_i \in \prodasm{S}$ and its corresponding set $\Pi_i$, we will show that for any assembly $\alpha_{i+1}$ such that $\alpha_i \rightarrow^1_{\calS}\alpha_{i+1}$, there exists a set $\Pi_{i+1} $ that satisfies the necessary conditions. Recall that the construction is designed so that every step $i$ of $\calS$ is simulated by $s$ steps in $\calS'$, with each macrostep in $\calS'$ completing at exactly step $i\cdot s$, with every macrotile receiving its final output tiles at that step. We will define each $\Pi_s \subset \prodasm{S'}$ to include the assemblies producible during the final assembly step of each macrostep. Let $\beta \in \Pi_i$, and thus $R^*(\beta_i) = \alpha_i$. Let $0 < j \le s$ be the assembly step in $\calS'$ within macrostep $i$. Let $k$ be the number of tiles added in $\alpha_i$ during step $i+1$ to create $\alpha_{i+1}$. As previously shown, the corresponding $k$ macrotile locations in $\calS'$ will grow to resolve to tiles of $\calS$ during macrostep $i$, with there being at least one valid assembly sequence that leads to macrotiles resolving in $\calS'$ to the same tiles placed in $\calS$ to form $\alpha_{i+1}$. Additionally, as previously shown, for any tile of type $t \in T$ that could appear in a location in $\calS$ during step $i$, the corresponding macrotile location in $\calS'$ could grow into a macrotile representing a tile of type $t$. Lastly, as was previously shown, for the first $r = 6 \lfloor \frac{\lceil \log |G_T|\rceil}{4}\rfloor + 16(\lceil \log|G_T|\rceil + 1)+2\lceil\log|G_T|\rceil+25$ steps of macrostep $i$, the assembly $\beta_{i+j}$ (for $0 \le j < r)$ will be equivalent to $\alpha_i$, i.e. $R^*(\beta_{i+j})=\alpha_i$. On the $r$th step of macrostep $i$, all $k$ macrotiles resolve to a tile type in $T$, and for all subsequent steps $r \le r' \le s$, $R^*(\beta_{i+r'})=\alpha_{i+1}$. Any such assembly will now deterministically grow into an assembly that is final for macrostep $i$, and these and that final assembly of macrostep $i$ will be in $\Pi_{i+1}$. This will satisfy the requirements for the set $\Pi_{i+1}$ because (1) the final assembly created by macrostep $i$ and representing $\alpha_{i+1}$ is guaranteed, for any $\alpha_{i+2}$ such that $\alpha_{i+1}\rightarrow^1_\calS \alpha_{i+2}$ to be able to grow into some assembly $\beta_{i+2} \in \prodasm{S}$ such that $R^*(\beta_{i+2}) = \alpha_{i+2}$, and (2) as soon as any assembly resolves under $R$ to $\alpha_{i+1}$ it is in $\Pi_{i+1}$. Thus, the induction step is shown and $\calS'$ is shown to model $\calS$.

$\calS$ and $\calS'$ have equivalent productions because $\calS$ follows $\calS'$ and $\calS'$ models $\calS$, which means that, starting from the seed assemblies $\sigma$ in $\calS$, and $\sigma'$ in $\calS'$, as assembly $\alpha$ is producible (resp. terminal) in $\calS$ if and only if an assembly $\beta$ is producible (resp. terminal) in $\calS'$ such that $\beta$ maps under $R$ to $\alpha$ or an equivalent assembly. Thus $\calS'$ is shown to synchronously simulate $\calS$ under $R$ and using temporal scale factor $s$. 

In \cref{sec:temp1simulation-synchronization}, it was shown that both the spatial scale factor, and the temporal scale factor of $\calS'$ is $O(\log |T|)$.

We show that $\calS'$ is a staggered system by inspecting the growth of its macrotiles. For $\calS'$ not to be staggered, there would need to be one possible empty location $l \in \mathbb{Z}^2$ where, during a single assembly step, more than one neighboring location receives a tile, and each has a non-null glue facing $l$. This can only happen if there are two paths of growth that simultaneously arrive at locations neighboring $l$. As was described in \cref{sec:temp1simulation-growth-procedure}, every macrotile grows by first creating two separate paths of growth, which travel to separate corners of the macrotile in their own respective cooperation regions, and return back to center decision region. During the period of these paths' growth which is localized to their own cooperation regions, they are entirely separate from each other. The bit-reader and corner-claimer gadgets which are assembled by each path are staggered, as is described in \cref{sec:temp1simulation-growth-procedure}. Both paths abut the same location of the decision region when returning to it, and will have a glue facing said location iff there is a tile-type that can attach by the cooperation of glues which have been presented to both sides of their respective corner (or for the case of temperature-2 input paths, if they are the path corresponding to the ``right'' of the location where they entered the macrotile, as is described in \cref{sec:temp1simulation-strength2}). However, since $\calS'$ synchronously simulates $\calS$, and $\calS$ is a locally adjacent-only system, only (at most) one of the two paths may have read a value corresponding to a glue that they could make a simulated attachment to, and it is never the case that they both do so. Additionally, we note that at all assembly steps, in all macrotile locations representing frontier locations, all paths which are contained in it are each connected to a single input glue from an adjacent macrotile. This is a result of $\calS$ being a staggered system, $\calS'$ synchronously simulating $\calS$, and the fact that all growth within a macrotile location from a specific input glue either halts, or expresses output in $s$ assembly steps. Similarly, it is never the case that two output paths from adjacent macrotile locations present glues to the same location, as each place tiles along the edge of their respective macrotile on the $s$th time step, synchronously crashing into each other (i.e. the distance between them goes from $2$ to $0$ in a single assembly step). This concludes the analysis of all cases within a macrotile, and since the growth of separate macrotiles is all disjoint, $\calS'$ is a staggered system. %\todo{PD: I'm pretty sure this paragraph is sufficient, but I'm open to being told otherwise. MJP: I agree}
 
Finally, we show that the simulation of $\calS$ by $\calS'$ is seed-independent. In \cref{sec:temp1simulation-seed} it was shown how a tile set $T'_{\sigma} \subset T'$ is created from the tile set $T$ so that for any seed $\sigma$ in $\calS$, tiles from $T'_{\sigma}$ can be used to form $\sigma'$ such that $R^*(\sigma') = \sigma$ and $\calS' = (T', \sigma'', 1)$ synchronously simulates $\calS = (T, \sigma, 2)$ under representation function $R$ with spatial scale factor $c$ and temporal scale factor $s$, where $T', r, c, s$, and the temperature $1$ of $\calS'$ are all fixed, independent of the seed $\sigma$ in $\calS$. The function that generates $\sigma'$ of $\calS'$ is clearly computable as previously shown. Therefore, ($T', 1$) seed-independently simulates $(T,2)$.

Thus, \cref{lem:temp1simulation-adj} is proven.

\begin{corollary}
        \label{thm:temp1simulation}

    For any TAS $\calS$, there is a temperature 1 staggered TAS $\calS_1$ which simulates $\calS$ in the syncTAM.

    Moreover, this simulation scheme is seed-independent, and $\calS_1$ can be obtained from $\calS$ computably.
\end{corollary}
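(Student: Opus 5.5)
The plan is to compose the two simulation schemes already established: first \cref{lem:adj-only}, then \cref{lem:temp1simulation-adj}. Given an arbitrary syncTAS $\calS = (T,\sigma,\tau)$, \cref{lem:adj-only} computably produces $\calS' = (T',\sigma',2)$. This system is locally adjacent-only and staggered, and it synchronously simulates $\calS$ at spatial and temporal scale factors $O(|T|)$. Because $\calS'$ is exactly the kind of system \cref{lem:temp1simulation-adj} takes as input, that lemma computably produces a staggered temperature-1 system $\calS_1 = (T_1,\sigma_1,1)$ which synchronously simulates $\calS'$. We take this $\calS_1$ as the desired system.

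The substantive step is to check that synchronous simulation is transitive under composition of representation functions. Let $R'$ be the representation from $\calS'$ to $\calS$, with scales $(c',s')$, and $R_1$ the one from $\calS_1$ to $\calS'$, with scales $(c_1,s_1)$. Define $R$ on $(c'c_1)$-blocks by first applying $R_1$ to each $c_1$-sub-block, assembling the resulting $c'$-block over $T'$, and then applying $R'$. Validity of $R$, meaning monotonicity under $\sqsubseteq$, follows from that of $R_1$ and $R'$, and we have $R^* = R'^*\circ R_1^*$.

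We then verify each clause of \cref{def:s-simulates-t} in turn.
\begin{enumerate}
\item Equivalent productions follow by chaining the corresponding statements for the two layers.
\item Clean mapping is inherited: fuzz at the $c_1$ level lies inside or orthogonally adjacent to $c'$-macrotiles, which are themselves fuzz-free diagonally.
\item For \emph{follows}, \emph{models} and \emph{synchronously follows}, we lift assembly sequences through both layers.
\end{enumerate}

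Synchronous following is where I expect the main obstacle. The temporal slack window $[i(s-1), i(s+1)]$ does not compose cleanly in general, and the relation $\approx$ for intermediate assemblies would also need to pass through two layers. I would avoid this by exploiting the concrete constructions rather than the abstract definition. In both constructions every macrostep has exactly the same length, and macrotiles resolve at a fixed offset within the macrostep; the correctness proofs show $R^*(\beta_{i'})=\alpha_i$ exactly when $s$ divides $i'$. Hence the composite has exact temporal scale factor $s's_1$, with $R^*(\beta_{i s' s_1}) = \alpha_i$. For the \emph{models} clause, the sets $\Pi$ can be taken as preimages under $R_1^*$ of the sets built for $\calS'$, intersected with the assemblies at macrostep boundaries.

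Finally, seed-independence and computability follow from the fact that both schemes are seed-independent. The tile set $T_1$ depends only on $T'$, which depends only on $T$ and $\tau$. The seed map $\sigma \mapsto \sigma_1$ is the composition $S_1\circ S'$ of computable seed functions, exactly as in the proof of \cref{lem:seed-ind-IU}. The staggered property of $\calS_1$ is given directly by \cref{lem:temp1simulation-adj}.
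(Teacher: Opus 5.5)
Your route is the paper's: its proof of this corollary is the single line ``By \cref{lem:adj-only} and \cref{lem:temp1simulation-adj}''. That line tacitly treats synchronous simulation as transitive, as the paper also does in the proof of \cref{lem:seed-ind-IU}. Your bookkeeping for seed-independence, computability and staggeredness is what that composition gives. You go further than the paper by trying to verify transitivity, and three details of that verification need repair.

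\emph{The choice of $\Pi$.} Intersecting the preimages with macrostep-boundary assemblies breaks clause (2) of \cref{def:s-models-t}. Take an assembly $\alpha_1''$ of $\calS_1$ produced after the composite resolution of $\alpha$ but before the next boundary (resolution in \cref{lem:temp1simulation-adj} happens strictly inside a macrostep). It satisfies $R^*(\alpha_1'')\approx\alpha$ and can still grow into a representation of every successor of $\alpha$. Yet no boundary assembly representing $\alpha$ lies below it. You must keep in $\Pi$ the assemblies from the resolution instant onward, as the paper's own proofs of both lemmas do. The unrestricted preimage $\{\alpha_1 : R_1^*(\alpha_1)\in\Pi'_\alpha\}$ works, because in the \cref{lem:temp1simulation-adj} construction the only nondeterministic choice (the decision tile) is made at that simultaneous resolution instant.

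\emph{Fixed-offset resolution.} This holds only for \cref{lem:temp1simulation-adj}. In \cref{lem:adj-only} the POC tile is placed at a time depending on the chosen delay $r'$ and the decision-path length $l'$; only output completion is padded to the boundary. This is harmless, because the outer layer is where you actually need the property. It guarantees that $R_1^*$ sends producible assemblies of $\calS_1$ to producible (not merely $\approx$-equivalent) assemblies of $\calS'$. That is what lets you chain clauses of the inner simulation that are stated only for producible assemblies.

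\emph{Clean mapping.} This does not compose for free. Suppose an $\calS'$-tile on the boundary of an \emph{unresolved} $c'$-macrotile exposed a non-null glue into a neighboring block. The resulting $c_1$-level fuzz would make that block nonempty while the nearest resolved macrotile could be two blocks away, possibly diagonally. You need the construction-specific fact that unresolved macrotiles of \cref{lem:adj-only} contain only inward input growth and expose no non-null glue on their boundary. Then $c_1$-fuzz can leave a $c'$-block only toward a resolved neighbor.
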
    

\begin{figure}
    \centering
    \input{tikz/shine_on_you/temp1simulation}
    \caption{A visual proof of \cref{thm:temp1simulation}}
    \label{fig:temp1simulation}
\end{figure}
    
\begin{proof}
By \cref{lem:adj-only} and \cref{lem:temp1simulation-adj}.   
\end{proof}

Notably, this extends the construction \cref{sec:bounded} to temperature 1 syncTAS. A visual depiction of this proof is provided in \cref{fig:temp1simulation}.

\syncTAMToneBounded*

\begin{proof}
    Apply \cref{thm:temp1simulation} to the system of \cref{thm:Turing_band}.
\end{proof}

\section{A Temperature 2 Universal Simulator for the temperature 1 syncTAM}\label{sec:temp2iufortemp1}

\begin{figure}
    \centering
    \begin{tikzpicture}[
    header/.style={align=center, text width=17ex, anchor=south},
    tas/.style={draw, gray, cloud, cloud ignores aspect, cloud puffs=39}
    ]
    \pgfdeclarelayer{background}
    \pgfdeclarelayer{foreground}
    \pgfsetlayers{background,main,foreground}
    \matrix (M) [
        matrix of nodes,
        nodes in empty cells, row sep={1cm},
        row 1/.style={nodes={header}},
        row 3/.style={text height=1cm, text width=17ex}
        %every node/.style={draw}
    ]{
    Representation function & Tileset & Temperature & Seed & Staggered + Locally Adjacent Only? \\
                            & $T$ & $1$      & $\sigma$ & \textsc{Yes} \\
                            &         &             &          &   \\
        |[draw, circle]|  $\mathfrak{R}(\mathcal{\mathcal{S}})^\dagger$   & $\mathcal{U}$ & $2$      & $\mathfrak{Y}(\mathcal{S})^*$ & \textsc{Yes} \\    
    };

    \begin{pgfonlayer}{background}
        \node[fill=gray!20, fit=(M-1-1) (M-1-5)] {}; 
        \foreach \col in {1,...,5} {
            \draw[gray, dotted] (M-1-\col) -- +(0,-7);
        }
    \end{pgfonlayer}

    \node[fit=(M-3-2) (M-3-3), inner xsep=0] (machine-cell-left) {};
    \node at (machine-cell-left.center) {\huge $\mathfrak{R}$};
    \draw[thick] (machine-cell-left.north east) ++(-.5em, 0) coordinate (machine-left-tr) rectangle (machine-cell-left.south west);    
    \node[fit=(M-3-4), inner xsep=0] (machine-cell-right) {};
    \node at (machine-cell-right.center) {\huge $\mathfrak{Y}$};
    \draw[thick] (machine-cell-right.north west) ++(.5em, 0) coordinate (machine-right-tl) rectangle (machine-cell-right.south east);
    \fill[black!70] (machine-cell-left.east) ++(-.5em, .5cm) -- ++(1em, -.5cm) -- ++(-1em, -.5cm) -- cycle;

    \draw[->] (M-2-2) -- (machine-cell-left);
    \draw[->] (M-2-3) -- (machine-cell-left);
    \draw[->] (M-2-4) -- (machine-cell-right);
    
    \draw[->] (machine-cell-left) -- (M-4-1);
    
    \draw[->] (machine-cell-right) -- (M-4-4);
    \draw[->] (machine-cell-right) -- (M-4-5);

    \node[tas, fit=(M-4-2) (M-4-4)] (tas-out) {};
    \node[tas, fit=(M-2-2) (M-2-4)] (tas-in) {};

    \draw[decorate, decoration={snake}] (M-2-5) to (tas-in) to[out=180, in=135] (M-4-1) to[out=-45, in=-165] (tas-out); 

\end{tikzpicture}
    \caption{A visual overview of \cref{thm:T2-IU-for-T1}. In an intrinsically universal simulation, the simulator tile set $\mathcal{U}$ is the same for any input tile set. Note that $\mathcal{U}$ works at temperature $2$, whereas $\mathcal{S} = (T, 1, \sigma)$ must be a temperature 1 syncTAS. The parts of the output system which depend on $\mathcal{T}$ are marked with a dagger, and with a star for a dependency on both $\calT$ and $\sigma$.}
    \label{fig:T2-IU-for-T1}
\end{figure}

\TtwoIUforTone*

%\begin{figure}
%    \centering
%    \input{tikz/shine_on_you/T2-IU-for-T1}
%    \caption{A visual overview of \cref{thm:T2-IU-for-T1}. In an intrinsically universal simulation, the simulator tileset $\mathcal{U}$ is the same for any input tileset. Note that $\mathcal{U}$ works at temperature $2$, while $\mathcal{T}$ needs to be a temperature 1 system. The parts of the output system which depend on $\mathcal{T}$ are marked with a dagger, and with a star for a dependency on both $\calT$ and $\sigma$.}
    %\label{fig:T2-IU-for-T1}
%\end{figure}

\begin{figure}
    \centering
    \includegraphics[width=0.5\linewidth]{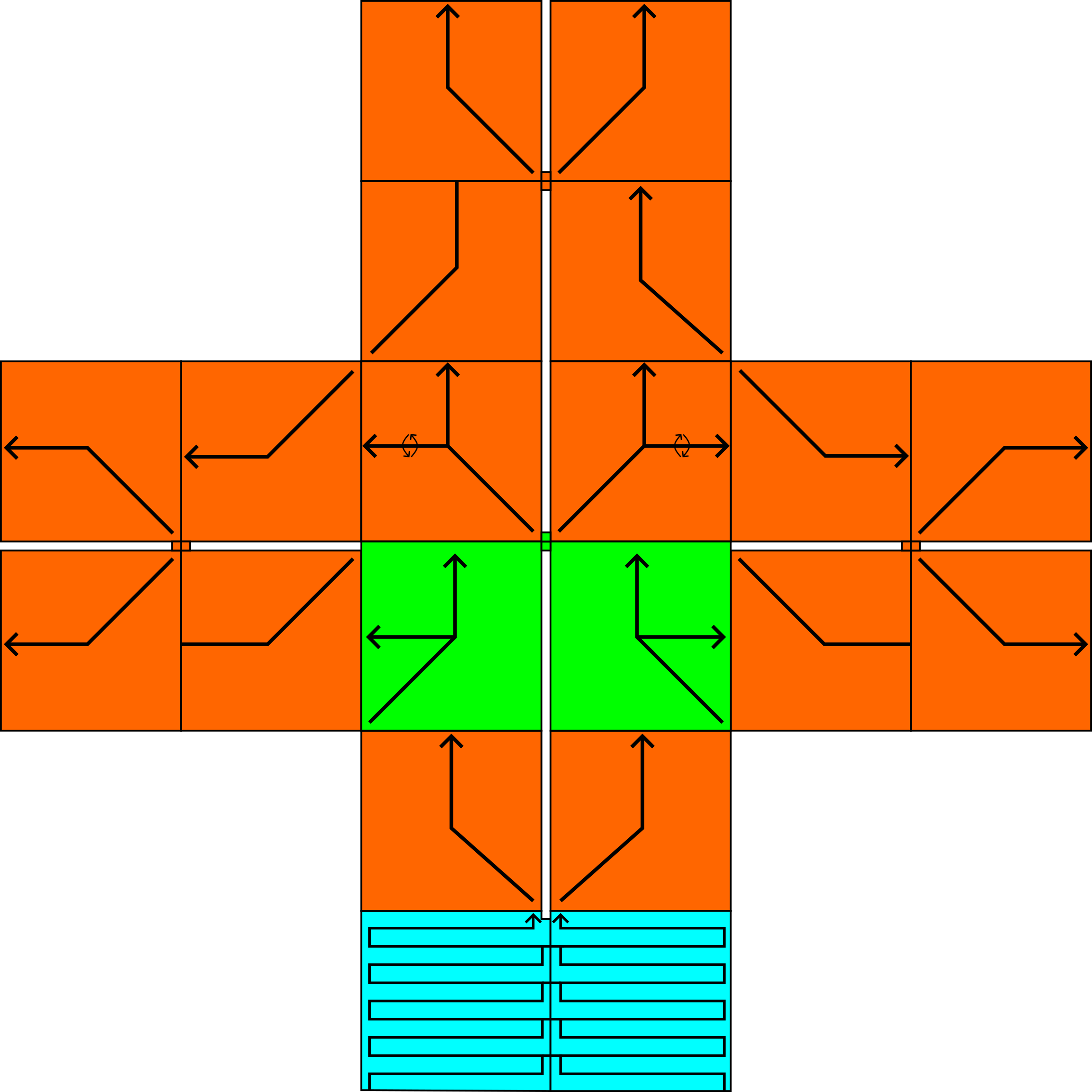}
    \caption{A (very) high level depiction of a macrotile of $\calS'$. Squares colored in light-blue represent zig-zag assemblies which co-simulate $\mathcal{M}_{d}$, squares colored orange represent propagation gadgets, and squares colored light-green represent modified propagation gadgets.} 
    %\label{fig:temp2IUforTemp1-macrotile-layout}
\end{figure}

%The rest of this section is the proof of \cref{thm:T2-IU-for-T1}.

The proof is by construction, thus we present a temperature 2 tile set $\calU$, representation function generator function $\mathfrak{R}$, and seed generator function $\mathfrak{Y}$ such that, provided a temperature 1 syncTAM system $\calT = (T, \sigma, 1)$ that is staggered, $\calS' = (\calU, S(\calT), 2)$ simulates $\calT$ under $\mathfrak{R}(\calT)$ at spatial scale factor $c$ and temporal scale factor $s$, where $c = O(|T|\log^2|T|)$, and $s = O(|T|^2\log^4|T|)$. A visual overview of the statement of \cref{thm:T2-IU-for-T1} is provided in \cref{fig:T2-IU-for-T1}.

\subsection{Commonly used gadgets} \label{sec:temp2iufortemp1-reusable-gadgets}

In this section we introduce a class of \textit{propagation} gadgets, which appear frequently throughout our construction of $\calS'$. 

The purpose of a propagation gadget is to transmit information contained within a set of glues from one side, $d_{in} \in \{N,E,S,W\}$ (from which growth of the gadget is initiated), to a set of its sides $D_{out} \subseteq \{N,E,S,W\}-\{d_{in}\}$. A propagation gadget which seeks to propagate a string of glues of length $n \in \mathbb{N}$ will reach a height of $n$, and completes its growth after $n^2$ assembly steps. We call the first tile to attach during the growth of a propagation gadget the \textit{start} tile, and the last tile to attach during the growth of a propagation gadget the \textit{end} tile. Three example square propagation gadgets, along with their abstract depictions, are shown in Figures \ref{fig:temp2iufortemp1-prop-gadget} and \ref{fig:temp2iufortemp1-prop-gadget-2}. 

\begin{figure*}[t!]
    \centering
    \begin{subfigure}[t]{0.48\textwidth}
        \centering
        \includegraphics[width=\linewidth]{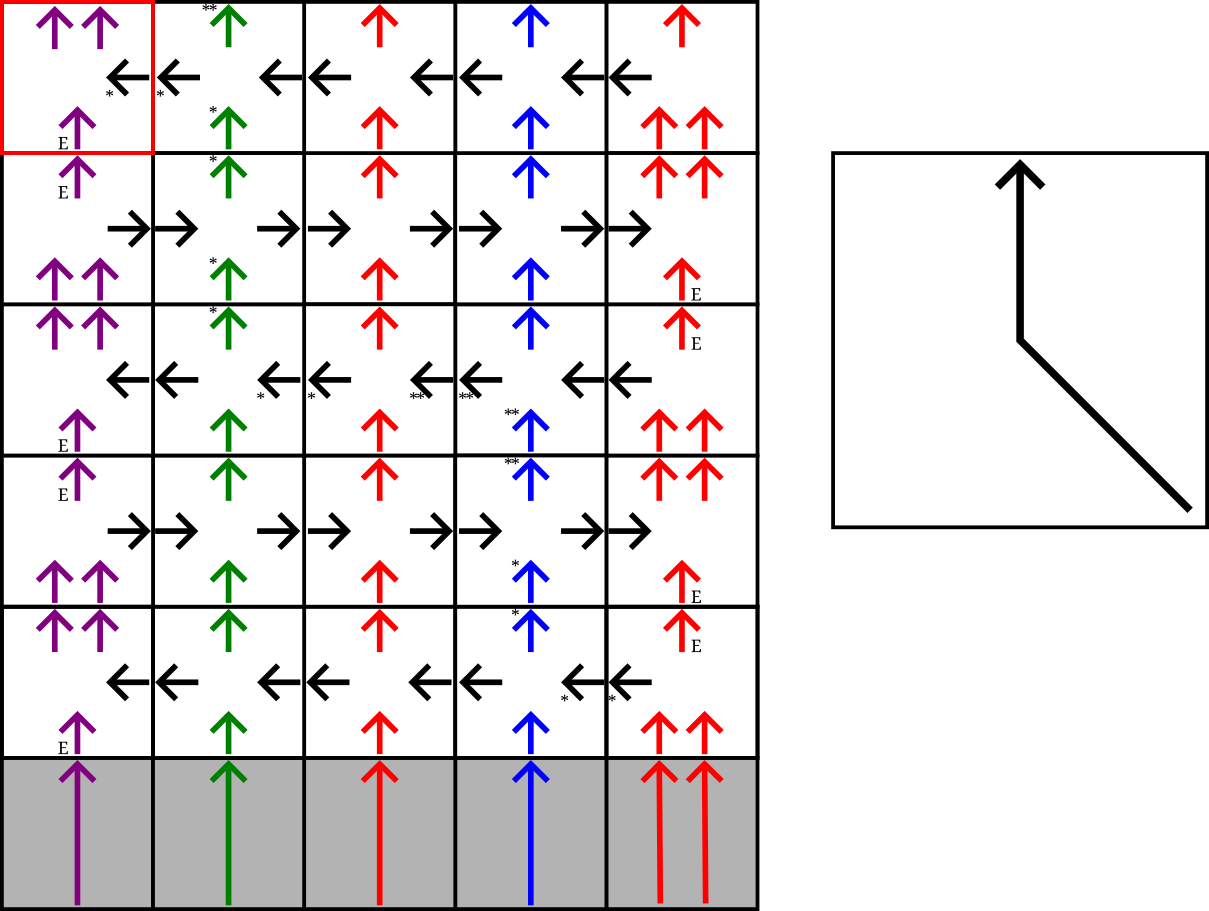}
        \caption{}
    \end{subfigure}%
    ~ 
    \begin{subfigure}[t]{0.48\textwidth}
        \centering
        \includegraphics[width=\linewidth]{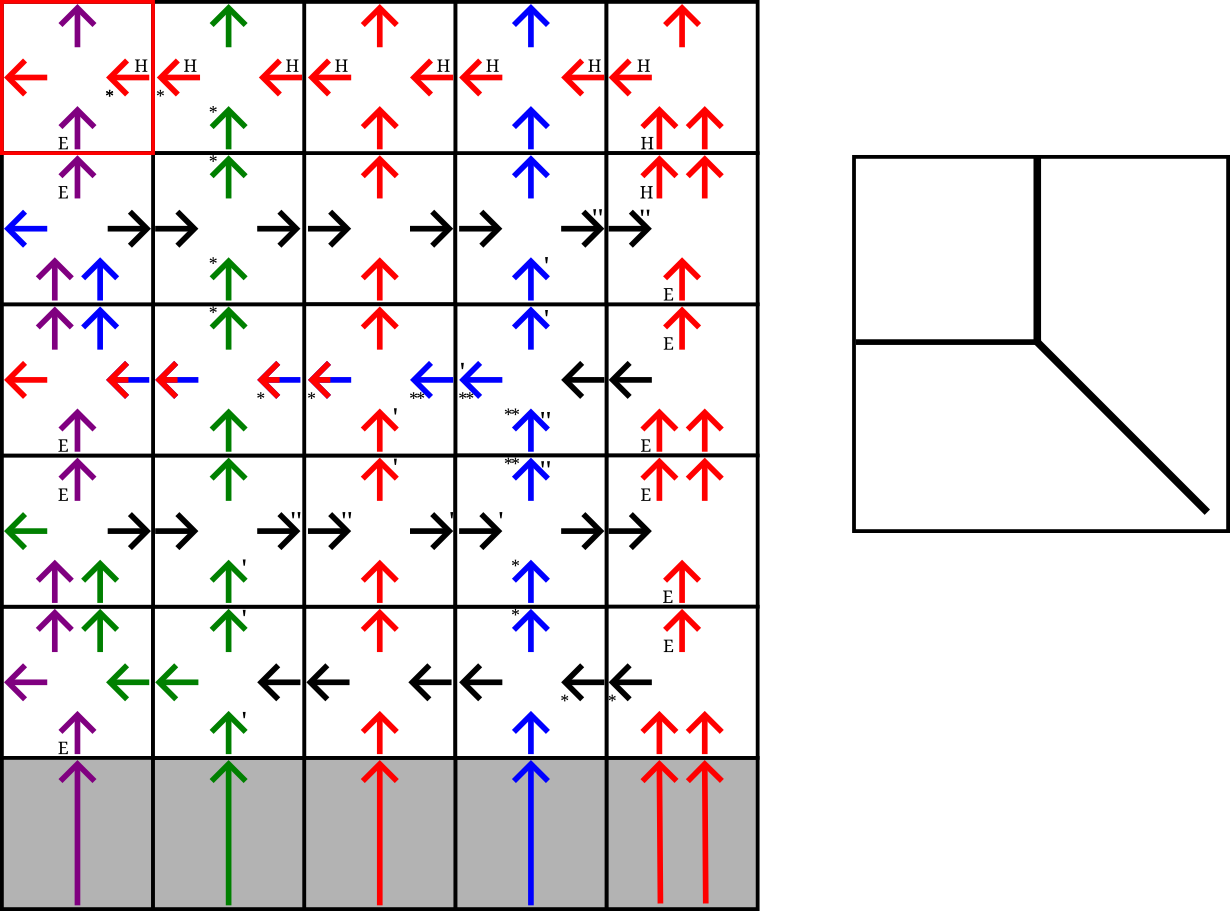}
        \caption{}
    \end{subfigure}
    \caption{A set of two example propagation gadgets.  Colors are used to indicate the symbol encoded by each glue, with the color black corresponding to ``no symbol''. The end tile of each gadget is outlined in red, and to each assembly's right is an abstract depiction of the gadget. Markings are used to indicate any extra information that must be encoded into each glue to direct growth. (a) A propagation gadget which propagates the string of symbols on its $S$ side to its $N$ side, which begins growth in the $SE$ position. The `$*$' marking travels to the left by two locations on each ``zag'', and once the `$*$' marking reaches the tile on the opposite edge to that which it began, the gadget ceases to grow any taller. The end tile of the gadget has a strength-2 glue that allows for a subsequent, chaining propagation gadget to attach. (b) A propagation gadget which propagates the string of symbols on its $S$ side to its $N$ and $W$ sides. Along with the aforementioned `$*$' marking, a `$'$' marking is used to indicate which symbols propagate to the west on each ``zig''. Sides of a tile whose glue encodes multiple symbols are represented by the presence of multiple colors.}
    \label{fig:temp2iufortemp1-prop-gadget}
\end{figure*}

\begin{figure}
    \centering
    \includegraphics[width=0.6\linewidth]{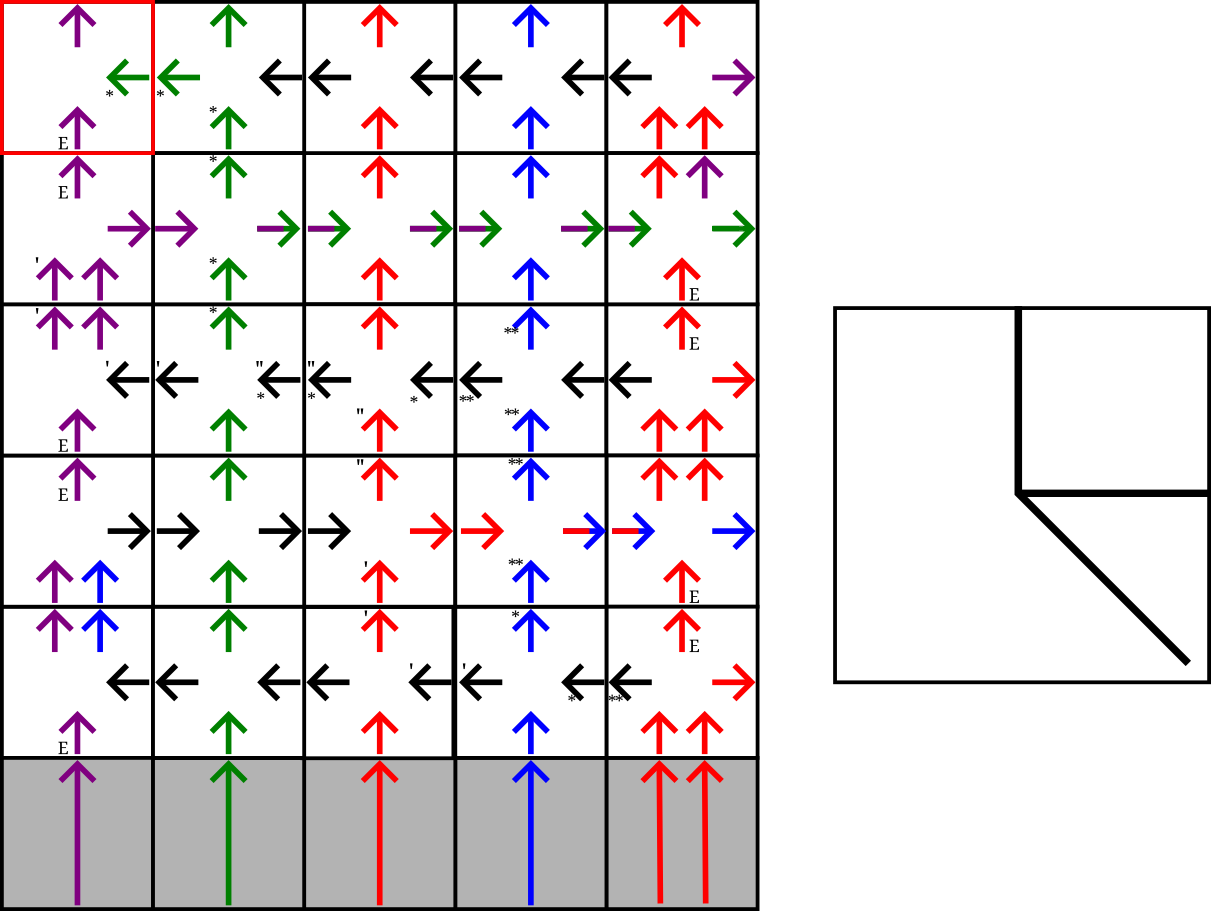}
    \caption{A propagation gadget which propagates the string of symbols on its $S$ side to its $N$ and $E$ side, which begins growth in the $SE$ position. The `$*$' and `$'$' markings travels to the left by two locations on each ``zig''. The `$'$' marking is used to determine which symbols are propagated to the east on each ``zag'', whereas the `$*$' symbol is used to denote when the gadget ceases to grow any taller.}
    \label{fig:temp2iufortemp1-prop-gadget-2}
\end{figure}

Growth of a propagation gadget is initiated by the attachment of a start tile with a strength-2 in exactly one of its $d_{in} d_{adj}$ corners, where $d_{adj} \in \texttt{ADJ}(d_{in})$. A propagation gadget which initiates growth in its $d_{in}d_{adj}$ corner will place its last tile (i.e. its end tile) in its $\texttt{OPP}(d_{in})\texttt{OPP}(d_{adj})$ corner if the length of its input string of symbols is \textit{odd}, and in its $\texttt{OPP}(d_{in})d_{adj}$ corner if the length of its input string of symbols is \textit{even}. For the purposes of our construction, we will only use propagation gadgets to propagate odd-length inputs, a decision whose justification we defer to \cref{sec:temp2iufortemp1-tile-lookup-procedure}. 

We also use propagation gadgets to reverse the string of symbols which it transmits from its $d_{in}$ side. Particularly, we use propagation gadgets whose start tile attaches to its $d_{in}d_{adj}$ corner to output the reversed version of their supplied input string \textit{exclusively} on their $\texttt{OPP}(d_{adj})$ side. An example of this behavior can be seen in \cref{fig:temp2iufortemp1-prop-reversing}. 

\begin{figure}
    \centering
    \includegraphics[width=0.6\linewidth]{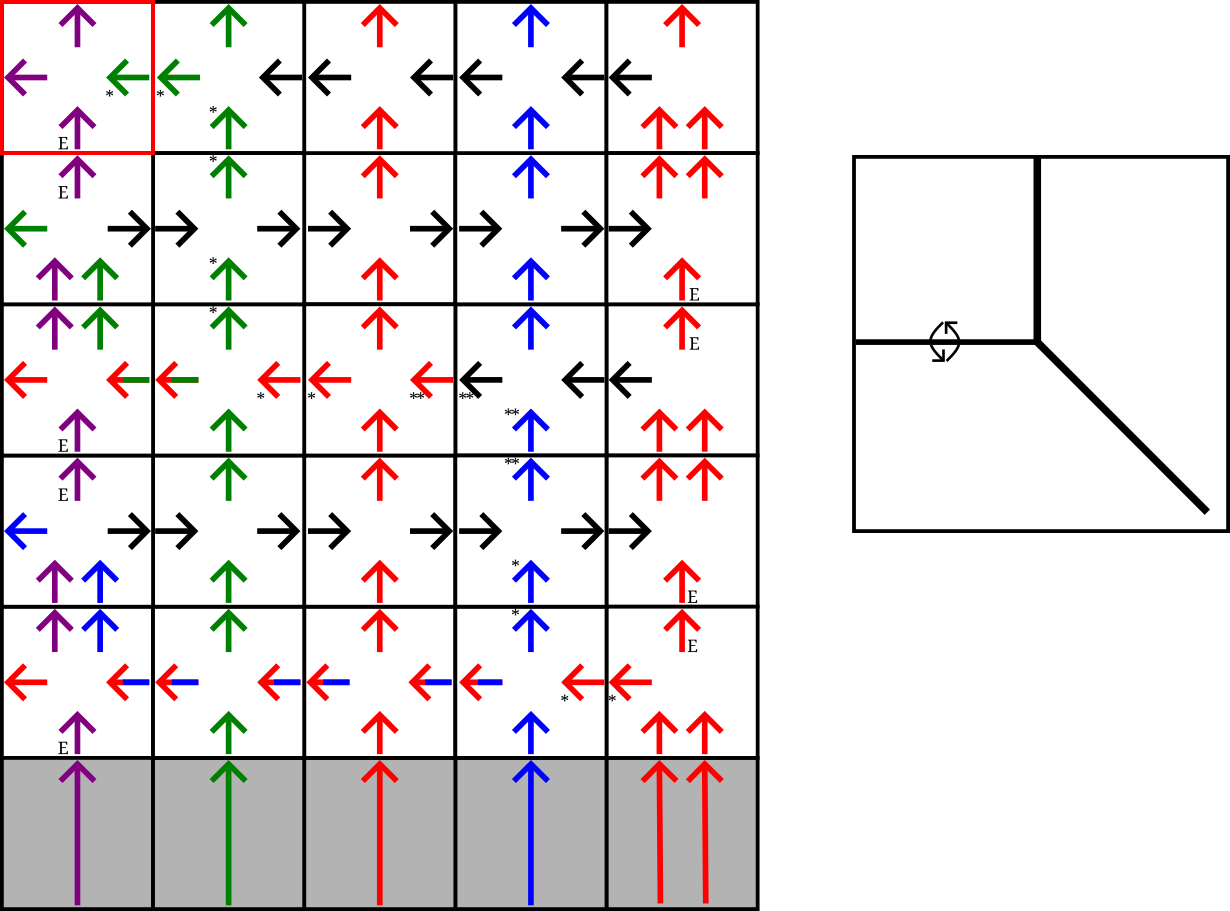}
    \caption{An example propagation gadget which transmits the string of symbols on its $S$ side to its $N$ and $W$ sides, with the version of the aforementioned string of symbols transmitted to its $W$ side being reversed. The `$*$' marking, in addition to denoting the maximum height of the gadget, also denotes which symbols to propagate to the left on each ``zig''. The end tile of the provided gadget is outlined in red, and sides of a tile whose glue encodes multiple symbols are represented by the presence of multiple colors.} 
    \label{fig:temp2iufortemp1-prop-reversing}
\end{figure}

Finally, we note that propagation gadgets may easily be ``chained'' onto each other (i.e. a propagation gadget may be used to transmit the set of symbols presented on an output side of another propagation gadget) by the inclusion of strength-2 glues to the tiles which attach in one of the gadget's corners. When a propagation gadget chains off of the end tile of another, the chaining gadget starts growth only after the gadget which initiated its growth has completed construction, meaning that there are no assembly steps where it must ``wait'' for the gadget which it grew off of to complete its assembly. Let $d_{in}d_{adj}$ be the corner of a propagation gadget where a start tile is placed, our construction exclusively uses chaining propagation gadgets whose growth is initiated by the tile in the $\texttt{OPP}(d_{in})\texttt{OPP}(d_{adj})$ corner, taking the output on the $\texttt{OPP}(d_{in})$ or $\texttt{OPP}(d_{adj})$ side of the gadget it propagates off as input, and the $d_{in}\texttt{OPP}(d_{adj})$ corner, taking the output on the $\texttt{OPP}(d_{adj})$ side of the gadget it propagates off of as input. 

\subsection{Macrotile structure} \label{sec:temp2iufortemp1-macrotile-structure}

Here we discuss the layout of a $c \times c$ macrotile of $\calS'$. A macrotile which has completed growth contain tiles along its output sides which express the following two pieces of information to an adjacent macrotile location : (1) the tile type $t$ that the macrotile resolved to, and (2) an encoding of the tile set $T$, in the form of a data structure called a \textit{Tile Lookup Table} (TLT). Both of these pieces of information are expressed using a finite set of glues in $G_{\calU}$, which correspond to the symbols of $\Sigma_{Macrotile}$. For the sake of clarity, we will denote the length of the encodings necessary to express this information as $l_{T}$, and defer further discussion of this encoding to \cref{sec:temp2iufortemp1-Side-Encoding}. 

In order to maintain synchronicity, the aforementioned information is presented twice on each output side of a completed macrotile, in two sections which we logically describe as the $L$, and $R$ outputs on a side of the macrotile. The $L$ and $R$ outputs on a particular side of the macrotile each encode identical information (i.e. the tile type $t$, and the TLT), however they are separated from each other by exactly one empty location between them, with the information provided in the $R$ output being the reversed version of that which is given in the $L$ output.

The $L$ and $R$ outputs on the sides of all completed macrotiles in $\calS'$ have length $l_T$. We logically divide each $c \times c$ macrotile into a $6 \times 6$ grid of 36 $l_T \times l_T$ \textit{sections}, with the center-most two sections along each side $d\in \{N,E,S,W\}$ separated by a single-tile wide space. On side $d \in \{N,E,S,W\}$, we denote the two center-most sections nearest to the sides of the macrotile as $L_d$ and $R_d$. All other sections which receive tiles during the growth of a macrotile in $\calS'$ are denoted as $fill_{(x,y)}$ where $(x,y) \in \{0,1,\ldots,5\} \times \{0,1,\ldots,5\}$, and all sections which do not receive tiles during the growth of a macrotile as $empty_{(x,y)}$. For any two regions $Region_1$ and $Region_2$ of a macrotile, if the sides of $Region_1$ and $Region_2$ are separated by a single-tile wide space between them, we say that $Region_1$ and $Region_2$ are \textit{neighbors} (for example, $L_d$ neighbors $R_d$, for all $d \in \{N,E,S,W\}$). Note that the four center-most regions of a macrotile each have two neighbors, since two of their sides are bordered by a single-tile wide gap. A schematic of this layout is shown in \cref{fig:temp2iufortemp1-macrotile-layout-grid}.

\begin{figure}
    \centering
    \includegraphics[width=0.75\linewidth]{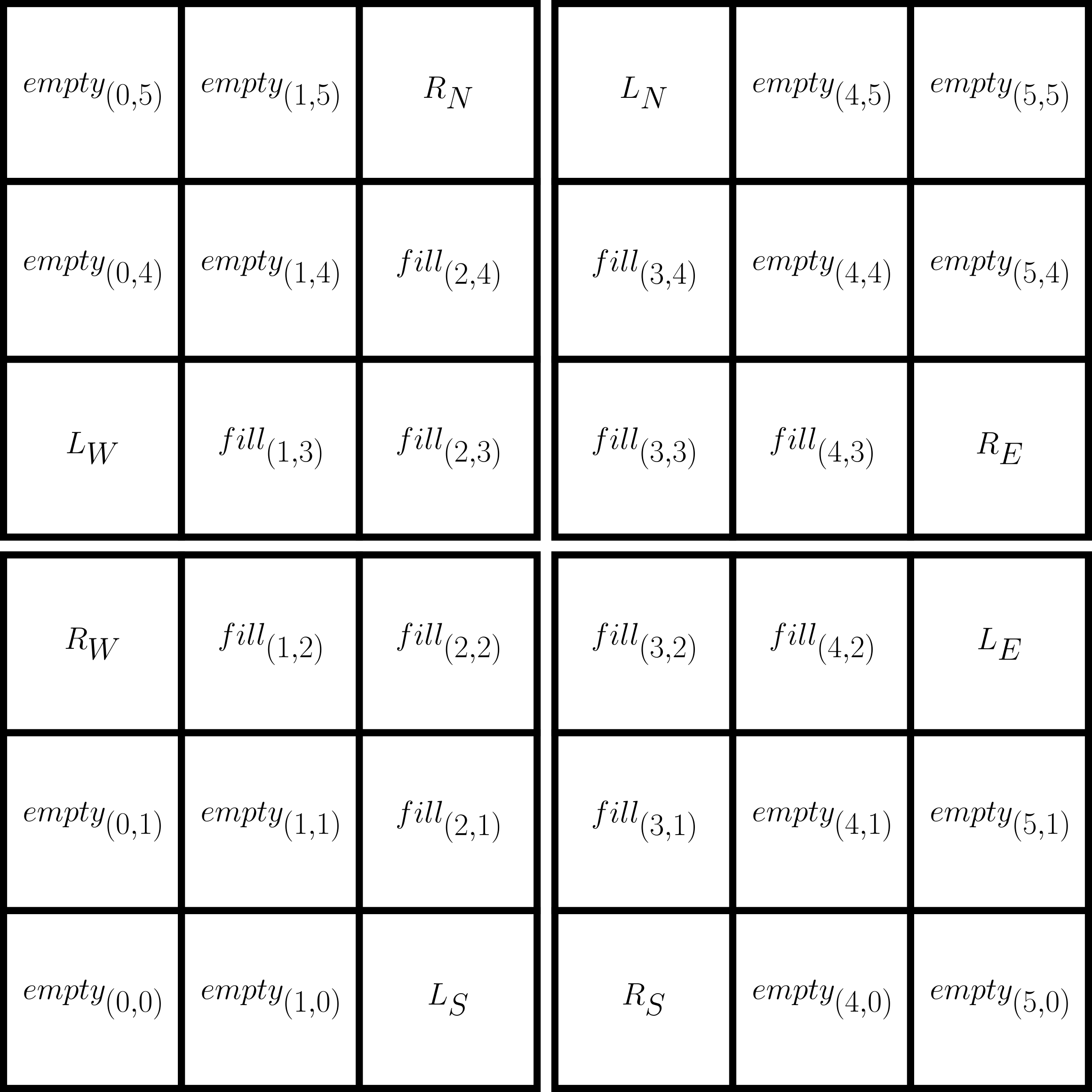}
    \caption{A schematic depicting the layout of a macrotile constructed by $\calS'$. Each $l_{out} \times l_{out}$ cell is labeled as per. the description provided in \cref{sec:temp2iufortemp1-macrotile-structure}.}
    \label{fig:temp2iufortemp1-macrotile-layout-grid}
\end{figure}

\subsection{Tile lookup table encoding} \label{sec:temp2iufortemp1-TLT-encoding}

The Tile lookup table is a data structure which lists out representations of each tile type $t \in T$, more specifically, the TLT is a string of symbols in the alphabet $\Sigma_{TLT}=\{0,1,\mid,N,E,S,W\}$. In this section, we showcase our procedure for creating the tile lookup table corresponding to tile set $T$. The maximum number of bits necessary to uniquely express each glue label in $G_T$ is $\lceil \log |G_T|\rceil$. Similarly, the maximum number of bits necessary to uniquely express an identifier for each tile type $t \in T$ is $\lceil \log |T|\rceil$. 

Given a set of glue labels $G_T$, let $\texttt{LEX}_G(G_T)$ be a lexicographic ordering of the glue labels in $G_T$, and for each glue $g \in G$, let $\texttt{INDEX}_G(g)$ be a function which returns the index of $g$ in $\texttt{LEX}(G)$. Similarly, given a set of tile types $T$, let $\texttt{LEX}_T(T)$ be a lexicographic ordering of the tile types in $T$, and $\texttt{INDEX}_T(t)$ be a function which returns the index of $t$ in $\texttt{LEX}_T(T)$. To create a binary encoding for each tile type, define the function $\mathtt{PAD\mathunderscore{}BIN} :\mathbb{N}\times\mathbb{N} \rightarrow \{0,1\}^* $ such that $\mathtt{PAD\mathunderscore{}BIN}(n,l)$ is the string $b \in \{0,1\}^*$ consisting of the binary representation of $n$ padded on the left with the number of $0$s needed to make $|s|=l$, with the requirement that $|\mathtt{BIN}(n)| \le l$, where $\mathtt{BIN}(n)$ is the standard binary representation of $n$.

For each tile type $t \in T$, and glue $g \in G_T$, let $b_t$ and $g_t$ be the strings $\mathtt{PAD\mathunderscore{}BIN}(\texttt{INDEX}_T(t), \lceil \log |T|\rceil)$ and $\mathtt{PAD\mathunderscore{}BIN}(\texttt{INDEX}_G(g), \lceil \log |G_T|\rceil)$, respectively. For each $t \in T$, the TLT lookup entry corresponding to $t$ is represented by the function $\texttt{TLT}(t)$, which returns the following string : 

\[\text{``}\mid b_t\text{ N } b_{g_{t,N}}\text{ E }b_{g_{t,E}} \text{ S } b_{g_{t,S}} \text{ W } b_{g_{t,W}} \text{''}\]

The length of $\texttt{TLT}(t)$ is $\lceil \log |T|\rceil + 4\lceil \log |G_T|\rceil + 5$. The string corresponding to the Tile Lookup Table for tile set $T$, which we will call $\Gamma_{T}$, is the concatenation of $\texttt{TLT}(t)$ for each tile $t \in T$, ordered by $\texttt{INDEX}(t)$, and the length of $\Gamma_T$ is $|T|(\lceil \log |T|\rceil + 4 \lceil \log |G_T|\rceil + 5)$.

\subsection{Macrotile side encoding}\label{sec:temp2iufortemp1-Side-Encoding}

Here we discuss the information which is presented on the output sides of a macrotile location that resolves to a tile type $t \in T$ in $\mathcal{S}'$. Let $M(l)$ be a macrotile location in $\calS'$ that resolves to a tile type $t \in T$ during macrostep $i$. The output sides of $M(l)$ are those which never received input growth from an adjacent macrotile during a previous macrostep. In order to direct future assembly (i.e. propagate glue information such that future macrotiles may resolve), our construction insists that during the last step of macrostep $i$, information is presented on each side $d \in D_{out}$ (where $D_{out} \subset \{N,E,S,W\}$ is the output sides of $M(l)$) which encode the tile type $t$, and the TLT $\Gamma_T$.

Along the output sides of a macrotile is a string of symbols in the alphabet $\Sigma_{Macrotile} =\{0,1,\mid,\mid',N,E,S,W,\_\}$ (note that $\Sigma_{TLT} \subset \Sigma_{Macrotile}$). We use the function $\texttt{OUTPUT}(t)$ to represent the encoding of tile type $t$ that will be presented on each output side of $M(l)$. $\texttt{OUTPUT}(t)$ returns $\texttt{TLT}(t)$, with the symbol `$\mid$' replaced with `$\mid'$'. $\texttt{blank}(T)$ is a function which, when provided with a tile set $T$, returns a string consisting of a large number of `$\_$' symbols. The length of $\texttt{blank}(T)$ has significance in \cref{sec:temp2iufortemp1-tile-lookup-procedure}, to which we defer its discussion. The function $\mathtt{SIDE\mathunderscore ENCODING}(t)$ returns the macrotile side encoding which is presented on each output side of a $M(l)$, and is as follows :

\[\text{``}\texttt{OUTPUT}(t)\text{ }\Gamma_T\text{ }\texttt{blank}(T)\text{''}\]

After the last step of macrostep $i$, for each output side $d \in D_{out}$ of $M(l)$ there are 2 groups of $|\mathtt{SIDE\mathunderscore ENCODING}(t)|$ tiles abutting side $d$ of $M(l)$, within the $L_d$, and $R_d$ regions, respectively. The tiles on the edge of the $L_d$ region present glues to the adjacent macrotile which encode $\mathtt{SIDE\mathunderscore ENCODING}(t)$, whereas the tiles on the edge of the $R_d$ region present glues to the adjacent macrotile which encode $\mathtt{SIDE \mathunderscore ENCODING}(t)^R$ (i.e. the reversed version of the string). Every glue abutting the adjacent macrotile apart from those on the corners of $L_d$ and $R_d$ and the adjacent $empty$ regions is strength-1, and are in the glue set $G_{\Sigma_{Macrotile},d} \subset G_\calU$. $G_{\Sigma_{Macrotile},\texttt{OPP}(d)}$ contains exactly $|\Sigma_{Macrotile}|$ glues, each of which we denote as $g_{\lambda,\texttt{OPP}(d)}$ for $\lambda \in \Sigma_{Macrotile}$. Glues that are presented in the corner of the sides of $M(l)$, and the boundary with the adjacent $empty$ region (for $L_d$ and $R_d$) are strength-2, and will always encode the symbol $\lambda$ that is in the first position of $\mathtt{SIDE \mathunderscore ENCODING}(t)$ (or the last position of $\mathtt{SIDE\mathunderscore ENCODING}(t)$), as well as the output side $d$ that the information is being expressed upon. We call the aforementioned strength-2 glue $\mathtt{MARK\mathunderscore OUT}(g_{\lambda, d,L})$ (or $\mathtt{MARK\mathunderscore OUT}(g_{\lambda, d,R})$) where $\lambda \in \Sigma_{Macrotile}$ is a symbol and $d \in \{N,E,S,W\}$ is an output side that a macrotile may take.

\subsection{Tile lookup procedure} \label{sec:temp2iufortemp1-tile-lookup-procedure}

Dissimilar to many other IU constructions in the literature \cite{IUSA, SADSSF}, our system $\calS'$ determines valid tile attachments through the operation of a zig-zag, fixed-width, probabilistic Turing machine $\mathcal{M}_{d}$ where $d \in \{N,E,S,W\}$. When we refer to $\mathcal{M}_{d}$, we are actually referring to one of $4$ TMs $\{\mathcal{M}_{N}, \mathcal{M}_{E}, \mathcal{M}_{S}, \mathcal{M}_{W}\}$, which we describe for arbitrary $d \in \{N,E,S,W\}$. We will describe the tile lookup procedure of $\calS'$ by first describing the operation of $\mathcal{M}_{d}$.

At a high level, for $t \in T$, $\mathcal{M}_{d}$, on input $\mathtt{SIDE\mathunderscore ENCODING}(t)$, determines whether or not there is a tile type $t' \in T$ whose glue on its $d$ side can form a bond with the glue on side $\texttt{OPP}(d)$ of $t$. If there is such a tile, then $\mathcal{M}_d$ enters an accept state with the string $``\texttt{OUTPUT}(t')\Gamma_T\texttt{blank}(T)'' = \mathtt{SIDE\mathunderscore ENCODING}(t')$ after exactly $\mu = \frac{|\mathtt{SIDE\mathunderscore ENCODING}(t)|-1}{2}$ steps. If there is not such a tile, $\mathcal{M}_d$ will halt after $<\mu$ steps.

Formally, $\mathcal{M}_{d} = (Q_d, \Sigma_{Macrotile}, \Sigma_{\mathcal{M}}, q_0 \in Q_d, A_d \subset Q_d, \delta_{1,d}, \delta_{2,d})$ where $\Sigma_{\mathcal{M}} = \{0,1,\mid, \mid', N,E,S,W,0^*,1^*,!,\#\}$, $|A_d| = 1$ (there is exactly one accept state), and $\delta_{1,d}$ and $\delta_{2,d}$ are the transition functions of $\mathcal{M}_d$ (which has two transition functions due to being nondeterministic). $\mathcal{M}_{d}$ operates in four phases : \textit{Lookup}, \textit{Marking}, \textit{Randomized Choice}, and \textit{Copy-back}, which we will describe below. Note that in the pseudocode that we provide for $\mathcal{M}_{d}$, we use the notation $d+1$ to refer to the side which is $90^\circ$ clockwise to $d$ (if $d ==N$, then $d+1 = E$, etc.), $h$ to refer to the symbol that the head of $\mathcal{M}_{d}$ points to at the current step, and $r \in \{1,2\}$ to refer to whether or not $\mathcal{M}_{d}$'s next transition will be in $\delta_{1.d}$, or $\delta_{2,d}$. 

\subsubsection{Lookup stage}

The lookup stage of $\mathcal{M}_{d}$ is responsible for determining the set of tiles $T_{bond} \subseteq T$ that can form an attachment to the glue on side $\texttt{OPP}(d)$ of $t$. This is done through a bitwise comparison of $b_{g_{t,\texttt{OPP}(d)}}$ and $b_{g_{t',d}}$, for every $t' \in T$. Each bitwise comparison is performed by scanning the entirety of ``$\texttt{OUTPUT}(t)$ $\Gamma_T$'' twice (once to the right, and once to the left). Once all $\lceil\log|G_T|\rceil$ right and leftward passes have been performed, this stage concludes with a single rightward pass which halts at the end of $\Gamma_T$. Therefore, the lookup stage of $\mathcal{M}_d$ requires exactly $(2\lceil\log|G_T|\rceil+1)(|\texttt{OUTPUT}(t)| + |\Gamma_T|)$ computation steps to complete. Additionally, during the first $\lceil \log |G_T|\rceil$ rightward passes, exactly one symbol in $\{0,1\}$ corresponding to $b_{g_{t,\texttt{OPP}(d)}}$ in the $\texttt{OUTPUT}(t)$ portion of the input has their symbol changed to have `$*$' appended (i.e. $0$ is transformed to $0^*$, and $1$ to $1^*$). During each rightward pass, the aforementioned instance of $b_{g_{t,\texttt{OPP}(d)}}$ is observed, and if every symbol for which said string is made up of is $0^*$, then $\mathcal{M}_d$ will determine that $b_{g_{t,\texttt{OPP}(d)}}$ corresponds to the null glue, and will halt, since there is no tile type $t'$ which makes a bond with the null glue. The pseudocode for this portion of $\mathcal{M}_{d}$ is shown in \cref{alg:temp2iufortemp1-Lookup}.

\begin{algorithm}
\caption{An algorithm that performs a bitwise comparison of $b_{g_{t,\texttt{OPP}(d)}}$ and $b_{g_{t',d}}$, for each tile type $t' \in T$. After the completion of this algorithm, if $b_{g_{t,\texttt{OPP}(d)}} == b_{g_{t',d}}$, all $\lceil \log |G_T|\rceil$ bits of $b_{g_{t',d}}$ within $\texttt{TLT}(t')$ will be marked with a `$^*$' symbol, otherwise, $<\lceil \log |G_T|\rceil$ bits will be marked.} 
\label{alg:temp2iufortemp1-Lookup}
\begin{algorithmic}[1]
    \While{$h \notin \{\texttt{OPP}(d), \#, \_\}$} \label{algstep:temp2iufortemp1-Lookup-start} 
    %\algorithmiccomment Travel right to $b_{g_{t,d}}$
        \State Move right without changing the tape
    \EndWhile
    \State $null \gets T$
    \While{$h \notin \{0,1, E,S,W,\mid\}$}
    %\algorithmiccomment Travel right to the first unmarked symbol of $b_{g_{t,d}}$, or the end of $b_{g_{t,d}}$
        \If{$h \neq 0^*$}
            \State $null \gets F$
        \EndIf
        \State Move right without changing the tape
    \EndWhile
    \If{$null == T$}
    %\algorithmiccomment Halt immediately if all symbols of $b_{g_{t,d}}$ are $0^*$
        \State \textbf{halt}
    \EndIf
    \If{$h \in \{0,1\}$}
        \State $b_0 \gets h$ %\algorithmiccomment Save the first unmarked bit of $b_{g_{t,d}}$
        \State Write $b_0^*$ to the tape
        \While{$h \notin \{\mid, \_\}$} \label{algstep:temp2iufortemp1-Lookup-entry}
            \State Move right without changing the tape
        \EndWhile
        \If{$h ==$ $\mid$}
            \While{$h \neq d$}
                \State Move right without changing the tape
            \EndWhile
            \While{$h \notin \{0,1\}$}
                \State Move right without changing the tape
            \EndWhile
            \State $b_1 \gets h$
            \If{$b_1 == b_0$}
                \State Write $b_1^*$ to the tape and move right
            \EndIf
            \State \textbf{goto} \ref{algstep:temp2iufortemp1-Lookup-entry}

        \Else
            \While{$h \neq$ $\mid'$}
                \State Move left without changing the tape
            \EndWhile
            \State \textbf{goto} \ref{algstep:temp2iufortemp1-Lookup-start}
        \EndIf
    \Else 
        \While{$h \neq \_$}
            \State Move right without changing the tape
        \EndWhile
        \State \textbf{goto} \cref{alg:temp2iufortemp1-Marking}
    \EndIf    
\end{algorithmic}
\end{algorithm}

\subsubsection{Marking stage}

The marking stage of $\mathcal{M}_{d}$ is responsible for marking the first character of each TLT entry corresponding to a tile $t' \in T_{bond}$ with the symbol $\mid^*$, with the exception TLT entry for $t'$ which appears \textit{last} on the tape, which is marked with the symbol $\mid^{**}$. This operation is performed using only a single leftward pass of the $\texttt{OUTPUT}(t)$ and $\Gamma_T$ portions of the tape. Thus, the marking stage requires exactly $|\texttt{OUTPUT}(t)| +|\Gamma_T|$ steps to complete, and the pseudocode for this portion of $\mathcal{M}_{d}$ is shown in \cref{alg:temp2iufortemp1-Marking}.

\begin{algorithm}
\caption{An algorithm that marks the first character of each TLT entry whose glue on side $d$ matches the glue on side $\texttt{OPP}(d)$ of the provided tile-type definition. The \textit{last} matching $TLT$ entry (ordered along the tape) is marked with a `$^{**}$' symbol, and all others are marked with `$^*$' symbol.}
\label{alg:temp2iufortemp1-Marking}
\begin{algorithmic}[1]
    \State $a \gets T$ 
    
    \If{$d \neq W$} \label{algstep:temp2iufortemp1-Marking-start}
        \While{$h \neq d+1$}
            \State Move left without changing the tape
        \EndWhile
    \EndIf
    \State Move left without changing the tape

    \If{$h \in \{0^*, 1^*\}$}
        \While{$h \notin \{\mid, \mid'\}$}
            \State Move left without changing the tape
        \EndWhile
        
        \If{$h ==$ $\mid$}
            \If{$a == T$}
                \State $a \gets F$
                \State Write $h^{**}$ to the tape and move left
            \Else
                \State Write $h^*$ to the tape and move left
            \EndIf
            \State \textbf{goto} \ref{algstep:temp2iufortemp1-Marking-start}
        \Else
            \State \textbf{goto} \cref{alg:temp2iufortemp1-Randomized-Choice}        
        \EndIf
    \EndIf
\end{algorithmic}
\end{algorithm}

\subsubsection{Randomized choice stage}

The randomized choice stage of $\mathcal{M}_{d}$ is responsible for performing the non-deterministic selection of a single tile type in $T_{bond}$. 
This is accomplished by iterating a single time to the right through each entry corresponding to a tile type $t'\in T_{bond}$ and, if said entry was marked with a `$\mid^*$' symbol, the corresponding transition in $\delta_{1,d}$ ``accepts'' $t'$, and marks its entry with a `$!$' symbol. However, if the corresponding transition in $\delta_{2,d}$ ``rejects'' $t'$, and does not mark it with a `$!$'. 
In order to ensure that a tile type $t' \in T$ is always selected (if there is one available), the entry marked with `$\mid^{**}$' is \textit{always} ``accepted'' iff no previous entry marked with `$\mid^*$' was ``accepted''. 
If there is no such entry corresponding to a tile type $t'$ marked with `$\mid^*$' or `$\mid^{**}$', then $\mathcal{M}_d$ will determine that there is no tile type in $T$ which binds to side $\texttt{OPP}(d)$ of $t$, and halt at the end of the aforementioned rightward pass. The pseudo-code for this portion of $\mathcal{M}_{d}$ is shown in \cref{alg:temp2iufortemp1-Randomized-Choice}. 

Let $t'_i \in T_{bond}$ be the $i$th TLT entry (for $0 < i \le |T_{bond}|$) marked with a $\mid^*$, or $\mid^{**}$ symbol by \cref{alg:temp2iufortemp1-Marking}. We note that for each $t'_i \in T_{bond}$ , there is a sequence of transitions by $\mathcal{M}_d$ that marks the entry related to $t'_i$ with a `$!$' symbol, and that the likelihood of $t'_i$ receiving the aforementioned marking is $\frac{1}{2^i}$ (as long as $|T_{bond}| > 1$, otherwise the likelihood of $t'_i$ receiving a marking is $1$).

\begin{algorithm}
\caption{An algorithm that chooses a single random matching $TLT$ entry (by \cref{alg:temp2iufortemp1-Lookup} and \cref{alg:temp2iufortemp1-Marking}), and marks it with a $!$.}
\label{alg:temp2iufortemp1-Randomized-Choice}
\begin{algorithmic}[1]
    \While{$h \notin \{\mid^*, \mid^{**}, \_\}$}
        
        \State Move right without changing the tape
    \EndWhile

    \If{$h \in \{\mid^*, \mid^{**}\}$}
        \If{($h == \mid^*$ and $r == 1$) or $h ==$ $\mid^{**}$}    
            \State Write $!$ to the tape and move right
            \While{$h \neq \_$}
                \State Move right without changing the tape
            \EndWhile
            \While{$h \neq$ $\mid'$}
                \If{$h == 0^*$}
                    \State Write $0$ to the tape and move left
                \ElsIf{$h == 1^*$}
                    \State Write $1$ to the tape and move left
                \Else
                    \State Move left without changing the tape
                \EndIf
            \EndWhile
            \State \textbf{goto} \cref{alg:temp2iufortemp1-Copyback}
        \EndIf
    \Else
        \State \textbf{halt}
    \EndIf
\end{algorithmic}
\end{algorithm}

\subsubsection{Copy-back stage}

The copy-back stage of $\mathcal{M}_{d}$ is responsible for ensuring that (1) the tape only consists of members of the input alphabet $\Sigma_{Macrotile}$, and (2) the first $|\texttt{OUTPUT}(t)|$ locations of the tape encode $\texttt{OUTPUT}(t')$ (where $t'$ is the tile type $t' \in T$ that was marked with a `$!$' by \cref{alg:temp2iufortemp1-Randomized-Choice}). This is done through the completion of $|\texttt{OUTPUT}(t)-5|$ left and right sweeps of the first $|\texttt{OUTPUT}(t)|+|\Gamma_T|$ locations of the tape. Each sweep to the right copies a single bit of $\texttt{TLT}(t')$, which is then written to its respective location in $\texttt{OUTPUT}(t')$ at the start of the tape during the subsequent rightward sweep. After $\texttt{TLT}(t')$ has been copied to the start of the tape, $\mathcal{M}_d$ enters the \textit{accept} state with its head exactly on the first location of $\Gamma_T$. As such, this portion of $\mathcal{M}_{d}$ requires exactly $2(|\texttt{OUTPUT}(t)|-5)(|\texttt{OUTPUT}(t)|+|\Gamma_T|)+|\texttt{OUTPUT}(t)|$ steps to complete. Pseudocode for this portion of $\mathcal{M}_{d}$ is shown in \cref{alg:temp2iufortemp1-Copyback}. We note that for the context of $\calS'$, placing a tile corresponding to the entrance of the accept state by $\mathcal{M}_{d}$ is sufficient for the macrotile to map to $t'$ under the representation function.

\begin{algorithm}
\caption{An algorithm for copying the information of the single matching TLT entry marked with a $!$ into first $\lceil \log |T| \rceil + 4\lceil \log |G_T| \rceil + 5$ locations of the tape, and replacing all symbols on the tape with symbols in $\Sigma_{Macrotile}$}
\label{alg:temp2iufortemp1-Copyback}
\begin{algorithmic}[1]
    
    \While{$h \neq$ $!$} \label{algstep:temp2iufortemp1-Copyback-start}
        \State Move right without changing the tape
    \EndWhile

    \While{$h \notin \{0,1, \mid, \_\}$}
        \State Move right without changing the tape
    \EndWhile
    \If{$h \in \{0,1\}$}
        \State $b \gets h$
        \If{$h == 0$}
            \State Write $0^*$ to the tape and move right
        \ElsIf{$h==1$}
            \State Write $1^*$ to the tape and move right
        \EndIf
        \While{$h \neq \_$}
            \State Move right without changing the tape
        \EndWhile
        \While{$h \neq$ $\mid'$}
            \State Move left without changing the tape
        \EndWhile
        \While{$h \notin \{0,1\}$}
            \State Move right without changing the tape
        \EndWhile
        \State Write $b^*$ to the tape and move right
        \State \textbf{goto} \ref{algstep:temp2iufortemp1-Copyback-start}
    \Else 
        \While{$h \neq \_$}
            \State Move right without changing the tape
        \EndWhile
        \While{$h \neq$ $\mid'$}
            \State Move left without changing the tape
        \EndWhile
        \While{$h \neq$ $\mid$}
            \If{$h == 0^*$}
                \State Write $0$ to the tape and move right
            \ElsIf{$h==1^*$}
                \State Write $1$ to the tape and move right
            \Else
                \State Move right without changing the tape
            \EndIf
        \EndWhile
        \State \textbf{accept}
    \EndIf
\end{algorithmic}
\end{algorithm}

\subsubsection{Runtime analysis}

$\mathcal{M}_d$ always reaches its accept state in exactly $\mu$ steps, or halts after $<\mu$ steps. The value of $\mu$ is as follows : 

\begin{align*}
\mu 
&= (2\lceil\log|G_T|\rceil+1)(|\texttt{OUTPUT}(t)| + |\Gamma_T|)\notag\\
&\quad + (|\texttt{OUTPUT}(t)|+|\Gamma_T|)\notag\\
& \quad + 2(|\texttt{OUTPUT}(t)|+|\Gamma_T|)\notag\\
&\quad + 2(|\texttt{OUTPUT}(t)|-5)(|\texttt{OUTPUT}(t)+|\Gamma_T|)+|\texttt{OUTPUT}(t)|
\end{align*}

We note that $|\texttt{OUTPUT}(t)| = \lceil \log |T|\rceil + 4\lceil \log G_T\rceil+5$ (for all $t \in T$), and that $|\Gamma_T| = |T|(\lceil \log |T|\rceil + \lceil \log |G_T|\rceil+5)$, which gives us : 

\begin{align*}
\mu 
&= (2\lceil \log|G_T|)\rceil+1)((|T|+1)\lceil \log |T|\rceil + 4\lceil \log |G_T|\rceil + 5))\notag\\
&\quad +(|T|+1)(\lceil \log |T|\rceil + 4\lceil \log |G_T|\rceil + 5)\notag\\
&\quad + 2(|T|+1)(\lceil \log |T|\rceil + 4\lceil \log|G_T|\rceil + 5)\notag\\
&\quad + 2(\lceil \log |T|\rceil +4\lceil \log |G_T|\rceil)(|T|+1)(\lceil \log |T|\rceil+\lceil \log |G_T|\rceil + 5)
\end{align*}

Notice that the runtime, $\mu$, of $\mathcal{M}_d$, on input $\mathtt{SIDE\mathunderscore ENCODING}(t)$ is independent of both the tile-type $t$, and the length of $\texttt{blank}(T)$. We will now derive $|\texttt{blank}(T)|$ by evaluating the equality $\mu = \frac{|\mathtt{SIDE\mathunderscore ENCODING}(t)|-1}{2}$, which gives us : 

\begin{align*}
\mu 
&= \frac{|\texttt{OUTPUT}(t)|+|\Gamma_T|+|\texttt{blank}(T)|-1}{2}
\end{align*}

Therefore : 

\begin{align*}
|\texttt{blank}(T)| = 2\mu - (|T|+1)(\lceil \log |T|\rceil + 4\lceil \log |G_T|\rceil + 5)
\end{align*}

Notice that $|\mathtt{SIDE\mathunderscore ENCODING}(t)|=2\mu+1$, and therefore $|\mathtt{SIDE\mathunderscore ENCODING}(t)|$ must always be odd. Additionally, for the duration of this proof we will refer to $|\mathtt{SIDE\mathunderscore ENCODING}(t)|$ as $l_T$, and note that $l_T = O(|T|\log^2|T|)$.

\subsubsection{Simulation of $\mathcal{M}_{d}$ by $\calS'$}

Recall that when a macrotile $M(l)$ receives an input glue on side $d$, the set of glues along the edges of its $L_d$ and $R_d$ regions encode $\mathtt{SIDE\mathunderscore ENCODING}(t)$ and $\mathtt{SIDE\mathunderscore ENCODING}(t)^R$, respectively.
Within each region $L_d$ and $R_d$ grows a modified version of the standard zig-zag Turing machine design \cite{CookFuSch11}, which each compute $\mathcal{M}_{d}$. The tiles used to simulate both of these TMs are of the set $T_{\mathcal{M}_d} \subset \calU$. 

The zig-zag TM design used in the literature simulates a standard TM with a one(or two)-sided infinite tape. Each transition to the right is performed by a path of tiles, starting on the left, wherein each tile makes a cooperative attachment with the tile placed in the location to its left, and the row below (i.e. in a ``zig''), and each transition to the left is performed by a path of tiles, starting from the right, wherein each tile tile attaches via. a cooperative attachment with the tile placed in the location to its right, and the row below (i.e. in a ``zag''). A singly-infinite zig-zag TM expands to its right by growing the size of the simulated tape by $1$ on every zig, and a doubly-infinite zig-zag also expands to its left by growing the size of the simulated tape by $1$ on every zag.

Our modified version of the standard zig-zag TM construction does not expand the size of the simulated tape during any zig, nor any zag. Additionally, in order to successfully co-simulate $\mathcal{M}_{d}$, the simulation of $\mathcal{M}_{d}$ within the both the $L_d$ and $R_d$ regions must always make the \textit{same} choice as to whether to make a transition using $\delta_{1,d}$ or $\delta_{2,d}$. We call this behavior \textit{co-simulation}. In order to allow for co-simulation of $\mathcal{M}_d$ in the $L_d$ and $R_d$ regions of a macrotile, we make the following changes to the standard zig-zag TM construction which grows within $L_d$: 

\begin{enumerate}
    \item During the first zig, no transitions of $\mathcal{M}_d$ are simulated, and each simulated transition of $\mathcal{M}_{d}$ instead occurs during a (zag,zig) pair.
    \item At the end of every zig, attach a tile with a strength-2 glue towards the $R_d$ regions, into the single-tile wide space between the $L_d$ and $R_d$ regions (i.e. it grows over one additional space toward the $R_d$ region).
    \item Onto the $\texttt{OPP}(d)$ side of the aforementioned tile, one of two tile types $t_{r,d} \in T_{\mathcal{M}_d}$ for $r \in \{1,2\}$ nondeterministically attaches.
    \item On each side of $t_{r,d}$ which is in $\texttt{ADJ}(d)$ is a strength-1 glue. Cooperation with the aforementioned glue initiates the growth of the next (zag, zig) pair of the movements in the simulations in $L_d$ and $R_d$.

    The next transition to be simulated in the subsequent (zag,zig) pair will be in $\delta_{r.d}$.
\end{enumerate}

In concert, the following changes are made to the simulation of $\mathcal{M}_{d}$ in the $R_d$ region: 

\begin{enumerate}
    \item During the first zig, no transitions of $\mathcal{M}_{d}$ are simulated, and each simulated transition of $\mathcal{M}_{d}$ occurs during a (zag,zig) pair.
    \item At the end of every zig are no strength-2 glues. Growth of the next zag may only be initiated by cooperation with a tile $t_{r,d}$, for $r \in \{0,1\}$, placed in the single-tile wide space between $L_d$ and $R_d$. In this way, both $R_d$ and $L_d$ share the same nondeterministic choices.
    \item During each (zag,zig) pair of movements, only transitions in $\delta_{r,d}$ may be made.
\end{enumerate}

An illustration of our construction which co-simulates the fixed-width, probabilistic $TM$ $\mathcal{M}_S$ is shown in \cref{fig:temp2iufortemp1-probabilistic-TM}. Recall that $\mathcal{M}_d$ is operated on an input of the form $\mathtt{SIDE\mathunderscore ENCODING}(t)$, and always either reaches an accept state after exactly $\frac{l_T-1}{2}$ steps, or halts after $<\frac{l_T-1}{2}$ steps (either as a result of $b_{g_{t,\texttt{OPP}(d)}}$ being the null glue, or there being no tile type such that a simulated attachment may occur, in both cases $\mathcal{M}_d$ does not complete all $4$ computation stages and growth from that input direction halts.). Thus, in all cases where the input glue is sufficient to allow for simulation of a tile attachment, our construction of the co-simulator of $\mathcal{M}_d$ reaches a height of exactly $2(\frac{l_T-1}{2})+1=l_T$, and the maximal dimensions of our co-simulator is exactly $l_T \times 2l_T+1$ (or $2l_T+1 \times l_T$), which neatly fits into the $L_d$ and $R_d$ regions of $M(l)$ (with the single-tile wide space between them). The aforementioned tiles which encode the accept state of $\mathcal{M}_d$ are of the type $t_{accept,d,L}$ and $t_{accept,d,R}$ (in the $L_d$ and $R_d$ regions, respectively), and to these tiles occur subsequent attachments which complete the zig on which the tile was placed in both $L_d$, and $R_d$ such that the last attachment in $L_d$ and $R_d$ occurs simultaneously on the $(l_T(l_T+1)-1)$th step of the co-simulator's growth, and is in the corner of the adjacent $fill$ region, and the single-tile wide space between them. Lastly, we note the glues of each row in the co-simulator which form a cooperative attachment with a tile in the subsequent row are of the form $g_{\lambda\in\Sigma_{\mathcal{M}}, d}$, every strength-1 glue which is placed on the top-most row of the co-simulator is of the form $g_{\lambda\in\Sigma_{Macrotile}, d}$, and the last tiles placed in $L_d$ and $R_d$ each have a strength-2 glue which abuts the adjacent $fill$ region, and initiates the next step of the growth procedure, which we detail in \cref{sec:temp2iufortemp1-macrotile-growth-procedure}.

\begin{figure}
    \centering
    \includegraphics[width=1\linewidth]{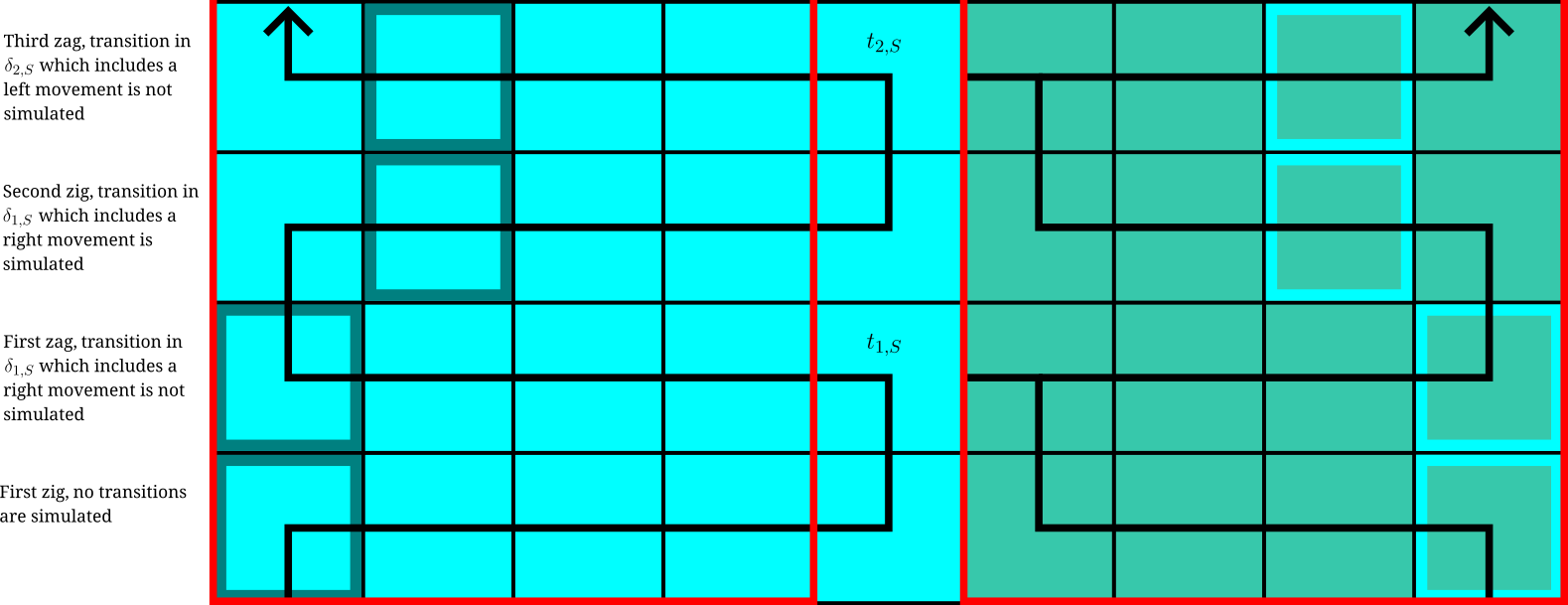}
    \caption{High-level illustration of two zig-zag TMs co-simulating $\mathcal{M}_S$ in the $L_S$ and $R_S$ regions of a macrotile location, with their widths/heights truncated. $L_S$ and $R_S$ are each outlined with red borders, and the tiles whose type corresponds to the head of $\mathcal{M}_{S}$ are outlined.} 
    \label{fig:temp2iufortemp1-probabilistic-TM}
\end{figure}

\subsection{Macrotile growth procedure} \label{sec:temp2iufortemp1-macrotile-growth-procedure}

Let $\vec{\alpha}$ be an arbitrary assembly sequence in $\calT$. Let $l \in \mathbb{Z}^2$ be a location in $\calT$ that is initially empty (i.e. not in the seed $\sigma$), and $\texttt{IN}(l,i) \in \{N,E,S,W,\emptyset\}$ be the direction such that at step $i$ in $\vec{\alpha}$ a tile is placed in a location adjacent to $l$, in direction $d$, which has a non-null glue on side $\texttt{OPP}(d)$ (or $\emptyset$, if either no such tile is placed, or if $\texttt{OPP}(d)$ contains a null glue). That is, $\texttt{IN}(l,i)$ represents the side of $l$ adjacent to a neighboring location that received a tile in step $i$ and has a glue facing $l$. This side is a (potential) input side, as that neighboring glue may allow binding to a tile placed in location $l$. Note that for the context of this construction, we are guaranteed that $|\texttt{IN}(l,i)| \le 1$ as a result of $\calT$ being staggered.

Given location $l \notin \dom \sigma$, let $M(l)$ be the macrotile in $\calS'$ that maps to location $l$, and $0 \le i < |\vec{\alpha}|$ be some assembly step in $\vec{\alpha}$. We will now discuss the growth procedure which occurs in $M(l)$ in $\calU$ each macrostep $i+1$ such that $\texttt{IN}(l,i) \neq \emptyset$.

Let $d \in \texttt{IN}(l,i)$ be the side of $M(l)$ which experiences input growth during macrostep $i+1$ (recall that $|\texttt{IN}(l,i)| \le 1$ by the fact that the system $\calT$ which is being simulated is staggered), and let $t \in T$ be the tile whose simulated attachment was performed in an adjacent macrotile during the previous macrostep. Input growth within $M(l)$ starts in $L_d$, and $R_d$, whose side $d$ contains a sequence of glues of the form $g_{\lambda\in\Sigma_{Macrotile}, d}$ corresponding to $\mathtt{SIDE\mathunderscore ENCODING}(t)$ and $\mathtt{SIDE\mathunderscore ENCODING}(t)^R$, respectively. The first tile attachments in $M(l)$ during macrostep $i+1$ occur synchronously within the $L_d$ and $R_d$ regions of $M(l)$, and are to the strength-2 glues $\mathtt{MARK\mathunderscore IN}(g_{M_d,L},d)$ and $\mathtt{MARK\mathunderscore IN}(g_{M_d,R},d)$, respectively. Input growth within the $L_d$ and $R_d$ regions of $M(l)$ performs the co-simulation of $\mathcal{M}_d$, described in \cref{sec:temp2iufortemp1-tile-lookup-procedure}. 

Once growth within the $L_d$ and $R_d$ regions is complete, if there is a tile type $t' \in T$ that can attach using the glue presented to side $d$ of $M(l)$,  there will be a sequence of glues of the form $g_{\lambda \in \Sigma_{Macrotile},d}$ which encode $\mathtt{SIDE\mathunderscore ENCODING}(t')$ and  $\mathtt{SIDE\mathunderscore ENCODING}(t')^R$,  along the $\texttt{OPP}(d)$ sides of $L_d$ and $R_d$, respectively. All subsequent attachments within $M(l)$ during macrostep $i+1$ are dedicated to propagating these glues to each side $d'\in \{N,E,S,W\} -\{d\}$ of $M(l)$. These attachments occur in regions of $M(l)$ labeled $fill$, $L$, and $R$, whose constituent growth is of instances of propagation gadgets (which we presented in \cref{sec:temp2iufortemp1-reusable-gadgets}), and the orientation of each of these gadgets is depicted in \cref{fig:temp2iufortemp1-macrotile-layout-grid}. 

Notice that two square-propagation gadgets colored light-green must cross the single-tile wide space which extends across the center of the macrotile. In order to accomplish this, we modify the square-propagation gadgets which grow within these regions such that the tiles placed on their $\texttt{OPP}(d_{in})$ side each have a strength-2 glue onto which a tile may attach (therefore raising the height of both of said propagation gadgets by $1$ tile). Propagation gadgets which chain onto an adjacent side to $d_{in}$ of these modified gadgets do so as normal. However, propagation gadgets which chain onto the $\texttt{OPP}(d_{in})$ must do so through a cooperative attachment to a ``synchronization path'', which is simply a two-tile long ``nub'' which grows into the single-tile wide region between two neighboring $fill$ regions from exactly one propagation gadget. In this case, the gadget which grows the synchronization path is the right-most one. A depiction of this behavior is shown in \cref{fig:temp2iufortemp1-crossing-the-gap}

\begin{figure}
    \centering
    \includegraphics[width=\linewidth]{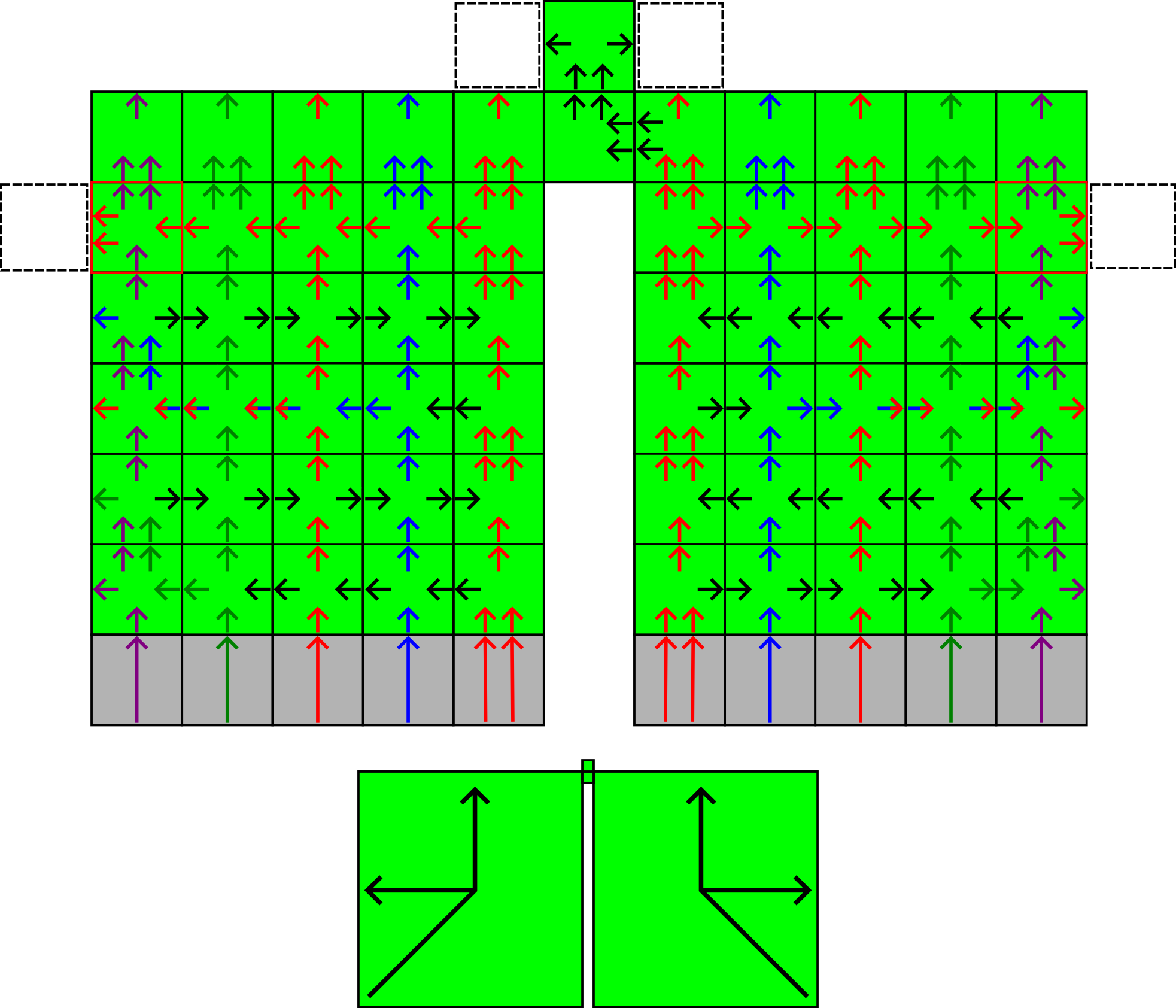}
    \caption{A depiction of two modified propagation gadgets which cross the one-tile wide region in the center of the macrotile. Colors are used to indicate the symbol encoded by each glue. The ``end'' tile of each gadget is outlined in red, and locations where subsequent chained propagation gadgets may attached are outlined with a dotted line.}
    \label{fig:temp2iufortemp1-crossing-the-gap}
\end{figure}

Additionally notice that fill regions which chain off of the side adjacent to those propagation gadgets colored light-green complete assembly at an earlier assembly step than their neighbors. In order to ensure that subsequent neighboring propagation gadgets grow synchronously, this is another location wherein the attachment of a start tile to any subsequent propagation gadgets may only occur through a cooperative bind to a synchronization path. In this case, the synchronization path exclusively attaches to the propagation gadget which arrives last. 

During the last assembly step of macrostep $i+1$, a tile is placed within the $L_{d'}$ and $R_{d'}$ region of $M(l)$ (for all $d' \in \{N,E,S,W\}-\{d\}$) which initiates growth into the macrotile location adjacent to side $d'$ of $M(l)$. Additionally, on the side of $L_{d'}$ which abuts an adjacent macrotile location is a series of strength-1 glues (except for one) of the form $g_{\lambda \in \Sigma_{Macrotile}, d'}$ which encode $\mathtt{SIDE\mathunderscore ENCODING}(t')^R$, and the parallel side of $R_{d'}$ is a series of strength-1 glues of the same which encode $\mathtt{SIDE\mathunderscore ENCODING}(t')$. The aforementioned exceptions are a strength-2 glue in the $L_{d'}$ region of the form $\mathtt{MARK\mathunderscore OUT}(g_{\mathcal{M}_{\texttt{OPP}(d')}, R},d')$, and a strength-2 glue in the $R_{d'}$ region of the form $\mathtt{MARK\mathunderscore OUT}(g_{\mathcal{M}_{\texttt{OPP}(d')},L},d')$. Each of the aforementioned glues initiate growth of the co-simulator of $\mathcal{M}_{\texttt{OPP}(d')}$ in the adjacent $L_{\texttt{OPP}(d')}$, and $R_{\texttt{OPP}(d')}$ regions of the adjacent macrotile location, and appear on the last tiles to be placed in $L_{d'}$, and $R_{d'}$ regions of $M(l)$.

We note that even though for all locations $l \in \mathbb{Z}^2$, and macrosteps $i$ of $\calS$, $|\texttt{IN}(l,i)| \le 1$, but it is not necessarily true that input growth within $M(l)$ occurs on only a single side $d \in \texttt{IN}(l,i)$. This is a result of the fact that output information is presented on each side of a macrotile which performs a simulated attachment of the tile $t'\in T$, even when the glue on the respective side is the null glue. However, this does not impact the growth procedure, as all sides of $M(l)$ for which input growth occurs corresponding to the null glue of an adjacent macrotile that growth halts within the $L$ and $R$ regions of that side by the tile lookup procedure provided in \cref{sec:temp2iufortemp1-macrotile-growth-procedure}. As such, output propagation within $M(l)$ that simulates the attachment of $t' \in T$ during macrostep $i+1$ is able to occur unimpeded for all sides of $M(l)$ which are adjacent to empty macrotile locations.

\subsection{Synchronization and scale factor}\label{sec:temp2iufortemp1-synchronization}
In this section, we will prove that the temporal scale factor $s$ of $\calS'$ is $6l_T^2+l_T+4$, and that the spatial scale factor $c$ of $\calS'$ is $6l_T+1$. 

Let us start by documenting the number of assembly steps that are necessary to fully assemble the propagation gadgets shown in \cref{sec:temp2iufortemp1-reusable-gadgets}, along with number of assembly steps necessary to complete the co-simulation of $\mathcal{M}_{d}$ shown in \cref{sec:temp2iufortemp1-tile-lookup-procedure}. For both of those gadgets, we assume that their input is of length $l_T$, which is equal to $|\mathtt{SIDE\mathunderscore ENCODING}(t)|$ for all $t\in T$. This information is presented in \cref{tab:temp2iufortemp1-gadget-steps}.

\begin{table}[!htp]\centering\small

%\resizebox{\textwidth}{!}{ % use this if the table is too large
\begin{tabular}{lrr}\toprule
Gadget Type &Number of Assembly Steps \\\midrule
Propagation &$l_T^2$ \\
Co-simulation of $\mathcal{M}_{d}$ &$l_T(l_T+1)-1$ \\
\bottomrule
\end{tabular}
\caption{The number of assembly steps required to complete the growth of two gadget types which appear within $M(l)$ during some macrostep such that $M(l)$ receives input growth.}\label{tab:temp2iufortemp1-gadget-steps}
\end{table}

We will now discuss the assembly step during which each region in a macrotile completes growth. Let $\vec{\alpha}$ be an arbitrary assembly sequence of $\calT$, location $l \notin \dom \sigma$, and $0 \le i < \vec{\alpha}$ be some assembly step in $\vec{\alpha}$ such that $\texttt{IN}(l,i) \neq \emptyset$. During macrostep $i+1$, $d \in \texttt{IN}(l,i)$ is a side of $M(l)$ which experiences input growth, and may or may not resolve to a tile type $t \in T$ during macrostep $i+1$. Without loss of generality, we will restrict ourselves to the discussion of macrotiles which experience input growth from a single side during macrostep $i+1$, and for whom all but the $d$ side never experienced input growth during a prior macrostep.

We say that a region of $M(l)$ initiates growth, and call it an \textit{initiating} region iff after the placement of the last tile within that region, a tile is placed in at least one adjacent region $\le 4$ assembly steps later. We will describe growth within $M(l)$ with respect to \textit{phases}, and we denote phase $1 \le j \le 5, j \in \mathbb{N}$ as $h_j$. Each phase is associated with a set of initiating regions of $M(l)$ which we denote as $\mathtt{REGIONS}(h_j,d)$. $\mathtt{REGIONS}(h_1,d)$ is the regions $L_d$ and $R_d$, and $h_1$ is the set the assembly steps between (and including) $i\cdot s+1$ and the last tile placement within $L_d$ and $R_d$. $\mathtt{REGIONS}(h_j,d)$ (for all $j > 1$) is all initiating regions adjacent to at least one region in $\mathtt{REGIONS}(h_{j-1}, d)$. Consequentially, a growth phase $h_j$ is the set of assembly steps wherein tiles within their respective growth regions are placed. For all $j \le 4$, the growth phase $h_j$ starts at the first assembly step where a tile attaches in at least one region in $\mathtt{REGIONS}(h_j,d)$, and ends at the assembly step prior to the first assembly step where a tile attaches in at least one region in $\mathtt{REGIONS}(h_{j+1},d)$. For $j==6$, $h_j$ starts at assembly step of the first tile attachment in $\mathtt{REGIONS}(h_j,d)$, and ends at the last tile attachment within any region in $\mathtt{REGIONS}(h_j,d)$. A guide to the phase of each initiating region of $M(l)$ where $\texttt{IN}(l,i) == S$ is provided in \cref{fig:temp2IUfortemp1-macrotile-growth-phases} , and a table depicting the number of assembly steps, along with the first / last assembly step of each phase is provided in \cref{tab:temp2iufortemp1-phase-steps}.
\begin{figure}
    \centering
    \includegraphics[width=0.5\linewidth]{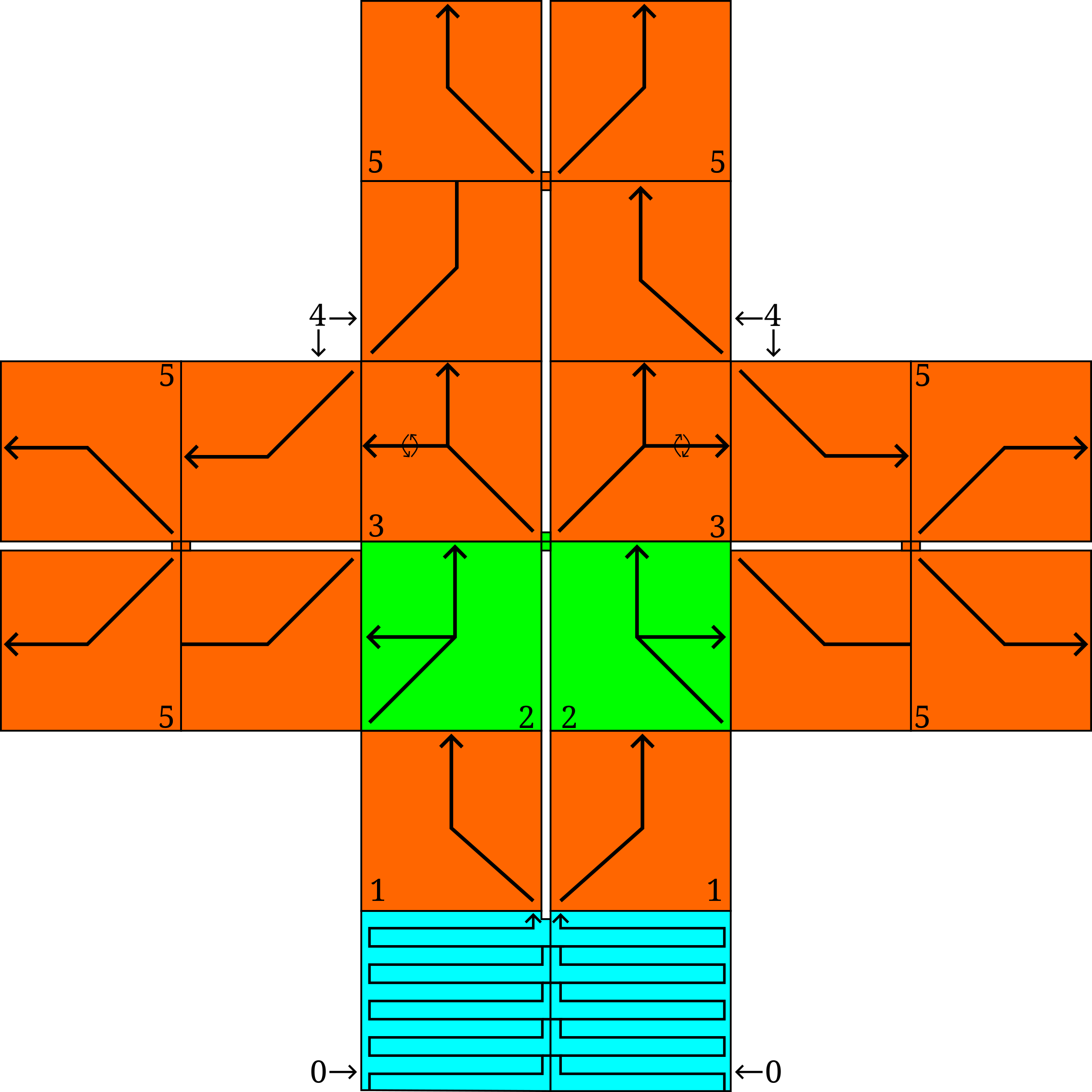}
    \caption{A (very) high level depiction of a macrotile of $\calS'$. Squares colored in light-blue represent zig-zag assemblies which co-simulate $\mathcal{M}_{d}$, squares colored orange represent propagation gadgets, and squares colored light-green represent modified propagation gadgets. Each region which is an initiating region has its phase denoted by a number within its borders where permitted, or to its side.}
    \label{fig:temp2IUfortemp1-macrotile-growth-phases}
\end{figure}

\begin{table}[!htp]\centering\small

%\resizebox{\textwidth}{!}{ % use this if the table is too large
\begin{tabular}{lrrrr}\toprule
Phase &First assembly step &Last assembly step &Number of assembly steps \\\midrule
1 &$i \cdot s+1$ &$i \cdot s + l_T(l_T+1)-1$ &$l_T(l_T+1)-1$ \\
2 &$i \cdot s + l_T^2+l_T$ &$i \cdot s + 2l_T^2+l_T-1$ &$l_T^2$ \\
3 &$i \cdot s + 2l_T^2+l_T$ &$i \cdot s + 3l_T^2+l_T+2$ &$l_T^2+3$ \\
4 &$i \cdot s + 3l_T^2+l_T+3$ &$i \cdot s + 4l_T^2+l_T+2$ &$l_T^2$ \\
5 &$i \cdot s +4l_T^2+l_T+3$ &$i \cdot s + 5l_T^2+l_T+4$ &$l_T^2+2$ \\
6 &$i \cdot s + 5l_T^2+l_T+5$ &$i \cdot s + 6l_T^2+l_T+4$ &$l_T^2$ \\
\bottomrule
\end{tabular}
%}
\caption{Table which depicts the first and last assembly step, along with the total number of assembly steps during which tiles are placed corresponding to each phase of growth within $M(l)$ during some macrostep such that $M(l)$ receives input growth, and has at least one adjacent macrotile location to which it may express output. The total number of assembly steps necessary to complete each growth phase is with respect to $l_T = |\mathtt{SIDE\mathunderscore ENCODING}(t)|$ (for all $t \in T$).}\label{tab:temp2iufortemp1-phase-steps}

\end{table}

Let us now describe why each phase contains the number of assembly steps assigned in \cref{tab:temp2iufortemp1-phase-steps}: 

\begin{enumerate}
\item The first phase of growth within $M(l)$ consists of the co-simulation of $\mathcal{M}_d$ within the $L_d$ and $R_d$ regions of $M(l)$. The number of steps this takes is provided in \cref{sec:temp2iufortemp1-macrotile-growth-procedure}, and is equal to $l_T(l_T+1)-1$.

\item The second phase of growth within $M(l)$ consists of the synchronous growth of propagation gadgets within the $fill$ regions adjacent to $\texttt{REGIONS}(h_0)$. Attachment of tiles within subsequent initiating regions is to the ``end'' tile of each of those propagation gadgets, exactly $l_T^2$ assembly steps after the start of this growth phase.

\item The third phase of growth within $M(l)$ consists of the synchronous growth of propagation gadgets within the $fill$ regions adjacent to $\texttt{REGIONS}(h_1)$, whose ``end'' tiles also attach after exactly $l_T^2$ steps of growth. However, the attachment of tiles within subsequent initiating regions may only occur after the attachment of a $3$-tile long synchronization path to one of the aforementioned synchronization gadgets, which results in this phase of growth requiring exactly $l_{T}^2+3$ assembly steps to complete.

\item The fourth phase of growth within $M(l)$ consists of the synchronous growth of two propagation gadgets. Subsequent initiating regions may only attach tiles attach to the ``end'' tile of each of these propagation gadgets, exactly $l_T^2$ assembly steps after the start of this growth phase.

\item The fifth phase of growth within $M(l)$ consists of the synchronous growth of four propagation gadgets. Each of these gadgets place their ``end'' tile after $l_T^2$ assembly steps, and future initiating regions may only attach after the attachment of a $2$-tile long synchronization path to all but one of these gadgets, which results in this phase of growth requiring exactly $l_T^2+2$ assembly steps to complete.

\item The sixth and final growth phase within $M(l)$ consists of the synchronous growth of four propagation gadgets, each of which require $l_T^2$ assembly steps to place their last tile. 
\end{enumerate}

$|h_j|$ ($1 \le j \le6$) is the number of assembly steps it takes to complete the $j$th growth phase within a macrotile. The number of assembly steps necessary for a macrotile $M(l)$ to resolve to a tile type $t \in T$ during macrostep $i$, and express glues to adjacent macrotile locations sufficient to initiate input growth is $\Sigma_{j=1}^6 |h_j|=6l_T^2+l_T+4$. Therefore, the temporal scale factor of $\calS'$ is $s = \Sigma_{j=0}^5 |h_j|=6l_T^2+l_T+4=O(|T|^2\log|T|^4)$. 

Additionally, we note that (as per \cref{sec:temp2iufortemp1-macrotile-structure}), a macrotile location consists of a $6 \times 6$ grid of $l_T \times l_T$ regions, with a single-tile wide cross through the center, and thus the spatial scale factor of each macrotile location is $c = 6l_T+1=O(|T|\log |T|^2)$.

\subsection{Seed structure} \label{sec:temp2iufortemp1-seed-structure}

The seed $\mathfrak{Y}(\calT)$ of $\calS' = (\calU, \mathfrak{Y}(\calT), 2)$ is composed of a special set of tile types $T_{\mathfrak{Y}} \subset \calU$. Here we explain the tile types in $T_{\mathfrak{y}}$, and the function $\mathfrak{Y}$ that takes as input a staggered temperature 1 syncTAS $\calT$, and outputs the seed of $\calS'$ such that $\calS'$ simulates simulates $\calT$ under $\mathfrak{R}(\calT)$ (given in \cref{sec:temp2iufortemp1-representation-function}) with spatial scale factor $c$, and temporal scale factor $s$. $\mathfrak{Y}$ differs from the seed generation functions provided in \cref{sec:adj-seed} and \cref{sec:temp1simulation-seed} in that the tile set $\calU$ of $\calS'$ is fixed, and thus the same set of seed tile types must be reusable for all simulations. Additionally, because each macrotile in $\mathfrak{Y}(\calT)$ must map to a tile type under $\mathfrak{R}(\calT)$, an \textit{indicative bit-string} made up of tiles types in $T_{\mathfrak{Y}}$ must be present in each macrotile in $\mathfrak{Y}(\calT)$. However, the structure of this section is still largely similar to that of \cref{sec:adj-seed}. 

\subsubsection{I/O marking}

As is described in \cref{sec:adj-seed}, it is necessary to ensure the following two behaviors of $\calS'$: 

\begin{enumerate}
\item When a glue is exposed on the exterior of a macrotile to propagate output, the tile that attaches to it initiates input growth into the neighboring macrotile location.
\item The tiles of $T_{\mathfrak{Y}}$ must not attach in any locations outside of the pre-formed macrotiles representing the seeds.
\end{enumerate}

We ensure these behaviors through the usage of $I/O$ marking. That is to say, all tiles which face the edge of a macrotile that finish output growth have their strength-2 exterior facing glue marked using $\mathtt{MARK\mathunderscore OUT}$, and all tiles which attach to the edge of said output growth to begin the growth of input do so with glues marked using $\mathtt{MARK\mathunderscore IN}$. As with \cref{sec:temp1simulation-seed}, we defer further details regarding I/O marking to \cref{sec:adj-seed}.

\subsubsection{Seed tile types}

The tile types which constitute $T_{\mathfrak{Y}}$ are as follows, note that all glues except those marked using $\mathtt{MARK \mathunderscore OUT}$ are either only used by tile types in $T_{\mathfrak{Y}}$, or are strength-1. 

\begin{enumerate}
\item A central tile type $c_{\mathfrak{Y}}$ which has a strength-2 glue $g_{\mathfrak{Y}}$ on each side.

\item A \textit{vertical} tile type $v_{\mathfrak{Y}}$ that has $g_{\mathfrak{Y}}$ on its $N$ and $S$ sides.

\item For each side $d \in \{N,E,S,W)$, a $d$ \textit{connector} tile type $d_{\mathfrak{Y}}$ which has the strength two glue $g_{\mathfrak{Y}}$ on its $d$, and $d_{adj} \in \texttt{ADJ}(d)$ sides.

\item For each side $d \in \{N,S\}$, the following tile types : 

\begin{enumerate}

\item For every symbol $\lambda \in \Sigma_{Macrotile}$, a \textit{symbol printing} tile type $d_{\mathfrak{Y}, \lambda}$ which has the strength-2 glue $g_{\mathfrak{Y}}$ on its $E$ and $W$ sides and the strength-1 glue $g_{\lambda}$ on its $d$ side.

\item If $d==N$ : 

\begin{enumerate}

\item $E, R$ \textit{string cap} tile type $d_{\mathfrak{Y}, E,R}$ which has the strength-2 glue $g_{\mathfrak{Y}}$ on its $W$ side and, strength-2 glue $\mathtt{MARK\mathunderscore OUT}(g_{\mathcal{M}_{\texttt{OPP}(d')},R},d)$ on its $d$ side.

\item $W, L$ \textit{string cap} tile type $d_{\mathfrak{Y}, W,L}$ which has the strength-2 glue $g_{\mathfrak{Y}}$ on its $E$ side, and strength-2 glue $\mathtt{MARK\mathunderscore OUT}(g_{\mathcal{M}_{\texttt{OPP}(d')},L},d)$ on its $d$ side.
\end{enumerate}

\item If $d==S$ : 

\begin{enumerate}

\item $E, L$ \textit{string cap} tile type $d_{\mathfrak{Y}, E,L}$ which has the strength-2 glue $g_{\mathfrak{Y}}$ on its $W$ side and, strength-2 glue $\mathtt{MARK\mathunderscore OUT}(g_{\mathcal{M}_{N},L},S)$ on its $S$ side.

\item $W, R$ \textit{string cap} tile type $d_{\mathfrak{Y}, W,R}$ which has the strength-2 glue $g_{\mathfrak{Y}}$ on its $E$ side, and strength-2 glue $\mathtt{MARK\mathunderscore OUT}(g_{\mathcal{M}_{N},R},S)$ on its $S$ side.

\end{enumerate}

\end{enumerate}

\item A \textit{horizontal} tile type $h_{\mathfrak{Y}}$ that has $g_{\mathfrak{Y}}$ on its $E$ and $W$ sides.

\item For each side $d \in \{E,W\}$, the following tile types : 

\begin{enumerate}

\item For every symbol $\lambda \in \Sigma_{Macrotile}$, a \textit{symbol printing} tile type $d_{\mathfrak{Y}, \lambda}$ which has the strength-2 glue $g_{\mathfrak{Y}}$ on its $N$ and $S$ sides and the strength-1 glue $g_{\lambda}$ on its $d$ side.

\item If $d==E$ : 

\begin{enumerate}
\item $N, L$ \textit{string cap} tile type $d_{\mathfrak{Y}, N,L}$ which has the strength-2 glue $g_{\mathfrak{Y}}$ on its $S$ side and, strength-2 glue $\mathtt{MARK\mathunderscore OUT}(g_{\mathcal{M}_{W},L},E)$ on its $E$ side.

\item $S, R$ \textit{string cap} tile type $d_{\mathfrak{Y}, S,R}$ which has the strength-2 glue $g_{\mathfrak{Y}}$ on its $N$ side, and strength-2 glue $\mathtt{MARK\mathunderscore OUT}(g_{\mathcal{M}_{W},R},E)$ on its $E$ side.
\end{enumerate}

\item If $d==W$ : 

\begin{enumerate}

\item $N, R$ \textit{string cap} tile type $d_{\mathfrak{Y}, N,R}$ which has the strength-2 glue $g_{\mathfrak{Y}}$ on its $S$ side and, strength-2 glue $\mathtt{MARK\mathunderscore OUT}(g_{\mathcal{M}_{E},R},W)$ on its $W$ side.

\item $S, L$ \textit{string cap} tile type $d_{\mathfrak{Y}, S,L}$ which has the strength-2 glue $g_{\mathfrak{Y}}$ on its $N$ side, and strength-2 glue $\mathtt{MARK\mathunderscore OUT}(g_{\mathcal{M}_{E},L},W)$ on its $W$ side.
\end{enumerate}

\end{enumerate}

\item An \textit{indicative bit-string start} tile type $B_{\mathfrak{Y}}$ which has the strength-2 glue $g_{\mathfrak{Y}}$ on its $N$, $E$, and $S$ sides.

\item An \textit{indicative bit-string 0} tile type $B_{\mathfrak{Y},0}$ which has the strength-2 glue $g_{\mathfrak{Y}}$ on its $E$ and $W$ sides.

\item An \textit{indicative bit-string} $1$ tile type $B_{\mathfrak{Y},1}$ which has the strength-2 glue $g_{\mathfrak{Y}}$ on its $E$ and $W$ sides.

\item An \textit{indicative bit-string cap} tile type $B_{\mathfrak{Y}, c}$ which has the strength-2 glue $g_{\mathfrak{Y}}$ on its $W$ side.

\end{enumerate}

\subsubsection{Seed creation} The seed function generator $\mathfrak{Y}$, which takes as input the staggered temperature 1 syncTAS $\calS$ and outputs the seed $S(\calT)$ is the following :

\begin{enumerate}
    \item For each tile, given its tile type $t \in T$ and location $(x,y) \in \mathbb{Z}^2$, such that $(t,(x,y)) \in \sigma$:
    \begin{enumerate}
        \item The locations of the macrotile representing the tile will be $M((x,y)) = \{(x',y')\mid cx \le x' < c(x+1)$ and $cy \le y' < c(y+1)\}$.

        \item Let $m = (c-1)/2$ be the distance from a location on the boundary of a macrotile to the center location. The macrotile's center location $l_c = (cx+m, cy+m)$. A tile of type $c_\mathfrak{Y}$ is placed in location $l_c$.

        \item For each side $d$ of $t$ that has a glue $g$ (including the null glue), and that glue does not form a bond with a neighboring tile in $\sigma$:

        \begin{enumerate}
            \item If $d == N$ :

            \begin{enumerate}
                \item The vertical tile $v_\mathfrak{Y}$ is placed on the $N$ side of the center location, with additional copies of it placed to its $N$ direction, until stopping one tile away from the edge of the macrotile.\label{item:green1}

                \item On the $N$ side of the final $v_{\mathfrak{Y}}$ tile is a tile of the type $N_{\mathfrak{Y}}$.
                \item On the $E$ and $W$ sides of $N_{\mathfrak{Y}}$, $l_T-1$ tiles are placed such that, for $0 \le i < l_T-1$, the $i$th tile to be placed to the $E$ of $N_{\mathfrak{Y}}$ is $N_{\mathfrak{Y}, \lambda_{E,i}}$, and the $i$th tile to be placed its $W$ is $N_{\mathfrak{Y}, \lambda_{W,i}}$, where $\lambda_{E,i}$ and $\lambda_{W,i}$ are the $i$th symbol of $\mathtt{SIDE\mathunderscore ENCODING}(t)^R$ and the $i$th symbol of $\mathtt{SIDE\mathunderscore ENCODING}(t)$, respectively. \label{item:gold}\label{item:yellow}
                
                \item To the $E$ and $W$ sides of the $l_T-1$th tile to the $E$ and $W$ of $N_{\mathfrak{Y}}$, respectively, tiles of the type $N_{\mathfrak{Y},E,R}$, and $N_{\mathfrak{Y},W,L}$ tiles are placed, respectively.
            \end{enumerate}

            \item If $d == E$ : 

            \begin{enumerate}
                \item the horizontal tile $h_\mathfrak{Y}$ is placed on the $E$ side of the center location, with additional copies of it placed to its $E$ direction, until stopping one tile away from the edge of the macrotile.\label{item:green2}

                \item On the $E$ side of $h_{\mathfrak{Y}}$ is a tile of the type $E_{\mathfrak{Y}}$.
                \item On the $N$ and $S$ sides of $E_{\mathfrak{Y}}$, $l_T-1$ tiles are placed such that, for $0 \le i < l_T-1$, the $i$th tile to be placed to the $N$ of $E_{\mathfrak{Y}}$ is $E_{\mathfrak{Y}, \lambda_{N,i}}$, and the $i$th tile to be placed its $S$ is $E_{\mathfrak{Y}, \lambda_{S,i}}$, where $\lambda_{N,i}$ and $\lambda_{S,i}$ are the $i$th symbol of $\mathtt{SIDE\mathunderscore ENCODING}(t)$ and the $i$th symbol of $\mathtt{SIDE\mathunderscore ENCODING}(t)^R$, respectively.\label{item:pink}\label{item:red}

                \item To the $N$ and $S$ sides of the $(l_T-1)$th tile placed to the $N$ and $S$ of $E_{\mathfrak{Y}}$, respectively, tiles of the type $E_{\mathfrak{Y},N,L}$ and $E_{\mathfrak{Y},S,R}$ are placed, respectively.
            \end{enumerate}

            \item If $d == S$

            \begin{enumerate}
                \item the vertical tile $v_\mathfrak{Y}$ is placed on the $S$ side of the center location.

                \item On the south side of $v_{\mathfrak{Y}}$ is an indicative bit-string start tile of the type $B_{\mathfrak{Y}}$.

                \item On the $E$ side of $B_{\mathfrak{Y}}$ is a set of $\lceil \log |T|\rceil$ tiles such that, for $0 \le i < \lceil \log |T|\rceil$, the $i$th tile to the $E$ of $B_{\mathfrak{Y}}$ is an indicative bit-string tile $B_{\mathfrak{Y}, k}$, where $k$ is the $i$th bit of $b_t$.\label{item:blue}

                \item On the $E$ side of the $\lceil \log |T|\rceil$th tile to the $E$ of $B_{\mathfrak{Y}}$ is an indicative bit-string cap tile of type $B_{\mathfrak{Y},c}$. 

                \item On the $S$ side of $B_{\mathfrak{Y}}$, copies of $v_{\mathfrak{Y}}$ are placed, until stopping one tile away from the $S$ edge of the macrotile.\label{item:dark-green}
                \item On the $S$ side of $v_{\mathfrak{Y}}$ is a tile of the type $S_{\mathfrak{Y}}$.
                \item On the $E$ and $W$ sides of $S_{\mathfrak{Y}}$, $l_T-1$ tiles are placed such that, for $0 \le i < l_T-1$, the $i$th tile to be placed to the $E$ of $S_{\mathfrak{Y}}$ is $S_{\mathfrak{Y}, \lambda_{E,i}}$, and the $i$th tile to be placed its $W$ is $S_{\mathfrak{Y}, \lambda_{W,i}}$, where $\lambda_{E,i}$ and $\lambda_{W,i}$ are the $i$th symbol of $\mathtt{SIDE\mathunderscore ENCODING}(t)$ and the $i$th symbol of $\mathtt{SIDE\mathunderscore ENCODING}(t)^R$, respectively.\label{item:purple}\label{item:cyan}

                \item To the $E$ and $W$ sides of the $l_T-1$th tile to the $E$ and $W$ of $S_\mathfrak{Y}$, respectively, tiles of the type $S_{\mathfrak{Y},E,L}$ and $S_{\mathfrak{Y},W,R}$ are placed, respectively.
                
            \end{enumerate}

            \item If $d == W$
            \begin{enumerate}
                \item the horizontal tile $h_\mathfrak{Y}$ is placed on the $W$ side of the center location, with additional copies of it placed to its $W$ direction, until stopping one tile away from the edge of the macrotile.\label{item:green3}

                \item On the $W$ side of $h_{\mathfrak{Y}}$ is a tile of the type $W_{\mathfrak{Y}}$.
                \item On the $N$ and $S$ sides of $W_{\mathfrak{Y}}$, $l_T-1$ tiles are placed such that, for $0 \le i < l_T-1$, the $i$th tile to be placed to the $N$ of $S_{\mathfrak{Y}}$ is $W_{\mathfrak{Y}, \lambda_{N,i}}$, and the $i$th tile to be placed its $S$ is $W_{\mathfrak{Y}, \lambda_{S,i}}$, where $\lambda_{N,i}$ and $\lambda_{S,i}$ are the $i$th symbol of $\mathtt{SIDE\mathunderscore ENCODING}(t)^R$ and the $i$th symbol of $\mathtt{SIDE\mathunderscore ENCODING}(t)$, respectively.\label{item:grey}\label{item:black}
                \item To the $N$ and $S$ side of the $(l_T-1)$th tile place to the $N$ and $S$ of $W_{\mathfrak{Y}}$, respectively, tiles of the type $W_{\mathfrak{Y},N,R}$ and $W_{\mathfrak{Y}, S,R}$ are placed, respectively.
            \end{enumerate}
        \end{enumerate}
        
        \item For each side $d$ of $t$ that has a non-null glue, and that glue forms a bond with a neighboring tile in $\sigma$:

        \begin{enumerate}
            \item If $d == N$:
            \begin{enumerate}
                \item The vertical tile $v_\mathfrak{Y}$ is placed on the $N$ side of the center location, with additional copies of it placed to its $N$ direction, until stopping at the edge of the macrotile.
            \end{enumerate}

            \item if $d == S$

            \begin{enumerate}
                \item The vertical tile $v_\mathfrak{Y}$ is placed on the $S$ side of the center location.

                \item On the south side of $v_{\mathfrak{Y}}$ is an indicative bit-string start tile of the type $B_{\mathfrak{Y}}$.

                \item On the $E$ side of $B_{\mathfrak{Y}}$ is a set of $\lceil \log |T|\rceil$ tiles such that, for $0 \le i < \lceil \log |T|\rceil$, the $i$th tile to the $E$ of $B_{\mathfrak{Y}}$ is an indicative bit-string tile $B_{\mathfrak{Y}, k}$, where $k$ is the $i$th bit of $b_t$.

                \item On the $E$ side of the $\lceil \log |T|\rceil$th tile to the $E$ of $B_{\mathfrak{Y}}$ is an indicative bit-string cap tile of type $B_{\mathfrak{Y},c}$. 

                \item On the $S$ side of $B_{\mathfrak{Y}}$, copies of $v_{\mathfrak{Y}}$ are placed, until stopping at the edge of the macrotile.
            \end{enumerate}
            \item If $d\in\{N,S\}$
            
            \begin{enumerate}
                \item The vertical tile $v_\mathfrak{Y}$ is placed on the $d$ side of the center location, with additional copies of it placed to its $d$ direction, until stopping at the edge of the macrotile.
            \end{enumerate}

            \item If $d\in\{E,W\}$

            \begin{enumerate}
                \item The horizontal tile $h_{\mathfrak{Y}}$ is placed on the $d$ side of the center location, with additional copies of it placed to its $d$ direction, until stopping at the edge of the macrotile.
            \end{enumerate}
            
        \end{enumerate}
    \end{enumerate}
\end{enumerate}

\begin{figure}
    \centering
    \includegraphics[width=1\linewidth]{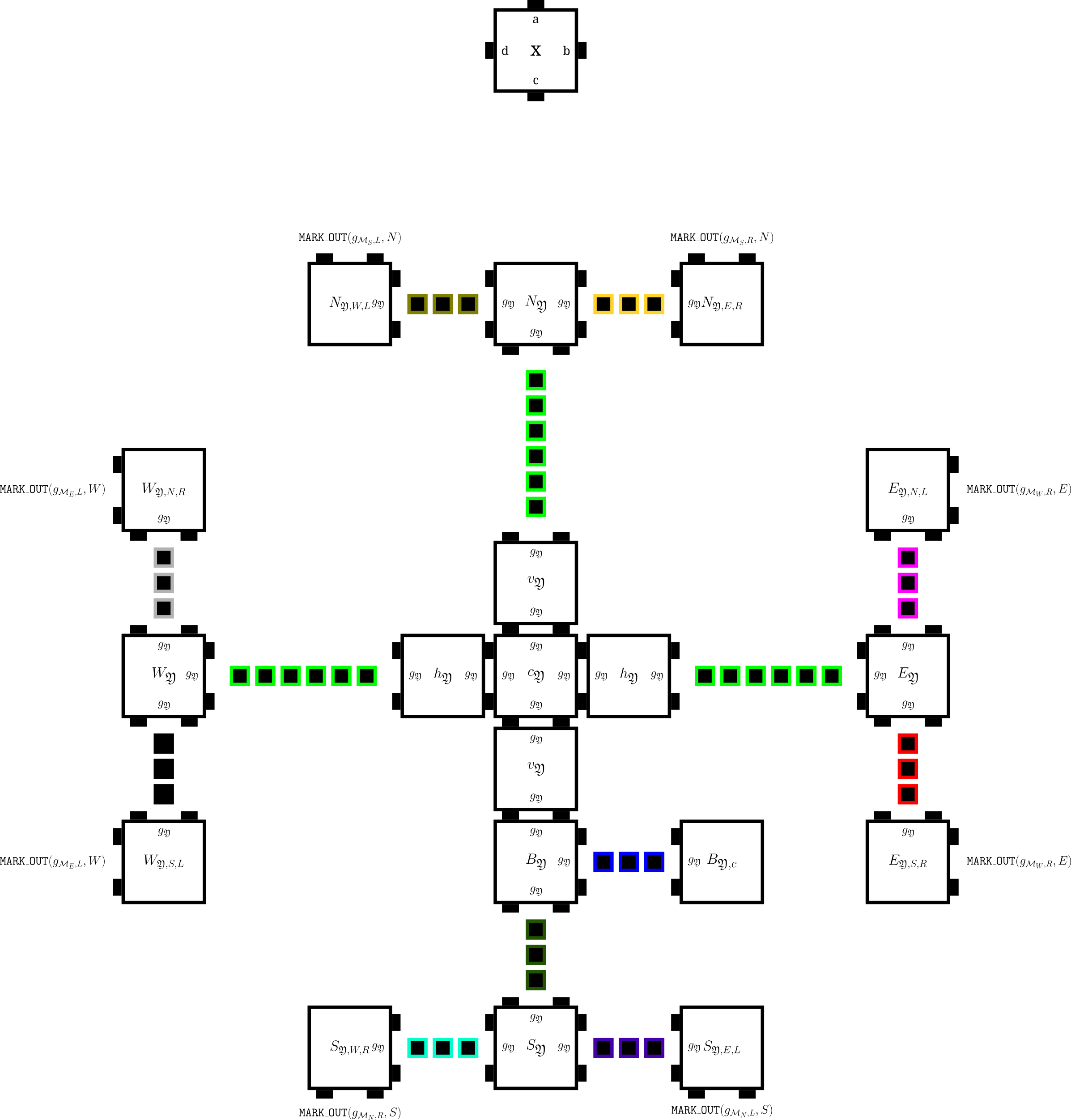}
    \caption{(Top) An example seed $\sigma$ of a temperature 1 staggered syncTAS $\calT$. (Bottom) An example seed assembly $\mathfrak{Y}(\calT)$ such that $\calS '=(\calU, \mathfrak{Y}(\calT),2)$ synchronously simulates $\calT$. Portions of the seed assembly which could not be shown due to space constraints are outlined with colors to indicate their contents, a guide to which is provided in \cref{tab:temp2iufortemp1-seed-guide}.}
    \label{fig:temp2iufortemp1-seed-example}
\end{figure}

\begin{table}[!htp]\centering\small

%\resizebox{\textwidth}{!}{ % use this if the table is too large
\begin{tabular}{lrr}\toprule
Color &Portion of Seed Assembly \\\midrule
Light Green &Items \ref{item:green1}, \ref{item:green2}, \ref{item:green3} \\
Dark Green &\cref{item:dark-green} \\
Blue &\cref{item:blue} \\
Gold &\cref{item:gold} \\
Yellow &\cref{item:yellow} \\
Pink &\cref{item:pink} \\
Red &\cref{item:red} \\
Purple &\cref{item:purple} \\
Cyan &\cref{item:cyan} \\
Black &\cref{item:black} \\
Gray &\cref{item:grey} \\
\bottomrule
\end{tabular}
\caption{A guide to truncated portions of the seed assembly shown in \cref{fig:temp2iufortemp1-seed-example}}\label{tab:temp2iufortemp1-seed-guide}
\end{table}

Thus, each tile of $\sigma$ is represented by a macrotile of the form shown in \cref{fig:temp2iufortemp1-seed-example}, which consists of a tile in the central square and a path of tiles to each of the macrotile boundaries on sides where glues are exposed or form bonds within the seed, along with a binary encoding of the tile type $t\in T$ that the macrotile maps to. On each macrotile boundary in which a tile in $\sigma$ has an exposed glue (including the null glue), glues are presented along the side of the $L$ and $R$ regions which encode $\mathtt{SIDE\mathunderscore ENCODING}(t)$, or $\mathtt{SIDE\mathunderscore ENCODING}(t)^R$, depending on whether the glues face an $L$, or $R$ region of the adjacent macrotile, respectively, which are sufficient to initiate the macrotile growth procedure (shown in \cref{sec:temp2iufortemp1-macrotile-growth-procedure}) in the adjacent macrotile location. For each direction in which a tile of $\sigma$ forms a bond with a neighboring tile in $\sigma$, there is a path connecting the corresponding macrotiles in $\mathfrak{Y}(\calT)$, so $\mathfrak{Y}(\calT)$ is a stable assembly that correctly represents $\sigma$ under $\mathfrak{R}(\calT)$ using scale factor $c$, and will initiate valid simulation.

\subsection{Representation function} \label{sec:temp2iufortemp1-representation-function}

In this section we present a representation function generator $\mathfrak{R}$ that, when provided with a staggered temperature $1$ syncTAS $\mathcal{T} = (T, \sigma, 1)$, outputs a representation function $\mathfrak{R}(\calT)$ that maps macrotiles of $\mathcal{S'}=(\calU, \mathfrak{Y}(\calT),2)$. 

Let us first define the function $\mathtt{PAD\mathunderscore BIN}^{-1} : \{0,1\}^* \rightarrow \mathbb{N} \times \mathbb{N}$ such that $\mathtt{PAD\mathunderscore BIN}(b)$ (where $b$ is a binary string) returns $b$ in integer form, and $|b|$. Additionally, for $n \in \mathbb{N}, n < |T|$, let $\texttt{INDEX}_T^{-1}(n)$ be a function which returns the $n$th tile type in $\texttt{LEX}_T(T)$.

$\mathfrak{R}(\calT)$ has two procedures, \textit{seed resolution}, and \textit{frontier resolution}, which are used to resolve macrotiles $M(l)$ where $l \in \dom \sigma$, and $l \notin \dom \sigma$, respectively. We will first describe the frontier-resolution process of $\mathfrak{R}(\calT)$.

Let location $l \in \mathbb{Z}^2$ such that $l \notin \dom \sigma$. If $\mathcal{S}'$ simulated the attachment of a tile $t \in T$ in the macrotile $M(l)$, then the tile lookup procedure shown in \cref{sec:temp2iufortemp1-tile-lookup-procedure} must have been performed on some side $d \in \{N,E,S,W\}$ of $M(l)$, and the co-simulation of $\mathcal{M}_d$ must have entered the accept state. By \cref{sec:temp2iufortemp1-tile-lookup-procedure}, there is tile type $t_{accept,d,L}$ (for all $d \in \{N,E,S,W\}$) whose placement in $M(l)$ indicates that the co-simulation of $\mathcal{M}_d$ entered an accept state. Recall that when $\mathcal{M}_d$ enters an accept state, the $\lceil \log|T|\rceil+4\lceil \log |G_T|\rceil+5$ tape locations to its left contain $\texttt{OUTPUT}(t)$. The frontier resolution process of $\mathfrak{R}(\calT)$ is as follows: 

\begin{enumerate}
\item For each $d \in \{N,E,S,W\}$
\begin{enumerate}
\item Observe each location $l'$ in the $L_d$ region of $M(l)$
\item If $l'$ contains a tile type $t_{accept,d} \in T_{accept,d}$, do the following :
\item Observe the tiles which encode $\texttt{OUTPUT}(t)$ that are to placed in the $\lceil \log |T|\rceil +4\lceil \log |G_T|\rceil + 5$  locations to the $d+1$ of $t_{accept,d}$.
\item Let $b_t$ be the substring of $\texttt{OUTPUT}(t)$ that is the length $\lceil \log |T|\rceil$ binary string corresponding to the tile to which $M(l)$ will be mapped.
\item Return $\mathtt{INDEX}_T^{-1}(\mathtt{PAD\mathunderscore BIN}^{-1}(b_t))$.
\end{enumerate}
\end{enumerate}

For each location $l \in \mathbb{Z}^2$ such that $l \in \dom \sigma$, the macrotile $M(l)$ was generated by the seed-generation function, shown in \cref{sec:temp2iufortemp1-seed-structure}. Recall that for each macrotile corresponding to a seed location, there is an indicative bit-string to the south of the center of the macrotile. The seed resolution process of $\mathfrak{R}(\calT)$ is as follows : 

\begin{enumerate}
\item Observe each location $l'$ in $M(l)$
\item If the tile $B_{\mathfrak{Y}}$ is placed in location $l'$ :

\begin{enumerate}
\item Let $b_t$ be the bit-string whose value is encoded in the $\lceil \log |T|\rceil$ tiles to the $E$ of $B_{\mathfrak{Y}}$.
\item Return $\mathtt{INDEX}_T^{-1}(\mathtt{PAD\mathunderscore BIN}^{-1}(b_t))$
\end{enumerate}
\end{enumerate}

Note that under the previously described tile resolution process, all macrotiles which resolve to a tile type in $T$ do so simultaneously. That is to say, for any macrostep $i$, and for any location $l \in \mathbb{Z}^2$ such that $l$ resolves to a tile type $t \in T$, $l$ will map under $R$ to $T$ at \textit{exactly} assembly step $i \cdot s + l_T^2+\lceil \log |T|\rceil + 4\lceil \log |G_T|\rceil +5$, which is the time step wherein the accept tile is placed, and is discussed in \cref{sec:temp2iufortemp1-tile-lookup-procedure}.

\subsection{Correctness of the construction for \cref{thm:T2-IU-for-T1}}

Here we present our argument for the correctness of our construction for \cref{thm:T2-IU-for-T1}. Recall that throughout Sections \ref{sec:temp2iufortemp1-reusable-gadgets}-\ref{sec:temp2iufortemp1-representation-function} we have discussed the construction of a tile set $\calU$, seed generation function $\mathfrak{Y}$, and representation function generator $\mathfrak{R}$ such that, beginning with a system $\calT = (T,\sigma, 1)$ that is staggered, $\calS'= (\calU, \mathfrak{Y}(\calT), 2)$ synchronously simulates $\calT$ under $\mathfrak{R}(\calT)$. For the duration of this section, we will use $R$ to refer to $\mathfrak{R}(\calT)$, and $\sigma'$ to refer to $\mathfrak{Y}(\calT)$. In our argument for the correctness of our construction, we will show that $\calU$ is a fixed tile set, as well as why $\calS'$ : 

\begin{enumerate}
\item Synchronously simulates $\calT$
\item Has spatial scale factor $O(|T|\log^2|T|)$
\item Has temporal scale factor $O(|T|^2\log^4|T|)$
\item Is staggered
\item Is locally adjacent-only
\end{enumerate}

Let us start by showing that under $R$ and with spatial scale factor $c$ and temporal scale factor $s$, $\calS'$ synchronously simulates $\calT$ by showing that (1) $\calT$ synchronously follows $\calS'$, (2) $\calS'$ models $\calT$, and (3) these imply that $\calT$ and $\calS'$ have equivalent productions.

First, we show that $\calT$ synchronously follows $\calS'$ under $R$ with temporal scale factor $s$. Let $\vec{\beta} = \beta_0, \beta_1, \beta_2, \ldots$ be an arbitrary assembly sequence in $\calS'$. We will prove the existence of a valid assembly sequence $\vec{\alpha} = \alpha_0, \alpha_1, \alpha_2, \ldots$ in $\calS$, where for each $i'$ such that $i' \mod s == 0$, $R^*(\beta_{i'})=\alpha_i$, using induction. Our inductive hypothesis is that for arbitrary $i'$ such that $i' \mod s == 0$, $\alpha_i = R^*(\beta_{i'})$. For the base case, by construction (as shown in \cref{sec:temp2iufortemp1-seed-structure}), for $\beta_0 = \sigma'$, $R^*(\beta_0)=\sigma=\alpha_0$. For the inductive step, we begin with arbitrary $i'$ such that $i' \mod s == 0$ and $\beta_{i'}$ such that $R^*(\beta_{i'})=\alpha_i$. As explained previously, the subsequent sequence of $s$ assembly steps in $\calS'$ will be the $(i+1)$th macrostep in $\calS'$ (where $i = i' / s)$ and simulate the $(i+1)$th step in $\calS$. This is because at the beginning of each macrostep in $\calS'$, for any macrotile $M(l)$ which had input glues appear adjacent to $M(l)$ during the previous macrostep, those input glues will have synchronously arrived at time step $is$. If a set of input glues have been presented to the macrotile $M(l)$ which are sufficient to allow for the simulation of the attachment of a tile type $t \in T$ (both of which will have arrived at time stepped $is$), then within the macrotile will occur growth sufficient to resolve $M(l)$ to a tile of type $t$ (by the entrance of an accept state during the co-simulation of $\mathcal{M}_d$, where $d \in \{N,E,S,W\}$ is the side of $M(l)$ which received the aforementioned input glues), and output representations of the information necessary to allow input growth in adjacent macrotiles will grow to the outside edges of all output sides of $M(l)$ at exactly time step $(i+1)s$ (As shown in \cref{sec:temp2iufortemp1-synchronization}). Note that the null glue will also be represented on output edges. Thus, for $i'' = (i+1)\cdot s$, $\alpha_{i+1}=R^*(\beta_{i''})$, which must be a valid assembly such that $\alpha_i \rightarrow^\calT_1 \alpha_{i+1}$ because if and only if a macrotile in $\calS'$ changed from mapping to empty space to mapping to a tile in $T$, a tile of the same type could have attached in the corresponding location in $\calT$. Therefore, $\vec{\alpha}$ is a valid assembly sequence in $\calT$ and $\calT$ synchronously follows $\calS'$ under $R$ with temporal scale factor $s$. 

We now show that $\calS'$ models $\calT$ under $R$. we do this by showing how, for every producible assembly $\alpha \in \prodasm{\calT}$, there exists a set of producible assemblies $\Pi \subset \prodasm{\calS'}$ such that each maps under $R$ to $\alpha$ and for every assembly $\beta$ producible from $\alpha$ in $\calT$ in one step, that (1) every assembly in $\Pi$ can grow into an assembly in $\calS'$ representing any such $\beta$, and (2) every assembly sequence in $\calS'$ from some $\alpha'$ that represents $\alpha$ in some $\beta'$ that represents $\beta$ includes an assembly in $\Pi$. 

For this we use induction. Our induction hypothesis is that, given an arbitrary producible assembly $\alpha_i \in \prodasm{\calT}$ there exists a set of producible assemblies $\Pi \subset \prodasm{\calS'}$ that satisfies the requirements stated above. The base case is again the seed assemblies $\sigma = \alpha_0$ in $\calT$ and $\sigma'$ in $\calS'$. In this case, the only element in $\Pi_0$ is $\sigma'$. By construction (as shown in \cref{sec:temp2iufortemp1-seed-structure}), $R^*(\sigma') = \sigma =\alpha_0$. Trivially, since $\sigma'$ is the only element in $\Pi_0$, the second condition holds (an assembly from $\Pi_0$ is included in every assembly sequence from $\sigma'$). For the first condition (i.e. for each assembly that $\alpha_0$ can grow into in one step, the assembly in $\Pi_0$ can grow into at least one assembly that represents it) we can recall how our construction works. For every empty location $l$ in $\calT$ toward which a non-null glue of a seed tile is facing, input growth will enter the corresponding macrotile location $M(l)$ in $\mathcal{S'}$. $\calT$ is staggered, and thus there is never more than one of such glues facing a location. For any location which is faced by a null glue, input growth will enter the corresponding macrotile location $M(l)$, however, by our tile lookup procedure it is impossible for the aforementioned input growth to either simulate the placement of a tile $t \in T$, or block input growth on another side of $M(l)$ which may simulate a tile placement. If the glue facing a location is sufficient to simulate the placement of a tile of type $t \in T$, the macrotile will resolve to such a tile and then begin growth of outputs, each of which will abut output glues to the edges of the macrotile on the $s$th assembly step (where $s$ is the temporal scale factor of the simulation). Because of this, for every location $l$, for every tile type $t$ that could be placed there at this step in $\calT$, there is an assembly sequence beginning from the sole assembly $\sigma'$ contained in $\Pi_0$ that causes the corresponding macrotile to resolve to that tile type, and therefore the assembly $\sigma' \in \Pi_0$ can grow into a representation of anything that $\sigma$ can grow into and the base case is shown.

For the induction step, given arbitrary producible assembly $\alpha_i \in \prodasm{\calT}$ and its corresponding set $\Pi_i$, we will show that for any assembly $\alpha_{i+1}$ such that $\alpha_i \rightarrow^1_{\calT} \alpha_{i+1}$ there exists a set $\Pi_{i+1}$ that satisfies the necessary conditions. Recall that the construction is designed so that every step $i$ of $\calT$ is simulated by $s$ steps in $\calS'$, with each macrostep in $\calS'$ completing at exactly step $i\cdot s$, with every macrotile receiving its final output tiles at that step. We will define $\Pi_{s} \subset \prodasm{\calS'}$ to include the assemblies producible during the final assembly step of each macrostep. Let $\beta \in \Pi_i$, and thus $R^*(\beta_i)=\alpha_i$. Let $0 < j \le s$ be the assembly step in $\calS'$ within macrostep $i$. Let $k$ be the number of tiles added in $\alpha_i$ during step $i+1$ to create $\alpha_{i+1}$. As previously shown, the corresponding $k$ macrotile locations in $\calS'$ will grow to resolve to tiles in $\calT$ during macrostep $i$, with there being at least one valid assembly sequence that leads to macrotiles resolving to $\calS'$ to the same tiles placed in $\calT$ to form $\alpha_{i+1}$. Additionally, as previously shown, for any tile type $t\in T$ that could appear in a location in $\calT$ during step $i$, the corresponding macrotile location in $\calS'$ could grow into a macrotile representing a tile of type $t$. Lastly, as was previously shown, for the first $r = l_T^2+\lceil \log |T|\rceil + 4\lceil \log |G_T|\rceil + 5$ steps of macrostep $i$, the assembly $\beta_{i+j}$ (for $0 \le j < r)$ will be equivalent to $\alpha_i$, i.e. $R^*(\beta_{i+j})=\alpha_i$. On the $r$th step of macrostep $i$, all $k$ macrotiles resolve to a tile type in $T$, and for all subsequent steps $r \le r' \le s$, $R^*(\beta_{i+r'})=\alpha_{i+1}$. Any such assembly will now deterministically grow into an assembly that is final for macrostep $i$, and that final assembly of macrostep $i$ will be in $\Pi_{i+1}$. This will satisfy the requirements for the set $\Pi_{i+1}$ because (1) the final assembly created by macrostep $i$ and representing $\alpha_{i+1}$ is guaranteed, for any $\alpha_{i+2}$ such that $\alpha_{i+1}\rightarrow^1_{\calT}\alpha_{i+2}$ to be able to grow into some assembly $\beta_{i+2}\in\prodasm{\calT}$ such that $R^*(\beta_{i+2})=\alpha_{i+2}$, and (2) as soon as any assembly resolves under $R$ to $\alpha_{i+1}$ it is in $\Pi_{i+1}$. Thus, the induction step is shown, and $\calS'$ is shown to model $\calT$. 

$\calT$ and $\calS'$ have equivalent productions because $\calT$ follows $\calS'$ and $\calS'$ models $\calT$, which means that, starting from the seed assemblies $\sigma$ in $\calT$, and $\sigma'$ in $\calS'$, an assembly $\alpha$ is producible (resp. terminal) in $\calT$ if and only if an assembly $\beta$ is producible (resp. terminal) in $\calS'$ such that $\beta$ maps under $R$ to $\alpha$ or an equivalent assembly. Thus, $\calS'$ is shown to synchronously simulate $\calT$ under $R$ and using temporal scale factor $s$.

In \cref{sec:temp2iufortemp1-synchronization}, it was shown that the spatial scale factor of $\calS'$ is $O(|T|\log^2|T|)$, and the temporal scale factor of $\calS'$ is $O(|T|^2\log^4|T|)$.

We show that $\calS'$ is a staggered system by inspecting the growth of its macrotiles. For $\calS'$ not to be staggered, there would need to be one possible empty location $l \in \mathbb{Z}^2$ where, during a single assembly step, more than one neighboring location receives a tile, and each has a non-null glue facing $l$. This can only happen if there are two paths of growth that simultaneously arrive at locations neighboring $l$. As was described in \cref{sec:temp2iufortemp1-macrotile-growth-procedure}, growth within a macrotile is initiated by two separated paths of growth in the $L_d$ and $R_d$ (where $d \in \{N,E,S,W\}$ is the side of macrotile where growth is initiated) regions of the macrotile which co-simulate $\mathcal{M}_d$. These two paths of growth regularly place tiles which are adjacent to the same location in the single-tile wide space between the $L_d$ and $R_d$ regions, however only one of said tiles (that being the one in the $L_d$ region) may ever have a glue which abuts this location. Each propagation gadget which is created is staggered, as it only consists of a single path of growth. In all locations where a tile of a propagation gadget presents a glue to an empty location adjacent to a tile placed by another propagation gadget during the same assembly step (i.e. when synchronization paths attach), the empty location is always in the single-tile wide space between two neighboring propagation gadgets, and no more than one such tile will have a glue which faces the aforementioned empty location. Lastly, in the case of growth of modified propagation gadgets, the only location wherein two glues from separate paths of tiles face the same location is those which are adjacent to the end of the decision path that may begin growth of subsequent propagation gadgets through a cooperative attachment. However, the last tile placed by the decision path is necessarily placed exactly $2$ assembly steps after the tiles whose glues future tiles use to attach cooperatively (as is seen in \cref{fig:temp2iufortemp1-crossing-the-gap}). Additionally, we note that, for all assembly steps, in all macrotile locations representing frontier locations, all paths which are contained in it are either connected to a single input glue from an adjacent macrotile, or are connected to a single null glue from an adjacent macrotile. This is because $\calT$ is a staggered system, $\calS'$ synchronously simulates $\calT$, and because all growth within a macrotile location must either halt, or express output in $s$ assembly steps. All paths which are connected to a null glue from an adjacent macrotile must necessarily never grow past the $L$ and $R$ regions on the side which they entered, and growth of these paths must cease after $<l_T^2+l_T-1$ assembly steps. All paths within a macrotile which are connected to a single input glue from an adjacent macrotile only reach those aforementioned $L$ and $R$ regions after $5l_T^2+4$ assembly steps, and thus cannot place a tile with a glue facing the same location as a tile from said path connected to a null glue during the same assembly step as it. Similarly, it is never the case that two outputs paths from adjacent macrotile locations present glues to the same location, as each place tiles along their edge in the same order, and during the same assembly steps, therefore, each tile which contains a glue facing an adjacent macrotile location will have said glue face a tile placed in the same assembly step, and not an empty location.

We similarly show that $\calS'$ is locally adjacent-only by inspecting the growth of its macrotiles. We first state that the glues which are present within each propagation gadget are disjoint, except for where one propagation gadget chains off of another. For example, the tiles on the side of a propagation gadget which transmits the values along its $S$ side to its $N$ side, and grows through the $fill(2,1)$ region contains glues which only may be used for its own construction, aside from those which are presented to its $N$ side. Those glues to its $N$ side may only be attached to by glues on the $S$ side of a subsequent propagation gadget which grows through $fill(2,2)$. The same is true for gadgets which perform the co-simulation of $\mathcal{M}_d$ (described in \cref{sec:temp2iufortemp1-synchronization}). By our construction, it is only the case that two paths of tiles may crash into each other if (1) Two adjacent macrotiles resolve to a tile type $t \in T$ during the same assembly step or (2) Input growth on one side of a macrotile failed to simulate the attachment of a tile, whereas input growth on another side of the same macrotile successfully simulated the attachment of a tile. In the first case, the glue used by the propagation gadgets which grow towards the edge of two adjacent macrotiles from each direction are completely disjoint, and it is impossible for a single tile to form a bond with both of them. The same is true for the second case, as the glues that are presented along the edge of input growth which fails to place a tile (i.e. when the co-simulation of $\mathcal{M}_d$ halts prematurely) are completely disjoint from those of a propagation gadget which attempts to grow through the affected $L$ and $R$ regions, meaning that no tile may form a bond with both of them. We then note that the growth of propagation gadgets, including when propagation gadgets chain off of each other is locally adjacent-only (by Sections \ref{sec:temp2iufortemp1-reusable-gadgets} and \ref{sec:temp2iufortemp1-macrotile-growth-procedure}), and that the simulation of $\mathcal{M}_d$ for all $d \in \{N,E,S,W\}$ is locally adjacent-only (by \cref{sec:temp2iufortemp1-tile-lookup-procedure}), therefore $\calS'$ is locally adjacent-only.

Lastly, we will show that $\calU$ is a fixed size tile set. Let the tile types from which each propagation gadget attaches be the set $T_{prop, Region,d}$ where $Region$ is the region of a macrotile that the propagation gadget attaches, and $d \in \{N,E,S,W\}$ is the side of a macrotile that experience input growth which may result in the placement of said propagation gadget. There are a total of $18$ regions of a macrotile wherein a propagation gadget grows, and there are $4$ directions from which input growth may enter a macrotile. Therefore, $\calU$ must contain the tile types necessary to allow the self-assembly of $48$ unique propagation gadgets (note that each propagation gadget must only be able to transmit information corresponding to the symbols in $\Sigma_{Macrotile}$), we will call this set of tiles $T_{propagation}$. $\calU$ must also contain the tile types necessary to allow the growth of the co-simulation of $\mathcal{M}_d$,for all $d \in \{N,E,S,W\}$, the tile types necessary for this are each in the set $T_{\mathcal{M}_d}$. Including the tile types necessary to allow growth of the seed, $\calU = T_{propagation}\cup T_{\mathcal{M}_N}\cup T_{\mathcal{M}_E}\cup T_{\mathcal{M}_{S}}\cup T_{\mathcal{M}_W}\cup T_{\mathfrak{Y}}$. %\todo{PD: I don't know if this paragraph is necessary... MJP: It is fine and good additional clarification.}

Thus, there is a temperature $2$ tile set $\calU$, a representation function $\mathfrak{R}$, and a seed function generator $\mathfrak{Y}$ such that for any staggered temperature $1$ syncTAS $\calT = (T,\sigma, 1)$, $\calS' = (\calU,\mathfrak{Y}(\calT), 2)$ is a staggered, locally adjacent-only syncTAS that synchronously simulates $\calT$ under $\mathfrak{R}(\calT)$, and \cref{thm:T2-IU-for-T1} is proven.

\begin{corollary}
    \label{thm:T2-actually-IU}
    The tile set $\calU$ of \cref{thm:T2-IU-for-T1} is intrinsically universal for the syncTAM.
\end{corollary}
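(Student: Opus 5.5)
The plan is to chain the simulation results along the path of \cref{fig:crazy_diagram}. Then use \cref{lem:seed-ind-IU} to show that $\calU$ is a universal simulator for all of $\mathfrak{T}$ at temperature 2, not just for $\mathfrak{S}_1$. Finally, check that the simulating systems lie in $\mathfrak{T}$.

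First step. Let $\calT \in \mathfrak{T}$ be an arbitrary syncTAS. By \cref{thm:temp1simulation} (that is, \cref{lem:adj-only} followed by \cref{lem:temp1simulation-adj}), there is a staggered temperature 1 system $\calT_1$, computable from $\calT$, which synchronously simulates $\calT$ under some representation function $R_1$. By \cref{thm:T2-IU-for-T1}, the system $(\calU, \mathfrak{Y}(\calT_1), 2)$ synchronously simulates $\calT_1$ under $\mathfrak{R}(\calT_1)$.

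Second step. Prove a transitivity lemma for synchronous simulation: if $\calA$ simulates $\calB$ under $R_{AB}$ at scales $(m_1,s_1)$, and $\calB$ simulates $\calC$ under $R_{BC}$ at scales $(m_2,s_2)$, then $\calA$ simulates $\calC$ under the composed block representation at scales $(m_1m_2, s_1s_2)$. The composed representation reads each $m_1m_2$-block as an $m_2\times m_2$ array of $m_1$-blocks, applies $R_{AB}$ to each, then applies $R_{BC}$.

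This composition is implicitly assumed in the proof of \cref{lem:seed-ind-IU}, so I would state it explicitly. The verification goes condition by condition: equivalent productions, follows, synchronously follows, and models.

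\textbf{The main obstacle} is the temporal-slack clause and the $\approx$ relation. An intermediate assembly of $\calA$ may map under $R_{AB}$ to an assembly of $\calB$ that is only $\approx$ to a producible one. Its image under $R_{BC}$ then needs to be $\approx$ to a producible assembly of $\calC$. To get this, I would use the fact that every construction here completes macrotiles exactly at macrostep boundaries. Because of this exact synchronization, the slack intervals compose multiplicatively, and partial-frontier assemblies map to partial-frontier assemblies.

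Third step. Define the generators. Set $\mathcal{R}'(\calT) = R_1 \circ \mathfrak{R}(\calT_1)$ and $S'(\calT)=\mathfrak{Y}(\calT_1)$. Both are computable, since $\calT \mapsto \calT_1$ is computable and $\mathfrak{R}$ and $\mathfrak{Y}$ are computable. This gives $\calU$ as a universal simulator for $\mathfrak{T}$ at temperature 2, in the sense of \cref{def:iu-specific-temp}.

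For intrinsic universality, it remains to note that every simulating system $(\calU, S'(\calT), 2)$ is itself a syncTAS, and hence lies in $\mathfrak{T}$. This membership condition is trivial because $\mathfrak{T}$ is the class of all syncTAM systems. (The additional properties, being staggered and locally adjacent-only, are guaranteed by \cref{thm:T2-IU-for-T1} but are not needed here.) Therefore $\calU$ is intrinsically universal for $\mathfrak{T}$ at temperature 2.
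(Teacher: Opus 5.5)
Your route is the paper's: obtain the staggered temperature-1 system $\calT_1$ from \cref{thm:temp1simulation}, then run $\calU$ from the seed $\mathfrak{Y}(\calT_1)$. Your composed representation (apply $\mathfrak{R}(\calT_1)$ blockwise, then $R_1$) is more accurate than the paper's $\mathfrak{R}'(\mathcal{S})=\mathfrak{R}(\mathcal{S}_1)$. That formula omits the representation function of $\mathcal{S}_1\to\mathcal{S}$, and the paper also uses transitivity of simulation without comment. One correction: \cref{lem:seed-ind-IU} does not apply here. It keeps the simulated class fixed and replaces the simulating tile set, whereas you keep $\calU$ and enlarge the class by pre-composing with $\calT_1\to\calT$. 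Your Steps 1--3 do that pre-composition directly, so drop the reference.

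The substantive problem is Step 2. The transitivity lemma is false at the generality you state, and not only because of the temporal slack you identify: the clean-mapping clause of \cref{def-equiv-prod} does not compose either. Outer fuzz sits in an $m_1$-block next to a $\mathcal{B}$-tile. That $\mathcal{B}$-tile may itself be inner fuzz, lying in an $m_2$-block that is only adjacent to a block representing a $\mathcal{C}$-tile. If the $\mathcal{B}$-tile is on the far edge of its $m_2$-block, the outer fuzz spills into the next $m_2$-block. That block can be two blocks away from, or diagonal to, every block representing a $\mathcal{C}$-tile, which is forbidden at scale $m_1m_2$.

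This mechanism is present in your chain. $\calU$ emits output, and so triggers input growth in the neighbouring macrotile, on every non-input side of each resolved macrotile, null-glue sides included. Hence every tile of $\calT_1$ with an empty neighbour acquires $\calU$-fuzz. So, as you already plan for synchronization, clean mapping must be checked on the specific constructions rather than derived from a general lemma. For instance, it suffices to show the following about $\calT_1\to\calT$: every macrotile that is non-empty but does not (yet) represent a tile of $\calT$ has tiles on its boundary only along sides facing macrotiles that do. Second-order fuzz then lands either inside that macrotile or in a resolved neighbour.
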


\begin{proof}
    Let $\mathfrak{R}$ and $\mathfrak{Y}$ be the functions witnessing that $\calU$ is IU for the class of temperature 1 systems.

    For $\mathcal{S}$ an arbitrary TAS, by \cref{thm:temp1simulation}, there is $\mathcal{S}_1$ with temperature 1 which simulates $\mathcal{S}$. Pose $\mathfrak{R}'(\mathcal{S}) = \mathfrak{R}(\mathcal{S}_1)$ and $\mathfrak{Y}'(\mathcal{S}) = \mathfrak{Y}(\mathcal{S}_1)$. 

    Hence, $\calU$ is also IU for the syncTAM, through $\mathfrak{R'}$ and $\mathfrak{Y}'$.

    A visual depiction of the statement of this proof, along with the proof itself are provided in \cref{fig:T2-actually-IU} and \cref{fig:T2-actually-IU-proof}, respectively.
\end{proof}

\begin{figure}
    \centering
    \begin{tikzpicture}[
    header/.style={align=center, text width=17ex, anchor=south},
    tas/.style={draw, gray, cloud, cloud ignores aspect, cloud puffs=39}
    ]
    \pgfdeclarelayer{background}
    \pgfdeclarelayer{foreground}
    \pgfsetlayers{background,main,foreground}
    \matrix (M) [
        matrix of nodes,
        nodes in empty cells, row sep={1cm},
        row 1/.style={nodes={header}},
        row 3/.style={text height=1cm, text width=17ex}
        %every node/.style={draw}
    ]{
    Representation function & Tileset & Temperature & Seed & Staggered + Locally Adjacent Only? \\
                            & $T$ & $\tau$      & $\sigma$ & \textsc{?} \\
                            &         &             &          &   \\
        |[draw, circle]|  $\mathfrak{R}'(\mathcal{S})^\dagger$   & $\mathcal{U}$ & $2$      & $\mathfrak{Y}'(\mathcal{S})^*$ & \textsc{Yes} \\    
    };

    \begin{pgfonlayer}{background}
        \node[fill=gray!20, fit=(M-1-1) (M-1-5)] {}; 
        \foreach \col in {1,...,5} {
            \draw[gray, dotted] (M-1-\col) -- +(0,-7);
        }
    \end{pgfonlayer}

    \node[fit=(M-3-2) (M-3-3), inner xsep=0] (machine-cell-left) {};
    \node at (machine-cell-left.center) {\huge $\mathfrak{R}'$};
    \draw[thick] (machine-cell-left.north east) ++(-.5em, 0) coordinate (machine-left-tr) rectangle (machine-cell-left.south west);    
    \node[fit=(M-3-4), inner xsep=0] (machine-cell-right) {};
    \node at (machine-cell-right.center) {\huge $\mathfrak{Y}'$};
    \draw[thick] (machine-cell-right.north west) ++(.5em, 0) coordinate (machine-right-tl) rectangle (machine-cell-right.south east);
    \fill[black!70] (machine-cell-left.east) ++(-.5em, .5cm) -- ++(1em, -.5cm) -- ++(-1em, -.5cm) -- cycle;

    \draw[->] (M-2-2) -- (machine-cell-left);
    \draw[->] (M-2-3) -- (machine-cell-left);
    \draw[->] (M-2-4) -- (machine-cell-right);
    
    \draw[->] (machine-cell-left) -- (M-4-1);
    
    \draw[->] (machine-cell-right) -- (M-4-4);
    \draw[->] (machine-cell-right) -- (M-4-5);

    \node[tas, fit=(M-4-2) (M-4-4)] (tas-out) {};
    \node[tas, fit=(M-2-2) (M-2-4)] (tas-in) {};

    \draw[decorate, decoration={snake}] (M-2-5) to (tas-in) to[out=180, in=135] (M-4-1) to[out=-45, in=-165] (tas-out); 

\end{tikzpicture}
    \caption{A graphical rendition of the statement of \cref{thm:T2-actually-IU}. The parts of the output system which depend on $\mathcal{T}$ are marked with a dagger, and with a star for a dependency on both $\calT$ and $\sigma$. Note that $\mathfrak{R}$ does not depend on $\sigma$, and that $\mathcal{S} = (T, \tau, \sigma)$ is an arbitrary syncTAS.}
    \label{fig:T2-actually-IU}
\end{figure}

\begin{figure}
    \centering
    \begin{tikzpicture}[
    header/.style={align=center, text width=17ex, anchor=south},
    tas/.style={draw, gray, cloud, cloud ignores aspect, cloud puffs=39, inner sep=3pt}
    ]
    \pgfdeclarelayer{background}
    \pgfdeclarelayer{foreground}
    \pgfsetlayers{background,main,foreground}
    \matrix (M) [
        matrix of nodes,
        nodes in empty cells, row sep={1.5cm},
        row 1/.style={text height={}, nodes={header}},
        every odd row/.style={text height=1cm, text width=17ex}
        %every node/.style={draw}
    ]{
    Representation function & Tileset & Temperature & Seed & Staggered + Locally Adjacent Only? \\
                            & $T$ & $\tau$      & $\sigma$ & ? \\
                            &         &             &          &   \\
    |[draw, circle]|  $R_1^\dagger$   & $T_1^\dagger$ & $2$      & $\sigma_1$ & \textsc{Yes} \\
            &         &             &          &   \\
    |[draw, circle]|  $\mathfrak{R}(\mathcal{S}_1)^\dagger$   & $\mathcal{U}$ & $1$      & $\mathfrak{Y}(\mathcal{S}_1)^*$ & \textsc{Yes} \\
    };

    \begin{pgfonlayer}{background}
        \node[fill=gray!20, fit=(M-1-1) (M-1-5)] {}; 
        \foreach \col in {1,...,5} {
            \draw[gray, dotted] (M-1-\col) -- +(0,-9);
        }
    \end{pgfonlayer}

    \node[fit=(M-3-2) (M-3-3), inner xsep=0] (machine-cell-left) {};
    \node at (machine-cell-left.center) { \Cref{thm:temp1simulation}};
    \draw[thick] (machine-cell-left.north east) ++(-.5em, 0) coordinate (machine-left-tr) rectangle (machine-cell-left.south west);    
    \node[fit=(M-3-4), inner xsep=0] (machine-cell-right) {};
    \draw[thick] (machine-cell-right.north west) ++(.5em, 0) coordinate (machine-right-tl) rectangle (machine-cell-right.south east);
    \fill[black!70] (machine-cell-left.east) ++(-.5em, .5cm) -- ++(1em, -.5cm) -- ++(-1em, -.5cm) -- cycle;

    \draw[->] (M-2-2) -- (machine-cell-left);
    \draw[->] (M-2-3) -- (machine-cell-left);
    \draw[->] (M-2-4) -- (machine-cell-right);
    
    \draw[->] (machine-cell-left) -- (M-4-1);
    \draw[->] (machine-cell-left) -- (M-4-2);

    \draw[->] (machine-cell-right) -- (M-4-4);
    \draw[->] (machine-cell-right) -- (M-4-5);

    \node[tas, fit=(M-4-2) (M-4-4)] (tas-mid) {};
    \node[tas, fit=(M-2-2) (M-2-4)] (tas-in) {};

    \draw[decorate, decoration={snake}] (tas-in.west) to[in=135, out=180] (M-4-1) to[out=-45, in=-165] (tas-mid.south west); 

    \node[fit=(M-5-2) (M-5-3), inner xsep=0] (machine2-cell-left) {};
    \node at (machine2-cell-left.center) {\cref{thm:T2-IU-for-T1} $\mathfrak{R}$};
    \draw[thick] (machine2-cell-left.north east) ++(-.5em, 0) coordinate (machine2-left-tr) rectangle (machine2-cell-left.south west);    
    \node[fit=(M-5-4), inner xsep=0] (machine2-cell-right) {$\mathfrak{Y}$};
    \draw[thick] (machine2-cell-right.north west) ++(.5em, 0) coordinate (machine2-right-tl) rectangle (machine2-cell-right.south east);
    \fill[black!70] (machine2-cell-left.east) ++(-.5em, .5cm) -- ++(1em, -.5cm) -- ++(-1em, -.5cm) -- cycle;

    \draw[->] (M-4-2) -- (machine2-cell-left);
    \draw[->] (M-4-4) -- (machine2-cell-right);
    
    \draw[->] (machine2-cell-left) -- (M-6-1);
    %\draw[->] (machine2-cell-left) -- (M-6-2);

    \draw[->] (machine2-cell-right) -- (M-6-4);
    \draw[->] (machine2-cell-right) -- (M-6-5);

    \node[tas, fit=(M-6-2) (M-6-4)] (tas-out) {};

    \draw[decorate, decoration={snake}, red] (M-4-5) to (tas-mid) (tas-mid.south west) to[out=-165, in=135] (M-6-1) to[out=-45, in=-165] (tas-out); 

\end{tikzpicture}\qed
    \caption{Diagrammatic proof of \cref{thm:T2-actually-IU}. The parts of the output system which depend on $T$ are marked with a dagger, and with a star for a dependency on both $T$ and $\sigma$. Note that $\mathcal{S}_1 = (T_1, 1, \sigma_1)$ is a temperature 1 syncTAS.}
    \label{fig:T2-actually-IU-proof}
\end{figure}

\section{A Universal Temperature 1 syncTAM system}\label{sec:univ-temp1}

In this section, we chain together the series of previous constructions to yield the final, universal simulator for the syncTAM that operates at temperature 1.

\ToneIU*

\begin{proof}
The final link in the chain of constructions is to create a tile set that can be used at temperature 1 to simulate any system created for the universal simulator from the construction of \cref{thm:T2-IU-for-T1}. By careful design, those systems are temperature 2, staggered, and locally adjacent-only, and we already have a construction that takes any such system as input and outputs a temperature 1 tile set to simulate it: the construction of \cref{lem:temp1simulation-adj}. Furthermore, that construction yields a simulation scheme that is seed-independent. This means that, for the entire class of systems using the universal simulator for temperature 2, since they all use the same tile set and temperature and thus vary only by seed structure, the construction of \cref{lem:temp1simulation-adj} yields a single universal tile set that works at temperature 1 to simulate any system that can be simulated by the systems of the construction for \cref{thm:T2-IU-for-T1}. We will call this tile set $\mathcal{U}_1$.

Any arbitrary syncTAM system $\calS$ can be simulated by a system $\calS'$ that is temperature 2, locally adjacent-only, and staggered (\cref{lem:adj-only}). That system $\calS'$ can then be simulated by a temperature 1 system $\calS_1$ that is staggered (\cref{lem:temp1simulation-adj}). A seed over the tile set $\mathcal{U}$, which is universal for temperature 1 systems that are staggered, can then be designed so that the resulting system, $\mathcal{S}_{\mathcal{U}}$, using tile set $\mathcal{U}$ at temperature 2 simulates $\calS_1$. Finally, a seed over the tiles of $\mathcal{U}_1$ can be created so that the resulting system, $\mathcal{S}_{\mathcal{U}_1}$, using tile set $\mathcal{U}_1$ at temperature 1 simulates $\mathcal{S}_{\mathcal{U}}$. Thus, $\mathcal{S}_{\mathcal{U}_1}$ simulates arbitrary syncTAM system $\calS$ using tile set $\mathcal{U}_1$ at temperature 1, proving \cref{thm:T1-IU}.
\end{proof}

\section{Preserving directedness}\label{sec:directed}

In \cite{IUSA} it was shown that the aTAM is intrinsically universal. However, in \cite{DirectedNotIU} it was proven that the class of systems that are directed in the aTAM is not, namely that there exists no universal simulator that is able to remain directed while simulating all directed systems. In contrast, we now show that this is another point of differentiation between the aTAM and syncTAM. 

\directedIU*

\begin{proof}
    We prove \cref{thm:directed} by construction, and our construction requires a very few, simple modifications of the previous constructions. Let $\calS \in \mathfrak{D}$ be an arbitrary syncTAM system which is directed. We first create locally adjacent-only, staggered, temperature 2 syncTAM system $\calS'$ using the same construction as in the proof of \cref{lem:adj-only} with one slight modification. 
    Namely, rather than having a set of $12$ tile types for each direction to nondeterministically simulate ``Go-First Dice,'' $11$ tile types are discarded and only $1$ is kept for each direction (i.e. $N=0, E=1, S=2, W=3$), thus deterministically hard-coding the input delay for each direction. 
    Since $\calS$ is directed, regardless of the ordering of growth of the input sides (within a macrostep) of a macrotile representing a tile placement, only one tile type could legally be placed in $\calS$ so only that tile type can be simulated in $\calS'$. 
    The only other points of possible nondeterminism in the original construction of $\calS'$ occur in situations where a set of input glue(s) could yield placement of tiles of multiple types (e.g. the possible nondeterministic selection of a probe type to grow in a given location, or placement of one of multiple valid tiles in the central POC). Again, since $\calS$ is directed there will be no possible nondeterminism in these locations, and thus this modified $\calS'$ will retain all of the properties of the original simulator while also remaining directed.

    The conversion of $\calS'$ to a temperature 1 system via the construction for \cref{lem:temp1simulation-adj} can use that construction with no modification, as there is no nondeterminism within the resulting system when simulating a directed system.

    The temperature 1, locally adjacent-only, staggered system that results can be simulated by a universal simulator at temperature 2 that is constructed using the same construction as for \cref{thm:T2-IU-for-T1} with one very small modification. The original construction simulates nondeterministic Turing machines with two transition functions, with two tiles that can nondeterministally attach at the end of each zig row of a machine in a $L_d$ region to determine which transition function to use. The modified construction instead discards one tile type of each pair that can nondeterministically attach in these locations (thus discarding one tile type for each of the four directions), thereby hard-coding the selection and creating a deterministic Turing machine. As before, if the input system being simulated is directed, there will only ever be one choice of tile type that a Turing machine simulation can make for which tile type to simulate, and thus the correctness of simulation remains. Since this is the only point of nondeterminism, the modified universal simulator remains directed while simulating.

    Finally, again we can create a temperature 1 system $\calS_{\mathcal{U}_2}$ that simulates the universal simulator at temperature 2 via the construction of \cref{lem:temp1simulation-adj} without modification. This results in the tile set of $\calS_{\mathcal{U}_1}$ being a universal simulator of directed systems that works at temperature 1 and itself remains directed. This tile set is a witness that $\mathfrak{D}$, the class of directed syncTAM systems, is intrinsically universal. 
\end{proof}

% \begin{theorem}
%     There is an unseeded TAS $\calU$, a representation function $\mathfrak{R}$ and a seed function $\mathfrak{S}$ such that:
%     \begin{itemize}
%         \item $(\calU, \mathfrak{R}, \mathfrak{S})$ is IU for syncTAM systems
%         \item for any seeded TAS $\calS$ which is directed in the syncTAM, so is the seeded TAS $(\calU, \mathfrak{S}(\calS))$. 
%     \end{itemize}
% \end{theorem}

% Using the ``go-first dice'', when simulating a directed system, the dice results can be hardcoded (always North, then West ...)

\printbibliography

%\appendix
%\input{two-counter}

%\newpage
%\input{appendix}

\end{document}